\def\showauthornotes{0}
\def\showdraftbox{0}
\newcommand{\defeq}{\stackrel{\textup{def}}{=}}
\newtheorem{theorem}{Theorem}[section]
\newtheorem*{theorem*}{Theorem}
\newtheorem{lemma}[theorem]{Lemma}
\newtheorem{corollary}[theorem]{Corollary}
\newtheorem{fact}[theorem]{Fact}
\newtheorem{definition}[theorem]{Definition}
\def\abs#1{\left| #1 \right|}
\renewcommand{\norm}[1]{\ensuremath{\left\lVert #1 \right\rVert}}
\newcommand\rea{\mathbb{R}}
\newcommand{\marginlabel}[1]%
{\mbox{}\marginpar{\it{\raggedleft\hspace{0pt}#1}}}
\definecolor{Mygray}{gray}{0.8}
\let\csname ifcommentflag\expandafter\endcsname
\newcommand{\Authornote}[2]{{\sf\small\color{red}{[#1: #2]}}}
\newcommand{\Authoredit}[2]{{\sf\small\color{red}{[#1]}\color{blue}{#2}}}
\newcommand{\Authorcomment}[2]{{\sf \small\color{gray}{[#1: #2]}}}
\newcommand{\Authorfnote}[2]{\footnote{\color{red}{#1: #2}}}
\newcommand{\Authorfixme}[1]{\Authornote{#1}{\textbf{??}}}
\newcommand{\Authormarginmark}[1]{\marginpar{\textcolor{red}{\fbox{%\Large
#1:!}}}}
\newcommand{\Authornote}[2]{}
\newcommand{\Authoredit}[2]{}
\newcommand{\Authorcomment}[2]{}
\newcommand{\Authorfnote}[2]{}
\newcommand{\Authorfixme}[1]{}
\newcommand{\Authormarginmark}[1]{}
\def\implies{\Rightarrow}
\newlength{\pgmtab}  %  \pgmtab is the width of each tab in the
\newcommand{\Tmat}{\mathcal{T}_{\mathrm{mat}}}
\newcommand{\nnz}{\mathrm{nnz}}
\newcommand{\new}{\mathrm{new}}
\let\originalleft\left
\let\originalright\right
\renewcommand{\left}{\mathopen{}\mathclose\bgroup\originalleft}
  \renewcommand{\right}{\aftergroup\egroup\originalright}
\def\defeq{\stackrel{\mathrm{def}}{=}}
\newcommand\dd{\boldsymbol{\mathit{d}}}
\renewcommand\gg{\boldsymbol{\mathit{g}}}
\newcommand\hh{\boldsymbol{\mathit{h}}}
\newcommand\rr{\boldsymbol{\mathit{r}}}
\renewcommand\ss{\boldsymbol{\mathit{s}}}
\newcommand\uu{\boldsymbol{\mathit{u}}}
\newcommand\vv{\boldsymbol{\mathit{v}}}
\newcommand\ww{\boldsymbol{\mathit{w}}}
\newcommand\yy{\boldsymbol{\mathit{y}}}
\newcommand\zz{\boldsymbol{\mathit{z}}}
\newcommand\xx{\boldsymbol{\mathit{x}}}
\newcommand\rrbar{\overline{\boldsymbol{\mathit{r}}}}
\newcommand\xxtil{\widetilde{\boldsymbol{\mathit{x}}}}
\newcommand\yytil{\widetilde{\boldsymbol{\mathit{y}}}}
\renewcommand\AA{\boldsymbol{\mathit{A}}}
\newcommand\BB{\boldsymbol{\mathit{B}}}
\newcommand\CC{\boldsymbol{\mathit{C}}}
\newcommand\EE{\boldsymbol{\mathit{E}}}
\newcommand\II{\boldsymbol{\mathit{I}}}
\newcommand\JJ{\boldsymbol{\mathit{J}}}
\newcommand\NN{\boldsymbol{\mathit{N}}}
\newcommand\MM{\boldsymbol{\mathit{M}}}
\newcommand\RR{\boldsymbol{\mathit{R}}}
\renewcommand\SS{\boldsymbol{\mathit{S}}}
\newcommand\TT{\boldsymbol{\mathit{T}}}
\newcommand\UU{\boldsymbol{\mathit{U}}}
\newcommand\VV{\boldsymbol{\mathit{V}}}
\newcommand\RRbar{\boldsymbol{\overline{\mathit{R}}}}
\newcommand\xxhat{\widehat{\xx}}
\newcommand\Otil{\widetilde{O}}
\newcommand{\wh}{\widehat}
\newcommand{\wt}{\widetilde}
\newcommand{\ov}{\overline}
\newcommand\R{\mathbb{R}}
\DeclareMathOperator*{\E}{{\mathbb{E}}}
\DeclareMathOperator*{\Var}{{\bf {Var}}}
\DeclareMathOperator*{\poly}{{\mathrm{poly}}}
\newcommand{\heavy}{\text{heavy}}
\newcommand{\eps}{\varepsilon}
\title{
Acceleration Meets Inverse Maintenance: \\
Faster $\ell_{\infty}$-Regression}
\author{Deeksha Adil\\Institute for Theoretical Studies\\
ETH Zürich\\deeksha.adil@eth-its@ethz.ch \and Shunhua Jiang\\
Department of Computer Science\\
Columbia University\\ sj3005@columbia.edu \and Rasmus Kyng\\Department of Computer Science\\ETH Zürich\\kyng@inf.ethz.ch}
\date{}
\begin{document}

%\pagenumbering{gobble}
\maketitle
\begin{abstract}
We propose a randomized multiplicative weight update (MWU) algorithm for $\ell_{\infty}$ regression that runs in $\widetilde{O}\left(n^{2+1/22.5} \mathrm{poly}(1/\epsilon)\right)$ time when $\omega = 2+o(1)$, improving upon the previous best $\widetilde{O}\left(n^{2+1/18} \mathrm{poly}\log(1/\epsilon)\right)$ runtime in the low-accuracy regime. Our algorithm combines state-of-the-art inverse maintenance data structures with acceleration. In order to do so, we propose a novel acceleration scheme for MWU that exhibits \textit{stability} and \textit{robustness}, which are required for the efficient implementations of the inverse maintenance data structures.

We also design a faster \textit{deterministic} MWU algorithm that runs in $\widetilde{O}\left(n^{2+1/12}\mathrm{poly}(1/\epsilon)\right)$ time when $\omega = 2+o(1)$, improving upon the previous best $\widetilde{O}\left(n^{2+1/6} \mathrm{poly} \log(1/\epsilon)\right)$ runtime in the low-accuracy regime. We achieve this by showing a novel stability result that goes beyond previously known works based on interior point methods (IPMs).

Our work is the first to use acceleration and inverse maintenance together efficiently, finally making the two most important building blocks of modern structured convex optimization compatible.
\end{abstract}
\pagenumbering{arabic}
\newpage
\tableofcontents

\newpage
\section{Introduction}

In this paper, we study the $\ell_{\infty}$-regression problem. Given $\epsilon>0$, a matrix $\CC\in \mathbb{R}^{n\times d}$ and vector $\dd\in \mathbb{R}^{n}$, $d\leq n$, we want to find $\xxtil \in \mathbb{R}^d$ such that,
\begin{equation}\label{eq:Prob}
  \|\CC\xxtil -\dd\|_{\infty}
     \leq (1+\epsilon) \min_{\xx\in \mathbb{R}^d} \|\CC\xx-\dd\|_{\infty}.
\end{equation}

Some of the popular approaches to obtaining fast algorithms for $\ell_{\infty}$-regression 
include using 
multiplicative weight update (MWU) routines~\cite{BN51,AHK12,christiano2011electrical,chin2013runtime,adil2019iterative,ene2019improved,adil2021unifying},
gradient descent~\cite{S13,kelner2014almost} and other ways to optimize a softmax function~\cite{carmon2020acceleration,sidford2018coordinate,adil2021unifying},
and using interior point methods (IPM)~\cite{K84,R88,nesterov1994interior}.
Interior point methods can find a high-accuracy solution, i.e., an $\epsilon$-approximate solution in $\Otil(\sqrt{n}\log (1/\epsilon))$\footnote{We use $\wt{O}(\cdot)$ to hide $\poly \log n$ factors, and we use $\Otil_{\epsilon}(\cdot)$ to additionally hide $\poly(\epsilon^{-1})$ factors.} linear system solves,
whereas most of the other methods are low accuracy solvers, i.e., their running time scales as $\operatorname{poly}(1/\epsilon)$.
Naively using gradient descent or MWU requires 
$O(\sqrt{n} \cdot \text{poly} (1/\epsilon))$ linear system solves.
Multiplicative weight update based approaches can be accelerated via a technique called {\it width reduction} to converge in $O(n^{1/3}\cdot\text{poly} (1/\epsilon))$ linear system solves~\cite{christiano2011electrical,chin2013runtime,adil2019iterative,ene2019improved,adil2021unifying,adil2022fast}.
Several acceleration techniques have also been developed to improve the iteration complexity of other low-accuracy regression algorithms~\cite{MS13,B18,carmon2020acceleration,sidford2018coordinate, adil2021unifying}.

To get an overall fast runtime, apart from improving the iteration complexity, a useful approach is to reduce the per-iteration cost.
This can be done using {\it inverse maintenance}, which reduces the cost via {\it lazy-update} schemes. 
Notions of inverse maintenance appear in the very first interior point methods, \cite{K84,NN89}, but the modern form was introduced by Vaidya \cite{V89}.
There have been many important developments in inverse maintenance algorithms since then, and state-of-the-art algorithms use both linear algebraic data structures and dimensionality reduction routines, such as sketching~\cite{bns19}. The improvements in runtimes of interior point methods including the state-of-the-art algorithms depend heavily on these developments in inverse maintenance routines~\cite{lee2015efficient,cohen2021solving,vdB20,jiang2021faster,lee2021tutorial}.

\subsection{Our Results}
For simplicity, in the discussion of our results and prior work on this problem, we focus on the case $\omega = 2+o(1)$ -- but our full technical theorems give results for all $\omega$.
In the low-accuracy regime of $\eps = 1/\operatorname{polylog(n)}$ 
the state-of-the-art running time for $\ell_\infty$-regression is
$\Otil(n^{2+1/18})$, obtained via the randomized algorithm of \cite{jiang2021faster}, and  $\Otil(n^{2+1/6})$ for deterministic algorithms via \cite{vdB20}.
Both these algorithms in fact obtain high-accuracy solutions, and they use inverse maintenance, but no acceleration.
In this work, we push the running time further in the low-accuracy regime by combining the state-of-the-art inverse
maintenance techniques of these results with new multiplicative weight methods which allow us to perform acceleration, yielding running times of $\Otil(n^{2+1/22.5} \poly(\epsilon^{-1}))$ with randomization and $\Otil(n^{2+1/12}\poly(\epsilon^{-1}))$ without. 

Our first result is a deterministic algorithm that combines acceleration and lazy inverse updates in a novel, more sophisticated way, and achieves a running time of $\Otil(n^{2+1/12}\poly(\epsilon^{-1}))$.
This improves on deterministic state-of-the-art $\Otil(n^{2+1/6}\operatorname{log}(\epsilon^{-1}))$ \cite{vdB20} in the low-accuracy regime.
The key to this result is a new notion of $\ell_3$-\emph{stability} which is tailored to the accelerated MWU.

\begin{theorem}[Informal statement of Theorem~\ref{thm:time_deterministic}]\label{thm:informal_deterministic}
There is a deterministic algorithm that solves Problem~\eqref{eq:Prob} in $\Otil(n^{2+1/12}\poly(\epsilon^{-1}))$ time when $\omega = 2 + o(1)$. This algorithm converges in $\Otil\left(n^{1/3} \poly(\epsilon^{-1})\right)$ iterations.
\end{theorem}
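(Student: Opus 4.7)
\textbf{Proof plan for Theorem~\ref{thm:informal_deterministic}.}
The plan is to combine three ingredients: (i) a width-reduced multiplicative weight update (MWU) scheme for $\ell_\infty$-regression, which provides the $\Otil(n^{1/3}\poly(\epsilon^{-1}))$ iteration count; (ii) a deterministic inverse-maintenance data structure that amortizes the cost of the weighted linear system solves required at each iteration; and (iii) a new $\ell_3$-stability bound on the weight trajectory that makes (i) compatible with (ii). First I would fix the outer loop to be an accelerated MWU in the style of~\cite{chin2013runtime,adil2019iterative,ene2019improved}: at iteration $t$ we maintain a nonnegative weight vector $\ww^{(t)}\in\mathbb{R}^n$, and the step reduces to solving a linear system of the form $(\CC^\top \WW^{(t)} \CC)\,\xx^{(t)} = \CC^\top \WW^{(t)} \dd^{(t)}$ with $\WW^{(t)}=\diag{\ww^{(t)}}$. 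The iteration bound then follows from the standard width-reduction potential argument, modulo checking that our particular acceleration variant tolerates the (approximate) solves produced by the inverse-maintenance data structure, which I would handle by showing the analysis only uses a crude $\ell_\infty$-accuracy on the step.

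The central new ingredient is an $\ell_3$-stability statement of the form
\begin{equation*}
  \sum_{t=1}^{T}\sum_{i=1}^{n}\left|\log w_i^{(t+1)}-\log w_i^{(t)}\right|^3 \;\leq\; \Otil(T\cdot\poly(\epsilon^{-1})),
\end{equation*}
where $T=\Otil(n^{1/3}\poly(\epsilon^{-1}))$. Its role is to control, in amortized fashion, how much ``mass'' of weight change can be concentrated on few coordinates across iterations. Unlike the $\ell_2$-type stability that comes for free from the central-path analysis of an IPM, here one must absorb both the multiplicative MWU updates \emph{and} the width-reducing boosts to high-energy rows; the width-reducing boosts are exactly what breaks a pure $\ell_2$ bound, and the point of the exponent~$3$ is that it is strong enough to drive the amortization in~(iii) while being weak enough to be consistent with the width-reduction potential.

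With the stability in hand, the third step is to plug it into a deterministic inverse-maintenance framework in the style of~\cite{vdB20}. At each step I would bucket the coordinates by the magnitude of $\left|\log w_i^{(t+1)}-\log w_i^{(t)}\right|$ and maintain $(\CC^\top\WW^{(t)}\CC)^{-1}$ by a low-rank correction whose rank equals the total size of the buckets that have ``tipped'' since the last refresh. The $\ell_3$-bound controls the cumulative rank across buckets and iterations, so the amortized cost per iteration can be expressed using fast rectangular matrix multiplication as $\Otil(n^{\alpha})$ for some $\alpha<2$ depending on the bucket thresholds. Balancing the thresholds against $T=\Otil(n^{1/3}\poly(\epsilon^{-1}))$ and using $\omega=2+o(1)$ (so that rectangular products $n\times n^{a}$ by $n^{a}\times n$ run in $\Otil(n^{2+o(1)})$ for all $a\le 1$) yields total cost $\Otil(n^{2+1/12}\poly(\epsilon^{-1}))$, where the exponent $1/12$ is precisely what the cubic-in-change aggregate produces when equated with $n^{1/3}$ iterations.

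The main obstacle I expect is step~(ii): proving the $\ell_3$-stability of the accelerated, width-reduced weight trajectory. Standard MWU gives only per-step multiplicative bounds, and width reduction can boost an individual weight by a large factor in a single iteration, so one cannot simply sum per-step bounds. I would attack this with a joint potential function that mixes the usual width-reduction potential with an $\ell_3$ ``energy-of-changes'' term, showing that any coordinate whose log-weight moves a lot in one step must have paid for it via a corresponding decrease in the width-reduction potential, so the total $\ell_3$-mass over $T$ steps telescopes. Once this stability is established, steps~(i) and~(iii) are essentially plug-and-play given the corresponding technology from accelerated MWU and modern inverse maintenance.
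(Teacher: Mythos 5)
Your plan matches the paper's proof essentially step for step: the same monotone width-reduced MWU with $\Otil(n^{1/3}\poly(\epsilon^{-1}))$ iterations, the same $\ell_3$-stability bound of total mass $\Otil(n^{1/3})$ on the relative weight changes, the same bucketed lazy-update scheme fed into a one-level inverse-maintenance data structure, and the same balancing (reset threshold $n^{3/4}$, hence a reset every $n^{1/4}$ iterations) giving $n^{2+1/12}$ when $\omega=2+o(1)$. Your proposed potential argument for $\ell_3$-stability is also the paper's: each primal or width-reduction step is shown to \emph{increase} the energy $\Psi(\rr)=\min_\Delta\sum_e \rr_e(\CC\Delta-\dd)_e^2$ by a multiplicative factor proportional to the step's $\ell_3$ mass of relative changes, and since $\Psi$ is sandwiched between $\Psi_0$ and $n^{O(1/\epsilon)}$ the total mass telescopes (monotonicity of the weights is what makes this work, and robustness is to the lazily approximated weights rather than to inexact solves).
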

Our main result is our randomized algorithm with running time $\Otil(n^{2+1/22.5}\poly(\epsilon^{-1}))$. 
\begin{theorem}[Informal statement of Theorem~\ref{thm:time_combine}]\label{thm:informal_random}
There is a randomized algorithm that solves Problem~\eqref{eq:Prob} in $\Otil(n^{2+1/22.5}\poly(\epsilon^{-1}))$ time when $\omega = 2 + o(1)$. This algorithm converges in $\Otil\left(n^{1/2.5}\poly(\epsilon^{-1})\right)$ iterations.
\end{theorem}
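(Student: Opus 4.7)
The plan is to pair the accelerated MWU algorithm developed in our randomized framework with a sketching-based inverse-maintenance data structure, and then amortize the per-iteration cost against the stability guarantees that our acceleration scheme provides. First, I would instantiate the accelerated MWU iteration, which by design converges in $T = \Otil(n^{1/2.5}\poly(\eps^{-1}))$ iterations. Each iteration requires (approximately) solving one weighted linear system of the form $\CC^\top \WW_k \CC \xx_k = \bb_k$, where $\WW_k=\diag{\ww_k}$ and the weight vector $\ww_k$ is updated by the accelerated MWU rule; this reduces the per-iteration work to maintaining (an approximation of) $(\CC^\top \WW_k \CC)^{-1}$ across the $T$ iterations.

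Next, I would maintain the inverse via the JL-sketching-based randomized data structure developed for the fast IPM of \cite{jiang2021faster}. At each step, we use a sketch of polylogarithmic dimension to identify the coordinates of $\ww_k$ that have changed significantly since the last full recomputation, and we apply a Woodbury-style update restricted to the corresponding rows/columns, whose cost is controlled by the rank of the significant changes via fast rectangular matrix multiplication. Because the returned solve is only approximate, I would invoke the robustness component of our accelerated MWU to show that solves of error $1/\poly(n,\eps^{-1})$ are tolerated without inflating the iteration count.

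The main obstacle is verifying that the data structure's input assumptions hold under acceleration. Previous accelerated MWU schemes do not come with an amortized stability guarantee compatible with sketched inverse maintenance, because acceleration mixes momentum and current gradient contributions in a way that can produce bursts of large coordinate changes. Our new $\ell_3$-stability analysis for accelerated MWU directly bounds amortized quantities of the form $\sum_k \|\ww_{k+1}-\ww_k\|_3^3$, which is exactly what is needed to control, for each threshold level, the total number of coordinates that cross that threshold across the $T$ iterations. Propagating this bound through the sketch-error analysis (and the robustness bound above) is the technically delicate step.

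Finally, to land on the exponent $2+1/22.5$, I would balance the free parameters of the data structure --- the sketch dimension, the thresholds defining ``significant'' coordinate changes, and the frequency of full inverse recomputations --- against the iteration count $T=\Otil(n^{1/2.5})$ and the cost of the batched rectangular multiplies at $\omega=2+o(1)$. Summing the amortized per-iteration cost over all iterations and optimizing over these parameters then yields the claimed $\Otil(n^{2+1/22.5}\poly(\eps^{-1}))$ total running time.
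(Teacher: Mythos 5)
Your outline gets the iteration count and the general shape (accelerated MWU plus lazy inverse maintenance plus sketching) right, but the mechanism you propose for making the data structure and the sketch compatible is the one place where the argument actually breaks, and it is precisely the difficulty the paper has to work around. You propose to feed the $\ell_3$-stability bound $\sum_k\|\ww_{k+1}-\ww_k\|_3^3$ into the sketched inverse-maintenance data structure. In this paper the $\ell_3$-stability guarantee is established only for the \emph{monotone} MWU (Lemmas~\ref{lem:StabilityL3AllIterResistance} and~\ref{lem:StabilityL3AllWidth}), and the low-rank update scheme that exploits it (Lemma~\ref{lem:LowRankL3}) explicitly assumes the weights are monotonically increasing. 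Monotonicity, in turn, is incompatible with sketching: replacing the exact query $\CC\Delta^{(i,k)}-\dd$ by a sketched estimate introduces per-coordinate noise of order $\sqrt{n/b}$ in \emph{every} iteration, and with monotone (absolute-value) weight updates this noise accumulates additively over all $T$ iterations and destroys the potential argument. The paper therefore has to switch to a \emph{non-monotone} MWU (Algorithm~\ref{alg:non_monotone_accel_robust}), redesign the width-reduction step so that it remains stable and low-rank without monotonicity, and prove only a (robust) $\ell_2$-stability bound for the primal steps, with the sketching errors cancelling across iterations via a Freedman martingale argument. This weaker $\ell_2$ guarantee is exactly why the exponent is $2+1/22.5$; the paper explicitly remarks that if $\ell_3$-stability could be combined with the two-level data structure one would instead get $n^{2+1/48}$, and leaves that open. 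So the step you call "technically delicate" is not a propagation of an existing bound — the bound you want to propagate does not hold for the algorithm you need to run.

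Two further mismatches with the actual construction are worth flagging. First, the sketch is not a polylogarithmic-dimension JL sketch used to \emph{detect} which coordinates of $\ww$ changed; coordinate change detection is done by the deterministic lazy-update routine \textsc{SelectVector}. The sketch is a coordinate-wise embedding with $b=\widetilde\Theta(n^{1/2+\eta})$ rows whose purpose is to approximate the \emph{query output} $\CC\Delta^{(i,k)}-\dd$, because the two-level data structure is only fast when one queries $o(n)$ coordinates per iteration; additional heavy-hitter and $\ell_3$-norm sketches are needed to implement the width-reduction test. Second, robustness here is not tolerance of $1/\poly(n)$-accurate solves: the per-iteration, per-coordinate sketching error is polynomially large, and correctness relies on these errors being mean-zero and cancelling over the $T$ iterations (which again forces non-monotonicity), together with robustness to the specific lazy-update errors — the paper explicitly does not establish robustness to arbitrary coordinate-wise approximations of the weights for the accelerated algorithm.
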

To obtain this result, we introduce the first MWU which can combine all three key techniques for $\ell_\infty$-regression: (a) acceleration, (b) lazy inverse updates, and (c) sketching.

Thus, we give the optimization approach method which is able to efficiently combine these three key techniques of structured convex optimization. 
This is likely an essential building block toward $n^{2+o(1)}$ optimization for many objectives.
If, some day, acceleration is achieved for linear programming, an equivalent integration will be necessary for optimal algorithms in this context.
Before describing our new approach, we first review existing techniques for fast $\ell_\infty$-regression.

\subsection{Background: The Ingredients of Fast \texorpdfstring{$\ell_\infty$}{}-Regression Methods.}
Both MWUs and IPMs that solve $\ell_\infty$-regression methods rely on a sequence of calls to $\ell_2$-oracles, i.e. a subroutine that solves an $\ell_2$-minimization problem, or equivalently, solves a linear equation.
In order to solve the $\ell_\infty$-regression problem \eqref{eq:Prob}, a standard MWU approach repeatedly solves a sequence of $\ell_2$-oracle problems of the form
\begin{equation}\label{eq:L2Oracle}
    \xx^{(i)} = \arg\min_{\xx\in \mathbb{R}^d} 
    \sum_e r_e^{(i)} (\CC\xx-\dd)_e^2
\end{equation}
where the weights $\{r_e^{(i)}\}$ are chosen by the MWU depending on the magnitude of previous iterates.

\paragraph{Inverse maintenance via stability and robustness.}
The $\ell_2$-oracles of MWUs and IPMs can be implemented by applying the inverse of a matrix, and inverse maintenance can be used to solve the sequence of $\ell_2$-oracle calls faster than simply performing a full matrix inversion or linear equation solve on each call.
Two key phenomena drive inverse maintenance: \emph{stability} and \emph{robustness}.
\emph{Stability} is the property that the inputs to the $\ell_2$-oracle only change slowly. In the MWU case, this means the weights $\{r_e^{(i)}\}$ change slowly.
We say an optimizer is \emph{robust} if it can make progress using answers from $\ell_2$-oracles with somewhat inaccurate inputs.
The combination of stability and robustness is especially powerful.
Together, these properties ensure that we can delay making small coordinate updates to inputs until they build up to a large cumulative update, and that we only get few large cumulative updates, enabling the use of coordinate-sparse update techniques.
This approach of batching together small updates is known as \emph{lazy} inverse updating.
Obtaining further speed-ups using sketching also crucially relies on robustness. 
Because of robustness, we can afford to use sketching to estimate $\xx^{(i)}$, as long as our estimates allow sufficiently accurate updates to the weights $\{r_e^{(i)}\}$.

The IPM of \cite{cohen2021solving} first achieved a running time of $\Otil(n^{2+1/6}+n^\omega)$ by introducing a method with excellent stability and robustness, which in turn allowed them to implement a powerful inverse maintenance approach using lazy updates and sketching. 
Later, \cite{vdB20} showed that the same running time can be obtained deterministically using only lazy updates, and finally \cite{jiang2021faster} gave an improved running time of 
$\Otil(n^{2+1/18}+n^\omega)$ using both lazy updates and sketching.
The approach of \cite{jiang2021faster} can be thought of as a two-level inverse maintenance, and the use of the randomized sketching techniques is crucial for them to efficiently implement the {\it query} operation of this data structure. It remains open if there exists any deterministic IPM that can run faster than $\Otil(n^{2+1/6}+n^\omega)$.

\paragraph{Acceleration via width-reduction.}
In oracle-based optimization, there is a long history of developing \emph{accelerated} methods, which reduce the iteration count compared to more basic approaches.
This can be traced back to accelerated solvers for quadratic objectives \cite{L52,HS52} and first-order acceleration for gradient Lipschitz functions (\cite{N83} and earlier works by Nemirovski).
\textcite{christiano2011electrical} developed an acceleration method for multiplicative weight methods that reduces the iteration count for solving $\ell_\infty$ regression with $\ell_2$-oracles from $\Otil(\sqrt{n} \cdot \text{poly} (1/\epsilon))$ to $\Otil(n^{1/3} \cdot \text{poly} (1/\epsilon))$.
An alternative approach to acceleration for $\ell_\infty$-regression can be obtained via the methods of Monteiro and Svaiter \cite{MS13}, and has also been a major research topic, but is beyond the scope of our discussion.
For simplicity of our remaining discussion, we ignore $\epsilon$ dependencies.
A rough outline of the MWU acceleration approach of \textcite{christiano2011electrical} is as follows:
The MWU solves a sequence of $\ell_2$-oracle problems returning iterates 
$\xx^{(i)}$.
If we scale the problem so that $\|\CC\xx^{\star}-\dd\|_{\infty} \leq 1$,
then weights ensure that (a) in each iteration, 
$\|\CC\xx^{(i)}-\dd\|_{\infty} \lesssim \sqrt{n}$
and (b) after $T = \Otil(\sqrt{n})$ iterations, $\xxtil = \frac{1}{T}\sum_i \xx^{(i)}$ has $\|\CC\xxtil-\dd\|_{\infty}\leq 1+\epsilon$.
\cite{christiano2011electrical} made an important modification: if in some iteration we have $\|\CC\xx^{(i)}-\dd\|_{\infty}\geq \rho \approx n^{1/3}$, then instead of using $\xx^{(i)}$, we will adjust the weights $\{r_e^{(i)}\}$ in order to reduce the value of  $\|\CC\xx^{(i')}-\dd\|_{\infty}$ for future iterates $\xx^{(i')}$.
Using this method, an approximately optimal $\xxtil = \frac{1}{T}\sum_i \xx^{(i)}$ can be found in $T = \Otil(n^{1/3})$ iterations.
The parameter $\rho$ measures the $\ell_\infty$-norm $\|\CC\xx^{(i)}-\dd\|_{\infty}$ of each iterate, sometimes known as the \emph{width}, and the weight-adjustment steps of Christiano et al.~are hence known as \emph{width reduction steps}.
When the oracle width can be reduced in this way, we will say our method is \emph{width-reducible}.
This acceleration has never been developed for $\ell_{\infty}$-regression in the high-accuracy regime (i.e. running times that scale as $\operatorname{polylog}(1/\epsilon)$), and whether this is possible is one of the major open questions in convex optimization.

\paragraph{Weight monotonicity in MWUs: an obstacle to sketching.}
Many MWU methods are designed to have an important property, which we call \emph{weight monotonicity}.
Concretely, in \cite{christiano2011electrical} and many other MWUs, the oracle weights $\{\rr_e^{(i)}\}$ are only growing.
This often simplifies analyses greatly, and helps establish other properties including stability, robustness, and width-reducibility.
Referring back to our oracle queries introduced above in \eqref{eq:L2Oracle}, let us define
$\xxtil^{(i)} = \frac{1}{T}\sum_{j \leq i} \xx^{(j)}$.
Weight monotonicity arises because we choose the weights based on an overestimate of 
$|(\CC\xxtil^{(i)}-\dd)_e|$ given by $\gamma_i = \frac{1}{T}\sum_{j \leq i} |(\CC\xx^{(j)}-\dd)_e|$.
In particular, choosing $\rr_e^{(i)}= \exp(\alpha \gamma_i)$ for some scaling factor $\alpha$ 
will ensure the weights only grow.
As we will discuss later, \emph{weight monotonicity} seems inherently incompatible with sketching, and thus we will need to develop a non-monotone MWU.
Prior work by Madry \cite{M13,madry2016computing} introduced non-monotone weights in a highly specialized IPM for unit-capacity maximum flow.
This IPM of Madry has MWU-like properties and allows for some acceleration.
The method has other drawbacks including low stability and robustness, but nonetheless inspired some of our design choices.

\paragraph{Prior inverse maintenance with acceleration.}
We are aware of a single prior work which combined lazy inverse updates with an accelerated MWU to obtain a running time of 
$\Otil(n^{2+1/3}+n^\omega)$ for $\ell_{p}$-regression \cite{adil2019iterative}.
This approach is relatively naive, falling short of 
the $\Otil(n^{2+1/6}+n^\omega)$ running time which can be achieved using only lazy inverse updates.

\subsection{Discussion of Techniques}
The crucial algorithmic techniques we rely on for speeding up $\ell_\infty$-regression are 
(a) acceleration, (b) lazy inverse updates, and (c) sketching.
We can view each of these techniques as being enabled by different properties of the overall optimization approach.
Our approach to acceleration is enabled by \emph{width-reducibility}, while lazy updates require \emph{stability} and \emph{robustness}, and finally sketching requires \emph{robustness} and \emph{non-monotonicity}.
This means we need to develop an MWU which simultaneously exhibits all these properties, i.e. it must be 
stable, robust, non-monotone, and width-reducible. 
In Figure~\ref{fig:techniquesAndRequirements}, we summarize how our algorithmic techniques impose different requirements on our optimization approach.
Again, for simplicity, in the remaining discussion of our results and prior work on this problem, we focus on the case $\omega = 2+o(1)$.

We first discuss how to combine stability, robustness, and width-reducibility in a \emph{monotone} MWU, which leads to a comparatively simple, deterministic algorithm using acceleration and lazy inverse updates, but no sketching.

\begin{figure}
    \centering
\resizebox{.7\textwidth}{!}{% <------ Don't forget this %
\begin{tikzpicture}[
    box/.style={rectangle, rounded corners, draw, minimum width=3cm, minimum height=1cm, align=center},
    arrow/.style={draw, -{Stealth[scale=1.2]}, thick}
    ]

% Define the positions of the boxes
\node[box] (top) {acceleration};
\node[box, below=2cm of top] (middle) {optimizer};
\node[box, below left=1.5cm and 2.5cm of middle] (left) {lazy updates};
\node[box, below right=1.5cm and 2.5cm of middle] (right) {sketching};

% Draw connections if needed (optional)
\draw[arrow] (top) -- node[pos=0.5, right=0.5cm, align=left]
{
width-reducibility
} (middle) ;
\draw[arrow] (middle) -- (top);
\draw[arrow] (middle) -- (left);
\draw[arrow] (left) -- node[pos=0.7, left=0.5cm, align=left]
{
stability \& robustness
} (middle) ;
\draw[arrow] (right) -- node[pos=0.7, right=0.5cm, align=left]
{
robustness \& non-monotonicty
} (middle) ;
\draw[arrow] (middle) -- (right); 

\end{tikzpicture}
}
\caption{Algorithmic techniques and their requirements on our optimizer.}
\label{fig:techniquesAndRequirements}
\end{figure}
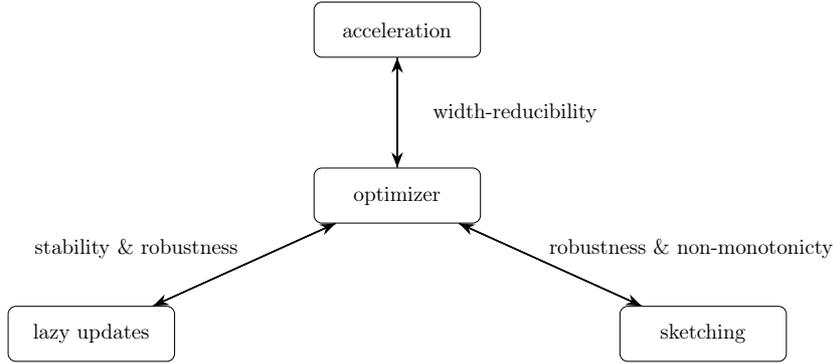

\paragraph{Stability and robustness of a monotone, width-reducible MWU.}
\cite{adil2019iterative} showed how to obtain stability, robustness, and width-reducibility together, with a monotone MWU.
However, this work only established a weak notion of stability 
and hence comparatively slow running time of $\Otil(n^{2+1/3})$.
In contrast, one can show that by directly using stability and robustness in a monotone accelerated MWU, 
one can adapt the data structure approach of \cite{vdB20} to achieve a running time of $\Otil(n^{2+1/9})$, yielding a faster MWU.

Our first result Theorem~\ref{thm:time_deterministic} 
is based on the observation that monotone MWU also enables a new, stronger notion of stability, which we call $\ell_3$-stability.
This allows us to further reduce the number of lazy updates we make and lets us achieve a deterministic running time of $\Otil(n^{2+1/12})$.

\paragraph{Non-monotone MWU - a key ingredient for sketching.}
As we described above, it is relatively easy to improve the running time of low-accuracy $\ell_\infty$-regression among deterministic algorithms, by designing  a monotone, robust, width-reducible MWU with a novel $\ell_3$-stability.

To further accelerate the algorithm by using a two-level inverse maintenance data structure, we need to use randomized sketching techniques to efficiently implement the query operation, which is required in every iteration of the MWU algorithm. 
Unfortunately, weight monotonicity is in conflict with sketching, because monotonicity arises from ignoring cancellations in $(\CC\xxtil^{(i)}-\dd)_e$ between different iterations.\footnote{Recall that the final output of our MWU is the last averaged iterate $\xxtil^{(T)}$.}
In contrast, when using sketching, we want to crucially rely on cancellation between different iterations, as we sometimes overestimate $(\CC\xx^{(i)}-\dd)_e$ and sometimes underestimate it, but get it right on average.
Because of this, we design an MWU with non-monotone weights.
This in turn makes width-reducibility, robustness, and stability much harder to obtain.

To allow us to work with non-monotone weights and still obtain acceleration, we introduce a more delicate width-reduction scheme, inspired by \cite{madry2016computing}.
We also provide a tighter analysis of the sketching technique (it was named coordinate-wise embedding by \cite{lsz19,jiang2021faster}) that upper bounds its total noise across different iterations using martingale concentration inequalities. This tighter analysis is necessary to control the overall error introduced by the sketching technique in our MWU algorithm. We believe this tighter analysis could also provide a simpler analysis for the IPM results of \cite{cohen2021solving,jiang2021faster}.

This new width-reduction approach in turn also requires us to estimate an $\ell_3$-norm associated with each iterate $\xx^{(i)}$, and to do this quickly, we need to employ new sketching tools.
To implement this approach, we also need an additional heavy-hitter sketch that allows us to identify which weights to adjust during width reduction.

\paragraph{Stability and robustness of a non-monotone, width-reducible MWU.}
Stability and robustness are crucial when we want to use lazy updates and sketching for inverse maintenance.
Standard techniques for acceleration by width-reduction are unstable in the context of \emph{non-monotone} MWU.
Thus, to combine stability, width-reduction, and non-monotonicity, we have to further change our width-reduction strategy. 

A central challenge is that width-reducibility is inherently in tension with the other properties.
To simultaneously achieve stability and width-reducibility, we introduce a new and rather different approach to width-reduction, which we call \emph{stable width-reduction}.
This approach is more conservative than existing methods, and uses smaller width-reduction steps to achieve stability. 

Combining width-reducibility with robustness is also difficult. Width-reduction relies on identifying too-large entries of the oracle outputs and making adjustments to the corresponding weights.
But, robustness requires us to operate with inaccurate weights. We want to allow for weights that are inaccurate up to a factor $(1\pm 1/\operatorname{polylog}(n))$, and this is enough to completely change which oracle outputs are too large.
In fact, we do not achieve general robust, but instead show that our method is robustness to (1) the errors induced by our specific lazy update scheme and (2) the errors induced by sketching.

\paragraph{Future perspectives.}
It remains open to design any algorithm for low-accuracy $\ell_{\infty}$ regression beyond $\Otil( n^{2+1/22.5})$ when $\omega = 2+o(1)$. We remark that if it were possible to use $\ell_3$-stability with the two-level data structure and an algorithm that converges in $n^{1/3}$ iterations, then we would achieve a runtime of $\Otil(n^{2+1/48})$. However, the current techniques for inverse maintenance and acceleration are not sufficient to achieve $\Otil(n^{2+o(1)} + n^\omega)$, which we believe would require substantially new techniques. On the other hand, even obtaining slight improvements in the runtime would require more robust acceleration and inverse maintenance frameworks which would be of independent interest.

In this paper, we analyze our algorithms in the RealRAM model. Establishing a similar analysis in finite precision arithmetic is an interesting open problem. Inverse maintenance-based IPM with finite precision arithmetic was studied by \cite{ghadiri2023bit}.

We have demonstrated that acceleration techniques for MWU can be efficiently combined with inverse maintenance methods. 
For linear programming, no similar acceleration techniques exist and it is a major open problem to design these or rule out the possibility in various computational models. 
If acceleration can be achieved for linear programming, deploying it in conjunction with inverse maintenance will likely require techniques similar to those we introduce in this work.

\section{Technical Overview}

\subsection{Deterministic MWU Algorithm via One-Level Inverse Maintenance}
MWU methods reduce $\ell_\infty$-regression problems to a sequence of $\ell_2$-minimization problems, which can be solved by solving systems of linear equations -- or equivalently, applying the inverse of some matrix. More concretely, an MWU for finding approximate solutions to $\min_{\xx\in \mathbb{R}^d} \|\CC\xx-\dd\|_{\infty}$ requires us to repeatedly solve problems of the form 
\[
\min_{\Delta\in \mathbb{R}^d}\sum_e \rr^{(i)}_e(\CC\Delta-\dd)_e^2
\]
across iterations $i = 1,\ldots,T$.
The exact solution to these minimization problems is given by
\[
\Delta^{(i)} = (\CC^{\top}\RR^{(i)}\CC)^{-1}\CC^{\top}\RR^{(i)}\dd.
\]
The multiplicative weight update method iteratively updates the weights using $\Delta^{(i)}$ and ``penalizes'' the coordinates $e$ that have large $|\CC \Delta^{(i)} - \dd|_e$ by increasing their weights $\rr_e^{(i+1)}$ in the next iteration. In the end the method outputs $\xx =  \sum_{i=1}^T \Delta^{(i)} / T$ as the approximate $\ell_\infty$ minimizer. 

The cost of each iteration is dominated by the time required to solve the corresponding system of linear equations for $\Delta^{(i)}$ -- or equivalently, applying the inverse of some matrix.
If solving this sequence of systems of linear equations can be done faster than naively solving each system separately, then we can speed up the cost per iteration of the MWU algorithm, and hence make the algorithm faster. A similar problem of solving a sequence of systems of linear equations was studied for the IPM algorithms \cite{cohen2021solving,vdB20,jiang2021faster}, and they achieved speed-ups by using lazy updates with \emph{inverse maintenance} data structures. They could use lazy updates because the IPM algorithm satisfies a stability guarantee and a robustness guarantee. More precisely, (1) IPMs satisfy an $\ell_2$-stability guarantee that the $\ell_2$-norm of the relative changes between two iterations is bounded, i.e., $\|\frac{\rr^{(i+1)} - \rr^{(i)}}{\rr^{(i)}}\|_2^2 \leq O(1)$. (2) IPMs are still correct if the system of linear equations is solved with coordinate-wise approximate weights $\ov{\rr} \approx_{\delta} \rr$ for some $\delta>0$.

As it turns out, the \emph{monotone} MWU algorithm is also inherently stable and robust, even with acceleration. We can therefore use coordinate-wise approximate weights $\ov{\rr}^{(i)} \approx_{\delta} \rr^{(i)}$ in each iteration, and only update $\ov{\rr}^{(i)}_e$ when it differs from $\rr^{(i)}_e$ by more than $\delta$. This ensures that the approximate weights $\ov{\rr}^{(i)}$ undergoes low-rank updates. We present a robust version of the known {\it accelerated} multiplicative weights update method for $\ell_{\infty}$-regression from~\cite{christiano2011electrical,chin2013runtime} below, where when solving the system of linear equations for $\Delta^{(i)}$ we use the approximate weights $\ov{\rr}^{(i)}$.
\begin{algorithm}
\caption{Monotone Width Reduced MWU Algorithm}
\label{alg:MWU}
 \begin{algorithmic}[1]
 \Procedure{MWU-Solver}{$\epsilon, \CC,\dd$}
 \State $\ww^{(0,0)} \leftarrow 1_n, \quad \xx^{(0)} \leftarrow 0_d$
 \State $\tau \leftarrow \Theta\left(\frac{n^{\frac{1}{3}}}{\epsilon^{\frac{1}{3}}} \log\frac{n}{\Psi_0}\right), \alpha \leftarrow \Theta\left(n^{-\frac{1}{2} + \eta}\epsilon^{\frac{1}{3}} \left(\log\frac{n}{\Psi_0}\right)^{-1}\right)$, $\eta \gets \frac{1}{6}$
\State $T \leftarrow \alpha^{-1}\epsilon^{-2}\log n$ 
\State $i \leftarrow 0, k \leftarrow 0$
\While{$i < T$} 
\State $\rr^{(i,k)}_e \leftarrow \ww_e^{(i,k)} + \frac{\epsilon}{n} \|\ww^{(i,k)}\|_1$
\State $\rrbar^{(i,k)}\gets$ {\sc SelectVector}$(\rr^{(i,k)}, i+k, \delta)$ \Comment{$\rrbar \approx_{\delta}\rr$} \label{algline:MWU_select_vector}
\State $\Delta^{(i,k)} \leftarrow \arg\min_{\Delta \in \rea^d}\sum_e \rrbar^{(i,k)}_e(\CC\Delta-\dd)_e^2$ \Comment{$\Delta = (\CC^{\top}\RRbar^{(i,k)}\CC)^{-1} \CC^{\top} \RRbar^{(i,k)} \dd$}\label{algline:linsysMon}
\If{$\norm{\CC\Delta^{(i,k)}-\dd}_{\infty} \leq \tau$}\Comment{primal step}   \label{algline:CheckWidth}
\State $\ww^{(i+1,k)} \leftarrow \ww^{(i,k)}(1 + \epsilon\alpha |\CC\Delta^{(i,k)}-\dd|)$ \label{algline:UpdateWPrimal}
\State $\xx^{(i+1)} \leftarrow \xx^{(i)} +  \Delta^{(i,k)}$
\State $ i \leftarrow i+1$ 
\Else\label{lin:WidthReduceEdge}
\State For all coordinates $e$ with $|\CC\Delta^{(i,k)}-\dd|_e \geq \tau $\Comment{width reduction step}
\State $\quad \quad \ww_e^{(i,k+1)} \leftarrow (1+\epsilon)\ww^{(i,k)}_e + \frac{\epsilon^2}{n}\|\ww^{(i,k)}\|_1$ \label{algline:UpdateWWidth} 
\State $\quad \quad k \leftarrow k+1$ \label{algline:endAcc}
\EndIf
\EndWhile
 \State\Return $\xxhat = \frac{\xx^{(T)}}{T}$
 \EndProcedure 
\end{algorithmic}
\end{algorithm}
\begin{theorem}[\cite{chin2013runtime}]\label{thm:InfRegMainMonotone}
Let $0<\epsilon<1/2$ and $0\leq\delta\leq \epsilon/6$. Algorithm~\ref{alg:MWU} returns $\xxhat$ such that $\|\CC\xxhat-\dd\|_{\infty}\leq 1 +O(\epsilon)$ in $\widetilde{O}(n^{1/3}\epsilon^{-7/3})$ iterations. Each iteration solves a linear system as specified in Line~\ref{algline:linsysMon} of the algorithm.
\end{theorem}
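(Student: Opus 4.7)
The plan is to follow the standard potential-function analysis for width-reduced MWU (as in \cite{christiano2011electrical,chin2013runtime}), adapted to use the approximate weights $\rrbar \approx_\delta \rr$. WLOG rescale so that $\min_\xx \|\CC\xx-\dd\|_\infty \leq 1$, with optimizer $\xx^*$ satisfying $|(\CC\xx^*-\dd)_e|\le 1$ for all $e$. The potential is $\Phi^{(i,k)} := \|\ww^{(i,k)}\|_1$, which starts at $n$.

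First I would establish the key oracle inequality: by optimality of $\Delta^{(i,k)}$ for the reweighted $\ell_2$ problem and using $\xx^*$ as a competitor,
\[
\sum_e \rrbar_e^{(i,k)}(\CC\Delta^{(i,k)}-\dd)_e^2 \;\le\; \sum_e \rrbar_e^{(i,k)}(\CC\xx^*-\dd)_e^2 \;\le\; \sum_e \rrbar_e^{(i,k)} \;\le\; (1+\delta)\,(1+\epsilon)\,\Phi^{(i,k)},
\]
so with $\rr \le (1+\delta)\rrbar$ this transfers to the true weights up to a $(1+O(\delta))$ factor; the hypothesis $\delta \le \epsilon/6$ ensures all such blow-ups are absorbed into the final $O(\epsilon)$ slack.

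Next, I would analyze the two step types separately. For a \emph{primal} step, Cauchy--Schwarz applied to $\sum_e \ww_e |\CC\Delta-\dd|_e \le \sqrt{\Phi \cdot \sum_e \rr_e(\CC\Delta-\dd)_e^2}$ together with the oracle bound yields $\Phi^{(i+1,k)} \le \Phi^{(i,k)}(1+O(\epsilon\alpha))$, so after $T$ primal steps $\Phi \le n \exp(O(\epsilon\alpha T))$, which is $\poly(n)$ since $\epsilon \alpha T = \Theta(\log n)$ by the parameter choice. For a \emph{width} step, the identity $\Phi^{(i,k+1)} = \Phi^{(i,k)} + \epsilon \sum_{e\in S}\rr_e$ combined with $\sum_{e\in S}\rrbar_e\,\tau^2 \le (1+O(\delta))(1+\epsilon)\Phi$ (from the oracle and $|(\CC\Delta)_e|\ge\tau$ on $S$) gives per-step growth of at most $(1+O(\epsilon/\tau^2))$, and the ``additive'' $\epsilon^2\Phi/n$ term forces $|S|\le O(n/(\epsilon\tau^2))$. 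A counting argument then caps the number of width steps $K$: every coordinate $e$ can appear in the width set at most $O(\log\Phi^{\text{end}}/\epsilon)$ times before $\ww_e$ would exceed $\Phi^{\text{end}}$, and balancing this coordinate-wise cap against the $\Phi$ growth inequality yields $K = \widetilde O(T)$ under the given parameter settings, so total iterations remain $\widetilde O(n^{1/3}\epsilon^{-7/3})$.

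Finally, correctness follows from a per-coordinate lower bound on $\ww_e^{\text{end}}$. Using $1+x \ge \exp(x-x^2)$ for $x\in[0,1]$ and the fact that in primal steps $|(\CC\Delta^{(i,k)}-\dd)_e| \le \tau$,
\[
\log \ww_e^{\text{end}} \;\ge\; \epsilon\alpha\bigl(1-O(\epsilon\alpha\tau)\bigr)\sum_i |(\CC\Delta^{(i,k)}-\dd)_e|.
\]
Combined with the convexity bound $|(\CC\xxhat-\dd)_e| \le \tfrac{1}{T}\sum_i |(\CC\Delta^{(i,k)}-\dd)_e|$ and the global upper bound $\ww_e^{\text{end}} \le \Phi^{\text{end}}$, one gets
\[
|(\CC\xxhat-\dd)_e| \;\le\; \frac{\log \Phi^{\text{end}}}{\epsilon\alpha T\,(1-O(\epsilon\alpha\tau))}.
\]
Plugging in $\epsilon\alpha T = \Theta(\log(n/\Psi_0))$, $\alpha\tau = \Theta(1)$ (so $\epsilon\alpha\tau = O(\epsilon)$), and $\log\Phi^{\text{end}} = (1+O(\epsilon))\,\epsilon\alpha T$, the right-hand side evaluates to $1 + O(\epsilon)$, which gives $\|\CC\xxhat-\dd\|_\infty \le 1+O(\epsilon)$.

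The main obstacle is Step~3: obtaining the tight $K = \widetilde O(T)$ bound on width-reduction steps. This requires simultaneously exploiting (a) the oracle-based upper bound on the mass $\sum_{e\in S}\rr_e$ penalized per step and (b) the observation that each coordinate's weight grows by a $(1+\epsilon)$ factor whenever it enters $S$, which caps how often any coordinate can be width-reduced. Calibrating these two bounds against the parameters $\tau$ and $\alpha$ is the delicate part, and is precisely where the acceleration from $\sqrt n$ iterations (without width reduction) to $n^{1/3}$ iterations arises; the robust setting only changes constants, not the structure, via the $\delta \le \epsilon/6$ hypothesis.
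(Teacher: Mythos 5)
Your outline of the $\Phi$-potential and the final-solution quality argument matches the paper's structure, and your per-width-step bound $\Phi^{(i,k+1)}\le\Phi^{(i,k)}\bigl(1+O(\epsilon/\tau^2)\bigr)$ (obtained by plugging $|(\CC\Delta-\dd)_e|\ge\tau$ directly into the squared oracle value rather than the Cauchy--Schwarz estimate the paper uses) is valid and in fact tighter than the paper's $\bigl(1+O(\epsilon/\tau)\bigr)$. But neither bound, nor your coordinate-wise counting argument, suffices to bound the number of width-reduction steps $K$, and this is a genuine gap.

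The problem is that bounding $K$ requires a \emph{lower} bound on progress per width step measured against a quantity with a known ceiling. Your coordinate-wise cap --- each $e$ is width-reduced at most $O(\log\Phi^{\text{end}}/\epsilon)$ times --- only bounds the number of (coordinate, width-step) pairs by $O(n\log n/\epsilon^2)$; since a single width step may touch only one coordinate, this caps $K$ at $O(n\log n/\epsilon^2)$, which is polynomially worse than the target $\widetilde O(n^{1/3}\epsilon^{-7/3})$. Similarly, the only per-step \emph{lower} bound on $\Phi$-growth you can extract from the update rule is $\Phi' \ge \Phi(1+\epsilon^2/n)$ (one coordinate, $\rr_e\ge\epsilon\Phi/n$), again giving $K=O(n\log n/\epsilon^3)$. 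Tightening the upper bound on per-step $\Phi$-growth, as you did, does not help at all: an upper bound cannot limit the number of steps.

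What the paper does --- and what is essential to the CCLPT acceleration from $\sqrt n$ to $n^{1/3}$ oracle calls --- is introduce a second potential
\[
\Psi(\rr) \;=\; \min_{\Delta} \sum_e \rr_e (\CC\Delta - \dd)_e^2,
\]
and prove (via Lemma~\ref{lem:PsiChange}) that each width-reduction step \emph{increases} $\Psi$ multiplicatively: $\Psi(\rrbar^{(i,k+1)}) \ge \Psi(\rrbar^{(i,k)})\bigl(1+\Omega(\epsilon^2\tau^2/n)\bigr)$, because the penalized coordinates have $(\CC\widetilde\Delta-\dd)_e^2 \ge \tau^2$ and their weights grow by a $(1\pm O(\epsilon))$ relative factor. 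Since $\Psi \le (1+\epsilon)\Phi \le n^{O(1/\epsilon)}$ (from your $\Phi$ upper bound) and $\Psi_0\ge L$ (Lemma~\ref{lem:lower_bound_Psi_0}), this forces $K = O(n\log(n/L)/(\epsilon^3\tau^2)) \le \tau/\epsilon^2 = \widetilde O(n^{1/3}\epsilon^{-7/3})$. Without this dual potential, the argument does not close; you would need to supply $\Psi$, its monotonicity under primal steps, the per-width-step growth bound, and the $\delta$-robustness of all three (which is where the $\delta\le\epsilon/6$ hypothesis is consumed). This is the content of Lemma~\ref{lem:ChangePsiMon} and the contradiction argument at the end of the paper's proof of the theorem.
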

In fact, we can prove that this algorithm satisfies an even stronger stability guarantee -- a quantitatively strong type of $\ell_3$-stability, namely
\[
\sum_{i=1}^T \left\|\frac{\rr^{(i+1)} - \rr^{(i)}}{\rr^{(i)}}\right\|_3^3 \leq O(n^{1/3}).
\]
The $\ell_3$-stability guarantee allows for the following lazy-update scheme: for every $\ell$, in every $2^{\ell}$ iterations perform an update of size $O(2^{3 \ell})$ to $\ov{\rr}^{(i)}$. 

Together with the one-level inverse maintenance of \textcite{bns19}, this improves upon the previous best deterministic algorithm for low-accuracy $\ell_{\infty}$ regression that runs in $O(n^{\omega} + n^{2 + 1/6})$. We present a simplified version of the data structure below, and the formal version tailored to our application is in Section~\ref{sec:InverseMaintenance}. 
\begin{lemma}[One-level inverse maintenance, (Informal) Theorem 4.1 of \cite{bns19}]
\label{lem:InverseMaintenanceOneLevelShort}
There is a data structure that supports the following two operations to maintain the inverse of an $n \times n$ matrix $M$:
\begin{itemize}
\item {\bf Reset}: Reset $M^{-1}$ to $(M+\Delta)^{-1}$, where $\Delta$ has $k_0$ non-zero entries. This operation can be done in $O(\Tmat(n, n, k_0))$ time.\footnote{$\Tmat(n,r,m)$ denotes the time complexity of multiplying an $n \times r$ matrix with an $r \times m$ matrix.}
\item {\bf Query}: Output the vector $(M+\Delta)^{-1} \cdot v$ using the maintained $M^{-1}$ and $M^{-1} v$, where $\Delta$ has at most $n^{a_0}$ non-zero entries. This operation can be done in $O(n^{\omega a_0} + n^{1 + a_0})$ time.
\end{itemize}
\end{lemma}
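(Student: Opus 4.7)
The plan is to derive both operations from the Sherman--Morrison--Woodbury identity, using the sparsity of $\Delta$ to factor it through a small ``dense core''. Throughout, let $R, C \subseteq [n]$ denote the row and column supports of $\Delta$, so $|R|$ and $|C|$ are bounded by the number of non-zero entries of $\Delta$. I would write $\Delta = I_R \widetilde{\Delta}\, I_C^{\top}$, where $I_R$ and $I_C$ are $n \times |R|$ and $n \times |C|$ selector matrices and $\widetilde{\Delta}$ is the dense restriction, and then set $U = I_R \widetilde{\Delta}$ and $V = I_C$ so that $\Delta = U V^{\top}$. Woodbury then gives
\[
(M + UV^{\top})^{-1} \;=\; M^{-1} \;-\; M^{-1} U \bigl( I + V^{\top} M^{-1} U \bigr)^{-1} V^{\top} M^{-1}.
\]

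For the \textbf{Reset} operation, I would compute each factor in the identity above and add the resulting rank-$k_0$ correction to the stored inverse. The $n \times k_0$ block $M^{-1} U$ and the symmetric $V^{\top} M^{-1}$ each cost $O(\Tmat(n,n,k_0))$; assembling and inverting the $k_0 \times k_0$ middle matrix costs an additional $O(\Tmat(k_0,n,k_0)) + O(k_0^{\omega})$; and forming the final $n \times n$ correction costs $O(\Tmat(n,k_0,n)) = O(\Tmat(n,n,k_0))$, which dominates.

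For the \textbf{Query} operation, I would apply the same identity to $v$ directly, crucially without ever forming the full $n \times n$ updated inverse. The key observation is that $V^{\top} M^{-1} U = I_C^{\top} M^{-1} I_R \widetilde{\Delta}$ can be obtained by reading off the $|C| \times |R|$ submatrix of the stored $M^{-1}$ indexed by $C \times R$ in $O(n^{2 a_0})$ time and then multiplying by $\widetilde{\Delta}$ in $O(n^{\omega a_0})$ time; similarly $V^{\top} M^{-1} v = I_C^{\top} (M^{-1} v)$ is just a restriction of the pre-given vector, in $O(n^{a_0})$ time. Inverting the $n^{a_0} \times n^{a_0}$ parenthesized matrix and propagating its action through $\widetilde{\Delta}$ and $I_R$ stays within $O(n^{\omega a_0})$ overall and yields an $n$-dimensional vector \emph{supported on $R$}. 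The final step applies the stored $M^{-1}$ to this sparse vector, i.e., takes a linear combination of $n^{a_0}$ columns of $M^{-1}$, at cost $O(n \cdot n^{a_0}) = O(n^{1+a_0})$. Subtracting the correction from the given $M^{-1} v$ and summing all contributions yields the claimed bound $O(n^{\omega a_0} + n^{1+a_0})$.

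The step I expect to require the most care is preserving the sparsity of the intermediate $n$-dimensional vector before the final application of $M^{-1}$: it is precisely because the correction is injected by the selector $I_R$ acting on an $n^{a_0}$-dimensional vector that the last matrix-vector multiplication costs $O(n \cdot n^{a_0})$ rather than $O(n^2)$. Keeping the bookkeeping honest so that this sparsity is not silently destroyed by any of the intermediate products is the delicate part of the argument, and is also the structural reason the Woodbury approach is compatible with sub-quadratic inverse maintenance.
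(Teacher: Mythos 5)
Your proposal is correct and follows essentially the same route as the paper: a Sherman--Morrison--Woodbury update factored through selector matrices on the row/column support of $\Delta$, with the reset cost dominated by the final $n \times k_0$ times $k_0 \times n$ product and the query cost split between inverting the small core and taking a sparse linear combination of columns of the stored inverse. The only cosmetic difference is that you absorb the dense restriction $\widetilde{\Delta}$ into the factor $U$, whereas the paper keeps it as a separate middle factor $\CC$ and writes the core as $(\CC^{-1}+\VV^{\top}\NN\UU)^{-1}$; the costs and the sparsity bookkeeping are identical.
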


\paragraph{Runtime when $\omega = 2$.}
For simplicity, we only show the runtime of our algorithm when $\omega = 2$ in this section and omit polylogarithmic factors. Let us choose the parameter $a_0 = 3/4$, so that we perform a reset operation whenever we accumulate more than $n^{a_0} = n^{3/4}$ updates to $\ov{\rr}$. From our low-rank update scheme under the $\ell_3$ stability guarantee, this only happens in every $n^{1/4}$ iterations. So we perform a reset operation with cost $O(n^2)$ (since $\omega = 2$) in every $O(n^{1/4})$ iterations, and over the total $O(n^{1/3})$ iterations this gives a total reset time of $O(n^{2 - 1/4} \cdot n^{1/3}) = O(n^{2 + 1/12})$.

We perform a query operation in every iteration with cost $O(n^{2 a_0} + n^{1 + a_0}) = O(n^{1 + 3/4})$. Over all $O(n^{1/3})$ iterations this gives a total query time of $O(n^{1 + 3/4} \cdot n^{1/3}) = O(n^{2 + 1/12})$. Therefore, the total runtime is the sum of the reset time and the query time, which is $O(n^{2 + 1/12})$ as claimed in Theorem~\ref{thm:informal_deterministic}.

\subsection{Randomized MWU Algorithm via Two-Level Inverse Maintenance}
To further improve the runtime of the algorithm, we will use the following, more efficient two-level inverse maintenance data structure.
\begin{lemma}[Two-level inverse maintenance, (Informal) Theorem 4.2 of \cite{bns19}]
\label{lem:InverseMaintenanceTwoLevelShort}
There is a data structure that supports the following three operations to explicitly maintain the inverse of an $n \times n$ matrix $M$. The algorithm achieves the goal via explicitly maintaining the inverse of an $n \times n$ matrix $M_0$ and implicitly maintaining the inverse of another $n \times n$ matrix $M_1$ that differs from $M_0$ on at most $n^{a_0}$ entries, and the true matrix $M$ always differ from $M_1$ on at most $n^{a_1}$ entries where $a_1 \leq a_0$:
\begin{itemize}
\item {\bf Reset}: Reset $M_0^{-1}$ to $(M_0+\Delta_0)^{-1}$, where $\Delta_0$ has $k_0$ non-zero entries. This operation can be done in $\Tmat(n, n, k_0)$ time.
\item {\bf Partial reset}: Implicitly reset $M_1^{-1}$ to $(M_1+\Delta_1)^{-1}$, where $\Delta_1$ has $k_1$ non-zero entries. This operation can be done in $\Tmat(n, n^{a_0}, k_1)$ time.
\item {\bf Query}: Output $\ell$ entries of the vector $M^{-1} \cdot v$ using the maintained $M_0^{-1}$, $M_1^{-1}$ (implicitly). 
This operation can be done in $\Tmat(n^{a_0}, n^{a_1}, \max\{n^{a_1}, \ell\})$ time.
\end{itemize}
\end{lemma}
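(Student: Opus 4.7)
The plan is to apply the Sherman--Morrison--Woodbury identity at two nested levels, matched to the two-level storage of the inverse. The data structure maintains at all times: (i) the dense inverse $M_0^{-1}$; (ii) factored representations $M_1 - M_0 = S_1 D_1 T_1^\top$ and $M - M_1 = S_2 D_2 T_2^\top$, where $S_1, T_1$ are $n \times n^{a_0}$ and $S_2, T_2$ are $n \times n^{a_1}$ column-restrictions of the identity covering the non-zero rows and columns of the respective differences, with $D_1, D_2$ the corresponding dense middle blocks; and (iii) the precomputed small blocks $A_1 := M_0^{-1} S_1$, $B_1 := T_1^\top M_0^{-1}$, and the capacitance $C_1 := (D_1^{-1} + B_1 S_1)^{-1}$ of size $n^{a_0} \times n^{a_0}$.

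For Reset, I factor $\Delta_0 = \widetilde S \widetilde D \widetilde T^\top$ with $\widetilde S, \widetilde T$ of width $k_0$ and invoke Woodbury,
\[
(M_0 + \Delta_0)^{-1} = M_0^{-1} - (M_0^{-1} \widetilde S)\,\bigl(\widetilde D^{-1} + \widetilde T^\top M_0^{-1} \widetilde S\bigr)^{-1}\,(\widetilde T^\top M_0^{-1}).
\]
The blocks $M_0^{-1} \widetilde S$ and $\widetilde T^\top M_0^{-1}$ are just $k_0$ columns/rows of the stored $M_0^{-1}$, so the bottleneck is the rank-$k_0$ outer-product update to $M_0^{-1}$, giving $\Tmat(n, n, k_0)$. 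For Partial Reset, I append up to $k_1$ new columns to $S_1, T_1$ (and extend $D_1$) to absorb $\Delta_1$; the matrices $A_1$ and $B_1$ gain $k_1$ new columns/rows that can be read off $M_0^{-1}$ in $O(n k_1)$ time, and refreshing $C_1$ requires assembling the new blocks of $B_1 S_1$, whose dominant cost is the $\Tmat(n, n^{a_0}, k_1)$ multiplication of a $k_1$-row slice of $B_1$ against $S_1$ (and symmetrically), with a subsequent $(n^{a_0})^\omega$ inversion absorbed under the intended parameter regime.

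For Query, I telescope Woodbury twice: $M^{-1} v = M_1^{-1} v - A_2 C_2 B_2 v$ where $A_2 = M_1^{-1} S_2$, $B_2 = T_2^\top M_1^{-1}$, $C_2 = (D_2^{-1} + T_2^\top M_1^{-1} S_2)^{-1}$, and $M_1^{-1} w = M_0^{-1} w - A_1 C_1 (B_1 w)$. Because $S_2, T_2$ are $n^{a_1}$-sparse, $M_1^{-1} S_2$ reduces to selecting $n^{a_1}$ columns of $M_0^{-1}$ and correcting with $A_1 C_1 (B_1 S_2)$, where $B_1 S_2$ is an $n^{a_0} \times n^{a_1}$ column-selection of $B_1$. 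By carefully ordering the contractions so that the ``inner'' axes collapse before touching the ``outer'' $n$, the bottleneck product has inner dimensions $n^{a_0}$ and $n^{a_1}$ and final narrow dimension $\max\{n^{a_1}, \ell\}$ -- the latter depending on whether the reduction through $T_2^\top, D_2$ or the restriction to the $\ell$ requested output coordinates comes first -- yielding the claimed $\Tmat(n^{a_0}, n^{a_1}, \max\{n^{a_1}, \ell\})$.

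The main obstacle is the scheduling of the query: one must never materialize an $n \times n^{a_0}$ or $n \times n^{a_1}$ intermediate, instead exploiting the sparsity of $S_2, T_2$ and the stored small blocks $A_1, B_1, C_1$ to always keep one dimension at $n^{a_0}$ or $n^{a_1}$, and to decide when to project onto the $\ell$ requested coordinates. Correctness of the two-level telescoping follows immediately from Woodbury; the work lies in the amortized bookkeeping that establishes every Reset, Partial Reset, and Query stays within the promised matrix-multiplication budget across the full admissible range of $a_0, a_1$ and the target sparsity of updates.
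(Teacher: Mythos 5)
Your overall architecture---nested Sherman--Morrison--Woodbury, with the outer level absorbing $M_1-M_0$ and the inner level absorbing $M-M_1$---is the same idea as the paper's, which phrases it through the transformation matrices $\TT^{(t',t)} = \II + (\MM^{(t')})^{-1}(\MM^{(t)}-\MM^{(t')})$, and your Reset analysis matches the paper's. The Query step, however, has a genuine gap. To form the inner capacitance $C_2$ and the correction term $M_1^{-1}S_2$ you must evaluate $n^{a_1}$ rows/columns of $M_1^{-1}=M_0^{-1}-A_1C_1B_1$, i.e.\ products of the shape $(A_1)_{T_2,:}\,C_1\,(B_1)_{:,S_2}$ with dimensions $n^{a_1}\times n^{a_0}$, $n^{a_0}\times n^{a_0}$, $n^{a_0}\times n^{a_1}$. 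Whichever way you parenthesize this, the first multiplication has two sides of length $n^{a_0}$ and one of length $n^{a_1}$, costing $\Tmat(n^{a_0},n^{a_0},n^{a_1})$, which strictly exceeds the claimed query budget $\Tmat(n^{a_0},n^{a_1},\max\{n^{a_1},\ell\})$ whenever $a_0>a_1$ (for $\omega=2$ this is $n^{2a_0}$ versus $n^{a_0+a_1}$, and with the paper's parameters it would wipe out the speedup over the one-level structure). No ``careful ordering of the contractions'' avoids this. The missing idea---the crux of why the two-level structure helps at all---is to precompute the $n^{a_0}\times n$ product of the outer capacitance with $B_1$ (the paper's $\EE=(\TT^{(t_0,t_1)}_{J,J})^{-1}\NN_{J,:}$) at \emph{partial-reset} time, and to maintain it across partial resets by rank-$k_1$ Woodbury corrections within the $\Tmat(n,n^{a_0},k_1)$ budget; a query then only selects $\max\{n^{a_1},\ell\}$ columns of this table and applies rank-$n^{a_1}$ corrections, which is what yields the stated bound.

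Two smaller points. First, your query reads $\ell+n^{a_1}$ entries of $M_0^{-1}v$ via length-$n$ dot products, costing $n\cdot\max\{\ell,n^{a_1}\}$, again over budget; you must also maintain the vectors $M_0^{-1}v$, $B_1v$, and $C_1B_1v$ explicitly, refreshed at reset/partial-reset time (the paper sidesteps this by baking $v$ into the block matrix of Theorem~\ref{lem:matrix_formula}, so that $M^{-1}v$ is a column of an explicitly maintained inverse). Second, re-inverting the outer capacitance from scratch in $(n^{a_0})^{\omega}$ time at each partial reset is not justified for general $\omega$; the paper instead updates $\BB=(\TT^{(t_0,t_1)}_{J,J})^{-1}$ by a rank-$k_1$ Woodbury step costing $\Tmat(n^{a_0},n^{a_0},k_1)\le\Tmat(n,n^{a_0},k_1)$.
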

The total runtime of the above data structure is the sum of its reset, partial reset, and query times. Let us now compare the query times of this two-level data structure with the one-level version. Observe that, the query time of the one-level data structure is $n^{1+a_0}$ and that of the two-level data structure is better than $ n^{1+a_0}$ only if $\ell = o(n)$. In other words, we get an improvement via the two-level data structure only if we have an algorithm that does not require querying the entire maintained vector $M^{-1}v$.

So far, such an improvement via the two-level data structure has only been utilized, although in a complicated way, in the work of \textcite{jiang2021faster} where they give a fast algorithm for linear programming by using the data structure within the robust interior point method framework and querying a {\it sketch} of the vector at every iteration. It is still an open problem if one can achieve their runtime of $\approx n^{2+1/18}$ via a deterministic algorithm and it is conjectured that improving the runtime either requires an improved data structure or, a more sophisticated ``dimension reduction technique'' to work with the algorithm.

\paragraph{Sketching and non-monotone MWU.} 
Similar to~\cite{jiang2021faster}, in our work we also query a sketch of the maintained vector in every iteration. More precisely, in each iteration we use a random matrix $\SS \in \R^{n^{1/2+\eta} \times n}$ where $\eta$ is the acceleration that we get, i.e., the total number of iterations is $O(n^{1/2-\eta})$, and we compute an approximate step
$\SS^{\top} \cdot \SS \cdot (\CC^{\top} \Delta^{(i,k)} - \dd)$. Using the coordinate-wise embedding guarantee of the random matrix $\SS$, we can ensure that for each coordinate we have
\[
\Big( \SS^{\top} \SS (\CC^{\top} \Delta^{(i,k)} - \dd) \Big)_e \approx (\CC^{\top} \Delta^{(i,k)} - \dd)_e.
\]
We now require to change Line~\ref{algline:UpdateWPrimal} of Algorithm~\ref{alg:MWU} to update the weights by 
\[
\ww^{(i+1,k)} \leftarrow \ww^{(i,k)} \Big(1 + \epsilon\alpha \cdot \SS^{\top} \SS (\CC\Delta^{(i,k)}-\dd) \Big).
\]
Note that we lose monotonicity of the weights with this new primal step. We have to use this non-monotone update because the absolute values $|\SS^{\top} \SS (\CC^{\top} \Delta^{(i,k)} - \dd)|$ would result in an error that is around the standard deviation of the estimator in every update of $\ww^{(i,k)}$'s, and this would add up over iterates. Since the entire analysis of the MWU methods depends on tracking potentials which are functions of the weights, we would incur a large error. To circumvent this issue we require a version of the MWU method where the weights are not updated monotonically, and the random noise introduced by the sketching matrix $\SS$ can cancel out with each other across different coordinates $e$ and across different iterations $i$.

Monotonicity is crucial in accelerating MWU methods and it is non-trivial to achieve accelerated rates without it. A few works in graph algorithms have been successful in obtaining accelerated rates without monotonicity~\cite{madry2016computing,liu2020faster} for specific algorithms. In this paper, we extend the algorithm of \textcite{madry2016computing} to regression and obtain an algorithm with non-monotone updates that also converges in $n^{1/3}$ iterations and is robust (Refer to Appendix~\ref{sec:NonMonMWU} for the complete algorithm and analysis). 

Interior point methods directly control the solution quality of the last iterate.
In contrast, MWU algorithms only 
measure the quality of the average of the primal iterates $\Delta^{(i,k)}$. 
As a result, our bound on the final solution requires a new MWU analysis that can handle cancellations between iterates of the errors arising from using sketching. 
We achieve this by developing a tighter analysis that upper bounds the sum of the sketching error over multiple iterations:
\[
\sum_{i=0}^t \left( \Big(\SS^{\top} \SS (\CC\Delta^{(i)}-\dd)\Big)_e - (\CC\Delta^{(i)}-\dd)_e\right) \lesssim \frac{\sqrt{n t}}{\sqrt{b}}.
\]
We prove this bound using Freedman's concentration bound for martingales. We also believe this tighter analysis can simplify the sketching analysis for the previous IPM papers \cite{cohen2021solving, lsz19, jiang2021faster}.

% \deeksha{Rasmus: Changed the following paragraph on stability and robustness!}

\paragraph{Stability and robustness of non-monotone MWU.}
The non-monotone MWU with standard width reduction steps is neither stable nor satisfies a low-rank update per iteration. We propose a new width reduction step that satisfies a low-rank update scheme which is sufficient for our data structure. Our steps, however, do not satisfy $\ell_2$ stability, which is a sufficient condition for the low-rank update scheme. Instead of increasing all weights by a factor of $(1+\epsilon)$ as in Line~\ref{algline:UpdateWWidth} of Algorithm~\ref{alg:MWU}, our new width reduction step increases a carefully selected set of weights. As a result, we can ensure that whenever we increase a large set of weights, we also increase the potential by a lot, so this event doesn't happen very often. This helps ensure that weight updates from width-reduction steps occur on a similar ``schedule'' to weight updates from our primal update steps, and it allows us to efficiently handle both in the inverse maintenance data structure (Refer to Algorithm~\ref{alg:non_monotone_accel_robust} and Appendix~\ref{sec:Appendix_non_monotone_robust} for the complete algorithm and analysis).
To efficiently find the coordinates $e$ to perform width reduction on, we use an additional {\it heavy-hitter} data structure to identify these $\Delta_e$ exactly. We can only afford to find $n^{1/2+\eta}$ such coordinates in each iteration. This restriction on the number of coordinates restricts us to set $\eta$ to be $1/10$, and our final iteration complexity is $n^{1/2-\eta} = n^{2/5}$ instead of $n^{1/3}$. The non-monotone algorithm also requires estimating a weighted $\ell_3$-norm of $\wh{\Delta}^{(i,k)}$'s for which we use an additional sketch from \cite{wz13}.

Unlike the width reduction steps, the primal steps are stable, and they satisfy the $\ell_2$ stability,
\[
\left\|\frac{\rr^{(i+1)} - \rr^{(i)}}{\rr^{(i)}}\right\|_2^2 \leq O(n^{2 \eta}).
\]
Given the $\ell_2$ stability guarantee, we again use coordinate-wise approximate weights $\ov{\rr}^{(i)} \approx_{\delta} \rr^{(i)}$ in each primal step, and only update $\ov{\rr}^{(i)}_e$ to be $\rr^{(i)}_e$ if it differs from $\rr^{(i)}_e$ by more than $\delta$. This again guarantees a low-rank update scheme for the primal steps: for every $\ell$, in every $2^{\ell}$ iterations we only perform an update of size $O(2^{2 \ell} \cdot n^{2 \eta})$ to $\ov{\rr}^{(i)}$. 

It is non-trivial to show that the accelerated non-monotone MWU is robust under such coordinate-wise approximations to the weights. This is because we do not update the weights in every primal step, and we lazily update them in future iterations. We use an amortization argument to show that we can still gain enough changes in the required potentials even when we defer some updates to the future. However, this means our accelerated non-monotone MWU is only robust under the specific approximate weights $\ov{\rr}^{(i)}_e$ that are updated to be $\rr^{(i)}_e$ whenever it differs too much from $\rr^{(i)}_e$. We cannot guarantee robustness if in every iteration we choose an arbitrary coordinate-wise approximation unless we consider the unaccelerated algorithm, which was guaranteed in the IPM algorithms.

\paragraph{Runtime when $\omega = 2$.}
 Finally, we sketch the time complexity of our non-monotone MWU algorithm using sketching when $\omega = 2$.
 For simplicity, we omit polylogarithmic factors. 
 Using the two-level inverse maintenance data structure of Lemma~\ref{lem:InverseMaintenanceTwoLevelShort}, we perform a reset operation whenever we accumulate more than $n^{a_0}$ updates to $\ov{\rr}$, and by our low-rank update scheme under the $\ell_2$ stability guarantee, this only happens in every $n^{a_0/2-\eta}$ iterations. Similarly, we perform a partial reset operation whenever we accumulate more than $n^{a_1}$ updates to $\ov{\rr}$, and this only happens in every $n^{a_1/2-\eta}$ iterations. Finally, note that our query time is bounded by $n^{a_0+a_1}$ since we always ensure that we query for at most $\ell = O(n^{1/2+\eta})$ coordinates in each iteration. So our total runtime over $T = n^{1/2-\eta}$ iterations is
\begin{align*}
\underbrace{T \cdot \frac{n^2}{n^{a_0/2-\eta}}}_{\text{reset}} + \underbrace{T \cdot \frac{n^{1 + a_0}}{n^{a_1/2-\eta}}}_{\text{partial reset}} + \underbrace{T \cdot n^{a_0+a_1}}_{\text{reset}} 
= &~ n^{2.5 - a_0/2} + n^{1.5 + a_0 - a_1/2} + n^{0.5-\eta+a_0+a_1}.
\end{align*}
Choosing the parameters $a_0 = 1 - \frac{1 - 2 \eta}{9}$ and $a_1 = 1 - \frac{1 - 2 \eta}{3}$, we have that the total runtime is bounded by $O(n^{2+1/18-\eta/9})$. Since we achieve an acceleration of $\eta = 1/10$ and $n^{1/2-\eta} = n^{2/5}$ iterations, this gives the claimed $O(n^{2+1/18-\eta/9}) = O(n^{2+1/22.5})$ time complexity of Theorem~\ref{thm:informal_random}.

\section{Fast Width-Reduced MWU Algorithms}\label{sec:MWUDirect}
In this section, we present the formal guarantees of our multiplicative weight update routines: a deterministic MWU algorithm with monotone weights (Algorithm~\ref{alg:MWU}) that is used in Theorem~\ref{thm:informal_deterministic}, and a randomized MWU algorithm with non-monotone weights and stable and robust steps  (Algorithm~\ref{alg:non_monotone_accel_robust}) that is used in Theorem~\ref{thm:informal_random}.

\subsection{Lazy Update Procedure}
We first present the \textsc{SelectVector} algorithm (Algorithm~\ref{alg:select_vector}) from \cite{lee2021tutorial} that computes a coordinate-wise approximate vector $\ov{\rr}$ of $\rr$ such that $\ov{\rr}$ undergoes small updates. 

We remark that the only difference between our algorithm and that of \cite{lee2021tutorial} is in Line~\ref{algline:update_S} where we only include a coordinate $e$ in $S$ if $\ww_e$ is not being updated by a width reduction step between primal iterations $i-2^{\ell}$ and $i$. This is due to a minor technicality of dealing with the two kinds of steps, primal and width reduction, in Algorithm~\ref{alg:non_monotone_accel_stab} and \ref{alg:non_monotone_accel_robust}. In all our algorithms, if we toggle a coordinate $e$ in a width reduction step, then we always update the ``lazy'' approximate vector $\rrbar_e$ to be the same as $\rr_e$, so the guarantees of the \textsc{SelectVector} algorithm still hold under this change in  Line~\ref{algline:update_S}.
\begin{algorithm}[H]
\caption{Compute a coordinate-wise approximate vector that undergoes small updates \cite{lee2021tutorial}}\label{alg:select_vector}
\begin{algorithmic}[1]
\Procedure{SelectVector}{$\rr^{(i)}, i, \delta$}
\State \Comment{This procedure stores all previous $\rr^{(0)}, \cdots, \rr^{(i-1)}$, and the $\ov{\rr}$ in the previous iteration}
\If{$i = 0$}
\State \Return $\ov{\rr} \gets \rr^{(0)}$
\EndIf
\State $S \leftarrow \emptyset$
\For{$\ell = 0, 1, \cdots, \log n$}
\If{$i \equiv 0 \mod 2^{\ell}$}
\If{$\ell = \log n$}
\State $S \gets [n]$ 
\Else
\State $S \gets S \cup \{e : |\ln(\frac{\rr_e^{(i)}}{\rr_e^{(i-2^\ell)}})| \geq \frac{\delta}{2 \log n} \text{ and } \textsc{LastWidth}(i,e) \leq i - 2^{\ell}\}$ \label{algline:update_S} 
\State \Comment{$\textsc{LastWidth}(i,e) \leq i$ is the last primal step during which a width reduction step updates $\ww_e$}
\EndIf
\EndIf
\EndFor
\State $\ov{\rr}_e \leftarrow \rr^{(i)}_e$ for all $e \in S$
\State \Return $\ov{\rr}$
\EndProcedure
\end{algorithmic}
\end{algorithm}

\subsection{Monotone Multiplicative Weights Update Algorithm}

We have already presented the convergence guarantees of Algorithm~\ref{alg:MWU} in Theorem~\ref{thm:InfRegMainMonotone}. We now add the stability guarantees that we use to prove the guarantees of our fast deterministic algorithm. The analysis of the algorithm is in Appendix~\ref{sec:MWUMonotone} and the stability guarantees are in Appendix~\ref{sec:stability_algo_1}.

\begin{restatable}[Stability bound of $\ell_3$ norm over all primal iterations]{lemma}{MonotonePrimal}
\label{lem:StabilityL3AllIterResistance}
Let $k_i$ denote the number of width reduction steps taken by the algorithm when the $i^{th}$ primal step is being executed. Then over all $T$ primal steps of Algorithm~\ref{alg:MWU}, we have
\[
\sum_{i=0}^{T-1} \sum_{e\in S_i} \left(\frac{\rr_e^{\left(i + 1,k_i\right)}-\rr_e^{\left(i,k_i \right)}}{\rr_e^{\left(i,k_i \right)}} \right)^3 \leq \Otil(\alpha^2 n) = \Otil(n^{1/3}\epsilon^{2/3}).
\]
Here $S_i$ is the set of coordinates $e$ at primal iteration $i$ such that $\rr_e^{(i+1,k_i)}\geq \rr_e^{(i,k_i)}(1+3\epsilon\alpha)$\footnote{We note that it is sufficient to consider these sets $S_i$'s since any change that is smaller than the ones captured here can happen only $\Otil(1)$ times.}.
\end{restatable}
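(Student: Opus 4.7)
The plan is to reduce the cube of the relative change to a cube of $y_e|\CC\Delta^{(i,k_i)}-\dd|_e$, where $y_e := \ww^{(i,k_i)}_e/\rr^{(i,k_i)}_e \in [0,1]$, bound this cube sum per iteration via the $\ell_2$-oracle guarantee together with the primal width constraint $|\CC\Delta-\dd|_e\le\tau$, and then amortize across iterations using the fact that the monotone weights have only $\widetilde O(\log n/\epsilon)$ total log-growth.

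First, I would compute the relative change directly. From the primal update $\ww^{(i+1,k_i)}_e = \ww^{(i,k_i)}_e(1+\epsilon\alpha|\CC\Delta-\dd|_e)$ and $\rr_e = \ww_e + \tfrac{\epsilon}{n}\|\ww\|_1$,
\[
\frac{\rr^{(i+1,k_i)}_e - \rr^{(i,k_i)}_e}{\rr^{(i,k_i)}_e} \;=\; \epsilon\alpha\, y_e |\CC\Delta-\dd|_e \;+\; \frac{(\epsilon/n)\bigl(\|\ww^{(i+1,k_i)}\|_1 - \|\ww^{(i,k_i)}\|_1\bigr)}{\rr^{(i,k_i)}_e}.
\]
Cauchy--Schwarz applied to the $\ell_2$-oracle bound $\sum_e \rr_e(\CC\Delta-\dd)_e^2 \leq \|\rr\|_1$ (which follows from optimality of $\Delta^{(i,k_i)}$ against $\xx^\star$, normalized so $\|\CC\xx^\star-\dd\|_\infty \leq 1$) controls the additive piece by $\sqrt{2}\,\epsilon\alpha$. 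Hence on $S_i$, where the total relative change is $\geq 3\epsilon\alpha$, the multiplicative piece dominates, giving $y_e|\CC\Delta-\dd|_e \geq 3-\sqrt{2} > 1$ and the cube bound $\bigl((\rr^{(i+1,k_i)}_e-\rr^{(i,k_i)}_e)/\rr^{(i,k_i)}_e\bigr)^3 \leq O\bigl((\epsilon\alpha)^3\bigr)\bigl(y_e|\CC\Delta-\dd|_e\bigr)^3$.

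Second, per iteration I would show $\sum_{e\in S_i}(y_e|\CC\Delta-\dd|_e)^3 \leq O(\tau n/\epsilon)$, by writing $(y_e|\CC\Delta-\dd|_e)^3 \leq \tau\cdot y_e^2(\CC\Delta-\dd)_e^2$ (using the primal check $|\CC\Delta-\dd|_e \leq \tau$ from Line~\ref{algline:CheckWidth}) and combining $y_e \leq 1$ with the lower bound $\rr_e \geq \tfrac{\epsilon}{n}\|\ww\|_1$ and the $\ell_2$-oracle bound to obtain $\sum_e y_e^2(\CC\Delta-\dd)_e^2 \leq \tfrac{n}{\epsilon\|\ww\|_1}\sum_e \rr_e(\CC\Delta-\dd)_e^2 \leq 2n/\epsilon$.

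\textbf{The main obstacle} is the amortization across $T = \widetilde\Theta(n^{1/3}\epsilon^{-7/3})$ primal iterations: a direct product gives only $\widetilde O(\alpha n)$, which falls short of the target $\widetilde O(\alpha^2 n)$ by a factor of $\tau = \widetilde\Theta(n^{1/3}\epsilon^{-1/3})$. To save this factor, I plan to combine two facts. (i) For each coordinate $e$, the total sum of relative changes over all primal iterations is $\sum_i (\mathrm{rel})_e \leq 2\log\rr^{(T,K_T)}_e \leq O(\log n/\epsilon)$, using $\log\|\ww^{(T,K_T)}\|_1 = O(\log n/\epsilon)$ obtained from the per-step log-growth $\leq \sqrt{2}\epsilon\alpha$; hence $\sum_i |S_i| \leq \widetilde O\bigl(n/(\epsilon^2\alpha)\bigr)$. (ii) Via a dyadic level-set decomposition of the contributions by the magnitude of $|\CC\Delta-\dd|_e$, the $\ell_2$-oracle inequality restricts the number of coordinates per shell (tightly near the width $\tau$, where the budget is smallest). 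Charging each cube term either against the per-coordinate weight-growth budget (when $|\CC\Delta|_e$ is moderate) or against the per-shell count of large-magnitude coordinates (when $|\CC\Delta|_e$ is close to $\tau$), and summing across iterations and dyadic levels, should save the missing factor of $\tau$. The tightest step of the argument is making the level-set bookkeeping lose only polylogarithmic factors; the footnote's remark that sub-$3\epsilon\alpha$ relative changes contribute $\widetilde O(1)$ in total handles the smallest dyadic levels.
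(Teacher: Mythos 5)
Your setup is sound and matches the paper's: the decomposition of the relative change into the multiplicative piece $\epsilon\alpha\,(\ww_e/\rr_e)|\CC\Delta-\dd|_e$ plus the $\Phi$-drift term, and the observation that on $S_i$ the multiplicative piece dominates (so that $|\CC\Delta^{(i,k_i)}-\dd|_e^2\gtrsim \frac{1}{\epsilon^2\alpha^2}(\mathrm{rel})_e^2$), are exactly the first steps of the paper's argument. The gap is in the amortization, and it is not a bookkeeping issue that level sets can repair: the two budgets you propose genuinely do not suffice. Quantify them per dyadic shell $|\CC\Delta-\dd|_e\approx\lambda$: the per-iteration $\ell_2$-oracle bound limits the shell to $O(n/(\epsilon\lambda^2))$ coordinates, so over $T=\alpha^{-1}\epsilon^{-2}\log n$ iterations it caps the shell's cube contribution at $O(nT\epsilon^2\alpha^3\lambda)=\Otil(n\alpha^2\lambda)$; the per-coordinate log-growth budget $O(\log n/\epsilon)$ caps it at $\Otil(n\epsilon\alpha^2\lambda^2)$. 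The minimum of the two is $\Otil(n\alpha^2\lambda)$ for all $\lambda\gtrsim\epsilon^{-1}$, so at the top shells $\lambda\approx\tau$ you are left with $\Otil(n\alpha^2\tau)$ — exactly the factor $\tau\approx 1/\alpha$ you set out to save. No combination of $\Phi$-based budgets closes this, because both charge an event linearly in $\lambda\cdot(\text{weight increase})$ while the cube term scales like $\lambda^2\cdot(\text{weight increase})$ times $\epsilon^2\alpha^2$.

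The missing idea is to charge against the energy potential $\Psi(\rr)=\min_{\Delta}\sum_e\rr_e(\CC\Delta-\dd)_e^2$ rather than against $\Phi$ or per-coordinate growth. Lemma~\ref{lem:PsiChange} shows that a relative weight increase of $(\mathrm{rel})_e$ on coordinate $e$ forces $\Psi$ to increase by at least $(\mathrm{rel})_e\cdot\rr_e(\CC\widetilde\Delta-\dd)_e^2\gtrsim(\mathrm{rel})_e\cdot\frac{\epsilon\lambda^2}{n}\Psi$ (using $\rr_e\geq\frac{\epsilon}{n}\Phi$), while that same event contributes at most $(\mathrm{rel})_e\cdot\epsilon^2\alpha^2\lambda^2$ to the cube sum; the $\lambda^2$ cancels, giving a ratio of $O(\epsilon\alpha^2 n)$ between cube contribution and increment of $\log\Psi$ \emph{uniformly in} $\lambda$. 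Since $\Psi_0\leq\Psi\leq(1+\epsilon)\Phi\leq n^{O(1/\epsilon)}$ throughout, $\log\Psi$ grows by at most $O(\log(n/\Psi_0)/\epsilon)$ in total, and telescoping the per-iteration inequality $\Psi(\rrbar^{(i+1,k_i)})\geq\Psi(\rrbar^{(i,k_i)})\bigl(1+\frac{1}{10\epsilon\alpha^2 n}\sum_{e\in S_i}(\mathrm{rel})_e^3\bigr)$ yields the claimed $\Otil(\alpha^2 n)$. Without this second, convexity-based potential (it is the same $\Psi$ used to bound the number of width-reduction steps), your approach stalls at $\Otil(\alpha n)$.
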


\begin{restatable}[Stability bound of $\ell_3$ norm over all width reduction iterations]{lemma}{MonotoneWidth}
\label{lem:StabilityL3AllWidth}
Let $i_k$ denote the number of primal steps taken before the execution of the $k^{th}$ width reduction step. Then, over all $K$ width reduction steps of Algorithm~\ref{alg:MWU}, we have
\[
\sum_{k=0}^{K-1} \left(\frac{\rr_e^{\left(i_k, k+1\right)}-\rr_e^{\left(i_k, k \right)}}{\rr_e^{\left(i_k, k \right)}} \right)^3 \leq \Otil(n^{1/3}).
\]
\end{restatable}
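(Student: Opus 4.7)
My plan is to prove the bound by combining a uniform per-iteration estimate with a telescoping logarithmic bound and then applying the elementary inequality $\sum_{k} x_k^3 \leq (\max_k x_k)^2 \cdot \sum_k x_k$, which is valid since each $\Delta_k := (r_e^{(i_k,k+1)} - r_e^{(i_k,k)})/r_e^{(i_k,k)}$ is non-negative by the monotonicity of the weights in Algorithm~\ref{alg:MWU}.

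The per-iteration bound I aim to prove is $\Delta_k \leq O(\epsilon)$. Recall that $r_e = w_e + (\epsilon/n)\|\ww\|_1$. Only coordinates $e \in S_k := \{e' : |\CC\Delta^{(i_k,k)} - \dd|_{e'} \geq \tau\}$ have $w_e$ updated, via $w_e \mapsto (1+\epsilon)w_e + (\epsilon^2/n)\|\ww\|_1$. A direct computation shows $\|\ww^{(i_k,k+1)}\|_1 \leq (1 + O(\epsilon))\|\ww^{(i_k,k)}\|_1$, since the total increase is at most $\epsilon \|\ww_{S_k}\|_1 + (\epsilon^2 |S_k|/n)\|\ww\|_1 \leq 2\epsilon\|\ww\|_1$. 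For $e \in S_k$ I then compute
\[
r_e^{(i_k,k+1)} \leq (1+\epsilon)\, r_e^{(i_k,k)} + O(\epsilon^2/n)\|\ww^{(i_k,k)}\|_1 \leq (1 + O(\epsilon))\, r_e^{(i_k,k)},
\]
using $(\epsilon/n)\|\ww\|_1 \leq r_e$. For $e \notin S_k$ only the $\|\ww\|_1$-term of $r_e$ moves, and the same bound holds (in fact $\Delta_k \leq O(\epsilon^2)$).

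For the telescoping step, since $\Delta_k \geq 0$ and $\Delta_k \leq 1$, we have $\ln(1+\Delta_k) \geq \Delta_k/2$, so
\[
\sum_{k=0}^{K-1} \Delta_k \;\leq\; 2\sum_{k=0}^{K-1}\ln(1 + \Delta_k) \;=\; 2\ln\!\left(r_e^{(K)}/r_e^{(0)}\right) \;=\; O(\log n),
\]
where the last estimate uses $r_e^{(0)} \geq \epsilon/n$ together with the invariant $\|\ww^{(K)}\|_1 \leq \mathrm{poly}(n)$ implied by the convergence analysis of Theorem~\ref{thm:InfRegMainMonotone}. Combining with $\max_k \Delta_k = O(\epsilon)$ yields
\[
\sum_{k=0}^{K-1} \Delta_k^3 \;\leq\; (\max_k \Delta_k)^2 \cdot \sum_k \Delta_k \;\leq\; O(\epsilon^2 \log n) \;\leq\; \Otil(n^{1/3}),
\]
which is more than enough for the claim.

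The main obstacle is invoking the polynomial growth invariant $\|\ww^{(K)}\|_1 \leq \mathrm{poly}(n)$ uniformly across both primal and width reduction iterations: in the accelerated monotone MWU the total number of width reduction steps $K$ can itself be as large as $\Otil(n^{1/3})$, so a naive per-step $(1+O(\epsilon))$ growth estimate on $\|\ww\|_1$ would be too weak. The point is that the standard potential analysis amortizes the growth -- a width reduction step only raises $\|\ww\|_1$ appreciably when $|S_k|$ is large, and each individual $w_e$ can be boosted by a $(1+\epsilon)$ factor at most $O(\log n/\epsilon)$ times before the algorithm's termination invariant is violated -- and so $\|\ww\|_1$ remains polynomially bounded, enabling the telescoping step above.
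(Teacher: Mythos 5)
There is a genuine gap. Your argument bounds, for a \emph{single fixed} coordinate $e$, the sum over width-reduction steps $k$, obtaining $O(\epsilon^2)\cdot O(\log n/\epsilon)=O(\epsilon\log n)$. But the quantity the lemma actually needs to control (and the quantity the paper's proof bounds — its final display reads $\sum_k\sum_e(\cdot)^3\leq\Otil(n^{1/3})$, matching the companion primal-step Lemma~\ref{lem:StabilityL3AllIterResistance}, which explicitly carries the $\sum_{e}$) is the \emph{double} sum over $k$ and $e$; this is what feeds $\zeta=\Otil(n^{1/3})$ into the low-rank update scheme of Lemma~\ref{lem:LowRankL3}. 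Summing your per-coordinate bound over all $n$ coordinates gives only $O(n\epsilon\log n)$, which is far weaker than $\Otil(n^{1/3})$. The telescoping/max argument cannot do better, because it never uses the defining feature of a width-reduction step: only coordinates with $|\CC\Delta^{(i_k,k)}-\dd|_e\geq\tau$ are touched, and $\tau=\Theta(n^{1/3}\epsilon^{-1/3}\log(n/\Psi_0))$ is precisely where the exponent $1/3$ comes from. Without invoking $\tau$ anywhere, no bound of the form $n^{1/3}$ can emerge.

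The paper's route is a potential argument on $\Psi$ rather than on $\rr_e$ itself. By Lemma~\ref{lem:PsiChange}, a width-reduction step increases $\Psi$ by at least $\sum_e\big(\tfrac{\rrbar'_e-\rrbar_e}{\rrbar_e}\big)\tfrac{\rrbar_e^2}{\rrbar'_e}(\CC\Delta-\dd)_e^2$; for the updated coordinates one has $(\CC\Delta-\dd)_e^2\geq\tau^2$, the relative change is $\Theta(\epsilon)$ so the first power dominates the third power up to a factor $\epsilon^{-2}$, and $\rr_e\geq\tfrac{\epsilon}{n}\Psi$. This yields a multiplicative increase
\[
\Psi(\rrbar^{(i_k,k+1)})\geq\Psi(\rrbar^{(i_k,k)})\Big(1+\Omega\Big(\tfrac{\tau^2}{\epsilon n}\Big)\sum_e\Big(\tfrac{\rr_e^{(i_k,k+1)}-\rr_e^{(i_k,k)}}{\rr_e^{(i_k,k)}}\Big)^3\Big),
\]
and since $\Psi$ only increases and is sandwiched between $\Psi_0$ and $n^{O(1/\epsilon)}$, taking logs gives $\sum_k\sum_e(\cdot)^3\leq\Otil(n/\tau^2)=\Otil(n^{1/3})$. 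Your per-coordinate telescoping bound and the per-step estimate $\Delta_k=O(\epsilon)$ are both correct (and the latter is reused inside the paper's argument), but the essential ingredient — converting weight changes into guaranteed $\Psi$-growth weighted by $\tau^2$ so that the sum over coordinates is controlled globally rather than coordinate by coordinate — is missing.
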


\subsection{Algorithm with Non-Monotone Weights, Stability and Robustness}

We now give our main algorithm which can be used with our two-level inverse maintenance data structure. Algorithm~\ref{alg:non_monotone_accel_robust} updates the weights in a non-monotone way, and additionally has stable primal and width reduction steps. It is also compatible with sketching as required by the data structure. We can prove the following guarantees.

\begin{theorem}\label{thm:RobustAccMWU}
For $\eta\leq 1/10$, with probability $1-1/n^3$, Algorithm~\ref{alg:non_monotone_accel_robust} with input ($\begin{bmatrix}
    \CC\\ -\CC
\end{bmatrix}, \begin{bmatrix}
    \dd\\ -\dd
\end{bmatrix}, \epsilon$) finds $\xxhat\in \mathbb{R}^n$ such that $\|\CC\xxhat-\dd\|_{\infty} \leq 1+O(\epsilon)$ in at most $\Otil\left(n^{1/2-\eta}\epsilon^{-4}\right)$ iterations. Furthermore, the algorithm satisfies the following extra guarantees:
\begin{enumerate}
    \item In the width reduction step of the algorithm, the algorithm only requires to find at most $\Otil\left(n^{1/2+\eta}\right)$ large coordinates per iteration.
    \item The algorithm satisfies the following low-rank update scheme: There are at most $\frac{T+K}{2^{\ell}}$ number of iterations where $\ov{\rr}$ receives an update of rank $\Otil_{\epsilon}(n^{1/5} 2^{2 \ell})$.
\end{enumerate}
\end{theorem}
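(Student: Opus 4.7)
The plan is to prove Theorem~\ref{thm:RobustAccMWU} by establishing its three claims in order: the iteration bound, the per-iteration sparsity of each width-reduction step, and the dyadic low-rank update schedule for $\ov{\rr}$.

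First, I would handle the iteration complexity $\Otil(n^{1/2-\eta}\epsilon^{-4})$ via a potential-based analysis tailored to non-monotone weights and sketched updates. Following the non-monotone MWU template of Appendix~\ref{sec:NonMonMWU} (which extends \cite{madry2016computing} to regression and already converges in $n^{1/2-\eta}$ iterations), I would track $\Phi_i = \log \|\ww^{(i,k_i)}\|_1$. The increment of $\Phi$ at each primal step is dominated by the linear contribution $\epsilon\alpha \cdot \SS^\top \SS (\CC\Delta^{(i,k)} - \dd)$, whose expectation is the true step $\CC\Delta^{(i,k)} - \dd$, plus a Taylor-quadratic remainder that the coordinate-wise embedding property controls. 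The key novelty over the monotone analysis is that, because the final output averages the primal iterates, I only need the \emph{sum} of sketching errors across iterations to be small, not each individual error. I would cast the telescoped deviation as a martingale difference sequence and apply Freedman's inequality to obtain, with probability at least $1 - n^{-3}$,
\[
\Bigl|\sum_{i=0}^{t} \bigl(\SS^\top \SS (\CC\Delta^{(i)} - \dd) - (\CC\Delta^{(i)} - \dd)\bigr)_e\Bigr| \;\lesssim\; \sqrt{nt/b}
\]
for sketch dimension $b = \Otil(n^{1/2+\eta})$. Together with the standard width-reduction potential increment (which bounds the total number $K$ of width-reduction steps) and the non-monotone primal bound $T = \Otil(n^{1/2-\eta}\epsilon^{-4})$, the algorithm terminates within the claimed iteration budget.

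Second, for item (1) I would verify that each width-reduction step only toggles $\Otil(n^{1/2+\eta})$ coordinates. The stable width-reduction rule of Algorithm~\ref{alg:non_monotone_accel_robust} only acts on coordinates at which $|(\CC\Delta^{(i,k)} - \dd)_e|$ exceeds the width threshold; since the weighted $\ell_3$-norm of the oracle output (which I would estimate using the $\ell_3$-sketch of \cite{wz13}) upper bounds the number of such coordinates by $\Otil(n^{1/2+\eta})$, the count is automatic. A heavy-hitter sketch applied to $\SS^\top \SS (\CC\Delta^{(i,k)} - \dd)$ then extracts these coordinates without reading the full vector, so the query cost fits the two-level data structure's budget; this identification step is precisely what forces the restriction $\eta \le 1/10$.

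Third, for item (2) I would separately account for the primal and the width-reduction contributions to $\ov{\rr}$. The primal step satisfies an $\ell_2$-stability bound
\[
\left\|\frac{\rr^{(i+1,k)} - \rr^{(i,k)}}{\rr^{(i,k)}}\right\|_2^2 \;\lesssim\; n^{2\eta},
\]
which follows from the definition $\rr^{(i,k)}_e = \ww^{(i,k)}_e + (\epsilon/n)\|\ww^{(i,k)}\|_1$ together with the size of the sketched update. Feeding this into the \textsc{SelectVector} analysis of Algorithm~\ref{alg:select_vector}, a Markov argument on the squared $\ell_2$-norm accumulated across any dyadic window of length $2^\ell$ shows that the number of coordinates whose log-ratio exceeds $\delta/\log n$ within that window is at most $\Otil_\epsilon(n^{2\eta} \cdot 2^{2\ell})$. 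Setting $\eta = 1/10$ yields $n^{2\eta} = n^{1/5}$, matching the statement. The width-reduction contribution is already sparse by item (1), and the stable rule ensures that a large batch of width-reduction updates only occurs when the potential increases substantially, which slots those batches into the same dyadic schedule without inflating the bound.

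The main obstacle will be the robustness argument that glues these three claims together. Unlike IPMs, we cannot afford black-box $\delta$-approximate robustness in the accelerated, non-monotone regime; we only have robustness against the specific deferred updates produced by \textsc{SelectVector} together with the cumulative sketching noise. The hardest step will be an amortization argument showing that the ``debt'' from each deferred coordinate update is discharged when that coordinate is eventually refreshed by \textsc{SelectVector}, and that the discharged debt is absorbed into the same potential slack that already accommodates the Freedman-type martingale error. Reconciling this with the stable width-reduction decisions, which themselves depend on the lazy sketched step, is the technical core of the proof and is precisely what produces the unusual $\ell_2$-stability scale $n^{2\eta}$, the $\epsilon^{-4}$ factor in the iteration count, and the cap $\eta \le 1/10$.
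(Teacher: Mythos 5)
Your overall architecture matches the paper's: track $\Phi(\ww)=\|\ww\|_1$ and $\Psi(\rr)$, use Freedman's inequality to bound the \emph{summed} sketching error $\sum_i(\wh{\uu}^{(i)}_e-\uu^{(i)}_e)$ (Lemma~\ref{lem:sum_hat_u}), bound $|H|$ per width step via the energy threshold and the $\tau$-capping (Lemma~\ref{lem:size_H}), and obtain the dyadic schedule for width steps from the fact that a size-$s$ width step forces a $\big(1+\Omega(\epsilon^{4/3}\rho^{2/3}s^{1/3}/n^{1/3})\big)$ gain in $\Psi$ (Lemmas~\ref{lem:ChangePsiRobust} and~\ref{lem:low_rank_update_width_reduction_steps}). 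However, there is a genuine gap in your argument for item (2) on the primal side. You assert that the true weights satisfy $\big\|\tfrac{\rr^{(i+1)}-\rr^{(i)}}{\rr^{(i)}}\big\|_2^2\lesssim n^{2\eta}$ ``from the definition of $\rr$ together with the size of the sketched update.'' This is false for the sketched iterates: the per-coordinate relative change is $\approx\epsilon\alpha\,\wh{\uu}^{(i,k)}_e$, and while the signal part contributes $\epsilon^2\alpha^2\sum_e(\uu_e)^2\lesssim \epsilon\alpha^2 n=\Otil(n^{2\eta})$, the noise part contributes $\epsilon^2\alpha^2\sum_e(\wh{\uu}_e-\uu_e)^2\approx\epsilon^2\alpha^2\cdot n\cdot\tfrac{n}{\epsilon b}=\Otil(n^{1/2+\eta})$ with the paper's parameters $b=\wt\Theta(n^{1/2+\eta}\epsilon^{-3})$, $\alpha=\wt\Theta(n^{-1/2+\eta}\epsilon)$. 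So plugging the raw sequence $\rr^{(i)}$ into the $\ell_2$ analysis of \textsc{SelectVector} (Lemma~\ref{lem:LowRankL2}) overshoots the claimed rank $\Otil_\epsilon(n^{1/5}2^{2\ell})$ by a polynomial factor.

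The paper closes this by proving a \emph{robust} $\ell_2$-stability statement (Lemma~\ref{lem:LowRankL2Robust} and Section~\ref{sec:stability_algo_2}): one defines a noiseless ``fake'' sequence $\wt{\rr}^{(i+1,k)}=\rr^{(i+1,k)}-\epsilon\alpha\,\ww^{(i,k)}(\wh{\uu}^{(i,k)}-\uu^{(i,k)})\cdot(\cdots)$ which does satisfy $\sum_e\ln(\wt{\rr}_e^{(i+1)}/\rr_e^{(i)})^2=O(n^{2\eta})$, and then shows via a Freedman-type martingale argument that the cumulative discrepancy $\big|\sum_{i=t'-t}^{t'}\ln(\wt{\rr}_e^{(i)}/\rr_e^{(i)})\big|=O(\epsilon)$ over \emph{every} dyadic window, so that a coordinate can only enter $S_\ell$ if its noiseless drift is large. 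Your closing paragraph gestures at exactly this kind of cancellation-over-windows amortization, but the step as you have written it would not go through without this decomposition; it is the load-bearing lemma for item (2), not an afterthought. A second, smaller imprecision: in item (1) the count of large coordinates follows from the weighted $\ell_2$ energy (each $e$ with $|\uu_e|\ge\rho/(2C_3)$ carries $\rr_e\uu_e^2\gtrsim\tfrac{\epsilon\rho^2}{n}\Psi$, giving $|S|\le O(n^{6\eta}/\epsilon)\le\Otil(n^{1/2+\eta})$ precisely when $\eta\le 1/10$), not from the $\ell_3$ norm as you state; the $\ell_3$ condition is what triggers the width step, not what bounds the coordinate count.
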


In order to get Algorithm~\ref{alg:non_monotone_accel_robust} we begin by extending the graph based algorithms of~\cite{madry2016computing} to $\ell_{\infty}$-regression. A direct extension (Algorithm~\ref{alg:non_monotone_accel_opt}, analysis included in Appendix~\ref{sec:NonMonMWU}) does not have stable width reduction steps. We therefore design a new set of width reduction steps (Algorithm~\ref{alg:non_monotone_accel_stab}, Appendix~\ref{sec:NonMonStab}), which necessitates a new analysis for bounding the number of such steps. We then additionally add sketching to the primal steps to get the final algorithm which is analyzed in Appendix~\ref{sec:NonMonRob}.

\begin{algorithm}
\caption{Accelerated MWU algorithm with non-monotone weights and stable and robust steps}\label{alg:non_monotone_accel_robust}
\begin{algorithmic}[1]
\Procedure{MWU-NonMonotoneRobust}{$\wt{\CC}, \wt{\dd}, \epsilon$}
\State $\ww^{(0,0)} \gets 1_{2n}, \quad \ov{\rr}^{(0,0)} \gets \rr^{(0,0)} \gets (1 + \epsilon) 1_{2n}, \quad \xx^{(0)} \gets 0_d$
\State $\alpha \gets \widetilde{\Theta}(n^{-1/2+\eta}\epsilon)$
\State $\tau \gets \widetilde{\Theta}(n^{1/2+\eta}\epsilon^{-4}), \quad \rho \gets \widetilde{\Theta}(n^{1/2-3\eta}\epsilon^{-2})$
\State $T \gets \alpha^{-1}\epsilon^{-2}\ln n$
\State $i,k = 0$
\State $b \gets \widetilde{\Theta}(n^{1/2+\eta} \epsilon^{-2})$
\State Let $\SS^{(0)}, \SS^{(1)}, \cdots, \SS^{(T-1)} \in \R^{b \times 2n}$ be random matrices as described in Lemma~\ref{lem:CE_one_sketch}.
\While{$i<T$}\label{algline:PrimalRobust}
\State $\Delta^{(i,k)} \leftarrow (\wt{\CC}^{\top} \overline{\RR}^{(i,k)} \wt{\CC})^{-1} \wt{\CC}^{\top} \overline{\RR}^{(i,k)} \wt{\dd}$ \Comment{$\Delta^{(i,k)} = \arg\min_{\Delta} \sum_e \rrbar^{(i,k)}_e (\wt{\CC}\Delta-\wt{\dd})_e^2$} \label{algline:linsysRobust}
\State $\uu^{(i,k)} \gets \wt{\CC} \Delta^{(i,k)} - \wt{\dd}$ \label{algline:u_Robust}
\State $\wh{\uu}^{(i,k)} \gets (\RRbar^{(i,k)})^{-1/2} \cdot (\SS^{(i)})^{\top} \SS^{(i)} \cdot (\RRbar^{(i,k)})^{1/2} \uu^{(i,k)}$\label{algline:hat_u_Robust}
\State $\Psi(\rrbar^{(i,k)}) \gets \sum_e \rrbar^{(i,k)}_e (\uu^{(i,k)}_e)^2$
\If{$\sum_e \rrbar_e^{(i,k)} |\uu^{(i,k)}_e|^3\leq C_3 \rho \Psi(\rrbar^{(i,k)}) $}\Comment{primal step}\label{algline:CheckPrimalRobust}
\State $\ww^{(i+1,k)}\gets \ww^{(i,k)}\left(1 + \epsilon \overrightarrow{\alpha}^{(i,k)} \wh{\uu}^{(i,k)}\right)$,
$\overrightarrow{\alpha}^{(i,k)}_e = \begin{cases}
\alpha \cdot (1 + \epsilon\alpha \wh{\uu}^{(i,k)}_e) & \text{ if } \wh{\uu}^{(i,k)}_e \geq 0\\
\alpha / (1 - \epsilon\alpha \wh{\uu}^{(i,k)}_e) & \text{ else }
\end{cases}$ \label{algline:updateWRobust}
\State $\rr^{(i+1,k)} \leftarrow \ww^{(i+1,k)} + \frac{\epsilon}{2n} \sum_{e} \ww^{(i+1,k)}_e$
\State $\rrbar^{(i+1,k)}\gets$ {\sc SelectVector}$(\rr^{(i+1,k)}, i+1, \delta)$ \Comment{Algorithm~\ref{alg:select_vector}}
\State $\xx^{(i+1)} \gets \xx^{(i)} + \Delta^{(i,k)}$ \label{algline:updateXRobust}
\State $i \gets i + 1$
\Else{ {\bf if} $\sum_e \rrbar_e^{(i,k)} |\uu^{(i,k)}_e|^3\geq C_3^{-1} \rho\Psi(\rrbar^{(i,k)})$ {\bf then}}\Comment{width reduction step}
\State Let $S$ be the set of coordinates $e$ such that $|\uu^{(i,k)}_e|\geq \rho/(2C_3)$ \label{algline:setSRobust}
\State $H \subseteq S$ be maximal subset such that $\sum_{e\in H}\rrbar_e^{(i,k)}\leq \tau^{-1}\Psi(\rrbar^{(i,k)}) $
\If{$H \neq S$}
\State Pick any $\bar{e}\in S\setminus H$.
\State For all $e\in H \cup \{\ov{e}\} $, $\ww_e^{(i,k+1)}\gets (1+\epsilon)\ww_e^{(i,k)} + \frac{\epsilon^2}{2n}\Phi(\ww^{(i,k)})$ 
\State $\rr^{(i,k+1)}\gets \ww^{(i,k+1)} + \frac{\epsilon}{2n}\Phi(\ww^{(i,k+1)})$
\State For all $e\in H \cup \{\ov{e}\} $, $\ov{\rr}_e^{(i,k+1)} \gets \rr_e^{(i,k+1)}$ 
\Else
\For{$\zeta = \rho, 2 \rho, 4 \rho, \cdots, 2^{c_\rho} \rho$}
\State \Comment{$c_{\rho}$ is defined to be the smallest integer $c$ that satisfies $2^c \rho \geq \sqrt{n/\epsilon}$}
\State Define the set $H_\zeta = \{e \in H \mid |\wt{\CC} \Delta^{(i,k)} - \wt{\dd}|_e \in [\zeta, 2 \zeta)\}$.
\State If $\sum_{e \in H_{\zeta}} \ov{\rr}^{(i,k)}_e |\wt{\CC} \Delta^{(i,k)} - \wt{\dd}|_e^3 \geq \frac{\rho \Psi(\ov{\rr}^{(i,k)})}{\log (\frac{n}{\epsilon \rho})}$, set $\zeta^* \gets \zeta$, and break. 
\EndFor
\State For all $e \in H_{\zeta^*}$, $\ww_e^{(i,k+1)}\gets (1+\epsilon)\ww_e^{(i,k)} + \frac{\epsilon^2}{2n}\Phi(\ww^{(i,k)})$
\State $\rr^{(i,k+1)}\gets \ww^{(i,k+1)} + \frac{\epsilon}{2n}\Phi(\ww^{(i,k+1)})$
\State For all $e\in H_{\zeta^*}$, $\ov{\rr}_e^{(i,k+1)} \gets \rr_e^{(i,k+1)}$ 
\EndIf
\State $k \gets k+1$
\EndIf
\EndWhile
\State \Return $\xx^{(T)}/T$
\EndProcedure
\end{algorithmic}
\end{algorithm}

\section*{Organization of Appendix}
The appendix contains detailed proofs of our algorithms. In Appendix~\ref{sec:preli} we give some preliminaries and basic results we use for our proofs. In Appendix~\ref{sec:low_rank} we prove the guarantees of our low-rank update scheme under $\ell_2$ and $\ell_3$ stability, given by the subroutine $\textsc{SelectVector}$ in Algorithm~\ref{alg:non_monotone_accel_robust}. Further in Appendix~\ref{sec:data_structures} we provide the guarantees for all the data structures required to implement our final algorithms. In Appendix~\ref{sec:time_random} we show how to implement our randomized algorithm (Algorithm~\ref{alg:non_monotone_accel_robust}) using the data structures from Appendix~\ref{sec:data_structures}, and prove Theorem~\ref{thm:informal_random}. In Appendix~\ref{sec:time_deterministic} we show how to implement our deterministic MWU algorithm (Algorithm~\ref{alg:MWU}) using the data structures from Appendix~\ref{sec:data_structures}, and prove Theorem~\ref{thm:informal_deterministic}. In Appendix~\ref{sec:MWUMonotone} and~\ref{sec:Appendix_non_monotone_robust} we prove the guarantees of Algorithm~\ref{alg:MWU} and Algorithm~\ref{alg:non_monotone_accel_robust} respectively. In Appendix~\ref{sec:StabilityMWU} we prove the stability guarantees of Algorithms~\ref{alg:MWU} and~\ref{alg:non_monotone_accel_robust}. In Appendix~\ref{sec:missing_proofs} we provide the missing proofs for standard results from Appendix~\ref{sec:preli} and~\ref{sec:data_structures}. Finally, in Appendix~\ref{sec:NonMonMWU} (optional) we provide the analysis of a non-monotone MWU that is robust but not stable and it has $O_{\epsilon}(n^{1/3})$ iterations. This analysis is included for completeness, and is not required to prove our results.

\newpage
\printbibliography

\newpage
%\tableofcontents
\appendix
\section{Preliminaries}\label{sec:preli}
\paragraph{Basic notations.} For any vectors $\xx$ and $\yy$ with non negative entries, and $\delta>0$ we use $\xx\approx_{\delta}\yy$ to imply that for all coordinates $i$, we have $e^{-\delta} \yy_i \leq \xx_i \leq e^{\delta} \yy_i$. We use $\wt{O}(\cdot)$ and $\wt{\Theta}(\cdot)$ to hide $\poly \log n$ factors, and we use $\Otil_{\epsilon}(\cdot)$ and $\wt{\Theta}_{\epsilon}(\cdot)$ to additionally hide $\poly(\epsilon^{-1})$ factors.

Given any two vectors $\xx, \yy \in \mathbb{R}^n$, we use $\xx \cdot \yy \in \mathbb{R}^n$ to denote the coordinate-wise multiplication of the two vectors, i.e., its $i$-th entry is $\xx_i \cdot \yy_i$. Similarly, we also use other scalar operations on vectors to denote coordinate-wise operations.

For any vector $\rr \in \R^n$, we use the capital letter $\RR \in \R^{n \times n}$ to denote a diagonal matrix whose diagonal entries are $\rr$.

\paragraph{Potential functions.} In this paper we consider a fixed problem $\min_x \|\CC \xx - \dd\|_{\infty}$ and assume that this has optimum objective value $1$. We define the following two potential functions for weights $\ww$ and $\rr$ such that $\rr = \ww + \frac{\epsilon}{n}\|\ww\|_1$:
\begin{equation}\label{eq:defPhi}
    \Phi(\ww) \defeq \norm{\ww}_1
\end{equation}
\begin{equation}\label{eq:defPsi}
    \Psi(\rr)\defeq \min_{\Delta\in \rea^d }\sum_e \rr_e (\CC\Delta-\dd)^2_e.
\end{equation}

The two potentials satisfy the following three lemmas. Their proofs are standard, and we defer them to Section~\ref{sec:missing_proofs}.

The first lemma shows how the two potentials are related.
\begin{lemma}\label{lem:PsiPhi}
    Let $\ww \geq 0,\rr, \rrbar$, such that $\forall e, \rr_e = \ww_e + \frac{\epsilon}{n}\|\ww\|_1$, and $\rrbar\approx_{\delta}\rr$. Then, $\Psi(\rrbar) \approx_{\delta} \Psi(\rr)$, and $\Psi(\rrbar) \leq e^{\epsilon + \delta} \cdot \Phi(\ww)$.
\end{lemma}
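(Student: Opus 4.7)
The plan is a short two-part argument, both parts of which are essentially calculations from the definitions.

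For the first claim $\Psi(\rrbar) \approx_{\delta} \Psi(\rr)$, I would use only the coordinate-wise multiplicative relation $e^{-\delta}\rr_e \leq \rrbar_e \leq e^{\delta}\rr_e$. For any fixed $\Delta \in \rea^d$, multiplying each term $(\CC\Delta-\dd)_e^2 \geq 0$ by $\rr_e$ or $\rrbar_e$ preserves the multiplicative bound, so
\[
e^{-\delta}\sum_e \rr_e(\CC\Delta-\dd)_e^2 \;\leq\; \sum_e \rrbar_e(\CC\Delta-\dd)_e^2 \;\leq\; e^{\delta}\sum_e \rr_e(\CC\Delta-\dd)_e^2.
\]
Taking the minimum over $\Delta$ on each side, and noting that $e^{\pm\delta}$ is a positive constant that can be pulled out of the minimization, yields $e^{-\delta}\Psi(\rr) \leq \Psi(\rrbar) \leq e^{\delta}\Psi(\rr)$, i.e. $\Psi(\rrbar) \approx_{\delta}\Psi(\rr)$.

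For the second claim, I would exploit the standing assumption from the preliminaries that the optimum of $\min_{\xx}\|\CC\xx-\dd\|_{\infty}$ equals $1$: fix an optimizer $\xx^{\star}$, so $(\CC\xx^{\star}-\dd)_e^2\leq 1$ for every $e$. Substituting $\xx^{\star}$ as a feasible point for the minimization defining $\Psi(\rr)$,
\[
\Psi(\rr) \;\leq\; \sum_e \rr_e(\CC\xx^{\star}-\dd)_e^2 \;\leq\; \sum_e \rr_e.
\]
Using $\rr_e = \ww_e + \frac{\epsilon}{n}\|\ww\|_1$ and summing over the $n$ coordinates,
\[
\sum_e \rr_e \;=\; \|\ww\|_1 + n\cdot\frac{\epsilon}{n}\|\ww\|_1 \;=\; (1+\epsilon)\|\ww\|_1 \;\leq\; e^{\epsilon}\,\Phi(\ww).
\]
Combining with the first part, $\Psi(\rrbar) \leq e^{\delta}\Psi(\rr) \leq e^{\epsilon+\delta}\Phi(\ww)$, which is the desired bound.

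There is no real obstacle here; the only thing to be careful about is invoking the normalization $\min_{\xx}\|\CC\xx-\dd\|_{\infty}=1$ (stated at the top of the preliminaries) to justify plugging in the optimizer as a feasible point and getting $(\CC\xx^{\star}-\dd)_e^2\leq 1$. Everything else follows by direct manipulation of the definitions of $\Phi$, $\Psi$, and the $\approx_{\delta}$ relation.
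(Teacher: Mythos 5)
Your proof is correct and follows essentially the same route as the paper: part one compares the two weighted least-squares objectives term by term (the paper phrases this by plugging the optimizer for one weight vector into the other, which is equivalent), and part two substitutes the $\ell_\infty$-optimal $\xx^{\star}$ with $\|\CC\xx^{\star}-\dd\|_{\infty}\leq 1$ and sums $\rr_e = \ww_e + \frac{\epsilon}{n}\|\ww\|_1$ to get $(1+\epsilon)\|\ww\|_1 \leq e^{\epsilon}\Phi(\ww)$. The only cosmetic difference is that you bound $\Psi(\rr)$ first and then apply part one, whereas the paper bounds $\Psi(\rrbar)$ directly via $\rrbar_e \leq e^{\delta}\rr_e$; the resulting bound is identical.
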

% \begin{proof}
% We first prove that $\Psi(\rrbar) \approx_{\delta} \Psi(\rr)$. Let $\Delta^* := \arg \min_{\Delta\in \rea^d }\sum_e \rrbar_e (\CC\Delta-\dd)^2_e$. We have
% \begin{align*}
% \Psi(\rr) = \min_{\Delta\in \rea^d } \sum_e \rr_e (\CC\Delta-\dd)^2_e \leq \sum_e \rr_e (\CC\Delta^*-\dd)^2_e \leq e^{\delta} \cdot \sum_e \rrbar_e (\CC\Delta^*-\dd)^2_e = e^{\delta} \cdot  \Psi(\rrbar).
% \end{align*}
% Similarly, we can also show $\Psi(\rrbar) \leq e^{\delta} \cdot \Psi(\rr)$, and so $\Psi(\rrbar) \approx_{\delta} \Psi(\rr)$.

% Next we prove $\Psi(\rrbar) \leq e^{\epsilon+\delta} \cdot \Phi(\ww)$. The following inequalities follow from Hölder's inequality and the optimum $\xx^{\star}$ of the problem satisfies $\|\CC\xx^{\star}-\dd\|_{\infty}\leq 1$:
% \begin{align*}
% \Psi(\rrbar) = &~ \min_{\Delta\in \rea^d} \sum_{e=1}^n\rrbar_e(\CC\Delta-\dd)_e^2 \\
% \leq &~ \sum_e\rrbar_e(\CC\xx^{\star}-\dd)_e^2
%  \leq e^{\delta} \cdot \|\CC\xx^{\star}-\dd\|^2_{\infty}\left(\|\ww\|_1 + \epsilon \|\ww\|_1\right)\leq e^{\epsilon + \delta} \cdot \Phi(\ww). \qedhere
% \end{align*}
% \end{proof}

In our algorithms we have the following lower bound on the initial $\Psi$ potential.
\begin{lemma}\label{lem:lower_bound_Psi_0}
If $\ww^{(0,0)} = 1$ and $\rr^{(0,0)} = \ww^{(0,0)} + \frac{\epsilon}{n} \cdot \Phi(\ww^{(0,0)})$, then we have
$\Psi(\rr^{(0,0)})\geq \Psi_0 \defeq \min\{1, \dd^{\top}\left(\II - \CC^{\top}(\CC^{\top}\CC)^{-1}\CC\right)\dd\}$. 
\end{lemma}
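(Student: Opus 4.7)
The plan is a direct calculation from the definitions, with no real obstacle. First I would substitute the given initial weights into the formula for $\rr^{(0,0)}$. Since $\ww^{(0,0)} = \vone_n$, we have $\Phi(\ww^{(0,0)}) = \|\ww^{(0,0)}\|_1 = n$, and so
\[
\rr^{(0,0)}_e = 1 + \tfrac{\epsilon}{n}\cdot n = 1+\epsilon \qquad \text{for every coordinate } e.
\]
Thus $\rr^{(0,0)} = (1+\epsilon)\vone_n$ is a uniform vector, and the weighted least-squares problem defining $\Psi$ degenerates to an unweighted one.

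Next I would substitute into \eqref{eq:defPsi} and pull out the constant:
\[
\Psi(\rr^{(0,0)})
= \min_{\Delta\in\R^d}\sum_e (1+\epsilon)(\CC\Delta-\dd)_e^2
= (1+\epsilon)\,\min_{\Delta\in\R^d}\|\CC\Delta-\dd\|_2^2.
\]
The inner quantity is the classical least-squares residual, whose value is $\dd^{\top}\bigl(\II - \CC(\CC^{\top}\CC)^{-1}\CC^{\top}\bigr)\dd$ by the projection theorem (replacing $(\CC^\top\CC)^{-1}$ by the Moore--Penrose pseudoinverse if $\CC$ is rank-deficient). This is exactly the quantity appearing inside the $\min\{1,\cdot\}$ in the statement of $\Psi_0$, once one reads the matrix product in the stated order that is consistent with an orthogonal projector onto $\mathrm{col}(\CC)^{\perp}$.

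Finally, setting $R \defeq \dd^{\top}\bigl(\II - \CC(\CC^{\top}\CC)^{-1}\CC^{\top}\bigr)\dd \geq 0$, we conclude
\[
\Psi(\rr^{(0,0)}) = (1+\epsilon)\,R \;\geq\; R \;\geq\; \min\{1,R\} \;=\; \Psi_0,
\]
using $1+\epsilon\geq 1$ and the trivial fact $\min\{a,b\}\leq b$ for any $a$. The only minor subtleties are the order of the factors in the projector and the rank-deficient case, neither of which affects the bound; the argument is otherwise immediate from the definitions of $\Psi$ and $\Phi$.
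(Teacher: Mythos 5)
Your proof is correct and follows essentially the same route as the paper: substitute $\ww^{(0,0)}=\vone$ to get $\rr^{(0,0)}=(1+\epsilon)\vone$, factor out the constant to reduce $\Psi(\rr^{(0,0)})$ to $(1+\epsilon)$ times the unweighted least-squares residual, evaluate that residual via the normal equations, and drop the $(1+\epsilon)$ factor. Your observation about the order of factors in the projector is a correct reading of what is evidently a typo in the paper's statement, and the rank-deficient caveat does not change the bound.
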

% \begin{proof}
% Note that $\rr_e^{(0,0)} = 1+\epsilon$ for all $e$. Therefore,
% \[
% \Psi(\rr^{(0,0)}) = (1+\epsilon)\min_{\Delta}\|\CC\Delta-\dd\|_2^2.
% \]
% Using KKT conditions and the solution of the above problem that $\Delta^* = (\CC^{\top} \CC)^{-1} \CC^{\top} \dd$, we get the required lower bound.
% \end{proof}

The last lemma provides a lower bound on the $\Psi$ potential after a small perturbation to the weights $\rr$.
\begin{restatable}{lemma}{PsiChange}\label{lem:PsiChange}
    Let $\Psi(\rr) =\min_{\Delta}\sum_e \rr_e (\CC\Delta-\dd)_e^2 $. For any $\rr',\rr\geq 0$ that satisfies $|\rr_e'-\rr_e| \leq \rr'_e$ for all $e$, we have
    \[
    \Psi(\rr') \geq \Psi(\rr)+ \sum_e \left(\frac{\rr'_e-\rr_e}{\rr'_e}\right)\rr_e (\CC\widetilde{\Delta}-\dd)_e^2,
    \]
    where $\widetilde{\Delta} := \arg\min_{\Delta}\sum_e \rr_e (\CC\Delta-\dd)_e^2$.
\end{restatable}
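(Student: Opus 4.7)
The plan is to combine an elementary AM-GM inequality with the first-order optimality of $\widetilde\Delta$. Let $\Delta^{\star} := \arg\min_{\Delta} \sum_e \rr'_e(\CC\Delta - \dd)_e^2$ and abbreviate $y_e := (\CC\Delta^{\star} - \dd)_e$ and $z_e := (\CC\widetilde\Delta - \dd)_e$, so that $\Psi(\rr') = \sum_e \rr'_e y_e^2$ and $\Psi(\rr) = \sum_e \rr_e z_e^2$. The hypothesis $|\rr'_e - \rr_e| \leq \rr'_e$ forces $\rr'_e \geq \tfrac{1}{2} \rr_e \geq 0$, with $\rr'_e = 0$ possible only when $\rr_e = 0$; I would discard those degenerate coordinates (they contribute zero to every term in the statement) and assume $\rr'_e > 0$ throughout the argument.

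The first step is AM-GM: for each $e$, the inequality $\bigl(\sqrt{\rr'_e}\, y_e - \tfrac{\rr_e}{\sqrt{\rr'_e}}\, z_e\bigr)^2 \geq 0$ rearranges to $\rr'_e y_e^2 \geq 2\rr_e y_e z_e - \tfrac{\rr_e^2}{\rr'_e} z_e^2$; summing over $e$ yields
\[
\Psi(\rr') \;\geq\; 2 \sum_e \rr_e y_e z_e \;-\; \sum_e \frac{\rr_e^2}{\rr'_e} z_e^2.
\]
The second step is to evaluate the cross term via first-order optimality of $\widetilde\Delta$ for the quadratic $\sum_e \rr_e(\CC\Delta - \dd)_e^2$, which gives $\CC^{\top}\RR\zz = 0$ where $\zz$ is the vector with entries $z_e$. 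Hence for any $\Delta \in \R^d$ we have $\sum_e \rr_e z_e (\CC\Delta - \dd)_e = \Delta^{\top}\CC^{\top}\RR\zz - \sum_e \rr_e z_e \dd_e = -\sum_e \rr_e z_e \dd_e$, a quantity independent of $\Delta$. Specializing this identity to $\Delta = \Delta^{\star}$ and to $\Delta = \widetilde\Delta$ and comparing, I would conclude $\sum_e \rr_e y_e z_e = \sum_e \rr_e z_e^2 = \Psi(\rr)$.

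Plugging this into the AM-GM bound gives $\Psi(\rr') \geq 2\Psi(\rr) - \sum_e \tfrac{\rr_e^2}{\rr'_e} z_e^2 = \sum_e \bigl(2\rr_e - \tfrac{\rr_e^2}{\rr'_e}\bigr) z_e^2 = \Psi(\rr) + \sum_e \tfrac{\rr'_e - \rr_e}{\rr'_e}\,\rr_e z_e^2$, which is exactly the claim. There is no serious obstacle here; the main subtlety is simply choosing the AM-GM weighting $\tfrac{\rr_e^2}{\rr'_e}$ so that the $z_e^2$ coefficient rearranges cleanly into the stated form $\rr_e(\rr'_e - \rr_e)/\rr'_e$, and handling the boundary case $\rr'_e = 0$, which is harmless thanks to the hypothesis.
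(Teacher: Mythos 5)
Your proof is correct, and it takes a genuinely different route from the paper's. The paper proves this lemma via Lagrangian duality: it rewrites $\Psi(\rr)$ as the dual $\max_{\yy:\,\CC^{\top}\yy=0} 2\yy^{\top}\dd - \yy^{\top}\RR^{-1}\yy$, passes to a normalized program for $1/\Psi(\rr)$, plugs the old dual optimizer into the program for $\rr'$, and only at the very end invokes the hypothesis $|\rr'_e-\rr_e|\leq\rr'_e$ to invert $\big(1-\tfrac{\sum_e(\cdot)}{\Psi(\rr)}\big)^{-1}\geq 1+\tfrac{\sum_e(\cdot)}{\Psi(\rr)}$. You instead work entirely on the primal side: the weighted AM--GM inequality $\rr'_e y_e^2 \geq 2\rr_e y_e z_e - \tfrac{\rr_e^2}{\rr'_e} z_e^2$ combined with the normal equations $\CC^{\top}\RR\zz = 0$ (which make the cross term equal to $\Psi(\rr)$) gives the claim directly, and the algebra $2\rr_e - \tfrac{\rr_e^2}{\rr'_e} = \rr_e + \rr_e\tfrac{\rr'_e-\rr_e}{\rr'_e}$ lands exactly on the stated form. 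Your argument is shorter and more elementary, sidesteps the appearance of $\RR^{-1}$ (which in the paper's route requires some care when entries of $\rr$ vanish), and in fact uses the hypothesis only to rule out the degenerate situation $\rr'_e=0<\rr_e$ — so it actually establishes the inequality under the weaker condition that $\rr'_e>0$ wherever $\rr_e>0$. The only point to state explicitly, which you do, is that coordinates with $\rr'_e=\rr_e=0$ contribute nothing and can be discarded before dividing by $\rr'_e$.
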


\paragraph{Primal iterate and width iterate of MWU algorithms.} We will use $i$ to denote primal iterates and $k$ to denote the width reduction iterations in our multiplicative weight update (mwu) algorithms. We use $\wh{\uu}$ to denote the vector $\uu$ after applying sketching, and $\ov{\uu}$ to denote an approximation to the vector $\uu$. For the (mwu) algorithms, we would use $i_k$ to denote the number of primal steps executed when the $k^{th}$ width step is being taken, i.e., the $k^{th}$ width step is from $(i_k, k)$ to $(i_k, k+1)$, and we use $k_i$ to denote the number of width reduction steps executed when the $i^{th}$ primal step is taken, i.e., the $i^{th}$ primal step is from $(i, k_i)$ to $(i+1, k_i)$.

For any primal step $i$ and any coordinate $e$, we define $\textsc{LastWidth}(i,e)$ to be the largest $i' \leq i$ such that the algorithm executed a width reduction step from $(i',k)$ to $(i',k+1)$ during which the weight of $e$ is updated, i.e., $\ww_e^{(i',k+1)} \neq \ww_e^{(i',k)}$.

\paragraph{Fast matrix multiplication.}
We use $\Tmat(n,r,m)$ to denote the time complexity required to compute the product of an $n \times r$ matrix with an $r \times m$ matrix.

In our proofs we will frequently use the following fact. See e.g. \cite{bcs97} for the basic properties of fast matrix multiplication exponents.
\begin{fact}
$\Tmat(n,r,m) = O(\Tmat(n,m,r)) = O(\Tmat(m,n,r))$.
\end{fact}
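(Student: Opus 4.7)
The plan is to establish the claim via the classical $S_3$-symmetry of the matrix-multiplication tensor. My starting point is that $\Tmat(a,b,c)$ equals, up to an additive term of the form $O(ab + bc + ac)$, the bilinear complexity (tensor rank) $R(a,b,c)$ of multiplying an $a \times b$ matrix by a $b \times c$ matrix. It therefore suffices to prove that $R$ is invariant under every permutation of its three arguments; the two asymptotic equalities in the fact then follow because the three triples $(n,r,m)$, $(n,m,r)$, and $(m,n,r)$ are permutations of one another and the lower-order additive terms are of the same asymptotic form in all three cases.

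First I would handle the transposition symmetry $R(a,b,c) = R(c,b,a)$. This is immediate from the identity $(AB)^{\top} = B^{\top} A^{\top}$: any bilinear algorithm for computing $AB$ (with $A \in \R^{a \times b}$, $B \in \R^{b \times c}$) yields one for multiplying $B^{\top} \in \R^{c \times b}$ by $A^{\top} \in \R^{b \times a}$ by transposing the inputs and output, adding only $O(ab + bc + ac)$ additional work. Next I would handle the cyclic symmetry $R(a,b,c) = R(b,c,a) = R(c,a,b)$ via the trilinear form $f(A,B,C) = \operatorname{tr}(ABC)$ with $A \in \R^{a \times b}$, $B \in \R^{b \times c}$, $C \in \R^{c \times a}$. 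A bilinear algorithm of rank $r$ for $AB$ is equivalent to a decomposition $f(A,B,C) = \sum_{i=1}^{r} L_i(A) \, M_i(B) \, N_i(C)$ into products of linear forms, since the coefficient of the entry $C_{ki}$ in this decomposition recovers the entry $(AB)_{ik}$. Because $f$ is manifestly invariant under the cyclic permutation $(A,B,C) \mapsto (B,C,A)$, the same decomposition — reinterpreted with $B$ and $C$ as the ``inputs'' and $A$ as the ``output dual'' — is a bilinear algorithm of the same rank for the product $BC$, proving $R(a,b,c) = R(b,c,a)$. Combined with the transposition symmetry, this generates the full action of $S_3$ on the three dimensions, yielding in particular the two equalities in the statement.

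The main subtlety is the cyclic step: one has to recognize the duality between bilinear algorithms for a matrix product and rank decompositions of the trilinear form $\operatorname{tr}(ABC)$, which is where the apparent asymmetry between ``inputs'' and ``output'' dissolves. Once this duality is in hand, the remainder of the argument reduces to the visible transposition and cyclic invariances, and the linear overhead of reshuffling inputs/outputs is absorbed into the $O(ab+bc+ac)$ additive term. Full details are standard and can be found in the reference \cite{bcs97}.
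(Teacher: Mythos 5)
Your proof is correct and is precisely the standard $S_3$-symmetry argument — the transpose trick for $\Tmat(a,b,c)=\Tmat(c,b,a)$ plus the $\operatorname{tr}(ABC)$ trilinear-form duality for the cyclic symmetry — which is the argument in \cite{bcs97} that the paper implicitly relies on; the paper itself states this fact without proof and only cites that reference. One minor bookkeeping imprecision: the total complexity is not within an additive $O(ab+bc+ca)$ of the rank (evaluating dense linear forms can cost more), but the duality still yields the claim because the dual algorithm's operation count is bounded by a constant times the original's (the description size of the linear forms $L_i, M_i, N_i$ is $O$ of the original operation count), so the conclusion stands.
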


\begin{definition}[Fast matrix multiplication exponent]
For any $\beta$, define a function $\omega_{\beta}(x)$ to be the minimum value such that $\Tmat(n,n^x,n^{\beta}) = n^{\omega_{\beta}(x) + o(1)}$. 

With an abuse of notation we also define the function $\omega(x) = \omega_1(x)$, and define the value $\omega = \omega(1)$.

We also define $\alpha_* \in \R_+$ to be the dual exponent of matrix multiplication, i.e., $\omega(\alpha_*) = 2$.\footnote{It's common in the literature to use $\alpha$ to denote the dual exponent of matrix multiplication. We use $\alpha_*$ here because we will use $\alpha$ to denote the ``step size'' of accelerated MWU.}
\end{definition}

We will also use the following fact about convexity. We present a proof (deferred to Section~\ref{sec:missing_proofs}) that generalizes the proof of Lemma~3.6 of \cite{jklps20}.
\begin{fact}[Convexity]\label{fact:FMM_convex}
For any $\beta$, $\omega_{\beta}(x)$ is convex in $x$.
\end{fact}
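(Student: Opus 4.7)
The plan is to prove convexity by the standard tensor (Kronecker) product argument that underlies rectangular matrix multiplication. Fix $x_1, x_2 \geq 0$ and $\lambda \in (0,1)$; the goal is to show
\[
\omega_{\beta}(\lambda x_1 + (1-\lambda) x_2) \;\leq\; \lambda\, \omega_{\beta}(x_1) + (1-\lambda)\, \omega_{\beta}(x_2),
\]
from which convexity of $\omega_\beta$ follows. The extension to $\beta > 1$ relative to the $\beta = 1$ case of \cite{jklps20} is purely notational because $\beta$ enters both factors of the construction symmetrically.

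First I would invoke the definition: for each $j \in \{1,2\}$ and every integer $N_j$, there is an algorithm $\mathcal{A}_j$ that multiplies an $N_j \times \lceil N_j^{x_j}\rceil$ matrix by an $\lceil N_j^{x_j}\rceil \times \lceil N_j^{\beta}\rceil$ matrix in time $N_j^{\omega_\beta(x_j) + o(1)}$. Given a target instance of dimensions $N \times \lceil N^{\lambda x_1 + (1-\lambda) x_2}\rceil \times \lceil N^{\beta}\rceil$, I would choose integers $N_1 \approx N^{\lambda}$ and $N_2 \approx N^{1-\lambda}$ so that $N_1 N_2$, $N_1^{x_1} N_2^{x_2}$, and $N_1^{\beta} N_2^{\beta}$ match the three target dimensions up to subpolynomial factors.

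Next I would use the fact that matrix multiplication tensors compose under the Kronecker product: combining $\mathcal{A}_1$ and $\mathcal{A}_2$ yields an algorithm that multiplies an $N_1 N_2 \times N_1^{x_1} N_2^{x_2}$ matrix by an $N_1^{x_1} N_2^{x_2} \times N_1^{\beta} N_2^{\beta}$ matrix in time
\[
N_1^{\omega_\beta(x_1) + o(1)} \cdot N_2^{\omega_\beta(x_2) + o(1)} \;=\; N^{\lambda\, \omega_\beta(x_1) + (1-\lambda)\,\omega_\beta(x_2) + o(1)}.
\]
An arbitrary rectangular product at the target dimensions is embedded into such a Kronecker-structured product by identifying the row, column, and middle index sets with products $[N_1]\times[N_2]$ of the corresponding dimensions, and padding with zeros where necessary. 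This yields the desired upper bound on $\omega_\beta(\lambda x_1 + (1-\lambda) x_2)$.

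The main obstacle is bookkeeping for non-integer exponents: the quantities $N^{\lambda}, N^{x_j}, N^{\beta}$ are generally irrational, so $N_1 N_2$ cannot be chosen to simultaneously realize all three target dimensions exactly. I would handle this either by selecting $N$ along a subsequence on which $N^{\lambda}$ is very close to an integer, or more robustly by padding to the next larger dimensions of the form $N_1 N_2$, $N_1^{x_1} N_2^{x_2}$, etc., and absorbing the multiplicative slack into the $o(1)$ exponent as permitted by the definition of $\omega_\beta$. A minor secondary issue is the degenerate regime where $x_j < \beta$ forces the trivial lower bound $\omega_\beta(x_j) \geq 1 + \max(x_j,\beta)$; the Kronecker construction respects these trivial regimes, so no separate argument is needed there.
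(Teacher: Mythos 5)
Your proposal is correct and is essentially the paper's own argument: the paper's tiling of the $n\times n^x$ and $n^x\times n^\beta$ matrices into $n^\alpha\times n^{\alpha p}$ and $n^{\alpha p}\times n^{\alpha\beta}$ blocks is exactly the Kronecker-product decomposition $\Tmat(n,n^x,n^\beta)\le \Tmat(n^\alpha,n^{\alpha p},n^{\alpha\beta})\cdot\Tmat(n^{1-\alpha},n^{(1-\alpha)q},n^{(1-\alpha)\beta})$ that you describe. The only difference is that you spell out the integrality/padding bookkeeping, which the paper simply assumes away by taking the relevant powers to be integers.
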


\begin{fact}[Upper bound of $\Tmat(n,n,r)$]\label{fact:upper_bound_Tmat}
For any $r \leq n$, 
\[
\Tmat(n,n,r) \leq n^{2+o(1)} + r^{\frac{\omega-2}{1-\alpha}} n^{2-\frac{\alpha (\omega-2)}{1-\alpha} + o(1)}.
\]
\end{fact}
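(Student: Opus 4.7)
The plan is to parametrize $r = n^x$ for $x \in [0,1]$ and use the definition $\Tmat(n,n,r) = n^{\omega(x)+o(1)}$ to translate the claim into an inequality about the function $\omega(\cdot)$. Namely, it suffices to show that $\omega(x) \leq \max\{2,\, 2 + (x-\alpha_*)\cdot(\omega-2)/(1-\alpha_*)\}$, since exponentiating the right-hand side recovers the two summands on the right of the claim. I would split on whether $x \leq \alpha_*$ or $x > \alpha_*$.

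For $x \leq \alpha_*$, I would first observe that $\omega(x) \geq 2$ for all $x \geq 0$ (the output/input each has at least $n^2$ entries to read or write, so one cannot beat $n^{2}$ time). Combined with the fact that $\omega(\cdot)$ is non-decreasing (one can pad the smaller problem with zeros, so $\omega(x) \leq \omega(x')$ for $x \leq x'$) and the definition $\omega(\alpha_*) = 2$, this forces $\omega(x) = 2$ for every $x \in [0,\alpha_*]$. Hence $\Tmat(n,n,r) = n^{2+o(1)}$, which is absorbed into the first summand.

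For $\alpha_* < x \leq 1$, I would invoke the convexity of $\omega_1$ from Fact~\ref{fact:FMM_convex}. Convexity means the graph of $\omega$ lies below the chord joining $(\alpha_*, \omega(\alpha_*)) = (\alpha_*, 2)$ and $(1, \omega(1)) = (1, \omega)$, so
\[
\omega(x) \;\leq\; \frac{1-x}{1-\alpha_*}\cdot 2 + \frac{x-\alpha_*}{1-\alpha_*}\cdot \omega \;=\; 2 + (x-\alpha_*)\cdot\frac{\omega-2}{1-\alpha_*}.
\]
Substituting back $x = \log_n r$ and exponentiating, the right-hand side becomes $r^{(\omega-2)/(1-\alpha_*)}\cdot n^{2 - \alpha_*(\omega-2)/(1-\alpha_*)}$, which, after including the $n^{o(1)}$ slack from the definition of $\omega(\cdot)$, is exactly the second summand of the claim.

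There is no real obstacle here; this is a standard use of the convexity of the rectangular matrix multiplication exponent together with the definition of the dual exponent $\alpha_*$. The only minor care needed is to handle the edge case $x \leq \alpha_*$ separately rather than trying to apply the chord inequality outside the interval $[\alpha_*,1]$, and to make sure the $o(1)$ error terms from the definition of $\omega_1$ are carried through consistently so they end up in the exponent of $n$ in both summands of the final bound.
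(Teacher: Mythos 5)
Your argument is correct. The paper itself states this fact without proof (it is treated as a standard consequence of the properties of rectangular matrix multiplication and is not among the items proved in the missing-proofs appendix), so there is no in-paper derivation to compare against; but your route --- writing $r = n^x$, disposing of $x \le \alpha_*$ via monotonicity of $\omega(\cdot)$ together with $\omega(\alpha_*) = 2$, and for $x \in (\alpha_*, 1]$ bounding $\omega(x)$ by the chord from $(\alpha_*, 2)$ to $(1,\omega)$ using the convexity in Fact~\ref{fact:FMM_convex} --- is exactly the standard derivation, and the algebra translating the chord bound into $r^{(\omega-2)/(1-\alpha_*)} n^{2-\alpha_*(\omega-2)/(1-\alpha_*)}$ checks out. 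The only cosmetic point is that the statement writes $\alpha$ where the paper's formal definition uses $\alpha_*$ for the dual exponent, and that passing from $\Tmat(n,n,r)$ to $\omega(x)$ implicitly uses the symmetry $\Tmat(n,n,r) = O(\Tmat(n,r,n))$ stated just above the definition; both are harmless.
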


\section{Low Rank Update Scheme under Stability Guarantees}\label{sec:low_rank}

\subsection{Low Rank Update under \texorpdfstring{$\ell_2$}{} Stability}
Algorithm~\ref{alg:select_vector} is the same as \cite{lee2021tutorial}, and it satisfies the following lemma.
\begin{lemma}[Low-rank update scheme under $\ell_2$ stability, Lemma~19 of \cite{lee2021tutorial}]\label{lem:LowRankL2}
If we have the guarantee
\[
\sum_{e} \ln\left(\frac{\rr_e^{\left(i + 1\right)}}{\rr_e^{\left(i \right)}} \right)^2 \leq \zeta,
\]
then the Algorithm~\ref{alg:select_vector} outputs a vector $\ov{\rr} \approx_{\delta} \rr^{(i)}$ in each iteration, and the approximate vector $\ov{\rr}$ undergoes a update of size $O((\frac{\log n}{\delta})^2 \cdot \zeta \cdot 2^{2\ell})$ in every $2^\ell$ iterations for every $\ell \in [0:\log T]$.
\end{lemma}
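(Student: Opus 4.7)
My plan is to prove the two parts of the lemma separately: first the coordinate-wise approximation $\ov{\rr}\approx_\delta \rr^{(i)}$, then the bound on the per-level update size.

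For the approximation guarantee, the plan is to use the dyadic structure of the algorithm. Fix a coordinate $e$ and iteration $i$, and let $i_0 \leq i$ be the last iteration at which $\ov{\rr}_e$ was set (either by being added to $S$ in \textsc{SelectVector}, or at initialization $i_0 = 0$). Any interval of integers $[i_0, i]$ can be written as a disjoint union of at most $2\log n$ dyadic intervals of the form $[j,\,j+2^\ell]$ with $j \equiv 0 \pmod{2^\ell}$. For each such piece, the fact that $e$ was \emph{not} added to $S$ at iteration $j+2^\ell$ on level $\ell$ forces $\bigl|\ln(\rr^{(j+2^\ell)}_e / \rr^{(j)}_e)\bigr| < \delta/(2\log n)$. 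Telescoping the logs across the $O(\log n)$ dyadic pieces and summing gives $|\ln(\rr^{(i)}_e/\rr^{(i_0)}_e)| \leq \delta$, which is exactly $\ov{\rr}_e \approx_\delta \rr^{(i)}_e$. The backstop case $\ell = \log n$, which sets $S \gets [n]$ every $n$ iterations, ensures that $i - i_0$ is bounded and that the dyadic decomposition exists.

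For the rank bound at level $\ell$ at some iteration $i$ with $i \equiv 0 \pmod {2^\ell}$, the plan is to count coordinates $e$ with $|\ln(\rr_e^{(i)}/\rr_e^{(i - 2^\ell)})| \geq \delta/(2\log n)$. Writing the log as a telescoping sum $\ln(\rr_e^{(i)}/\rr_e^{(i-2^\ell)}) = \sum_{j=i-2^\ell+1}^{i} \ln(\rr_e^{(j)}/\rr_e^{(j-1)})$ and applying Cauchy-Schwarz in the time index yields
\[
\left(\ln\tfrac{\rr_e^{(i)}}{\rr_e^{(i-2^\ell)}}\right)^2 \;\leq\; 2^\ell \sum_{j=i-2^\ell+1}^{i} \ln\!\left(\tfrac{\rr_e^{(j)}}{\rr_e^{(j-1)}}\right)^{\!2}.
\]
Summing this inequality over all $e$ that get added to $S$ at level $\ell$, each such coordinate contributes at least $(\delta/(2\log n))^2$ to the left-hand side, while the right-hand side sums (over $e$) to at most $2^\ell \sum_{j=i-2^\ell+1}^i \zeta \leq 2^{2\ell}\zeta$ by the $\ell_2$ stability hypothesis. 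Dividing gives a count of $O((\log n/\delta)^2 \cdot \zeta \cdot 2^{2\ell})$ at level $\ell$.

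Finally, I would note that the paper's extra condition ``$\textsc{LastWidth}(i,e) \leq i - 2^\ell$'' in Line~\ref{algline:update_S} only \emph{shrinks} $S$, so the rank bound is unaffected; the approximation guarantee still holds because on any iteration where a width reduction step touches $\ww_e$, the algorithm separately resets $\ov{\rr}_e = \rr_e$ directly, so we can take $i_0$ to be that iteration and apply the dyadic argument on the sub-interval after it. The main technical obstacle is really just the clean Cauchy-Schwarz step together with being careful that the dyadic decomposition of $[i_0, i]$ uses exactly those scales for which the thresholding would have fired; once this combinatorial bookkeeping is set up, both halves of the lemma follow in a few lines.
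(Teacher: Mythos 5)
Your proposal is correct and follows exactly the argument the paper uses: the paper cites this lemma from \cite{lee2021tutorial} without reproducing a proof, but its proof of the robust analogue (Lemma~\ref{lem:LowRankL2Robust}) is precisely your dyadic-decomposition-plus-telescoping argument for the $\approx_\delta$ guarantee and your Cauchy--Schwarz-in-the-time-index argument for the $O((\log n/\delta)^2\,\zeta\,2^{2\ell})$ rank bound, specialized to $\wt{\rr}=\rr$. Your remark about the $\textsc{LastWidth}$ condition matches the paper's own remark preceding Algorithm~\ref{alg:select_vector}.
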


Next we show that the robust $\ell_2$ stability guarantee also generates a $\delta$-approximate sequence with low-rank updates.
\begin{lemma}[Low-rank update scheme under robust $\ell_2$ stability]\label{lem:LowRankL2Robust}
If the sequence $\rr_e^{(0)}, \cdots, \rr_e^{(T)}$ satisfies the following guarantee: There exists another sequence $\wt{\rr}_e^{(0)}, \cdots, \wt{\rr}_e^{(T)}$ such that
\begin{enumerate}
    \item 
    \[
    \sum_e \ln\left(\frac{\wt{\rr}_e^{(i+1)}}{\rr_e^{(i)}}\right)^2 \leq \zeta, ~~~ \forall i \in [0:T], 
    \]
    \item $\forall t \leq t' \in [T]$, $\forall e$, with probability $1 - 1/n^4$,
    \[
    \left|\sum_{i = t'-t}^{t'} \ln\left(\frac{\wt{\rr}_e^{(i)}}{\rr_e^{(i)}}\right)\right| \leq \frac{\delta}{10 \log n}. 
    \]
\end{enumerate}
Then the Algorithm~\ref{alg:select_vector} outputs a vector $\ov{\rr} \approx_{\delta} \rr^{(i)}$ in each iteration, and the approximate vector $\ov{\rr}$ undergoes an update of size $O((\frac{\log n}{\delta})^2 \cdot \zeta \cdot 2^{2\ell})$ in every $2^\ell$ iterations for every $\ell \in [0:\log T]$.
\end{lemma}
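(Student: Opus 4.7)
The conclusion of the lemma has two parts: (i) coordinate-wise closeness $\ov{\rr} \approx_\delta \rr^{(i)}$ at every step, and (ii) the rank of the update at level $\ell$ is $O((\log n/\delta)^2 \zeta 2^{2\ell})$. My plan is to reduce (i) to the non-robust Lemma~\ref{lem:LowRankL2} essentially unchanged, because $\textsc{SelectVector}$ (Algorithm~\ref{alg:select_vector}) only inspects the sequence $\rr$ when deciding to refresh a coordinate: if $\ov{\rr}_e$ was last refreshed at $i_0 < i$, then at every binary checkpoint of level $\ell$ between $i_0$ and $i$ we have $|\ln(\rr_e^{(i^\star)}/\rr_e^{(i^\star-2^\ell)})| < \delta/(2\log n)$, and telescoping through the $\leq \log T$ checkpoints in the binary expansion of $i - i_0$ yields the required $\delta$-approximation. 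The novel ingredient, then, is (ii), which requires combining condition~(1), stated for the mixed increment $\wt{\rr}_e^{(j+1)}/\rr_e^{(j)}$, with the drift cancellation condition~(2) between $\wt{\rr}$ and $\rr$.

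The key step is a ``stability $+$ drift'' splitting: for each $j$, write $\ln(\rr_e^{(j)}/\rr_e^{(j-1)}) = \ln(\wt{\rr}_e^{(j)}/\rr_e^{(j-1)}) - \ln(\wt{\rr}_e^{(j)}/\rr_e^{(j)})$ and telescope, giving
\[
\ln\frac{\rr_e^{(i)}}{\rr_e^{(i-2^\ell)}} \;=\; \underbrace{\sum_{j=i-2^\ell+1}^{i}\ln\frac{\wt{\rr}_e^{(j)}}{\rr_e^{(j-1)}}}_{A_e} \;-\; \underbrace{\sum_{j=i-2^\ell+1}^{i}\ln\frac{\wt{\rr}_e^{(j)}}{\rr_e^{(j)}}}_{B_e}.
\]
Cauchy--Schwarz on $A_e$ and condition~(1) give $\sum_e A_e^2 \leq 2^\ell \sum_j \sum_e \ln(\wt{\rr}_e^{(j)}/\rr_e^{(j-1)})^2 \leq 2^{2\ell}\zeta$. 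Condition~(2), applied with $t = 2^\ell - 1$ and $t' = i$, is precisely tailored to bound $|B_e| \leq \delta/(10\log n)$ for every $e$. A union bound over the $O(\log T \cdot T)$ checkpoint/level pairs absorbs the $1/n^4$ failure probability in condition~(2).

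To finish, any coordinate selected at level $\ell$ must satisfy $|\ln(\rr_e^{(i)}/\rr_e^{(i-2^\ell)})| \geq \delta/(2\log n)$, and by the triangle inequality with $|B_e| \leq \delta/(10\log n)$ this forces $|A_e| \geq 2\delta/(5\log n)$. Chebyshev on $\sum_e A_e^2$ then yields the cardinality bound $O((\log n/\delta)^2 \zeta 2^{2\ell})$, matching the target. I expect the main obstacle to be purely bookkeeping: one must ensure condition~(2) is invoked exactly on the interval produced by telescoping, and that the constants $\delta/(10\log n)$ versus $\delta/(2\log n)$ are tracked carefully so that drift strictly loses to threshold. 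A secondary subtlety is the extra $\textsc{LastWidth}$ guard in Line~\ref{algline:update_S}: since this guard only removes coordinates from the update set, the rank bound is preserved, and the approximation guarantee is maintained by the companion rule (used throughout Algorithms~\ref{alg:MWU} and~\ref{alg:non_monotone_accel_robust}) that any width-reduction step forcibly resets $\ov{\rr}_e$ to $\rr_e$ on the touched coordinates.
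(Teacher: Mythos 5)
Your proposal is correct and follows essentially the same route as the paper's proof: the paper also splits each increment as $\ln(\rr_e^{(j+1)}/\rr_e^{(j)}) = \ln(\wt{\rr}_e^{(j+1)}/\rr_e^{(j)}) - \ln(\wt{\rr}_e^{(j+1)}/\rr_e^{(j+1)})$ (written multiplicatively), bounds the drift term by $\delta/(10\log n)$ via condition~(2), and applies Cauchy--Schwarz with condition~(1) to get $|S_\ell|\cdot\Omega(\delta/\log n)^2 \le 2^{2\ell}\zeta$. Your $A_e$/$B_e$ bookkeeping, the triangle-inequality threshold $2\delta/(5\log n)$, and the reduction of the approximation guarantee to the non-robust telescoping argument all match the paper's argument.
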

\begin{proof}
Consider a fixed iteration $i$. We first show that in the $i$-th iteration $e^{-\delta} \rr^{(i)}_e \leq \ov{\rr}_e \leq e^{\delta} \rr^{(i)}_e$ for any $e \in [n]$. Let $i'$ be the iteration when $\ov{\rr}_e$ was last updated. We can write $i' = i_{0} < i_{1} < i_{2}< \cdots < i_{s} = i$ such that $i_{j+1} - i_{j}$ is a power of $2$ and $i_{j+1} - i_{j}$ divides $i_{j+1}$, and $|s| \leq 2 \log n$.
Hence, we have that
\begin{align*}
\frac{\rr_e^{(i)}}{\ov{\rr}_e} = &~ \frac{\rr_e^{(i_s)}}{\rr_e^{(i_0)}} 
= \prod_{j=0}^{s-1} \frac{\rr_e^{(i_{j+1})}}{\rr_e^{(i_j)}}
= \exp\Big(\sum_{j=0}^{s-1} \ln(\frac{\rr_e^{(i_{j+1})}}{\rr_e^{(i_j)}})\Big) \leq \exp(\delta),
\end{align*}
where in the fourth step we used that since $\ov{\rr}_{e}$ is not updated since step $i'$, we have $|\ln(\frac{\rr_e^{(i_{j+1})}}{\rr_e^{(i_j)}})| \leq \frac{\delta}{2 \log n}$. Similarly we also have $\frac{\rr_e^{(i)}}{\ov{\rr}_e} \geq \exp(-\delta)$.

Next we bound the size of the update after every $2^{\ell}$ iterations. Let $i$ be any iteration where $i \equiv 0 \mod 2^{\ell}$. We denote the set that is being updated as $S_{\ell} := \{e : |\ln \frac{\rr_e^{(i)}}{\rr_e^{(i-2^\ell)}} | \geq \frac{\delta}{2 \log n}\}$. Wlog assume that $\ln \frac{\rr_e^{(i)}}{\rr_e^{(i-2^\ell)}} \geq 0$. Using the second property of the sequence that $\left|\sum_{j = i-2^{\ell}+1}^{i} \ln(\frac{\wt{\rr}_e^{(j)}}{\rr_e^{(j)}})\right| \leq \frac{\delta}{10 \log n}$, we have
\begin{align*}
\frac{\rr_e^{(i)}}{\rr_e^{(i-2^\ell)}} = &~ \prod_{j=i-2^{\ell}}^{i-1} \frac{\rr_e^{(j+1)}}{\rr_e^{(j)}}
= \prod_{j=i-2^{\ell}}^{i-1} \frac{\wt{\rr}_e^{(j+1)}}{\rr_e^{(j)}} \cdot \prod_{j=i-2^{\ell}+1}^{i} \frac{\rr_e^{(j)}}{\wt{\rr}_e^{(j)}} \geq \prod_{j=i-2^{\ell}}^{i-1} \frac{\wt{\rr}_e^{(j+1)}}{\rr_e^{(j)}} \cdot \exp(-\frac{\delta}{10 \log n}),
\end{align*}
So for any $e \in S_{\ell}$, we have 
\begin{align*}
\sum_{j=i-2^{\ell}}^{i-1} \ln\left(\frac{\wt{\rr}_e^{(j+1)}}{\rr_e^{(j)}} \right) \geq \ln \left(\frac{\rr_e^{(i)}}{\rr_e^{(i-2^\ell)}}\right) - \frac{\delta}{10 \log n} \geq \frac{\delta}{5 \log n}.
\end{align*}
So we have
\begin{align*}
|S_{\ell}| \cdot \frac{\delta^2}{(5 \log n)^2} 
\leq \sum_{e \in S_{\ell}} \left(\sum_{j=i-2^{\ell}}^{i-1} \ln\left(\frac{\wt{\rr}_e^{(j+1)}}{\rr_e^{(j)}} \right)\right)^2 
\leq 2^{\ell} \cdot \sum_{e \in S_{\ell}} \sum_{j=i-2^{\ell}}^{i-1} \ln\left(\frac{\wt{\rr}_e^{(j+1)}}{\rr_e^{(j)}} \right)^2
\leq 2^{2 \ell} \cdot \zeta,
\end{align*}
where the last step follows from the first property of the sequence.

So we have $|S_{\ell}| \leq O(2^{2\ell}(\log n/\delta)^{2} \zeta)$.
\end{proof}

\subsection{Low Rank Update under \texorpdfstring{$\ell_3$}{} Stability}
In this section we prove the low-rank update guarantee under $\ell_3$ stability, which holds for MWU with monotone weights, and we only use it in our deterministic algorithm.
\subsubsection{Decomposition of Iterations}
\begin{lemma}[Decomposition of iterations]\label{lem:decomposition_iteration}
If the weights satisfy that 
\[
\sum_{i=1}^T \sum_{e} \left|\frac{\rr_e^{\left(i + 1\right)}}{\rr_e^{\left(i \right)}} - 1 \right|^3 \leq \zeta,
\]
then we can decompose the $T$ iterations into $\log T + 1$ disjoint sets:
\begin{align*}
B_j := &~ \left\{ i \in [T] ~\bigg|~ \frac{\zeta}{2^{j+1}} < \sum_{e} \left|\frac{\rr_e^{\left(i + 1\right)}}{\rr_e^{\left(i \right)}} - 1 \right|^3 \leq \frac{\zeta}{2^{j}} \right\},~~ \forall j \in [0:\log T-1], \\
B_{\log T} := &~ \left\{ i \in [T] ~\bigg|~ \sum_{e} \left|\frac{\rr_e^{\left(i + 1\right)}}{\rr_e^{\left(i \right)}} - 1 \right|^3 \leq \frac{\zeta}{T} \right\},
\end{align*}
and these sets satisfy that $\cup_{j=0}^{\log T} B_j = [T]$, and $|B_j| \leq 2^{j+1}$ for all $j \in [\log T]$.
\end{lemma}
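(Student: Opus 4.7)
The plan is to prove the decomposition lemma by a straightforward dyadic pigeonhole argument on the per-iteration quantity
\[
q_i \;:=\; \sum_{e} \left|\frac{\rr_e^{(i+1)}}{\rr_e^{(i)}} - 1 \right|^3.
\]
By hypothesis, $\sum_{i=1}^T q_i \leq \zeta$. I would first verify that the sets $B_0, B_1, \ldots, B_{\log T}$ form a partition of $[T]$. Disjointness follows directly from the definition: for $j < \log T$, membership in $B_j$ requires $q_i \in (\zeta/2^{j+1}, \zeta/2^j]$, and these intervals are disjoint as $j$ varies; $B_{\log T}$ is defined by $q_i \leq \zeta/T$, which is disjoint from all the other intervals. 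To see covering, note that for any $i \in [T]$, either $q_i \leq \zeta/T$, in which case $i \in B_{\log T}$, or $q_i > \zeta/T = \zeta/2^{\log T}$, in which case there is a (unique) largest $j \in [0:\log T - 1]$ with $q_i > \zeta/2^{j+1}$, placing $i \in B_j$.

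Next I would bound the sizes. For each $j \in [0 : \log T - 1]$, every $i \in B_j$ contributes strictly more than $\zeta/2^{j+1}$ to the total, so
\[
|B_j| \cdot \frac{\zeta}{2^{j+1}} \;<\; \sum_{i \in B_j} q_i \;\leq\; \sum_{i=1}^T q_i \;\leq\; \zeta,
\]
which yields $|B_j| < 2^{j+1}$ as desired. For the residual bucket $B_{\log T}$, the trivial bound $|B_{\log T}| \leq T = 2^{\log T} \leq 2^{\log T + 1}$ suffices.

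There is no real obstacle here: the whole argument is an application of pigeonhole to the dyadic level sets of $q_i$, plus a separate trivial bucket collecting all iterations with negligibly small $q_i$. The only mild care needed is to confirm that the strict inequality in the definition of $B_j$ matches with the $|B_j| \leq 2^{j+1}$ bound (not $2^j$), which accounts for the factor of $2$ slack; and to handle $T$ being a non-power of $2$, which one can do by simply redefining $\log T$ as $\lceil \log_2 T \rceil$ throughout without affecting the asymptotic bookkeeping that downstream proofs rely on.
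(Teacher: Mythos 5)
Your proposal is correct and follows essentially the same dyadic pigeonhole argument as the paper: each $i \in B_j$ (for $j < \log T$) contributes more than $\zeta/2^{j+1}$ to a total of at most $\zeta$, giving $|B_j| \leq 2^{j+1}$, and the last bucket is bounded trivially by $T$. One cosmetic slip in your covering argument: when $q_i > \zeta/T$ you should take the \emph{smallest} $j$ with $q_i > \zeta/2^{j+1}$ (the condition only gets easier as $j$ grows, so the largest such $j$ is always $\log T - 1$); with that fix the unique bucket containing $i$ is exactly the one with $q_i \in (\zeta/2^{j+1}, \zeta/2^j]$.
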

\begin{proof}
Since $\sum_{i=1}^T \sum_{e} \left|\frac{\rr_e^{\left(i + 1\right)}}{\rr_e^{\left(i \right)}} - 1 \right|^3 \leq \zeta$, we have that for any $i \in [T]$, $0 \leq \sum_{e} \left|\frac{\rr_e^{\left(i + 1\right)}}{\rr_e^{\left(i \right)}} - 1 \right|^3 \leq \zeta$, so each $i \in [T]$ must fall into exactly one set $B_j$.

For any $j \in [0:\log T - 1]$, by the definition of $B_j$ we have
\begin{align*}
\sum_{i \in B_j} \sum_{e} \left|\frac{\rr_e^{\left(i + 1\right)}}{\rr_e^{\left(i \right)}} - 1 \right|^3 \geq |B_j| \cdot \frac{\zeta}{2^{j+1}}.
\end{align*}
Combining with our assumption that $\sum_{i=1}^T \sum_{e} \left|\frac{\rr_e^{\left(i + 1\right)}}{\rr_e^{\left(i \right)}} - 1 \right|^3 \leq \zeta$, we have that
\begin{align*}
|B_j| \cdot \frac{\zeta}{2^{j+1}} \leq \zeta \implies |B_j| \leq 2^{j+1}.
\end{align*}
Finally, note that we trivially have $|B_{\log T}| \leq T < 2^{\log T +1}$.
\end{proof}

\subsubsection{Low Rank Update Scheme under \texorpdfstring{$\ell_3$}{} Stability}
\begin{algorithm}[H]
\caption{Low rank update in the $t$-th iteration}
\label{alg:select_vector_L3}
 \begin{algorithmic}[1]
\Procedure{SelectVectorL3}{$\rr^{(t)}$}
\For{all $j \in [0: \log T]$}
\For{all $\ell \in [0: \log T]$}
\If{$i \in B_j$ and $i$ is the $k$-th element in $B_j$ where $k \equiv 0 \pmod {2^{\ell}}$}
\State $I \leftarrow \left\{ e ~\Big|~
\sum_{k' = k - 2^{\ell}}^{k} \left|\frac{\rr_e^{\left(B_j[k'] + 1\right)}}{\rr_e^{\left(B_j[k'] \right)}} - 1 \right| \geq \frac{\delta}{10 \log^2 n}\right\}$
\EndIf
\State Update the weights for all $e \in I$ to be $\rrbar^{(t)}_e \leftarrow \rr^{(t)}_e$
\EndFor
\EndFor
\EndProcedure 
\end{algorithmic}
\end{algorithm}

\begin{lemma}[Low-rank update scheme under $\ell_3$ stability]\label{lem:LowRankL3}
Assume that the weights are monotonically increasing and satisfy 
\[
\sum_{i=1}^T \sum_{e} \left|\frac{\rr_e^{\left(i + 1\right)}}{\rr_e^{\left(i \right)}} - 1 \right|^3 \leq \zeta. 
\]
Define the sets $B_0, \cdots, B_{\log T} \subseteq [T]$ as Lemma~\ref{lem:decomposition_iteration}, and for any $j$ let $B_j[1], \cdots, B_j[|B_j|]$ denote the elements in $B_j$ in increasing order. 

For any $\delta \leq 0.1$, Algorithm~\ref{alg:select_vector_L3} maintains a vector $\rrbar \approx_{\delta} \rr$ where $\rrbar$ undergoes the following updates: for any $j \in [0: \log T]$, for any $\ell \in [0: \log |B_j|]$, $\rrbar$ receives an update of size $O(\zeta \cdot 2^{3 \ell-j} \cdot \frac{\log^6 n}{\delta^3})$ in iterations $B_j[2^{\ell}], B_j[2 \cdot 2^{\ell}], B_j[3 \cdot 2^{\ell}], \cdots, B_j[\lfloor \frac{|B_j|}{2^{\ell}} \rfloor \cdot 2^{\ell}]$.
\end{lemma}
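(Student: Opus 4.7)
The plan is to prove both claims by running the dyadic lazy-update argument from Lemma~\ref{lem:LowRankL2} separately inside each class $B_j$ and then combining via the decomposition of Lemma~\ref{lem:decomposition_iteration}.

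For the approximation guarantee, I will fix an iteration $t$ and a coordinate $e$ and let $t_0$ be the most recent reset of $\rrbar_e$. Since the weights are monotone non-decreasing, the log-ratio $\ln(\rr_e^{(t)}/\rr_e^{(t_0)})$ is non-negative, and using $\ln(1+y)\leq y$ it is bounded by $\sum_{i=t_0}^{t-1}|\rr_e^{(i+1)}/\rr_e^{(i)}-1|$. I will split this sum across the $\log T+1$ classes $B_j$, noting that $[t_0,t-1]\cap B_j$ is a contiguous range in the natural ordering of $B_j$. This range is covered by at most $2\log|B_j| = O(\log T)$ dyadic sub-intervals whose right endpoints are $B_j$-multiples of the sub-interval length (the standard binary cover used in Lemma~\ref{lem:LowRankL2}). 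At the right endpoint of any such length-$2^\ell$ sub-interval, Algorithm~\ref{alg:select_vector_L3} inspects the level-$\ell$ condition for $B_j$; because $\rrbar_e$ was not reset there, that condition failed to trigger, so the cumulative $\ell_1$-change on the sub-interval is at most $\delta/(10\log^2 n)$. Summing over the $O(\log T)$ sub-intervals per $B_j$ and over the $\log T + 1$ classes, and using $\log T = O(\log n)$, the total cumulative change is $O(\log^2 n)\cdot \delta/(10\log^2 n)\leq \delta$, which gives $\rrbar \approx_\delta \rr^{(t)}$.

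For the cardinality bound, I will fix a level $(j,\ell)$ and a position $k = m\cdot 2^\ell$ in $B_j$, and analyze the triggered set $I$. By construction, each $e \in I$ satisfies
\[
\sum_{k'=k-2^\ell}^{k}\left|\tfrac{\rr_e^{(B_j[k']+1)}}{\rr_e^{(B_j[k'])}}-1\right|\;\geq\;\tfrac{\delta}{10\log^2 n}.
\]
Cubing and invoking the power-mean inequality $(\sum_{k'} a_{k'})^3 \leq (2^\ell + 1)^2 \sum_{k'} a_{k'}^3$ gives the per-$e$ lower bound $\sum_{k'} |\rr_e^{(B_j[k']+1)}/\rr_e^{(B_j[k'])}-1|^3 = \Omega(\delta^3/(2^{2\ell}\log^6 n))$. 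Summing over $e \in I$, swapping the order of summation, and using the per-iteration $\ell_3$ bound $\sum_e |\rr_e^{(B_j[k']+1)}/\rr_e^{(B_j[k'])}-1|^3 \leq \zeta/2^j$ from Lemma~\ref{lem:decomposition_iteration} (valid for every $k' \in B_j$), I obtain $|I|\cdot \delta^3/(2^{2\ell}\log^6 n) \leq O(2^\ell \cdot \zeta/2^j)$. Rearranging yields $|I| = O(\zeta \cdot 2^{3\ell - j} \log^6 n/\delta^3)$, as required.

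The main delicate point is the interplay between the global monotone log-ratio and the per-class dyadic schedule: the approximation argument must telescope across the $B_j$'s after decomposing each into at most $O(\log n)$ ``safe'' dyadic pieces, and the cardinality argument must upgrade the $\ell_1$-threshold that triggers a reset into an $\ell_3$-mass lower bound via Hölder with exactly the exponent that produces the $2^{3\ell-j}$ scaling. Once these two reductions are in place, the remaining steps follow by routine bookkeeping.
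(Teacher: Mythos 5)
Your proof is correct and follows essentially the same route as the paper's: the cardinality bound comes from cubing the $\ell_1$-trigger threshold and applying the power-mean inequality $(\sum_{k'} a_{k'})^3 \leq (2^{\ell}+1)^2 \sum_{k'} a_{k'}^3$ together with the per-iteration $\ell_3$ mass bound $\zeta/2^j$ from Lemma~\ref{lem:decomposition_iteration}, while the approximation guarantee comes from decomposing $[t_0,t]\cap B_j$ into $O(\log T)$ dyadic sub-intervals (with right endpoints at $B_j$-multiples of the length, so that the algorithm's non-triggering gives the $\delta/(10\log^2 n)$ per-piece bound) and telescoping across the $\log T + 1$ classes. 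The paper uses the slightly sharper per-class count $s \leq 2(j+1)$ via $|B_j|\leq 2^{j+1}$ where you use $O(\log T)$, but this only affects constants; the $10\log^2 n$ denominator absorbs either version.
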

\begin{proof}
For any $j \in [0: \log T]$, and for any $\ell \in [0: \log |B_j|]$, in any iteration $k$ that equals to an integer times $2^{\ell}$, Algorithm~\ref{alg:select_vector_L3} performs an update for all coordinates in set $I$, where
\begin{align*}
I = \left\{ e ~\Big|~
\sum_{k' = k - 2^{\ell}}^{k} \left|\frac{\rr_e^{\left(B_j[k'] + 1\right)}}{\rr_e^{\left(B_j[k'] \right)}} - 1 \right| \geq \frac{\delta}{10 \log^2 n}\right\}.
\end{align*}
We first bound the size of the set $I$. We have
\begin{align*}
|I| \cdot (\frac{\delta}{10 \log^2 n})^3 \leq &~ \sum_e \left( \sum_{k' = k - 2^{\ell}}^{k} \left|\frac{\rr_e^{\left(B_j[k'] + 1\right)}}{\rr_e^{\left(B_j[k'] \right)}} - 1 \right| \right)^3 \\
\leq &~ \sum_e 2^{2 \ell} \cdot \sum_{k' = k - 2^{\ell}}^{k} \left|\frac{\rr_e^{\left(B_j[k'] + 1\right)}}{\rr_e^{\left(B_j[k'] \right)}} - 1 \right|^3 \\
\leq &~ \frac{2^{3 \ell} \cdot \zeta}{2^j},
\end{align*}
where the second step follows from $(\sum_{i=1}^n |a_i|)^3 \leq n^2 \cdot \sum_{i=1}^n |a_i|^3$ for any sequence $a_i$, the third step follows from $\sum_{e} \left|\frac{\rr_e^{\left(i + 1\right)}}{\rr_e^{\left(i \right)}} - 1 \right|^3 \leq \frac{\zeta}{2^{j}}$ for all $i \in B_j$.

So we have
\begin{align*}
|I| \leq \zeta \cdot 2^{3 \ell-j} \cdot (\frac{10 \log^2 n}{\delta})^3.
\end{align*}

Next we prove that the vector $\rrbar$ maintained in Algorithm~\ref{alg:select_vector_L3} satisfies $|\frac{\rrbar^{(i)}_e}{\rr_e^{(i)}} - 1| \leq \delta$ for all coordinates $e$ and in all iterations $i$. Fix a coordinate $e$ and an iteration $i$, and let $i_0$ be the last iteration that $\rrbar_e$ was updated. For any $j \in [0:\log T]$, let $B_j[k], B_j[k+1], \cdots, B_j[k+t]$ denote the iterations in $[i_0, i]$ that fall into $B_j$, and note that $t = |[i_0,i] \cap B_j| \leq |B_j| \leq 2^{j+1}$. We can write $k = k_0 < k_1 < k_2 < \cdots < k_s = k+t$ where each $k_{\ell+1} - k_{\ell}$ is a power of $2$ and $s \leq 2 \log t \leq 2 (j+1)$. Since $\rrbar_e$ is not updated in any iterations $B_j[k_1], \cdots, B_j[k_s]$, we have that for any $\ell \in [s]$,
\begin{align*}
\sum_{k' = k_{\ell-1}}^{k_{\ell}} \left|\frac{\rr_e^{\left(B_j[k'] + 1\right)}}{\rr_e^{\left(B_j[k'] \right)}} - 1 \right| < \frac{\delta}{10 \log^2 n},
\end{align*}
so we have
\begin{align*}
\sum_{\ell=1}^{s} \sum_{k' = k_{\ell-1}}^{k_{\ell}} \left|\frac{\rr_e^{\left(B_j[k'] + 1\right)}}{\rr_e^{\left(B_j[k'] \right)}} - 1 \right| < \frac{s \delta}{10 \log^2 n} \leq \frac{2(j+1) \delta}{10 \log^2 n} \leq \frac{\delta}{\log n}.
\end{align*}
Since the same argument holds for all $j \in [0: \log T]$, and each iteration in $[i_0, i]$ falls into exactly one $B_j$, we have that
\begin{align*}
\sum_{i'=i_0}^{i-1} \left|\frac{\rr_e^{(i'+1)}}{\rr_e^{(i')}} - 1 \right| < \frac{\delta}{\log n} \cdot \log T \leq \delta.
\end{align*}
Using the above inequality, and note that the weights are always increasing, we have
\begin{align*}
\frac{\rr_e^{(i)}}{\rrbar^{(i)}_e} = \frac{\rr_e^{(i)}}{\rr_e^{(i_0)}}
= &~ \prod_{i'=i_0}^{i-1} \left( 1 + \left|\frac{\rr_e^{(i'+1)}}{\rr_e^{(i')}} - 1 \right| \right) \\
\leq &~ \exp\left( \sum_{i'=i_0}^{i-1} \left|\frac{\rr_e^{(i'+1)}}{\rr_e^{(i')}} - 1 \right| \right) \leq \exp(\delta).
\end{align*}
Finally note that we also have $\frac{\rr_e^{(i)}}{\rrbar_e} \geq 1$ since the weights are always increasing.
\end{proof}

\begin{corollary}\label{cor:LowRankL3}
For any $j \in [0:\log T]$ and any $t$, the total number of coordinates that are updated in iterations $B_j[k], \cdots, B_j[k+t]$ is $O\left(\zeta \cdot 2^{3 \log t-j} \cdot \frac{\log^6 n}{\delta^3} \right)$.
\end{corollary}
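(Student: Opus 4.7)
The plan is to apply Lemma~\ref{lem:LowRankL3} level by level and sum across levels within the window $B_j[k], \ldots, B_j[k+t]$. At each level $\ell$, Algorithm~\ref{alg:select_vector_L3} can trigger a level-$\ell$ update only at positions of the form $B_j[m \cdot 2^\ell]$; within a window of $t+1$ consecutive positions, the number of such trigger iterations is at most $\lfloor t/2^\ell \rfloor + 1$, and Lemma~\ref{lem:LowRankL3} already guarantees that each trigger places at most $O(\zeta \cdot 2^{3\ell-j} \cdot \log^6 n / \delta^3)$ coordinates on the update list.

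Setting $C := \zeta \cdot 2^{-j} \cdot \log^6 n / \delta^3$ and summing over $\ell \in [0 : \lfloor \log t \rfloor]$ gives
\[
\sum_{\ell = 0}^{\lfloor \log t \rfloor} \Bigl(\tfrac{t}{2^\ell} + 1\Bigr) \cdot O(C \cdot 2^{3\ell}) \;=\; O(C) \cdot \sum_{\ell = 0}^{\lfloor \log t \rfloor} \bigl(t \cdot 2^{2\ell} + 2^{3\ell}\bigr) \;=\; O(C \cdot t^3),
\]
since both geometric series are dominated by their top term at $\ell = \lfloor \log t \rfloor$. Using $2^{3\log t - j} = t^3 / 2^j$, this already recovers the stated bound $O(\zeta \cdot 2^{3 \log t - j} \cdot \log^6 n / \delta^3)$ for the contribution from these levels.

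For the remaining high levels $\ell > \lfloor \log t \rfloor$, at most a single level-$\ell$ trigger can lie in the window. Here the naive per-trigger size $O(C \cdot 2^{3\ell})$ from Lemma~\ref{lem:LowRankL3} looks back over $2^\ell > t$ indices and could in principle exceed the target. The fix is to associate every updated coordinate $e$ with its \emph{smallest} triggering level, so that coordinates updated by a high-level trigger whose indicator mass lies mostly outside $[B_j[k], B_j[k+t]]$ are already accounted for in an earlier window's update and are not re-counted here. For coordinates whose triggering indicator mass genuinely lies inside the target window, one reapplies the power-mean inequality $(\sum_{k''=k}^{k+t} a_{k''})^3 \leq (t+1)^2 \sum_{k''} a_{k''}^3$ from the proof of Lemma~\ref{lem:LowRankL3}, combined with the per-iteration $\ell_3$-stability $\sum_e |\cdot|^3 \leq \zeta/2^j$ valid on every iteration in $B_j$, obtaining an additional $O(C \cdot t^3)$ bound from the high levels.

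The main obstacle is exactly this charging argument for $\ell > \log t$: the per-level bound of Lemma~\ref{lem:LowRankL3} on its own, summed over all levels up to $\log |B_j|$, would overshoot the target by a factor tied to $|B_j|$ rather than $t$. The cleanest fix is a canonical-level assignment per updated coordinate (smallest $\ell$ that causes its update), which avoids double-counting across levels and reduces the global count to the $\ell \leq \lfloor \log t \rfloor$ sum computed above. An equivalent route is to bypass Lemma~\ref{lem:LowRankL3} for high levels entirely and derive the bound by applying the windowed power-mean inequality directly to $[B_j[k], B_j[k+t]]$, which yields the corollary without ever invoking $\ell > \log t$ triggers.
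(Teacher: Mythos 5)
Your core computation is exactly the paper's proof: the paper's entire argument for this corollary is the single sum $\sum_{\ell=0}^{\log t} O\bigl(\zeta \cdot 2^{3\ell-j} \cdot \frac{\log^6 n}{\delta^3}\bigr) \cdot \frac{t}{2^{\ell}}$, bounded by its top term, which is precisely your first display. So for the part the paper actually proves, you match it line for line. The "obstacle" you then spend two paragraphs on — levels $\ell$ with $2^{\ell} > t$ whose single possible trigger inside the window carries a per-trigger bound exceeding the target — is not addressed in the paper at all; the paper's proof silently truncates the sum at $\ell = \log t$. You are right that this is a real gap if the corollary is read for arbitrary $k$ and $t$: a lone level-$\ell$ trigger with $\ell > \log t$ landing in the window contributes up to $O(\zeta \cdot 2^{3\ell - j} \log^6 n/\delta^3)$, and neither of your proposed patches fully closes it — the "charge to an earlier window" idea does not apply to a coordinate whose $\rrbar_e$ genuinely changes at an iteration inside the window (the data-structure application must count it there), and the direct windowed power-mean argument cannot control coordinates whose triggering mass lies mostly outside $[B_j[k], B_j[k+t]]$. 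So treat your second half as identifying a looseness in the paper's statement/proof (which in the application is used with aligned windows $t = 2^{\ell_j}$, where one would additionally need to account for, or amortize, the high-level triggers) rather than as a finished repair.
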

\begin{proof}
Using Lemma~\ref{lem:LowRankL3} we have the total number of updates in iterations $B_j[k], \cdots, B_j[k+t]$ is
\begin{align*}
\sum_{\ell=0}^{\log t} O\left(\zeta \cdot 2^{3 \ell - j} \cdot \frac{\log^6 n}{\delta^3} \right) \cdot \frac{t}{2^{\ell}} \leq O\left(\zeta \cdot 2^{3 \log t-j} \cdot \frac{\log^6 n}{\delta^3} \right).
\end{align*}
\end{proof}
\section{Data Structures}\label{sec:data_structures}
In this section, we would present all the data structures we use for the various tasks in Algorithm~\ref{alg:non_monotone_accel_robust}. 
\subsection{Inverse Maintenance Data Structure}\label{sec:InverseMaintenance}
In this section we present the formal versions of Lemmas~\ref{lem:InverseMaintenanceOneLevelShort} and \ref{lem:InverseMaintenanceTwoLevelShort}. These are the inverse maintenance data structures of \cite{bns19}, and we have included a version of their results which is tailored to our notations and analysis. For completeness, we include the proofs of the following two lemmas in Section~\ref{sec:missing_proofs}.
\begin{lemma}[One-level inverse maintenance, Theorem 4.1 of \cite{bns19}]
\label{lem:InverseMaintenanceOneLevel}
There exists a data structure that initially has a matrix $\MM^{(0)} \in \R^{n \times n}$, and in each iteration it receives an update $\Delta^{(t)} \in \R^{n \times n}$ to update the matrix to $\MM^{(t)} = \MM^{(t-1)} + \Delta^{(t)}$. The data structure maintains an iteration counter $t_0$ and it maintains the inverse $\NN = (\MM^{(t_0)})^{-1}$ internally. Let $k = \nnz(\Delta^{(t_0+1)}) + \cdots + \nnz(\Delta^{(t)})$ denote the total size of the updates until the current iteration. The runtime for each operation of the data structure is as follows:
\begin{itemize}
\item {\bf Initialize$(\MM^{(0)})$}: Initially set $t_0 = 0$ and $\NN = (\MM^{(0)})^{-1}$. This operation takes $O(n^{\omega})$ time.
\item {\bf Update$(\Delta^{(t)})$}: The data structure receives the $t$-th update. This operation takes $O(\nnz(\Delta^{(t)}))$ time.
\item {\bf Reset$()$}: Reset $t_0 = t$ and $\NN = (\MM^{(t_0)})^{-1}$.  This operation takes $O(\Tmat(n, n, k))$ time.
\item {\bf Query$(J_r, J_c \subseteq [n])$}: Output the submatrix $\big((\MM^{(t)})^{-1} \big)_{J_r, J_c}$ that has $|J_r| = \ell_r$ rows and $|J_c| = \ell_c$ columns. This operation takes $O\big(k^{\omega} + \Tmat(\ell_r, k, \ell_c) \big)$ time.
\end{itemize}
\end{lemma}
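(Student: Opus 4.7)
The plan is to implement all four operations via the Sherman--Morrison--Woodbury identity applied to a sparse low-rank factorization of the accumulated update. After a reset, at any subsequent time $t$ the matrix satisfies $\MM^{(t)} = \MM^{(t_0)} + \Delta$, where $\Delta = \Delta^{(t_0+1)} + \cdots + \Delta^{(t)}$ has $k = \nnz(\Delta^{(t_0+1)}) + \cdots + \nnz(\Delta^{(t)})$ nonzero entries. Letting $r, c \subseteq [n]$ be the sets of rows and columns of $\Delta$ containing at least one nonzero (so $|r|, |c| \leq k$), one can factor $\Delta = \UU \VV^{\top}$ with $\UU, \VV \in \R^{n \times k}$ whose nonzero rows are confined to $r$ and $c$ respectively. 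The stored state of the data structure consists of the explicit $n \times n$ inverse $\NN = (\MM^{(t_0)})^{-1}$ together with the cumulative sparse factors $\UU, \VV$.

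For \textbf{Initialize}, I would invert $\MM^{(0)}$ directly via fast matrix multiplication in $O(n^{\omega})$ time and store $\NN$. For \textbf{Update}, I would simply append the new nonzeros to the cumulative list describing $\UU, \VV$, which takes $O(\nnz(\Delta^{(t)}))$ time and no arithmetic on $\NN$. For \textbf{Reset}, I would apply the Woodbury identity
\[
(\MM^{(t_0)} + \UU \VV^{\top})^{-1} = \NN - \NN \UU \left( \II_k + \VV^{\top} \NN \UU \right)^{-1} \VV^{\top} \NN
\]
and recompute the stored inverse at the new base point. The products $\NN \UU$ and $\VV^{\top} \NN$ each cost $O(\Tmat(n, n, k))$; the inner $k \times k$ inverse costs $O(k^{\omega})$; and the final outer product that reconstructs the $n \times n$ correction costs another $O(\Tmat(n, k, n)) = O(\Tmat(n, n, k))$. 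Since $k^{\omega} \leq \Tmat(n, n, k)$ when $k \leq n$, the total is $O(\Tmat(n, n, k))$.

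For \textbf{Query}, I would apply the same identity but only read out the $J_r \times J_c$ entries:
\[
\bigl((\MM^{(t)})^{-1}\bigr)_{J_r, J_c} = \NN_{J_r, J_c} - \NN_{J_r, r} \UU_{r, :} \left( \II_k + \VV_{c, :}^{\top} \NN_{c, r} \UU_{r, :} \right)^{-1} \VV_{c, :}^{\top} \NN_{c, J_c}.
\]
All four submatrices of $\NN$ appearing here are read directly from the stored dense inverse in time $O(\ell_r \ell_c + \ell_r k + k \ell_c + k^2)$. The central $k \times k$ inverse is formed and inverted in $O(k^{\omega})$. The remaining chain of four matrix products of dimensions $(\ell_r \times k)(k \times k)(k \times k)(k \times \ell_c)$ can be scheduled so that only the outermost multiplication touches both $\ell_r$ and $\ell_c$: collapse the three inner factors into a single $k \times k$ matrix in $O(k^{\omega})$, then form the final product against $\NN_{J_r, r}$ and $\NN_{c, J_c}$ at cost $\Tmat(\ell_r, k, k) + \Tmat(k, k, \ell_c) + \Tmat(\ell_r, k, \ell_c)$. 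Using submultiplicativity of $\Tmat$ and the fact that $\Tmat(\ell_r, k, k)$ can be decomposed into $\lceil k / \ell_c \rceil$ blocks of $\Tmat(\ell_r, k, \ell_c)$ whenever it is not already dominated, the two ``square'' terms can be absorbed into either $k^{\omega}$ or $\Tmat(\ell_r, k, \ell_c)$, yielding the claimed $O(k^{\omega} + \Tmat(\ell_r, k, \ell_c))$ bound.

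The main obstacle is the careful bookkeeping in the Query step: one must verify that the intermediate $\Tmat(\ell_r, k, k)$ and $\Tmat(k, k, \ell_c)$ costs incurred by the Woodbury chain can be absorbed into the target bound $O(k^{\omega} + \Tmat(\ell_r, k, \ell_c))$ under all regimes of $\ell_r, \ell_c$ relative to $k$, using only submultiplicativity and block decomposition of rectangular matrix multiplication. Once that accounting is settled, the rest reduces to standard applications of Woodbury plus bookkeeping of the sparse factorization $\Delta = \UU \VV^{\top}$, so there is no deeper difficulty beyond this fast-matrix-multiplication arithmetic.
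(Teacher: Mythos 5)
Your proposal is correct and follows essentially the same route as the paper: a sparse rank-$k$ factorization of the accumulated update (the paper writes $\Delta = \UU\CC\VV^{\top}$ with $\UU,\VV$ identity-column selectors, you fold $\CC$ into $\UU\VV^{\top}$, which is immaterial), Woodbury for Reset in $O(\Tmat(n,n,k))$, and for Query the same restricted Woodbury formula with the matrix chain associated so that the smaller of $\ell_r,\ell_c$ is multiplied in first, making the intermediate $k\times k$ products absorbable into $k^{\omega}$ or $\Tmat(\ell_r,k,\ell_c)$ exactly as in the paper's case analysis.
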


\begin{lemma}[Two-level inverse maintenance, Theorem 4.2 of \cite{bns19}]
\label{lem:InverseMaintenanceTwoLevel}
There exists a data structure that initially has a matrix $\MM^{(0)} \in \R^{n \times n}$, and in each iteration it receives an update $\Delta^{(t)} \in \R^{n \times n}$ to update the matrix to $\MM^{(t)} = \MM^{(t-1)} + \Delta^{(t)}$. The data structure maintains two iteration counters $t_0 \leq t_1$, and it also maintains $k_0 := \nnz(\Delta^{(t_0+1)}) + \cdots + \nnz(\Delta^{(t)})$ and $k_1 := \nnz(\Delta^{(t_1+1)}) + \cdots + \nnz(\Delta^{(t)})$. Let $J \subseteq [n]$ denote the indexes of the non-zero columns of $\Delta^{(t_0+1)} + \cdots + \Delta^{(t_1)}$. For any $t' \leq t$, define the transformation matrix
\[
\TT^{(t',t)} := \II + (\MM^{(t')})^{-1} \cdot (\MM^{(t)} - \MM^{(t')}) \in \R^{n \times n}.
\]
The data structure maintains $\NN = (\MM^{(t_0)})^{-1} \in \R^{n \times n}$, and $\BB = (\TT^{(t_0,t_1)}_{J,J})^{-1}$ that has size at most $k_0 \times k_0$, and $\EE = (\TT^{(t_0,t_1)}_{J,J})^{-1} \cdot \NN_{J,:}$. The runtime for each operation of the data structure is as follows:
\begin{itemize}
\item {\bf Initialize$(\MM^{(0)})$}: Initially set $t_0 = t_1 = 0$, $\BB = 0$, $\EE = 0$, and $\NN = (\MM^{(0)})^{-1}$. This operation takes $O(n^{\omega})$ time.
\item {\bf Update$(\Delta^{(t)})$}: The data structure receives the $t$-th update. This operation takes $O(\nnz(\Delta^{(t)}))$ time.
\item {\bf Reset$()$}: Reset $t_0 = t$ and $\NN = (\MM^{(t_0)})^{-1}$.  This operation takes $O(\Tmat(n, n, k_0))$ time.
\item {\bf PartialReset$()$}: Reset $t_1 = t$, reset $J \subseteq [n]$ to be the indexes of the non-zero columns of $\Delta^{(t_0+1)}+ \cdots + \Delta^{(t)}$, and reset $\BB = (\TT^{(t_0,t)}_{J,J})^{-1}$ and $\EE = (\TT^{(t_0,t)}_{J,J})^{-1} \cdot \NN_{J,:}$. This operation takes $O(\Tmat(n, k_0, k_1))$ time.

\item {\bf Query$(J_r, J_c \subseteq [n])$}: Output the submatrix $\big((\MM^{(t)})^{-1} \big)_{J_r, J_c}$ that has $|J_r| = \ell_r$ rows and $|J_c| = \ell_c$ columns. This operation takes $O\big(\Tmat(k_0, k_1, \max\{k_1,\ell_c\}) + \Tmat(k_0, \ell_r, \ell_c)\big)$ time.
\end{itemize}
\end{lemma}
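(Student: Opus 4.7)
The plan is to derive this from the Sherman--Morrison--Woodbury identity applied in two levels, mirroring the two iteration counters $t_0 \leq t_1 \leq t$. The starting observation is the elementary identity $\MM^{(t_0)} \cdot \TT^{(t_0,t)} = \MM^{(t)}$, hence
\[
(\MM^{(t)})^{-1} \;=\; (\TT^{(t_0,t)})^{-1}\, \NN.
\]
Since $\TT^{(t_0,t)} - \II = \NN(\MM^{(t)} - \MM^{(t_0)})$ has nonzero columns only in $J^{*} := J \cup J_1$, with $|J^{*}| \leq k_0$, the inverse $(\TT^{(t_0,t)})^{-1}$ is itself the identity outside columns $J^{*}$ and on those columns is determined entirely by inverting the $|J^{*}| \times |J^{*}|$ principal submatrix $\TT^{(t_0,t)}_{J^{*}, J^{*}}$.

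I would then prove each of the five operations by induction on the call sequence, maintaining the invariants $\NN = (\MM^{(t_0)})^{-1}$, $\BB = (\TT^{(t_0,t_1)}_{J,J})^{-1}$, and $\EE = \BB\, \NN_{J,:}$. Initialize and Update are immediate, and Reset follows directly from the Woodbury expansion $(\MM^{(t)})^{-1} = \NN - \NN\, U_{:,J^{*}}\, X\, \NN_{J^{*},:}$ (with $U = \MM^{(t)} - \MM^{(t_0)}$ and $X = (\TT^{(t_0,t)}_{J^{*}, J^{*}})^{-1}$), whose bottleneck is forming $\NN\, U_{:,J^{*}}$ at cost $\Tmat(n,n,k_0)$; the central inversion of the $k_0 \times k_0$ block $X$ costs $k_0^{\omega}$ and is subsumed. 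For PartialReset I would use the block-triangular structure of $\TT^{(t_0,t_1)}_{J_{\mathrm{new}}, J_{\mathrm{new}}}$, noting that the fresh columns contributed by $J_1 \setminus J$ induce an identity block on the diagonal with trivial coupling, so the new $\BB$ can be obtained from the old $\BB$ by one rank-$k_1$ Woodbury correction combined with a $\Tmat(k_0, k_0, k_1)$ block-inverse update. Then $\EE$ is refreshed in $\Tmat(n, k_0, k_1)$ time by multiplying the updated $k_0 \times k_0$ factor against the slice $\NN_{J,:}$ along the $k_1$ dimension where the correction is supported.

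The subtle and most important operation is Query. Expanding the identity above by blocks yields
\[
(\MM^{(t)})^{-1}_{J_r, J_c} \;=\; \NN_{J_r, J_c} \;-\; \big(\NN_{J_r,:}\, U_{:,J^{*}}\big)\; X \; \NN_{J^{*}, J_c},
\]
with an $\ell_r \times k_0$ factor on the left and a $k_0 \times \ell_c$ factor on the right. I would bound its cost by (i) extracting $\NN_{J^{*}, J_c}$ via $\NN$ and the stored slice structure in $O(k_0 \ell_c)$ time, (ii) expressing $X$ as ``old inverse plus rank-$k_1$ correction'' using the same PartialReset decomposition, so that applying $X$ to the $k_0 \times \ell_c$ factor on the right costs only $\Tmat(k_0, k_1, \max\{k_1, \ell_c\})$ (the rank-$k_1$ piece being the only new computation, the $\BB$ piece being already available through $\EE$), and (iii) performing the final outer product against $\NN_{J_r,:}\, U_{:,J^{*}}$ at cost $\Tmat(k_0, \ell_r, \ell_c)$.

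The main obstacle will be this careful multiplication-order argument in Query: one must never form a dense $k_0 \times k_0$ or $k_0 \times n$ intermediate (except via the stored $\EE$), instead routing all products through the rank-$k_1$ correction produced by the inner Woodbury step, and recognizing that $k_1 \leq k_0$ can be much smaller so that the multiplication order genuinely matters for matching $\Tmat(k_0, k_1, \max\{k_1, \ell_c\})$ rather than a naive $\Tmat(k_0, k_0, \ell_c)$. Everything else -- correctness of the invariants and the easier cost analyses -- follows by bookkeeping once the block decomposition of $\TT^{(t_0,t)}_{J^{*}, J^{*}}$ into (old invertible block $\BB$, identity block on $J_1 \setminus J$, rank-$k_1$ Woodbury correction) is set up.
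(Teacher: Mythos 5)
Your proposal is correct and follows essentially the same route as the paper's proof: the identity $(\MM^{(t)})^{-1} = (\TT^{(t_0,t)})^{-1}\NN$, the observation that $\TT^{(t_0,t)}$ is the identity outside the columns indexed by $J^{*}$, the decomposition of $\TT^{(t_0,t)}_{J^{*},J^{*}}$ into the old block $\BB^{-1}$, an identity block on the fresh indices, plus a rank-$O(k_1)$ Woodbury correction, and the careful multiplication order in Query that routes everything through the stored $\EE$ and the rank-$k_1$ piece. The only detail to be precise about when writing this up is that the coupling from the fresh rows $J^{\new}\setminus J$ is not entirely trivial — it contributes an off-diagonal block $\TT^{(t_0,t_1)}_{J^{\new}\setminus J,\,J}$ — but this block also has rank at most $k_1$ and is absorbed into the same low-rank correction, exactly as in the paper.
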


We can maintain any matrix formula using the inverse maintenance data structure, as shown in \cite{b21}.
\begin{theorem}[Matrix formula as inverse, Theorem~3.1 of \cite{b21}]\label{lem:matrix_formula}
Given any formula $f$ with input matrices $\AA_1 \in \R^{n_1 \times m_1}, \cdots, \AA_d \in \R^{n_d \times m_d}$, where the formula $f$ consists of only matrix addition, subtraction, multiplication, and inversion,
define $n := \sum_{i = 1}^d n_i + m_i$.

Then there exists a symbolic block matrix $\NN$ of size at most $n\times n$, and sets $I,J \subset [n]$, such that for all matrices $\AA_1,...,\AA_p$ for which $f(\AA_1,...,\AA_d)$ is executable, $(\NN(\AA_1,...,\AA_d)^{-1})_{I,J} = f(\AA_1,...,\AA_d)$.

Constructing $\NN$ from $f$ can be done in $O(n^2)$ time.
\end{theorem}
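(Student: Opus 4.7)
The plan is to prove this by structural induction on the formula $f$, treating each of the four allowed operations (addition, subtraction, multiplication, inversion) with a dedicated block-matrix construction and then showing that these constructions compose without blowing up the size beyond $n$.

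For the base case, if $f(\AA_i)=\AA_i$ with $\AA_i \in \R^{n_i \times m_i}$, I would take
\[
\NN = \begin{pmatrix} \II_{n_i} & -\AA_i \\ 0 & \II_{m_i} \end{pmatrix},
\]
whose inverse is $\begin{pmatrix} \II_{n_i} & \AA_i \\ 0 & \II_{m_i} \end{pmatrix}$, so with $I=[n_i]$ and $J=\{n_i+1,\ldots,n_i+m_i\}$ we have $(\NN^{-1})_{I,J}=\AA_i$. This matrix has size exactly $n_i+m_i$, matching the contribution of $\AA_i$ to $n$.

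For the inductive step, I would use the identity $(\II - N)^{-1}=\II+N+N^2+\cdots$, which terminates whenever $N$ is strictly block upper triangular. For multiplication $f = f_1 \cdot f_2$, the construction
\[
\NN = \begin{pmatrix} \II & -f_1 & 0 \\ 0 & \II & -f_2 \\ 0 & 0 & \II \end{pmatrix}
\]
has inverse $\begin{pmatrix} \II & f_1 & f_1 f_2 \\ 0 & \II & f_2 \\ 0 & 0 & \II \end{pmatrix}$, exposing $f_1 f_2$ as the $(1,3)$ block. For addition $f=f_1+f_2$, a similar polynomial-in-a-nilpotent-matrix trick (putting both $f_1$ and $f_2$ into a row that is multiplied by an identity glue block) produces $f_1+f_2$ as a designated block. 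For inversion $f=f_1^{-1}$, the Schur complement formula applied to $\begin{pmatrix} 0 & \II \\ -\II & f_1 \end{pmatrix}$ (or a variant thereof) exposes $f_1^{-1}$ in the top-left block of the inverse.

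The main obstacle, and the step I would have to think carefully about, is preserving the tight size bound $n = \sum_{i=1}^d (n_i+m_i)$ across the induction. Naively each operation introduces auxiliary identity blocks of size $\max(n_i,m_i)$ for ``glue'', and these would accumulate and far exceed $n$ for formulas with many internal nodes. The resolution is \emph{block sharing}: instead of recursing with the abstract data $(\NN_i, I_i, J_i)$ only, I would strengthen the induction hypothesis to track where the input-matrix blocks sit inside $\NN_i$ and which rows/columns are pure identity, and then compose two subformulas by identifying the identity rows/columns of one with those of the other. With this bookkeeping, every new operation reuses existing dimensions and only the $d$ leaves of the formula tree contribute to the ambient size, yielding total size at most $\sum_{i=1}^d (n_i+m_i)=n$. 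Once the symbolic $\NN$ is determined, its $n^2$ entries (each being zero, an identity, or $\pm \AA_i$) can be written down in $O(n^2)$ time, giving the claimed construction-time bound.
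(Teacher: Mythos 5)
The paper does not actually prove this statement; Lemma~\ref{lem:matrix_formula} is imported verbatim from \cite{b21} (Theorem~3.1 there), and the ``Missing Proofs'' appendix contains no argument for it. So there is no internal proof to match against, and I can only evaluate your sketch on its own terms and against what is known about the cited result.

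Your high-level strategy is the right one and does match the approach of \cite{b21}: represent the formula tree, use strictly block-upper-triangular (nilpotent-plus-identity) matrices so that $\NN^{-1}$ is a finite polynomial in $\NN$ exposing products, and use a Schur-complement gadget to realize inversion. The base case and the $3\times 3$ multiplication gadget are correct. But you have correctly identified, and then not actually closed, the part of the proof that carries all the difficulty: the size bound $n=\sum_i(n_i+m_i)$. Two concrete gaps. First, the addition case is not worked out, and it is not clear that your ``polynomial in a nilpotent matrix'' trick applies with \emph{symbolic} blocks: to have $f_1+f_2$ appear as a single block of $\NN^{-1}$ you need a composition rule for the two recursively-built matrices $\NN_1,\NN_2$, not merely a gadget for two leaves, and the natural direct-sum $\begin{pmatrix}\NN_1 & B\\ 0 & \NN_2\end{pmatrix}$ exposes $-\NN_1^{-1}B\NN_2^{-1}$, which gives you a \emph{product} of submatrices, not a sum, unless $B$ is chosen in a nontrivial structured way. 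Second, the inversion step as you describe it (Schur complement of a $2\times 2$ block matrix built around $f_1$) grafts on a fresh $p\times p$ identity block, which increases the ambient dimension by $p$ over $|\NN_1|$; since the budget for a single square leaf is only $n_i+m_i=2p$ and the base construction already uses $2p$, a composed $\AA_i^{-1}$ would cost $3p$ under your recursion. The ``block sharing'' idea you invoke to fix both issues is the real theorem here, and you would need to state and prove the strengthened induction hypothesis explicitly --- e.g., that $\NN_1^{-1}$ exposes $f_1$ at index sets $I_1,J_1$ that are disjoint and sit in a prescribed structural position, so that the inversion gadget can reuse the rows/columns $J_1,I_1$ rather than introduce new ones, and so that a binary-operation node composes $\NN_1,\NN_2$ into a matrix of size exactly $|\NN_1|+|\NN_2|$ (giving $\sum_i(n_i+m_i)$ by induction over the leaves). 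As written, the proposal asserts this accounting without providing it, which is precisely the content of the theorem.
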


\subsection{Implicit Inverse Maintenance}
In our algorithm, we also require a data structure that allows us to update $\xx^{(i+1)} \gets \xx^{(i)} + \Delta^{(i,k)}$ in each primal step \emph{implicitly} since we don't have the time budget to query the entire vector $\Delta^{(i,k)}$, and we only query the final vector $\xx^{(T)}$ in the end. To solve this problem we present an implicit inverse maintenance data structure, and its proof can be found in Section~\ref{sec:missing_proofs}. Similar techniques were developed in Section~I of \cite{jiang2021faster} to maintain feasibility. 

\begin{lemma}[Implicit two-level inverse maintenance]\label{lem:implicit_inverse_maintenance}
There exists a data structure that initially has a matrix $\MM^{(0)} \in \R^{n \times n}$ and a vector $\vv \in \R^{n}$, and in each iteration it receives an update $\Delta^{(t)} \in \R^{n \times n}$ to update the matrix to $\MM^{(t)} = \MM^{(t-1)} + \Delta^{(t)}$. The goal of our algorithm is to support queries that output the sum of inverse vector products $\sum_{i=0}^t (\MM^{(i)})^{-1} \cdot \vv$ occasionally.

The data structure maintains two iteration counters $t_0 \leq t_1$. Let $k_0 := \nnz(\Delta^{(t_0+1)}) + \cdots + \nnz(\Delta^{(t)})$ and $k_1 := \nnz(\Delta^{(t_1+1)}) + \cdots + \nnz(\Delta^{(t)})$. Similar as Lemma~\ref{lem:InverseMaintenanceTwoLevel}, the data structure maintains $J \subseteq [n]$ that consists of the indexes of the non-zero columns of $\Delta^{(t_0+1)} + \cdots + \Delta^{(t_1)}$, $\NN = (\MM^{(t_0)})^{-1} \in \R^{n \times n}$, $\BB = (\TT^{(t_0,t_1)}_{J,J})^{-1}$ that has size at most $k_0 \times k_0$, and $\EE = (\TT^{(t_0,t_1)}_{J,J})^{-1} \cdot \NN_{J,:}$. The data structure also maintains three vector $\uu_0, \uu_1$, and $\uu_2$ that satisfy the invariant:
\[
\sum_{i=0}^t (\MM^{(i)})^{-1} \vv = \uu_0 + \NN \cdot \uu_1 + \begin{bmatrix}
\NN_{J,:} \cdot \uu_2 \\ 0
\end{bmatrix}.
\]
The runtime for each operation of the data structure is as follows:
\begin{itemize}
\item {\bf Initialize$(\MM^{(0)}, \vv)$}: Initially set $t_0 = t_1 = 0$, $\BB = 0$, $\EE = 0$, $\NN = (\MM^{(0)})^{-1}$, $\uu_0 = (\MM^{(0)})^{-1} \vv$, and $\uu_1 = \uu_2 = 0$. This operation takes $O(n^{\omega})$ time.
\item {\bf Update$(\Delta^{(t)})$}: The data structure receives the $t$-th update and update $\uu_0, \uu_1$, and $\uu_2$. This operation takes $O(\Tmat(k_0, k_1, k_1) + n)$ time.
\item {\bf Reset$()$}: Reset $t_0 = t$ and $\NN = (\MM^{(t_0)})^{-1}$.  This operation takes $O(\Tmat(n, n, k_0))$ time.
\item {\bf PartialReset$()$}: Reset $t_1 = t$, reset $J \subseteq [n]$ to be the indexes of the non-zero columns of $\Delta^{(t_0+1)}+ \cdots + \Delta^{(t)}$, and reset $\BB = (\TT^{(t_0,t)}_{J,J})^{-1}$ and $\EE = (\TT^{(t_0,t)}_{J,J})^{-1} \cdot \NN_{J,:}$. This operation takes $O(\Tmat(n, k_0, k_1))$ time.

\item {\bf QuerySum$()$}: Output $\sum_{i=0}^t (\MM^{(i)})^{-1} \vv$. This operation takes $O(n^2)$ time.
\end{itemize}

\end{lemma}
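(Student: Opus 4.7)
The plan is to build this data structure on top of the two-level inverse maintenance of Lemma~\ref{lem:InverseMaintenanceTwoLevel}, augmenting it with three vectors $\uu_0,\uu_1,\uu_2$ that encode the running sum implicitly. The central tool is the Woodbury identity: writing $J' := J \cup J_2$ where $J_2$ is the set of non-zero columns of $\DD^{(t_1 \to t)} := \MM^{(t)} - \MM^{(t_1)}$ (so $|J'| \leq k_0$), one has
\[
(\MM^{(t)})^{-1}\vv \;=\; \NN\vv \;-\; \NN\,\DD^{(t_0 \to t)}_{:,J'}\,\qq_t, \qquad \qq_t := (\TT^{(t_0,t)}_{J',J'})^{-1}(\NN\vv)_{J'},
\]
where $\DD^{(t_0 \to t)} := \MM^{(t)} - \MM^{(t_0)}$. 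This writes the new summand as $\NN\vv$ plus $\NN$ times a vector whose sparsity is at most $k_0$ (inherited from the sparsity of $\DD^{(t_0 \to t)}$). The design principle is to never apply $\NN$ to a dense $n$-vector inside Update, and to delay such multiplications either to QuerySum (where an $O(n^2)$ product is affordable) or to Reset (which already pays $\Tmat(n,n,k_0)$).

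I would implement the operations as follows. \emph{Initialize} runs the standard inverse initialization in $O(n^\omega)$ time and sets $\uu_0 := \NN\vv$, $\uu_1 := 0$, $\uu_2 := 0$. \emph{Update}$(\Delta^{(t)})$ first forwards $\Delta^{(t)}$ to the underlying structure; it then computes $\qq_t$ in $O(\Tmat(k_0,k_1,k_1))$ time by applying a rank-$k_1$ Woodbury correction to the cached $\BB = (\TT^{(t_0,t_1)}_{J,J})^{-1}$, using the stored $\EE$ and the $J'$-indexed entries of the pre-computed $n$-vector $\NN\vv$. The $\NN\vv$ term is absorbed by incrementing $\uu_1 \leftarrow \uu_1 + \vv$, while the correction $-\NN\,\DD^{(t_0\to t)}_{:,J'}\,\qq_t$ is split along $J' = J \cup J_2$: the $J$-indexed part is deposited in $\uu_2$, and the $J_2$-indexed part is folded into $\uu_1$ as a sparse vector; the extra cost is $O(n)$. \emph{Reset} rebuilds $\NN$ at cost $O(\Tmat(n,n,k_0))$, computes the dense products $\NN_{\text{old}}\uu_1$ and $\NN_{\text{old}}{}_{,:,J}\uu_2$ and absorbs them into $\uu_0$, then clears $\uu_1,\uu_2$. \emph{PartialReset} folds $\uu_2$ into $\uu_1$ via a $\Tmat(n,k_0,k_1)$ matrix product before refreshing $J$, $\BB$, and $\EE$. \emph{QuerySum} returns $\uu_0 + \NN\uu_1 + \begin{bmatrix}\NN_{J,:}\uu_2\\0\end{bmatrix}$ in $O(n^2)$ time, dominated by $\NN\uu_1$.

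The main obstacle is meeting the $O(\Tmat(k_0,k_1,k_1) + n)$ budget in Update. Two ingredients are essential: (i) do not recompute $(\TT^{(t_0,t)}_{J',J'})^{-1}$ from scratch, but maintain it as a Woodbury-style correction to $\BB$ so that each step pays only for the new rank-$k_1$ perturbation, which fits in $\Tmat(k_0,k_1,k_1)$; and (ii) postpone every application of $\NN$ to a length-$n$ vector until Reset or QuerySum, which is precisely why the invariant uses three vectors rather than one, with $\uu_2$ holding the $J$-localized portion that can only be flushed when $J$ changes (i.e., at a PartialReset). Once these design choices are made, correctness is an inductive verification that the stated invariant $\sum_{i=0}^t (\MM^{(i)})^{-1}\vv = \uu_0 + \NN\uu_1 + \begin{bmatrix}\NN_{J,:}\uu_2\\0\end{bmatrix}$ is preserved across all four operations, using the Woodbury identity above together with a careful change-of-basis argument at each reset/partial-reset boundary, where the contents of $\uu_1$ and $\uu_2$ must be re-expressed against the refreshed $\NN$ or $J$.
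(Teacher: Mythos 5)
Your core identity and overall architecture match the paper's proof: both compute the small correction $\qq_t=(\TT^{(t_0,t)}_{J',J'})^{-1}(\NN\vv)_{J'}$ by a rank-$O(k_1)$ Woodbury update on top of the cached $\BB$, and both defer every dense application of $\NN$ to \textsc{Reset}/\textsc{QuerySum}. Your route through $(\MM^{(t)})^{-1}\vv=\NN\bigl(\vv-\DD_{:,J'}\qq_t\bigr)$ with $\DD:=\MM^{(t)}-\MM^{(t_0)}$ is in fact a slightly cleaner variant of the paper's, which instead factors $(\MM^{(t)})^{-1}=(\TT^{(t_0,t)})^{-1}\NN$ and decomposes by the blocks $J^{\new}$ versus its complement. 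Since $\DD_{:,J'}\qq_t$ is supported on the at most $k_0$ non-zero rows of $\DD$, your form expresses the entire new summand as $\NN$ applied to $\vv$ plus a $k_0$-sparse correction, so the whole contribution can be accumulated in $\uu_1$ alone and $\uu_2$ becomes unnecessary in your scheme.

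Two points need repair. First, your bookkeeping for $\uu_2$ is inconsistent with the stated invariant: $\uu_2$ enters through $\bigl[\NN_{J,:}\uu_2;0\bigr]$, i.e.\ the rows of $\NN$ indexed by $J$ applied to $\uu_2$ with the result placed at the coordinates $J$, which is a different object from $\NN$ applied to the $J$-supported part of a sparse vector. Depositing ``the $J$-indexed part'' of $\DD_{:,J'}\qq_t$ into $\uu_2$ therefore changes the value of the maintained sum, and ``folding $\uu_2$ into $\uu_1$'' at \textsc{PartialReset} is not well defined, since $[\NN_{J,:}\uu_2;0]$ cannot be rewritten as $\NN\xx$ for any cheaply computable $\xx$; the paper instead flushes $\uu_2$ into $\uu_0$ at cost $O(nk_0)$. (The paper needs $\uu_2$ only because its block decomposition leaves a residual term $-[\NN_{J,:}(\vv-\ww);0]$ whose $J$-rows of $\NN$ cannot be applied within the per-update budget; in your decomposition no such term arises, and the correct fix is simply $\uu_1\gets\uu_1+\vv-\DD_{:,J'}\qq_t$.) Second, computing $\qq_t$ via Woodbury requires the vector $\BB(\NN\vv)_J$; evaluating it at update time from the stored $\EE$ costs $O(nk_0)$, and from $\BB$ directly costs $O(k_0^2)$, both of which exceed the $O(\Tmat(k_0,k_1,k_1)+n)$ budget when $k_1\ll k_0$. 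You must additionally cache the vector $\EE\vv=\BB(\NN\vv)_J$, refreshed at each \textsc{PartialReset}, exactly as the paper does.
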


\subsection{\texorpdfstring{$\ell_3$}{} and \texorpdfstring{$\ell_2$}{}-Norm Estimations}
We will also use the following $\ell_3$ norm estimation lemma from \cite{wz13} to estimate the quantity on Line~\ref{algline:CheckPrimalRobust} of Algorithm~\ref{alg:non_monotone_accel_robust}.
\begin{lemma}[$\ell_3$ norm estimation, Theorem~1 of \cite{wz13}]\label{lem:l3_norm_estimation}
There exists a distribution $\Pi$ of matrices of size $O(n^{1/3} \log^3 n) \times n$ such that for any vector $\xx \in \R^n$, with probability $0.99$ we have that a random matrix $\UU \sim \Pi$ satisfies
\[
C_3^{-1/3} \|\xx\|_3 \leq \|\UU \xx\|_{\infty} \leq C_3^{1/3} \|\xx\|_3,
\]
where $C_3 > 1$ is a constant.
\end{lemma}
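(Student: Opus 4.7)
The plan is to construct $\UU$ using the max-stable sketching paradigm for $\ell_p$-norm estimation with $p > 2$, combined with random subsampling at $O(\log n)$ scales. The key distributional fact is that if $e$ is a standard exponential random variable, then $Z = 1/e^{1/3}$ has tail $\Pr[Z > t] = 1-\exp(-t^{-3}) \approx t^{-3}$, so $\max_j |x_j| Z_j$ over $n$ i.i.d. copies concentrates around a constant multiple of $\|\xx\|_3$ by extreme-value theory (the $\min_j e_j/|x_j|^3$ is exponential with rate $\|\xx\|_3^3$). The goal is to realize this maximum through a linear sketch $\UU$ of dimension $O(n^{1/3} \log^3 n)$ with an $\ell_\infty$ readout.

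Concretely, I would define $\UU \in \R^{m \times n}$ with $m = O(n^{1/3} \log^3 n)$ rows organized into $L = O(\log n)$ subsampling levels. At level $\ell$, a coordinate $j$ is included in a set $S_\ell$ with probability $2^{-\ell}$, and within that level I place $B = O(n^{1/3} \log^2 n)$ rows of CountSketch form, namely $(\UU^{(\ell)})_{b,j} = 2^{\ell/3} \cdot \mathbf{1}[j \in S_\ell] \cdot \sigma_{b,j} \cdot \mathbf{1}[h_b(j) = b]/e_{b,j}^{1/3}$, using fresh random signs $\sigma_{b,j}$, i.i.d.\ exponentials $e_{b,j}$, and $\log n$-wise independent hash functions $h_b\colon [n]\to[B']$ with $B' = \Theta(n^{1/3})$. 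Stacking the $L$ sub-sketches yields the full $\UU$.

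For the upper bound $\|\UU\xx\|_\infty \leq C_3^{1/3}\|\xx\|_3$, I would condition on the exponentials and apply a Rademacher-tail bound to each row $(\UU\xx)_i$. Each row's conditional variance is controlled by the $\ell_2$-mass within a single bucket, which after hashing is $O(\|\xx\|_3^2)$ in expectation; taking a union bound over the $m$ rows and a separate tail bound on the exponentials (since $E[e^{-2/3}]$ is finite) yields the $O(\|\xx\|_3)$ upper bound up to polylog factors absorbed by $C_3$. For the lower bound $\|\UU\xx\|_\infty \geq C_3^{-1/3}\|\xx\|_3$, I would identify the unique level $\ell^*$ at which the top $\Theta(2^{\ell^*})$ heaviest coordinates of $\xx$ contribute $\Theta(\|\xx\|_3^3)$; at that level, the heaviest surviving coordinate $j^*$ is isolated in its bucket in at least one of the $B$ rows with constant probability, and in such a row the sketch value is approximately $2^{\ell^*/3}|x_{j^*}|/e_{b,j^*}^{1/3}$, whose maximum across $B$ rows concentrates around $\Theta(\|\xx\|_3)$ by the Fréchet extremal-type calculation. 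Iterating across the $L$ levels handles all scales of $\xx$, and the outer $\log n$ factor in $m$ amplifies the success probability to $0.99$.

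The hardest part will be controlling the interference of ``light'' coordinates that collide with $j^*$ in its bucket, since a single anomalously small $e_{b,j}$ among the light coordinates can inflate the sketch output and break either the upper or lower bound. This forces a careful two-sided argument: one must simultaneously show that (i) in a good row the heavy exponential $e_{b,j^*}$ is small enough to witness $\|\xx\|_3$, and (ii) the light-coordinate contribution in that row, conditioned on its exponentials, is sub-Gaussian with variance $O(\|\xx\|_3^2)$ (via Bernstein), so it cannot swamp the signal. The $\log^3 n$ overhead in the sketch dimension is needed precisely to guarantee that such a row exists and that the upper-bound union bound over all rows still holds with probability $1-1/\poly(n)$.
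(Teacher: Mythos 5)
The paper does not prove this lemma at all---it is imported as a black box (Theorem~1 of \cite{wz13})---and the construction in that reference is essentially the one you describe: rescale coordinate $j$ by $1/e_j^{1/3}$ for i.i.d.\ standard exponentials $e_j$ (so that, by min-stability, $\max_j |x_j|/e_j^{1/3} = \|\xx\|_3/E^{1/3}$ for a single standard exponential $E$), hash with random signs into $O(n^{1/3}\poly\log n)$ buckets, and read off the $\ell_\infty$ norm, with the analysis hinging---exactly as you flag---on controlling the light-coordinate noise in the heavy coordinate's bucket despite the heavy tail of $e^{-2/3}$. Your reconstruction is faithful to that argument; the only superfluous ingredient is the Indyk--Woodruff-style subsampling hierarchy, which is unnecessary for a fixed-vector guarantee since the unsubsampled level already witnesses $\Theta(\|\xx\|_3)$ with constant probability.
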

We remark that we can easily boost the success probability of the above theorem to $1-1/n^4$ by using $O(\log n)$ copies and take the median of the estimates. 

We will also use the standard JL lemma to estimate the $\Psi$ potential which can be written as a $\ell_2$ norm. 
\begin{lemma}[Johnson-Lindenstrauss Lemma \cite{jl84}]\label{lem:JL}
There exists a function \textsc{JL}$(n,m,\epsilon, \delta)$ that returns a random matrix $\JJ \in \R^{k \times n}$ where $k = O(\epsilon^{-2} \log(m/\delta))$, and $\JJ$ satisfies that for any fixed $m$-element subset $V \subset \R^n$,
\begin{align*}
    \Pr\big[\forall \vv \in V, ~ (1 - \epsilon) \|\vv\|_2 \leq \|\JJ \vv\|_2 \leq (1 + \epsilon) \|\vv\|_2\big] \geq 1 - \delta.
\end{align*}
Furthermore, the function \textsc{JL} runs in $O(kn)$ time.
\end{lemma}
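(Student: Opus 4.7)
The plan is to construct $\JJ$ by drawing its entries i.i.d.\ from a suitable subgaussian distribution and then combine a standard chi-squared concentration estimate for one vector with a union bound over the $m$ elements of $V$. Concretely, I would set $\JJ \in \R^{k \times n}$ with entries $\JJ_{ij} = \frac{1}{\sqrt{k}} g_{ij}$ where $g_{ij} \sim N(0,1)$ are independent (the Rademacher variant $g_{ij} \in \{\pm 1\}$ works equally well up to absolute constants). For a fixed $\vv \in \R^n$, each coordinate $(\JJ \vv)_i = \frac{1}{\sqrt{k}} \sum_j g_{ij} \vv_j$ is distributed as $\frac{\|\vv\|_2}{\sqrt{k}} \cdot h_i$ with $h_i \sim N(0,1)$ independent, so
\[
\|\JJ \vv\|_2^2 = \frac{\|\vv\|_2^2}{k} \sum_{i=1}^k h_i^2,
\]
which is $\|\vv\|_2^2 / k$ times a $\chi^2_k$ random variable, whose expectation is exactly $\|\vv\|_2^2$.

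The next step is to invoke the standard Laurent--Massart chi-squared tail bound, which says that for $t \in (0,1)$, $\Pr[|\sum_{i=1}^k h_i^2 - k| > t k] \leq 2 \exp(-c t^2 k)$ for some absolute constant $c > 0$. Substituting gives
\[
\Pr\big[\,\big|\|\JJ \vv\|_2^2 - \|\vv\|_2^2\big| > \epsilon \|\vv\|_2^2\,\big] \leq 2 \exp(-c \epsilon^2 k).
\]
Using $\sqrt{1+\epsilon} \leq 1 + \epsilon$ and $\sqrt{1-\epsilon} \geq 1 - \epsilon$ for $\epsilon \in (0,1)$, this transfers to the (un-squared) norm statement $(1-\epsilon)\|\vv\|_2 \leq \|\JJ \vv\|_2 \leq (1+\epsilon)\|\vv\|_2$ after rescaling $\epsilon$ by an absolute constant, which is absorbed in the $O(\cdot)$.

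I would then apply a union bound over the $m$ vectors of $V$, giving failure probability at most $2 m \exp(-c \epsilon^2 k)$. Choosing $k = C \epsilon^{-2} \log(m/\delta)$ for a sufficiently large constant $C$ makes this at most $\delta$, establishing the required $k = O(\epsilon^{-2} \log(m/\delta))$. For the runtime claim, sampling $\JJ$ requires drawing $kn$ independent scalars and therefore takes $O(kn)$ time (using e.g.\ the Rademacher variant to avoid Gaussian-sampling costs); this matches the bound in the statement, which describes the cost of \emph{constructing} the sketching matrix.

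There is no genuine obstacle in this proof --- it is classical and the only mildly delicate point is the translation between squared-norm and norm concentration, which is a routine constant-factor rescaling of $\epsilon$. A sharper instantiation (using explicit constants from Laurent--Massart, i.e.\ $\Pr[\chi^2_k \geq k + 2\sqrt{k t} + 2t] \leq e^{-t}$ and its lower-tail analogue) would give the concrete value of the hidden constant in $k$, but the statement only requires the $O(\cdot)$ bound, so I would not pursue that refinement.
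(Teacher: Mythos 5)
Your proof is correct and is the classical argument; the paper does not prove this lemma at all but simply cites it from \cite{jl84} as a standard external result. Your Gaussian/Rademacher construction with the chi-squared tail bound and a union bound over the $m$ vectors of $V$ is exactly the textbook proof one would supply, and the $O(kn)$ runtime claim is indeed just the cost of sampling the $kn$ entries of $\JJ$.
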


\subsection{\texorpdfstring{$\ell_2$}{} Heavy Hitter}
We use a heavy-hitter data structure to get a list of all the large coordinates on which we wish to perform width reduction in Algorithm~\ref{alg:non_monotone_accel_robust}.
\begin{lemma}[$\ell_2$ heavy hitter, \cite{knpw11,p13}]\label{lem:heavy_hitter}
Given any $n$, $\epsilon$, and $\delta$, there exists a random matrix $\Phi \in \R^{O(\epsilon^{-2} \log(\delta^{-1}) \log n) \times n}$, and a decoding function \textsc{Decode}, such that given a vector $\yy = \Phi \cdot \xx$ for some $\xx \in \R^n$, \textsc{Decode}$(\yy)$ outputs a list $L \subseteq [n]$ of size $|L| = O(\epsilon^{-2})$, where with probability $1-\delta$ the list $L$ includes all $i \in [n]$ that satisfies
    \begin{align*}
        |\xx_i| \geq \epsilon \cdot \|\xx\|_2.
    \end{align*}
Furthermore, \textsc{Decode}$(\yy)$ runs in $O(\epsilon^{-2} \log(\delta^{-1}) \log n)$ time.
\end{lemma}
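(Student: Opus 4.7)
The plan is to construct $\Phi$ from a hierarchical application of CountSketch, following the approach of \cite{knpw11,p13}. Build a balanced binary tree with $n$ leaves indexed by $[n]$, giving $L = O(\log n)$ levels. For each level $\ell \in [0:L]$, define the aggregated vector $\xx^{(\ell)} \in \R^{n/2^\ell}$ whose $v$-th entry equals $\sum_{i \in \text{subtree}(v)} \xx_i$; this is a linear image of $\xx$. At each level apply an independent CountSketch $\CC^{(\ell)}$ with $B = \Theta(\epsilon^{-2})$ buckets and $t = \Theta(\log(\delta^{-1}) + \log \log n)$ repetitions. Stacking all $L$ sketches gives the composite linear map $\Phi$ with $O(\epsilon^{-2} \log(\delta^{-1}) \log n)$ rows. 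By standard CountSketch analysis, for each fixed node $v$ at level $\ell$ the median estimator produces $\widehat{\xx}^{(\ell)}_v$ with $|\widehat{\xx}^{(\ell)}_v - \xx^{(\ell)}_v| \le \tfrac{\epsilon}{4}\|\xx\|_2$ with failure probability at most $\delta/(10 n)$.

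Given the sketch $\yy = \Phi \xx$, the decoding function performs a top-down traversal. Declare a tree node \emph{active} if the estimate of its associated aggregated coordinate has magnitude at least $\tfrac{\epsilon}{2}\|\xx\|_2$ (the norm $\|\xx\|_2$ can itself be estimated from any level of the sketch to within a constant factor). Starting at the root, recursively examine only the children of active nodes; at a leaf, add its index to the output list $L$. Since an active node at level $\ell$ carries squared mass at least $(\epsilon/4)^2 \|\xx\|_2^2$, at most $O(\epsilon^{-2})$ nodes are active per level, so the traversal visits $O(\epsilon^{-2} \log n)$ nodes in total. Each visit evaluates a median of $t$ sketch entries and takes $O(\log \delta^{-1} + \log \log n)$ time, giving total decoding time $O(\epsilon^{-2} \log(\delta^{-1}) \log n)$ as required, and output size $|L| = O(\epsilon^{-2})$.

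For correctness, fix $i$ with $|\xx_i| \ge \epsilon \|\xx\|_2$. Every ancestor $v$ of leaf $i$ at each level satisfies $|\xx^{(\ell)}_v| \ge |\xx_i| - (\|\xx\|_2^2 - \xx_i^2)^{1/2} \cdot \mathbbm{1}[\text{sign cancellation}]$; more precisely, by bounding the contribution of other leaves in $v$'s subtree to the squared $\ell_2$ mass, every ancestor remains active whenever its CountSketch estimate is within $\tfrac{\epsilon}{4}\|\xx\|_2$ of the true value. Union-bounding the per-node failure probability $\delta/(10n)$ over the $O(\epsilon^{-2} \log n)$ visited nodes gives overall success probability at least $1 - \delta$. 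Thus every heavy coordinate is reported, proving the guarantee.

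The main obstacle is that the set of visited nodes is itself random and correlated with the sketch, so one cannot naively take a union bound over only ``visited'' nodes. The standard fix is to union bound over all nodes whose true aggregated coordinate exceeds a slightly smaller threshold (a deterministic, $\xx$-dependent set of size $O(\epsilon^{-2} \log n)$); the decoder only recurses on active nodes, and by the accuracy guarantee an active node must lie in this deterministic set, so every node encountered is covered. This is the argument formalized in \cite{knpw11,p13}, and can be invoked directly to conclude the lemma.
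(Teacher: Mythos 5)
The paper does not prove this lemma at all; it is imported as a black box from \cite{knpw11,p13}, so your sketch has to stand on its own. As written it has a genuine gap: the plain-sum dyadic aggregation $\xx^{(\ell)}_v = \sum_{i \in \mathrm{subtree}(v)} \xx_i$ breaks both halves of the argument. First, a heavy leaf need not have heavy ancestors: for $\xx = (1,-1,0,\dots,0)/\sqrt{2}$ and $\epsilon = 1/2$, both nonzero coordinates are $\epsilon$-heavy, yet their parent's aggregated value is $0$, as is every higher ancestor's, so the top-down traversal prunes them at the root. Your inequality $|\xx^{(\ell)}_v| \ge |\xx_i| - (\cdots)$ simply does not hold under signed cancellation, and the parenthetical ``$\mathbf{1}[\text{sign cancellation}]$'' is not a fix. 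Second, the claimed per-node accuracy $|\widehat{\xx}^{(\ell)}_v - \xx^{(\ell)}_v| \le \tfrac{\epsilon}{4}\|\xx\|_2$ is false at upper levels: CountSketch with $B = \Theta(\epsilon^{-2})$ buckets applied to $\xx^{(\ell)}$ gives error $O(\|\xx^{(\ell)}\|_2/\sqrt{B})$, and $\|\xx^{(\ell)}\|_2$ can be as large as $2^{\ell/2}\|\xx\|_2$ (take $\xx$ proportional to the all-ones vector), so near the root the additive error is $\mathrm{poly}(n)\cdot\epsilon\|\xx\|_2$ and the ``active node'' test is meaningless. Relatedly, the bound ``an active node carries squared mass at least $(\epsilon/4)^2\|\xx\|_2^2$'' conflates the square of the signed sum $(\xx^{(\ell)}_v)^2$ with the subtree's $\ell_2$ mass $\sum_{i\in\mathrm{subtree}(v)}\xx_i^2$; only the latter telescopes to give $O(\epsilon^{-2})$ active nodes per level, and under plain summation the former can be either much larger or much smaller than the latter.

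The repair is exactly the content of the cited works and is not cosmetic: one must randomize the aggregation so that the level-$\ell$ ``super-coordinate'' containing a heavy index retains $\Theta(\xx_i^2)$ of squared mass and the level-$\ell$ vector has expected squared norm $\|\xx\|_2^2$ --- e.g., by aggregating with i.i.d.\ signs $\sum_{i\in\mathrm{subtree}(v)}\sigma(i)\xx_i$, or by replacing the fixed tree with random hashing of indices into groups at each level --- together with $O(\log)$ independent repetitions per level to beat the constant probability that the signed aggregate of a heavy subtree happens to be small. Your final paragraph correctly identifies the separate adaptivity issue (union-bounding over a deterministic superset of the visited nodes), but that fix presupposes the per-level estimates are accurate relative to $\|\xx\|_2$ and that heavy leaves have heavy ancestors, neither of which your construction provides.
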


\section{Time Complexity of the Randomized Algorithm Using Fast Data Structures}\label{sec:time_random}
\subsection{Implementing MWU Using Fast Data Structures}
In this section, we give an algorithm, Algorithm~\ref{alg:combine_algo}, that implements Algorithm~\ref{alg:non_monotone_accel_robust} using the data structures stated from  Section~\ref{sec:data_structures}.

\begin{algorithm}
\caption{Implementing MWU Algorithm Using Fast Data Structures}\label{alg:combine_algo}
\begin{algorithmic}[1]
\Procedure{MWU-NonMonotoneRobust}{$\CC, \dd, \epsilon, a_0, a_1$} \Comment{Assume all variables are global}
\State $\alpha \gets \widetilde{\Theta}(n^{-1/2+\eta}\epsilon)$
\State $\tau \gets \widetilde{\Theta}(n^{1/2+\eta}\epsilon^{-4}), \quad \rho \gets \widetilde{\Theta}(n^{1/2-3\eta}\epsilon^{-2})$
\State $T \gets \alpha^{-1}\epsilon^{-2}\ln n$
\State $i,k = 0$
\State $b \gets \widetilde{\Theta}(n^{1/2+\eta} \epsilon^{-3})$
\State $\ww^{(0,0)} \gets 1_n, \quad \xx^{(0)} \gets 0_d$
\State Let $\SS^{(0)}, \SS^{(1)}, \cdots, \SS^{(T-1)} \in \R^{b \times n}$ be random matrices as described in Lemma~\ref{lem:CE_one_sketch}.
\State {\color{blue} Initialize data structures $\textsc{DS}_{\textsc{Inv}}$, $\textsc{DS}_{\textsc{Norm}}$, $\textsc{DS}_{\textsc{ImplicitInv}}$, $\textsc{DS}_{\textsc{HeavyHitters}}$ \Comment{Algorithm~\ref{alg:DS_Inv}, \ref{alg:DS_Norm}, \ref{alg:DS_Implicit_Inv}, \ref{alg:DS_Heavy_Hitters}}} \label{algline:initialize_ds}
\While{$i<T$}
\State $\rr^{(i,k)} \leftarrow \ww^{(i,k)} + \frac{\epsilon}{m} \sum_{e} \ww^{(i,k)}_e$
\State $\ov{\rr}^{(i,k)} \leftarrow \textsc{SelectVector}(\rr^{(i,k)})$
\State {\color{blue} $\wh{\uu}^{(i,k)} \leftarrow (\RR^{(i,k)})^{-1/2} (\SS^{(i)})^{\top} \cdot \textsc{DS}_\textsc{Inv}.\textsc{UpdateQuery}(\ov{\rr}^{(i,k)}, i)$} \label{algline:hat_u_ds}
\State {\color{blue} $\Psi,\xi \gets \textsc{DS}_{\textsc{Norm}}(\ov{\rr}^{(i,k)}, i+k)$ \Comment{$\Psi \approx_{\epsilon} \sum_e \rr^{(i,k)}_e (\uu^{(i,k)}_e)^2$, $\xi \approx_{C_3} \sum_e \rr_e^{(i,k)} |\uu^{(i,k)}_e|^3$}} \label{algline:norm_ds}
\If{$\xi \leq \rho \Psi$}
\State $\overrightarrow{\alpha}^{(i,k)}_e = \begin{cases}
\alpha \cdot (1 + \epsilon\alpha \wh{\uu}^{(i,k)}_e) & \text{ if } \wh{\uu}^{(i,k)}_e \geq 0\\
\alpha / (1 - \epsilon\alpha \wh{\uu}^{(i,k)}_e) & \text{ else }
\end{cases}$
\State $\ww^{(i+1,k)}\gets \ww^{(i,k)}\left(1 + \epsilon \overrightarrow{\alpha}^{(i,k)} \wh{\uu}^{(i,k)}\right)$  
\State {\color{blue} $\textsc{DS}_\textsc{ImplicitInv}.\textsc{Update}(\ov{\rr}^{(i,k)})$ \Comment{Implicitly update $\xx^{(i+1)} = \xx^{(i)} + \Delta^{(i,k)}$} } \label{algline:update_x_ds}
\State $i \gets i + 1$
\Else{ {\bf if} $\xi > \rho \Psi$ {\bf then}}
\State {\color{blue} $L, \uu^{(i,k)}_{L} \gets \textsc{DS}_{\textsc{HeavyHitters}}.\textsc{UpdateQuery}(\ov{\rr}^{(i,k)})$} \label{algline:heavy_hitter_ds}
\State Let $S$ be the set of coordinates $e$ such that $|\uu^{(i,k)}_e|\geq \rho/(2C_3)$ \label{algline:setSRobust_ds}
\State $H \subseteq S$ be maximal subset such that $\sum_{e\in H}\rrbar_e^{(i,k)}\leq \tau^{-1}\Psi(\rrbar^{(i,k)}) $
\If{$H \neq S$}
\State Pick any $\bar{e}\in S\setminus H$.
\State For all $e\in H \cup \{\ov{e}\} $, $\ww_e^{(i,k+1)}\gets (1+\epsilon)\ww_e^{(i,k)} + \frac{\epsilon^2}{2n}\Phi(\ww^{(i,k)})$ 
\State $\rr^{(i,k+1)}\gets \ww^{(i,k+1)} + \frac{\epsilon}{2n}\Phi(\ww^{(i,k+1)})$
\State For all $e\in H \cup \{\ov{e}\} $, $\ov{\rr}_e^{(i,k+1)} \gets \rr_e^{(i,k+1)}$ 
\Else
\For{$\zeta = \rho, 2 \rho, 4 \rho, \cdots, 2^{c_\rho} \rho$}
\State \Comment{$c_{\rho}$ is defined to be the smallest integer $c$ that satisfies $2^c \rho \geq \sqrt{n/\epsilon}$}
\State Define the set $H_\zeta = \{e \in H \mid |\wt{\CC} \Delta^{(i,k)} - \wt{\dd}|_e \in [\zeta, 2 \zeta)\}$.
\State If $\sum_{e \in H_{\zeta}} \ov{\rr}^{(i,k)}_e |\wt{\CC} \Delta^{(i,k)} - \wt{\dd}|_e^3 \geq \frac{\rho \Psi(\ov{\rr}^{(i,k)})}{\log (\frac{n}{\epsilon \rho})}$, set $\zeta^* \gets \zeta$, and break. 
\EndFor
\State For all $e \in H_{\zeta^*}$, $\ww_e^{(i,k+1)}\gets (1+\epsilon)\ww_e^{(i,k)} + \frac{\epsilon^2}{2n}\Phi(\ww^{(i,k)})$
\State $\rr^{(i,k+1)}\gets \ww^{(i,k+1)} + \frac{\epsilon}{2n}\Phi(\ww^{(i,k+1)})$
\State For all $e\in H_{\zeta^*}$, $\ov{\rr}_e^{(i,k+1)} \gets \rr_e^{(i,k+1)}$ 
\EndIf
\State $k \gets k+1$
\EndIf
\EndWhile
\State {\color{blue} $\xx^{(T)} \gets \textsc{DS}_{\textsc{ImplicitInv}}.\textsc{Query}()$} \label{algline:query_x_ds}
\State \Return $\xx^{(T)}/T$
\EndProcedure
\end{algorithmic}
\end{algorithm}

\begin{algorithm}
\caption{Inverse maintenance data structure $\textsc{DS}_{\textsc{Inv}}$ to compute $\wh{\uu}$}\label{alg:DS_Inv}
\begin{algorithmic}[1]
\Procedure{Initialize}{ }
\State $\SS \gets [(\SS^{(0)})^{\top}, (\SS^{(1)})^{\top}, \cdots, (\SS^{(T-1)})^{\top}]^{\top} \in \R^{bT \times n}$
\State $\ov{\rr} \leftarrow \rr^{(0)}$
\State Let $\NN$ be the matrix given by Lemma~\ref{lem:matrix_formula} that encodes the matrix formula
\[
f(\RR, \SS, \CC, \dd) = \SS \cdot \RR^{1/2} \Big(\CC (\CC^{\top} \RR \CC)^{-1} \CC^{\top} \RR - \II\Big) \dd,
\]
i.e., there exist index sets $I,J$ such that $(\NN^{-1})_{I,J} = f(\RR, \SS, \CC, \dd)$. Also let $I_0, I_1, \cdots, I_{T-1} \subset I$ each of size $b$, denote the indexes of the rows corresponding to $\SS^{(0)}, \SS^{(1)}, \cdots, \SS^{(T-1)}$.
\State $\textsc{DS}.\textsc{Initialize}(\NN)$ where $\textsc{DS}$ is the two-level inverse maintenance data structure of Lemma~\ref{lem:InverseMaintenanceTwoLevel}.
\EndProcedure
\Procedure{UpdateQuery}{$\ov{\rr}^{\new}$, $i$}
\State $\textsc{DS}.\textsc{Update}(\Delta)$, where $\Delta = \ov{\RR}^{\new} - \ov{\RR}$ \label{algline:DS_inv_update}
\State $\ov{\rr} \gets \ov{\rr}^{\new}$
\If{$\textsc{DS}.k_0 \geq n^{a_0}$}
\State $\textsc{DS}.\textsc{Reset}()$ \label{algline:DS_inv_reset}
\ElsIf{$\textsc{DS}.k_1 \geq n^{a_1}$}
\State $\textsc{DS}.\textsc{PartialReset}()$ \label{algline:DS_inv_partial_reset}
\EndIf
\State \Return $\textsc{DS}.\textsc{Query}(I_i, J)$ \Comment{$|I_i| = b$ and $|J| = 1$}\label{algline:DS_inv_return}
\EndProcedure
\end{algorithmic}
\end{algorithm}

\begin{algorithm}
\caption{Data structures $\textsc{DS}_{\textsc{Norm}}$ to approximately compute $\ell_2$ and $\ell_3$ norms}\label{alg:DS_Norm}
\begin{algorithmic}[1]
\Procedure{Initialize}{ }
\State Let $\JJ^{(0)}, \cdots, \JJ^{(T+K)} \in \R^{O(\epsilon^{-2} \log(n)) \times n}$ be random JL matrices as described in Lemma~\ref{lem:JL}.
\State $\JJ \gets [(\JJ^{(0)})^{\top}, (\JJ^{(1)})^{\top}, \cdots, (\JJ^{(T+K)})^{\top}]^{\top} \in \R^{O(\epsilon^{-2} \log(n) (T+K)) \times n}$
\State Let $\UU^{(0)}, \cdots, \UU^{(T+K)} \in \R^{O(n^{1/3} \log^3(n)) \times n}$ be random matrices as described in Lemma~\ref{lem:l3_norm_estimation}.
\State $\UU \gets [(\UU^{(0)})^{\top}, (\UU^{(1)})^{\top}, \cdots, (\UU^{(T+K)})^{\top}]^{\top} \in \R^{O(n^{1/3} \log^3(n) (T+K)) \times n}$
\State $\ov{\rr} \leftarrow \rr^{(0)}$
\State Let $\NN_{\ell_2}$ and $\NN_{\ell_3}$ be the matrices given by Lemma~\ref{lem:matrix_formula} that encodes the matrix formulas
\begin{align*}
f_{\ell_2}(\RR, \JJ, \CC, \dd) = &~ \JJ \RR^{1/2} (\CC (\CC^{\top} \RR \CC)^{-1} \CC^{\top} \RR - \II) \dd, \\
f_{\ell_3}(\RR, \UU, \CC, \dd) = &~ \UU \RR^{1/3} (\CC (\CC^{\top} \RR \CC)^{-1} \CC^{\top} \RR - \II) \dd,
\end{align*}
i.e., there exist index sets $I_{\ell_2},J_{\ell_2},I_{\ell_3},J_{\ell_3}$ such that $(\NN_{\ell_2}^{-1})_{I_{\ell_2},J_{\ell_2}} = f_{\ell_2}(\RR, \JJ, \CC, \dd)$ and $(\NN_{\ell_3}^{-1})_{I_{\ell_3},J_{\ell_3}} = f_{\ell_3}(\RR, \UU, \CC, \dd)$. Also let $I_{\ell_2,0}, \cdots, I_{\ell_2,T+K} \subset I_{\ell_2}$ denote the index sets of the rows corresponding to $\JJ^{(0)}, \cdots, \JJ^{(T+K)}$, and let $I_{\ell_3,0}, \cdots, I_{\ell_3,T+K} \subset I_{\ell_3}$ denote the rows corresponding to $\UU^{(0)}, \cdots, \UU^{(T+K)}$.
\State $\textsc{DS}_{\ell_2}.\textsc{Initialize}(\NN_{\ell_2})$ and $\textsc{DS}_{\ell_3}.\textsc{Initialize}(\NN_{\ell_3})$ by Lemma~\ref{lem:InverseMaintenanceTwoLevel}.
\EndProcedure
\Procedure{UpdateQuery}{$\ov{\rr}^{\new}$, $i$}
\State $\textsc{DS}_{\ell_2}.\textsc{Update}(\Delta)$ and $\textsc{DS}_{\ell_3}.\textsc{Update}(\Delta)$, where $\Delta = \ov{\RR}^{\new} - \ov{\RR}$
\State $\ov{\rr} \gets \ov{\rr}^{\new}$
\If{$\textsc{DS}_{\ell_2}.k_0 \geq n^{a_0}$}
\State $\textsc{DS}_{\ell_2}.\textsc{Reset}()$ and $\textsc{DS}_{\ell_3}.\textsc{Reset}()$
\ElsIf{$\textsc{DS}_{\ell_2}.k_1 \geq n^{a_1}$}
\State $\textsc{DS}_{\ell_2}.\textsc{PartialReset}()$ and $\textsc{DS}_{\ell_3}.\textsc{PartialReset}()$
\EndIf
\State \Return $(\|\textsc{DS}_{\ell_2}.\textsc{Query}(I_{\ell_2,i}, J_{\ell_2})\|_2^2, ~\|\textsc{DS}_{\ell_3}.\textsc{Query}(I_{\ell_3,i}, J_{\ell_3})\|_{\infty}^3)$ \label{algline:DS_norm_return}
\EndProcedure
\end{algorithmic}
\end{algorithm}

\begin{algorithm}
\caption{Implicit inverse maintenance data structure $\textsc{DS}_{\textsc{ImplicitInv}}$ to compute $\Delta$ and to update $\xx$}\label{alg:DS_Implicit_Inv}
\begin{algorithmic}[1]
\Procedure{Initialize}{ }
\State $\ov{\rr} \leftarrow \rr^{(0)}$
\State Let $\NN$ be the matrix given by Lemma~\ref{lem:matrix_formula} that encodes the matrix formula $f(\RR, \CC) = (\CC^{\top} \RR \CC)^{-1} \CC^{\top} \RR$, i.e., there exist index sets $I,J$ such that $(\NN^{-1})_{I,J} = f(\RR, \CC)$.
\State $\textsc{DS}.\textsc{Initialize}(\NN, \dd')$ where $\textsc{DS}$ is the implicit inverse maintenance data structure of Lemma~\ref{lem:implicit_inverse_maintenance}, and $\dd'$ has the same size as $\NN$, and it equals to $\dd$ in $J$, and its other coordinates are all zero. \label{algline:DS_implicit_inv_initialize}
\EndProcedure
\Procedure{Update}{$\ov{\rr}^{\new}$}
\State $\textsc{DS}.\textsc{Update}(\Delta)$, where $\Delta = \ov{\RR}^{\new} - \ov{\RR}$
\State $\ov{\rr} \gets \ov{\rr}^{\new}$
\If{$\textsc{DS}.k_0 \geq n^{a_0}$}
\State $\textsc{DS}.\textsc{Reset}()$
\ElsIf{$\textsc{DS}.k_1 \geq n^{a_1}$}
\State $\textsc{DS}.\textsc{PartialReset}()$
\EndIf
\EndProcedure
\Procedure{QuerySum}{ }
\State \Return $\textsc{DS}.\textsc{QuerySum}()$
\EndProcedure
\end{algorithmic}
\end{algorithm}

\begin{algorithm}
\caption{Heavy hitter data structure $\textsc{DS}_{\textsc{HeavyHitters}}$ to compute the heavy entries of $\uu$}\label{alg:DS_Heavy_Hitters}
\begin{algorithmic}[1]
\Procedure{Initialize}{ }
\State $\epsilon_{\heavy} \gets \frac{\rho \sqrt{\epsilon}}{2 C_3 \sqrt{n}}$
\State $\ov{\rr} \leftarrow \rr^{(0)}$
\State Let $\Phi \in \R^{O(\epsilon_{\heavy}^{-2} \log^2 n) \times n}$ be the random matrix as described in Lemma~\ref{lem:heavy_hitter}.
\State Let $\NN$ and $\NN_{\Phi}$ be the matrix given by Lemma~\ref{lem:matrix_formula} that encodes the matrix formulas
\begin{align*}
f(\RR, \CC, \dd) = &~ \RR^{1/2} (\CC (\CC^{\top} \RR \CC)^{-1} \CC^{\top} \RR - \II) \dd, \\
f_{\Phi}(\Phi, \RR, \CC, \dd) = &~ \Phi \cdot \RR^{1/2} (\CC (\CC^{\top} \RR \CC)^{-1} \CC^{\top} \RR - \II) \dd,
\end{align*}
i.e., there exist index sets $I,J, I_{\Phi}, J_{\Phi}$ such that $(\NN^{-1})_{I,J} = f(\RR, \CC, \dd)$ and $(\NN_{\Phi}^{-1})_{I_{\Phi},J_{\Phi}} = f_{\Phi}(\Phi, \RR, \CC, \dd)$.
\State $\textsc{DS}.\textsc{Initialize}(\NN)$, $\textsc{DS}_{\Phi}.\textsc{Initialize}(\NN_{\Phi})$, where $\textsc{DS}$ and $\textsc{DS}_{\Phi}$ are both the inverse maintenance data structure of Lemma~\ref{lem:InverseMaintenanceTwoLevel}.
\EndProcedure
\Procedure{UpdateQuery}{$\ov{\rr}^{\new}$}
\State $\textsc{DS}.\textsc{Update}(\Delta)$ and $\textsc{DS}_{\Phi}.\textsc{Update}(\Delta)$, where $\Delta = \ov{\RR}^{\new} - \ov{\RR}$
\State $\ov{\rr} \gets \ov{\rr}^{\new}$
\If{$\textsc{DS}.k_0 \geq n^{a_0}$}
\State $\textsc{DS}.\textsc{Reset}()$ and $\textsc{DS}_{\Phi}.\textsc{Reset}()$
\ElsIf{$\textsc{DS}.k_1 \geq n^{a_1}$}
\State $\textsc{DS}.\textsc{PartialReset}()$ and $\textsc{DS}_{\Phi}.\textsc{PartialReset}()$
\EndIf
\State $\yy \gets \textsc{DS}_{\Phi}.\textsc{Query}(I_{\Phi},J_{\Phi})$ \label{algline:DS_heavy_hitters_y}
\State $L \gets \textsc{Decode}(\yy)$, where $\textsc{Decode}()$ is the decoding algorithm of Lemma~\ref{lem:heavy_hitter}. We can view $L \subset [n]$ as a subset of $I$. \label{algline:DS_heavy_hitters_L}
\State \Return $(L, \textsc{DS}.\textsc{Query}(L, J))$ \label{algline:DS_heavy_hitters_return}
\EndProcedure
\end{algorithmic}
\end{algorithm}

\subsection{Correctness of Algorithm}
\begin{lemma}[Correctness of Algorithm~\ref{alg:combine_algo}]
The output of Algorithm~\ref{alg:combine_algo} is the same as that of Algorithm~\ref{alg:non_monotone_accel_robust}.
\end{lemma}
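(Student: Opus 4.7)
The plan is to verify, data structure by data structure, that each quantity returned by a call in Algorithm~\ref{alg:combine_algo} either matches the corresponding quantity in Algorithm~\ref{alg:non_monotone_accel_robust} exactly, or is an approximation within the slack already built into the latter's conditionals. I fix the shared randomness $\SS^{(0)},\ldots,\SS^{(T-1)}$, the JL matrices, the $\ell_3$-sketches, and the heavy-hitter matrix to be the same in both procedures, so that the two algorithms operate on identical inputs and it suffices to compare the outputs of the four data structures.

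For $\textsc{DS}_{\textsc{Inv}}$, the matrix formula constructed in Algorithm~\ref{alg:DS_Inv} encodes $f(\RR,\SS,\CC,\dd)=\SS\RR^{1/2}(\CC(\CC^\top\RR\CC)^{-1}\CC^\top\RR-\II)\dd=\SS\RR^{1/2}\uu$. By Lemma~\ref{lem:matrix_formula}, the returned submatrix $(\NN^{-1})_{I_i,J}$ equals $\SS^{(i)}\RR^{1/2}\uu^{(i,k)}$, and Lemma~\ref{lem:InverseMaintenanceTwoLevel} guarantees this value is independent of the schedule of resets and partial resets; multiplying by $(\RR^{(i,k)})^{-1/2}(\SS^{(i)})^\top$ therefore reproduces exactly the $\wh{\uu}^{(i,k)}$ of Line~\ref{algline:hat_u_Robust} of Algorithm~\ref{alg:non_monotone_accel_robust}. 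The same template applied to $\textsc{DS}_{\textsc{Norm}}$ yields $\|\textsc{DS}_{\ell_2}.\textsc{Query}\|_2^2$ and $\|\textsc{DS}_{\ell_3}.\textsc{Query}\|_\infty^3$, which by Lemmas~\ref{lem:JL} and~\ref{lem:l3_norm_estimation} are respectively $(1\pm\epsilon)$ and $C_3$ approximations of $\Psi(\rrbar^{(i,k)})$ and $\sum_e\rrbar^{(i,k)}_e|\uu^{(i,k)}_e|^3$. Since the conditional at Line~\ref{algline:CheckPrimalRobust} of Algorithm~\ref{alg:non_monotone_accel_robust} already separates the primal and width-reduction cases by the factor $C_3$, the test $\xi\leq\rho\Psi$ in Algorithm~\ref{alg:combine_algo} selects a branch consistent with a valid execution of Algorithm~\ref{alg:non_monotone_accel_robust}; we fix the branch in the latter to match whenever the approximation lies in the ambiguous band.

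For $\textsc{DS}_{\textsc{HeavyHitters}}$, the formula $f_\Phi$ evaluates to $\Phi\RR^{1/2}\uu^{(i,k)}$, so the call to \textsc{Decode} returns, by Lemma~\ref{lem:heavy_hitter}, a list $L$ containing every coordinate with $|(\RR^{1/2}\uu^{(i,k)})_e|\geq \epsilon_{\heavy}\|\RR^{1/2}\uu^{(i,k)}\|_2$. Using $\rr_e\geq(\epsilon/n)\Phi(\ww^{(i,k)})$, Lemma~\ref{lem:PsiPhi}, and the choice $\epsilon_{\heavy}=\rho\sqrt{\epsilon}/(2C_3\sqrt{n})$, a direct calculation shows that any $e$ with $|\uu^{(i,k)}_e|\geq\rho/(2C_3)$ satisfies $\sqrt{\rr_e}\,|\uu^{(i,k)}_e|\geq \epsilon_{\heavy}\sqrt{\Psi(\rrbar^{(i,k)})}$, so $L\supseteq S$. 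The subsequent call $\textsc{DS}.\textsc{Query}(L,J)$ returns $(\RR^{1/2}\uu^{(i,k)})_L$ exactly by the same matrix-formula argument, so dividing coordinate-wise by $\sqrt{\rr_e}$ recovers $\uu^{(i,k)}_L$ and thereby reconstructs $S$, $H$, $H_\zeta$, and $\zeta^\star$ identically to Algorithm~\ref{alg:non_monotone_accel_robust}.

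Finally, $\textsc{DS}_{\textsc{ImplicitInv}}$ is initialized in Line~\ref{algline:DS_implicit_inv_initialize} so that $(\MM^{(0)})^{-1}\dd'=(\CC^\top\RR^{(0,0)}\CC)^{-1}\CC^\top\RR^{(0,0)}\dd=\Delta^{(0,0)}$, and the invariant $(\MM^{(t)})^{-1}\dd'=\Delta^{(i,k)}$ is preserved by every \textsc{Update}; Lemma~\ref{lem:implicit_inverse_maintenance} then guarantees that \textsc{QuerySum} in Line~\ref{algline:query_x_ds} returns $\sum_t(\MM^{(t)})^{-1}\dd'=\sum_{i<T}\Delta^{(i,k_i)}=\xx^{(T)}$, which matches the explicit update of Line~\ref{algline:updateXRobust}. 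The main obstacle is the heavy-hitter step: one must track the precise interaction of $C_3$, $\rho$, $\epsilon_{\heavy}$, and the potential comparison from Lemma~\ref{lem:PsiPhi} to certify $L\supseteq S$ while keeping $|L|=O(\epsilon_{\heavy}^{-2})=O(nC_3^2/(\rho^2\epsilon))$ within the per-iteration budget of $\widetilde{O}(n^{1/2+\eta})$ promised by Theorem~\ref{thm:RobustAccMWU}. All remaining claims reduce to Lemma~\ref{lem:matrix_formula} and the correctness guarantees of the underlying inverse-maintenance data structures, which I invoke as black boxes.
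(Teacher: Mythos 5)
Your proposal is correct and follows essentially the same route as the paper's proof: verifying data structure by data structure that each returned quantity matches (or approximates within the built-in slack) the corresponding quantity in Algorithm~\ref{alg:non_monotone_accel_robust}, with the key heavy-hitter step being the containment $S \subseteq L$ followed by exact recovery of $\uu^{(i,k)}_L$ via a query. Your explicit treatment of the ambiguous band in the $C_3$-approximate conditional is in fact slightly more careful than the paper, which only states the approximation guarantees; the $|L|$ budget concern you flag belongs to the runtime analysis rather than correctness and checks out for $\eta = 1/10$.
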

\begin{proof}
Algorithm~\ref{alg:combine_algo} implements Algorithm~\ref{alg:non_monotone_accel_robust} by using the data structures $\textsc{DS}_{\textsc{Inv}}$, $\textsc{DS}_{\textsc{ImplicitInv}}$, $\textsc{DS}_{\textsc{Norm}}$, $\textsc{DS}_{\textsc{HeavyHitters}}$. So it suffices to prove that all these data structures are correct.

{\bf Compute $\wh{\uu}$ by $\textsc{DS}_{\textsc{Inv}}$.} We first prove that on Line~\ref{algline:hat_u_ds} of Algorithm~\ref{alg:combine_algo}, the computed vector $\wh{\uu}^{(i,k)} \leftarrow (\ov{\RR}^{(i,k)})^{-1/2} (\SS^{(i)})^{\top} \cdot \textsc{DS}_\textsc{Inv}.\textsc{UpdateQuery}(\ov{\rr}^{(i,k)}, i)$ satisfies
\[
\wh{\uu}^{(i,k)} = (\ov{\RR}^{(i,k)})^{-1/2} \cdot (\SS^{(i)})^{\top} \SS^{(i)} \cdot (\ov{\RR}^{(i,k)})^{1/2} \Big(\CC (\CC^{\top} \ov{\RR}^{(i,k)} \CC)^{-1} \CC^{\top} \ov{\RR}^{(i,k)} - \II\Big) \dd,
\]
as required by Line~\ref{algline:hat_u_Robust} of Algorithm~\ref{alg:non_monotone_accel_robust}. 

The data structure $\textsc{DS}_{\textsc{Inv}}$ (Algorithm~\ref{alg:DS_Inv}) uses the two-level inverse maintenance data structure of Lemma~\ref{lem:InverseMaintenanceTwoLevel} to maintain the inverse of matrix $\NN$ that by Lemma~\ref{lem:matrix_formula} encodes the matrix formula
\[
f(\ov{\RR}, \SS, \CC, \dd) = \begin{bmatrix}
\SS^{(0)} \\ \vdots \\ \SS^{(T-1)}
\end{bmatrix} \cdot \ov{\RR}^{1/2} \Big(\CC (\CC^{\top} \ov{\RR} \CC)^{-1} \CC^{\top} \ov{\RR} - \II\Big) \dd.
\]
$\textsc{DS}_{\textsc{Inv}}$ maintains that its internal variable $\ov{\RR} = \ov{\RR}^{(i,k)}$ in each iteration, since we update it on Line~\ref{algline:DS_inv_update} of Algorithm~\ref{alg:DS_Inv}. The output to $\textsc{DS}_\textsc{Inv}.\textsc{UpdateQuery}(\ov{\rr}^{(i,k)}, i)$ is (see Line~\ref{algline:DS_inv_return} of Algorithm~\ref{alg:DS_Inv})
\[
(\NN^{-1})_{I_i,J} = \SS^{(i)} \cdot (\ov{\RR}^{(i,k)})^{1/2} \Big(\CC (\CC^{\top} \ov{\RR}^{(i,k)} \CC)^{-1} \CC^{\top} \ov{\RR}^{(i,k)} - \II\Big) \dd.
\]
So we have that the $\wh{\uu}^{(i,k)}$ is computed as required.

{\bf Compute $\xx$ by $\textsc{DS}_{\textsc{ImplicitInv}}$.} Next we prove that Line~\ref{algline:update_x_ds} of Algorithm~\ref{alg:combine_algo} implicitly updates $\xx^{(i+1)} \gets \xx^{(i)} + \Delta^{(i,k)}$, as required by Line~\ref{algline:updateXRobust} of Algorithm~\ref{alg:non_monotone_accel_robust}, and that Line~\ref{algline:query_x_ds} of Algorithm~\ref{alg:combine_algo} outputs the correct $\xx^{(T)}$.

The data structure $\textsc{DS}_{\textsc{ImplicitInv}}$ (Algorithm~\ref{alg:DS_Implicit_Inv}) uses the implicit inverse maintenance data structure of Lemma~\ref{lem:implicit_inverse_maintenance} to maintain the inverse of matrix $\NN$ that by Lemma~\ref{lem:matrix_formula} encodes the matrix formula
\[
f(\ov{\RR}, \CC) = (\CC^{\top} \ov{\RR} \CC)^{-1} \CC^{\top} \ov{\RR}.
\]
We also initialize this data structure with the vector $\dd'$ (Line~\ref{algline:DS_implicit_inv_initialize} of Algorithm~\ref{alg:DS_Implicit_Inv}), and by Lemma~\ref{lem:implicit_inverse_maintenance} the algorithm maintains 
\begin{align*}
\sum_{i=0}^{(T-1)} \Big((\NN^{(i)})^{-1} \cdot \dd'\Big)_{I,:} = &~ \sum_{i=0}^{(T-1)} f(\ov{\RR}^{(i,k)}, \CC) \cdot \dd \\
= &~ \sum_{i=0}^{(T-1)} (\CC^{\top} \ov{\RR}^{(i,k)} \CC)^{-1} \CC^{\top} \ov{\RR}^{(i,k)} \cdot \dd,
\end{align*}
and this is exactly $\xx^{(T)} = \sum_{i=0}^{(T-1)} \Delta^{(i,k)}$ where $\Delta^{(i,k)} = (\CC^{\top} \ov{\RR}^{(i,k)} \CC)^{-1} \CC^{\top} \ov{\RR}^{(i,k)} \dd$ is what we need to compute (see Line~\ref{algline:linsysRobust} of Algorithm~\ref{alg:non_monotone_accel_robust}).

{\bf Compute norms by $\textsc{DS}_{\textsc{Norm}}$.} Next we show that Line~\ref{algline:norm_ds} of Algorithm~\ref{alg:combine_algo} computes
\[
\Psi \approx_{\epsilon} \sum_e \ov{\rr}^{(i,k)}_e (\uu^{(i,k)}_e)^2, ~~\text{and}~~ \xi \approx_{C_3} \sum_e \ov{\rr}_e^{(i,k)} |\uu^{(i,k)}_e|^3.
\]
Similar to the proof for $\textsc{DS}_{\textsc{Inv}}$, in each iteration $\textsc{DS}_{\textsc{Norm}}$ (see Algorithm~\ref{alg:DS_Norm}) maintains
\begin{align*}
&~ \JJ \cdot (\ov{\RR}^{(i,k)})^{1/2} (\CC (\CC^{\top} \ov{\RR}^{(i,k)} \CC)^{-1} \CC^{\top} \ov{\RR}^{(i,k)} - \II) \dd, \\
~ \text{and} &~ \UU \cdot (\ov{\RR}^{(i,k)})^{1/3} (\CC (\CC^{\top} \ov{\RR}^{(i,k)} \CC)^{-1} \CC^{\top} \ov{\RR}^{(i,k)} - \II) \dd.
\end{align*}
And on Line~\ref{algline:DS_norm_return} of Algorithm~\ref{alg:DS_Norm} it outputs
\begin{align*}
\Psi = \|\JJ^{(i)} \cdot (\ov{\RR}^{(i,k)})^{1/2} \uu^{(i,k)} \|_2^2, ~
\text{and}~\xi = \|\UU^{(i)} \cdot (\ov{\RR}^{(i,k)})^{1/3} \uu^{(i,k)}\|_{\infty}^3,
\end{align*}
where $\uu^{(i,k)} = (\CC (\CC^{\top} \ov{\RR}^{(i,k)} \CC)^{-1} \CC^{\top} \ov{\RR}^{(i,k)} - \II) \dd$ is as required by Line~\ref{algline:u_Robust} of Algorithm~\ref{alg:non_monotone_accel_robust}.

Then by Lemma~\ref{lem:JL} and \ref{lem:l3_norm_estimation}, and since we use a new random matrix $\JJ^{(i)}$ and $\UU^{(i)}$ in each iteration, we have that with probability $1 - 1/n^e$, for all iterations we have
\begin{align*}
(1 - \epsilon) \|(\ov{\RR}^{(i,k)})^{1/2} \uu^{(i,k)} \|_2^2 \leq \Psi & \leq (1 + \epsilon) \|(\ov{\RR}^{(i,k)})^{1/2} \uu^{(i,k)} \|_2^2, \\
C_3^{-1} \|(\ov{\RR}^{(i,k)})^{1/3} \uu^{(i,k)} \|_3^3 \leq \xi & \leq C_3 \|(\ov{\RR}^{(i,k)})^{1/3} \uu^{(i,k)} \|_3^3.
\end{align*}

{\bf Compute set $S$ for the width reduction step by $\textsc{DS}_{\textsc{HeavyHitters}}$.} Finally we prove that Line~\ref{algline:heavy_hitter_ds} of Algorithm~\ref{alg:combine_algo} computes the set $S = \{e : |\uu^{(i,k)}_e|\geq \rho/(2C_3)\}$ and the values of $\uu^{(i,k)}_e$ for all $e \in S$, as by Line~\ref{algline:setSRobust} of Algorithm~\ref{alg:non_monotone_accel_robust}. 

Similar to the proof for $\textsc{DS}_{\textsc{Inv}}$, in each iteration $\textsc{DS}_{\textsc{HeavyHitters}}$ (see Algorithm~\ref{alg:DS_Heavy_Hitters}) maintains
\begin{align*}
&~ \Phi \cdot (\ov{\RR}^{(i,k)})^{1/2} (\CC (\CC^{\top} \ov{\RR}^{(i,k)} \CC)^{-1} \CC^{\top} \ov{\RR}^{(i,k)} - \II) \dd, \\
~ \text{and} &~ (\ov{\RR}^{(i,k)})^{1/2} (\CC (\CC^{\top} \ov{\RR}^{(i,k)} \CC)^{-1} \CC^{\top} \ov{\RR}^{(i,k)} - \II) \dd.
\end{align*}
And on Line~\ref{algline:DS_heavy_hitters_y} of Algorithm~\ref{alg:DS_Heavy_Hitters} it computes
\[
y = \Phi \cdot (\ov{\RR}^{(i,k)})^{1/2} \uu^{(i,k)},
\]
where $\uu^{(i,k)} = (\CC (\CC^{\top} \ov{\RR}^{(i,k)} \CC)^{-1} \CC^{\top} \ov{\RR}^{(i,k)} - \II) \dd$ is as required by Line~\ref{algline:u_Robust} of Algorithm~\ref{alg:non_monotone_accel_robust}.

Then on Line~\ref{algline:DS_heavy_hitters_L} of Algorithm~\ref{alg:DS_Heavy_Hitters} it decodes $\yy$ and compute the set $L$, and by Lemma~\ref{lem:heavy_hitter}, with probability $1 - 1/n^4$, $L$ includes all $e \in [n]$ that satisfies
\begin{align*}
    |(\ov{\rr}^{(i,k)})_e^{1/2} \uu^{(i,k)}_e| \geq \epsilon_{\heavy} \cdot \|(\ov{\rr}^{(i,k)})^{1/2} \uu^{(i,k)}\|_2.
\end{align*}
Note that for any $e \in S$, we have
\begin{align*}
|\uu^{(i,k)}_e|\geq \rho/(2C_3) 
\implies &~ \rr^{(i,k)}_e \cdot (\uu^{(i,k)}_e)^2 \geq \frac{\rho^2}{4 C_3^2} \cdot \rr^{(i,k)}_e \\
\implies &~ \rr^{(i,k)}_e \cdot (\uu^{(i,k)}_e)^2 \geq \frac{\rho^2 \epsilon}{4 C_3^2 n} \cdot \Psi(\rr^{(i,k)}) \\
\implies &~ \rr^{(i,k)}_e \cdot (\uu^{(i,k)}_e)^2 \geq \epsilon_{\heavy}^2 \cdot \|(\ov{\rr}^{(i,k)})^{1/2} \uu^{(i,k)}\|_2^2,
\end{align*}
where the second step follows from $\rr^{(i,k)}_e \geq \frac{\epsilon}{n} \Psi(\rr^{(i,k)})$, and the third follows from $\epsilon_{\heavy} = \frac{\rho \sqrt{\epsilon}}{2 C_3 \sqrt{n}}$. This means we have
\[
S \subseteq L,
\]
and it suffices to enumerate all $e \in L$ to check if $|\uu^{(i,k)}_e|\geq \rho/(2C_3)$ and compute $S$.

In the output on Line~\ref{algline:DS_heavy_hitters_return} of Algorithm~\ref{alg:DS_Heavy_Hitters}, we also output $\textsc{Query}(L, J)$ which computes $\uu^{(i,k)}_L$ exactly.

Finally, note that we can re-use the random matrix $\Phi$ because the set $S$ and $\uu_e^{(i,k)}$ for $e \in S$ are computed exactly, so the next iteration does not depend on the randomness of $\Phi$.
\end{proof}

\subsection{Time Complexity under \texorpdfstring{$\ell_2$}{} Stability}\label{sec:RuntimeL2}
In this section we bound the time complexity of Algorithm~\ref{alg:combine_algo}.
\begin{theorem}[Time complexity of Algorithm~\ref{alg:combine_algo}]\label{thm:time_combine}
For any parameters that satisfy $a_0 \leq \alpha_*$ and $a_1 \leq a_0 \cdot \alpha_*$, the time complexity of Algorithm~\ref{alg:combine_algo} is
\begin{align*}
\wt{O}\left(n^{\omega} + n^{2.5-a_0/2} + n^{1.5+a_0-a_1/2} + n^{1/2-\eta + a_0 + (\omega-1) a_1} \right) \cdot \poly(\epsilon^{-1}).
\end{align*}
In particular, when $\omega = 2 + o(1)$, this time complexity is bounded by
\[
\wt{O}\left(n^{2+1/22.5} \right) \poly(\epsilon^{-1}).
\]
\end{theorem}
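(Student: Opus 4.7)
The plan is to bound the total runtime of Algorithm~\ref{alg:combine_algo} by separately accounting for three contributions aggregated across the four data structures $\textsc{DS}_{\textsc{Inv}}$, $\textsc{DS}_{\textsc{Norm}}$, $\textsc{DS}_{\textsc{ImplicitInv}}$, and $\textsc{DS}_{\textsc{HeavyHitters}}$: (i) the one-time initialization cost, (ii) the total cost of all \textsc{Reset} and \textsc{PartialReset} operations, and (iii) the total cost of all \textsc{Update} and \textsc{Query} operations. Since each of the four data structures is built on Lemma~\ref{lem:InverseMaintenanceTwoLevel} (or its implicit variant, Lemma~\ref{lem:implicit_inverse_maintenance}) applied to a matrix of size $O(n)$ constructed from Lemma~\ref{lem:matrix_formula}, all cost bounds reduce to controlling the common pair of accumulator sizes $k_0 \le n^{a_0}$ (updates since the last \textsc{Reset}) and $k_1 \le n^{a_1}$ (updates since the last \textsc{PartialReset}), together with the per-iteration query width $\ell = \wt{O}(n^{1/2+\eta})$.

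First I would fix the total iteration count $T + K = \wt{O}(n^{1/2-\eta})\cdot \poly(\epsilon^{-1})$ using Theorem~\ref{thm:RobustAccMWU}, and combine its $\ell_2$-stability bound for primal steps with its low-rank guarantee for width-reduction steps to apply Lemma~\ref{lem:LowRankL2Robust}. This yields that in every $2^{\ell}$ iterations $\ov{\rr}$ receives at most $\wt{O}(2^{2\ell}\,n^{2\eta})\cdot \poly(\epsilon^{-1})$ coordinate updates, so $k_0$ first reaches $n^{a_0}$ after $\wt{\Theta}(n^{a_0/2-\eta})$ iterations (triggering $\wt{O}(n^{1/2-a_0/2})$ \textsc{Reset}s in total), and $k_1$ first reaches $n^{a_1}$ after $\wt{\Theta}(n^{a_1/2-\eta})$ iterations (triggering $\wt{O}(n^{1/2-a_1/2})$ \textsc{PartialReset}s). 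I would also verify that the output size of each query is bounded by $\wt{O}(n^{1/2+\eta})$: this holds because the sketching matrices $\SS^{(i)}, \JJ^{(i)}, \UU^{(i)}$ all have $\wt{O}(n^{1/2+\eta})$ rows and the heavy-hitter decoding returns at most $\wt{O}(\epsilon_{\heavy}^{-2}) = \wt{O}(n^{1/2+\eta})$ coordinates.

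Plugging in the per-operation costs from Lemma~\ref{lem:InverseMaintenanceTwoLevel}, a \textsc{Reset} costs $\Tmat(n,n,n^{a_0})$, a \textsc{PartialReset} costs $\Tmat(n,n^{a_0},n^{a_1})$, and a \textsc{Query} with $\ell_r,\ell_c \le \wt{O}(n^{1/2+\eta})$ costs $\Tmat(n^{a_0},n^{a_1},\max\{n^{a_1},n^{1/2+\eta}\}) + \Tmat(n^{a_0},n^{1/2+\eta},n^{1/2+\eta})$. Under $a_0 \le \alpha_\ast$ and $a_1 \le a_0\alpha_\ast$, Fact~\ref{fact:upper_bound_Tmat} together with the convexity Fact~\ref{fact:FMM_convex} shows $\Tmat(n,n,n^{a_0}) = \wt{O}(n^{\omega})$, $\Tmat(n,n^{a_0},n^{a_1}) = \wt{O}(n^{1+a_0})$, and the query cost collapses to $\wt{O}(n^{a_0 + (\omega-1)a_1})$. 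Summing across the $T+K$ iterations gives the total
\[
\wt{O}\!\left(n^{\omega} + \frac{T\,n^{\omega}}{n^{a_0/2-\eta}} + \frac{T\,n^{1+a_0}}{n^{a_1/2-\eta}} + T\cdot n^{a_0+(\omega-1)a_1}\right)\cdot \poly(\epsilon^{-1}),
\]
which with $T = \wt{O}(n^{1/2-\eta})$ matches the four stated terms exactly.

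The main obstacle is the third step: verifying that rectangular matrix multiplications realize the claimed exponents. Concretely, showing $\Tmat(n^{a_0},n^{a_1},n^{1/2+\eta}) = \wt{O}(n^{a_0+(\omega-1)a_1})$ in the regime $a_1 \le a_0\alpha_\ast$ (and similarly for the partial-reset term) requires careful use of the dual-exponent bound from Fact~\ref{fact:upper_bound_Tmat}, combined with convexity to rule out the crossover terms; this is exactly where the constraint $a_1 \le a_0\alpha_\ast$ is used. Once the general-$\omega$ bound is established, specializing to $\omega = 2 + o(1)$ (so $\alpha_\ast = 1 - o(1)$), setting $\eta = 1/10$, and choosing $a_0 = 1 - \tfrac{1-2\eta}{9} = \tfrac{41}{45}$ and $a_1 = 1 - \tfrac{1-2\eta}{3} = \tfrac{11}{15}$ equalizes the three non-initialization exponents at $\tfrac{92}{45} = 2 + \tfrac{1}{22.5}$, and $n^{\omega} = n^{2+o(1)}$ is absorbed, yielding the advertised $\wt{O}(n^{2+1/22.5})\poly(\epsilon^{-1})$ runtime.
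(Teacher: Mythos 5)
Your proposal is correct and follows essentially the same route as the paper's proof: the same decomposition into initialization, reset, partial reset, and query costs; the same use of the low-rank update scheme from Theorem~\ref{thm:RobustAccMWU} to count $\wt{O}(n^{1/2-a_0/2})$ resets and $\wt{O}(n^{1/2-a_1/2})$ partial resets; the same appeal to Facts~\ref{fact:FMM_convex} and~\ref{fact:upper_bound_Tmat} under $a_0\le\alpha_*$, $a_1\le a_0\alpha_*$ to collapse the rectangular products; and the same final parameter choices $\eta=1/10$, $a_0=41/45$, $a_1=11/15$ giving exponent $92/45=2+1/22.5$. The only cosmetic difference is that the paper carries out the reset accounting by summing over all batch-size levels $\ell$ and invoking convexity of $\omega(x)$ to reduce to the endpoint levels, whereas you amortize directly over the accumulation time; both yield the same bound.
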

\begin{proof}
The dominating steps of Algorithm~\ref{alg:combine_algo} are the operations involving the data structures, since all other operations can be computed in $O(n \log n)$ time per iteration. We will focus on bounding the runtimes of the data structure operations.

\paragraph{Initialization} The \textsc{Initialize} operations of the four data structures $\textsc{DS}_{\textsc{Inv}}$, $\textsc{DS}_{\textsc{ImplicitInv}}$, $\textsc{DS}_{\textsc{Norm}}$, $\textsc{DS}_{\textsc{HeavyHitters}}$ are each called once on Line~\ref{algline:initialize_ds} of Algorithm~\ref{alg:combine_algo}.
\begin{itemize}
\item $\textsc{DS}_{\textsc{Inv}}$: By Lemma~\ref{lem:matrix_formula}, the matrix $\NN$ of Algorithm~\ref{alg:DS_Inv} has size $N \times N$ where
\[
N = \max\{bT, n\} = \widetilde{\Theta}(n \epsilon^{-5} \ln^5 n),
\]
since $T = \alpha^{-1}\epsilon^{-2}\ln n$,  $b = \frac{n \alpha \log^4 n}{\epsilon^3}$. By Lemma~\ref{lem:InverseMaintenanceTwoLevel}, the initialization time of the two level inverse maintenance data structure is 
\[
O(N^{\omega}) = \widetilde{\Theta}(n^{\omega} \epsilon^{-5 \omega}).
\]
\item $\textsc{DS}_{\textsc{ImplicitInv}}$: By Lemma~\ref{lem:matrix_formula}, the matrix $\NN$ of Algorithm~\ref{alg:DS_Implicit_Inv} has size $\Otil_{\epsilon}(n) \times \Otil_{\epsilon}(n)$. By Lemma~\ref{lem:implicit_inverse_maintenance}, the initialization time of the implicit inverse maintenance data structure is $\Otil_{\epsilon}(n^{\omega})$.
\item $\textsc{DS}_{\textsc{Norm}}$: By Lemma~\ref{lem:matrix_formula}, the matrices $\NN_{\ell_2}$ and $\NN_{\ell_3}$ of Algorithm~\ref{alg:DS_Norm} have size $N_{\ell_2} \times N_{\ell_2}$ and $N_{\ell_3} \times N_{\ell_3}$, where
\begin{align*}
N_{\ell_2} = &~ \max\{O(\epsilon^{-2} \log(n) (T+K)), n\} = \wt{\Theta}_{\epsilon}(\max\{n^{1/2-\eta} , n\}), \\
N_{\ell_3} = &~ \max\{O(n^{1/3} \log^3(n) (T+K)), n\} = \wt{\Theta}_{\epsilon}(\max\{n^{5/6 - \eta}, n\}).
\end{align*} 
From Lemma~\ref{lem:InverseMaintenanceTwoLevel}, the initialization time of the two inverse maintenance data structures is
\[
O(N_{\ell_2}^{\omega} + N_{\ell_3}^{\omega}) = \wt{\Theta}_{\epsilon}(\max\{n^{(1/2-\eta) \omega}, n^{(5/6 - \eta) \omega}, n^{\omega}\}).
\]
\item $\textsc{DS}_{\textsc{HeavyHitters}}$: By Lemma~\ref{lem:matrix_formula}, the matrix $\NN$ of Algorithm~\ref{alg:DS_Norm} has size $O(n) \times O(n)$, and $\NN_{\Phi}$ has size $N_{\Phi} \times N_{\Phi}$, where
\[
N_{\Phi} = \max\{O(\epsilon_{\heavy}^{-2} \log^2 n), n\} = \wt{\Theta}_{\epsilon}(\max\{n^{6 \eta}, n\})
\]
since $\epsilon_{\heavy} = \frac{\rho \sqrt{\epsilon}}{2 C_3 \sqrt{n}}$ and $\rho = \widetilde{\Theta}(n^{1/2-3\eta}\epsilon^{-2})$. By Lemma~\ref{lem:InverseMaintenanceTwoLevel}, the initialization time of the two inverse maintenance data structures is
\[
O(n^{\omega} + N_{\Phi}^{\omega}) = \wt{\Theta}_{\epsilon}(\max\{n^{6 \eta \omega} , n^{\omega}\}).
\]
\end{itemize}
Summing up all these terms, and since we set $\eta = 1/10$, the total initialization time is
\[
\wt{O}_{\epsilon}(n^{\omega}).
\]

\paragraph{Reset} Since Algorithm~\ref{alg:combine_algo} implements Algorithm~\ref{alg:non_monotone_accel_robust}, it satisfies the low-rank update scheme of Theorem~\ref{thm:RobustAccMWU}, so we have that the sequence $\ov{\rr}^{(i,k)}$ undergoes at most $\frac{T+K}{2^{\ell}}$ 
number of updates of size 
\begin{equation}\label{eq:low_rank_update_size}
\Otil_{\epsilon}\left(\left(\frac{\log n}{\delta}\right)^2 \cdot n^{2 \eta} \cdot 2^{2 \ell}\right)
\end{equation}
for every $\ell \in [0:\log T]$. 

Note that the four data structures $\textsc{DS}_{\textsc{Inv}}$, $\textsc{DS}_{\textsc{ImplicitInv}}$, $\textsc{DS}_{\textsc{Norm}}$, $\textsc{DS}_{\textsc{HeavyHitters}}$ all follow the same reset and partial reset scheme. And since the matrix maintained by $\textsc{DS}_{\textsc{Inv}}$ has the largest size $N = \widetilde{\Theta}(n \epsilon^{-5} )$, it suffices to bound the reset and partial reset time of $\textsc{DS}_{\textsc{Inv}}$.

As stated on Line~\ref{algline:DS_inv_reset} of Algorithm~\ref{alg:DS_Inv}, we only perform the \textsc{Reset} operation in the $i$-th iteration if $k_0 \geq n^{a_0}$, where $k_0 = \nnz(\ov{\rr}^{(i,k)} - \ov{\rr}_0)$, and $\ov{\rr}_0$ is the variable maintained by the data structure that was updated in the last \textsc{Reset} (see Lemma~\ref{lem:InverseMaintenanceTwoLevel}). By the low rank update size of Eq.~\eqref{eq:low_rank_update_size}, for any $a \in [a_0,1]$, we only accumulate updates of size $n^{a}$ for at most $\frac{T+K}{n^{a/2 - \eta}} \cdot \frac{\log n}{\delta}$ number of times. For convenience for any $a$ we define a parameter $\ell = \log (n^{a/2 - \eta} \cdot \frac{\delta}{\log n})$ such that we get a update of size $n^a = \left(\frac{\log n}{\delta}\right)^2 \cdot n^{2 \eta} \cdot 2^{2 \ell}$ for at most $(T+K)/2^{\ell}$ times. Let $\ell_0 = \log (n^{a_0/2 - \eta} \cdot \frac{\delta}{\log n})$, and by Lemma~\ref{lem:InverseMaintenanceTwoLevel} the total reset time over all iterations is
\begin{align*}
\Otil_{\epsilon}\left(\sum_{\ell = \ell_0}^{\log T} \frac{T+K}{2^{\ell}} \cdot \Tmat\left(N, N, \left(\frac{\log n}{\delta}\right)^2 \cdot n^{2 \eta} \cdot 2^{2 \ell}\right)\right). 
\end{align*}
Define $x(\ell) = \log_N((\frac{\log n}{\delta})^2 \cdot n^{2 \eta} \cdot 2^{2 \ell}) = \log_N((\frac{\log n}{\delta})^2 \cdot n^{2 \eta}) + \frac{2 \ell}{\log N}$, then since we defined $\omega(x)$ such that $\Tmat(n,n,n^x) = n^{\omega(x) + o(1)}$, we have the above time complexity equals to
\begin{align*}
\Otil_{\epsilon}\left((T+K) \cdot \sum_{\ell = \ell_0}^{\log T} N^{\omega(x(\ell)) - \frac{\ell}{\log N}}\right). 
\end{align*}
By Fact~\ref{fact:FMM_convex} we know that $\omega(x)$ is convex, and hence the function $f(\ell) = \omega(x(\ell)) - \frac{\ell}{\log N}$ is also convex. So we have that the summation is upper bounded by the terms $\ell=\ell_0$ and $\ell=\log T$. And so the above time complexity is bounded by
\begin{align*}
&~ \Otil_{\epsilon}\left(\frac{T+K}{2^{\ell_0}} \cdot \Tmat\Big(N, N, \left(\frac{\log n}{\delta}\right)^2 \cdot n^{2 \eta} \cdot 2^{2 \ell_0}\Big) + \Tmat\Big(N, N, \left(\frac{\log n}{\delta}\right)^2 \cdot n^{2 \eta} \cdot T^2 \Big)\right) \\ 
\leq &~ \wt{O}\left(n^{1/2-a_0/2} \cdot \Tmat\Big(n, n, n^{a_0} \Big) + \Tmat\Big(n, n, n\Big)\right) \cdot \poly(\epsilon^{-1}),
\end{align*}
since $\ell_0 = \log (n^{a_0/2 - \eta} \cdot \frac{\delta}{\log n})$, $N = \widetilde{\Theta}_{\epsilon}(n )$, $T = \alpha^{-1}\epsilon^{-2}\ln n$, $\alpha = \widetilde{\Theta}_{\epsilon}(n^{-1/2+\eta})$. 

\paragraph{Partial reset} Similar to the \textsc{Reset} operation, it suffices to bound the partial reset time of $\textsc{DS}_{\textsc{Inv}}$. As stated on Line~\ref{algline:DS_inv_partial_reset} of Algorithm~\ref{alg:DS_Inv}, we only perform the \textsc{PartialReset} operation in the $i$-th iteration if $k_1 \geq n^{a_1}$, where $k_1 = \nnz(\ov{\rr}^{(i,k)} - \ov{\rr}_1)$, and $\ov{\rr}_1$ is the variable maintained by the data structure that was updated in the last \textsc{PartialReset} (see Lemma~\ref{lem:InverseMaintenanceTwoLevel}).  Let $\ell_1 = \log (n^{a_1/2 - \eta} \cdot \frac{\delta}{\log n})$ such that we perform a \textsc{PartialReset} of size $n^{a_1}$ for at most $\frac{T+K}{2^{\ell_1}}$ times, and by Lemma~\ref{lem:InverseMaintenanceTwoLevel} the total partial reset time over all iterations is
\begin{align*}
\Otil_{\epsilon}\left(\sum_{\ell = \ell_1}^{\log T} \frac{T+K}{2^{\ell}} \cdot \Tmat\Big(N, n^{a_0}, \left(\frac{\log n}{\delta}\right)^2 \cdot n^{2 \eta} \cdot 2^{2 \ell}\Big)\right). 
\end{align*}
Using the same argument as for reset operation and using the convexity of $\omega_{a_0}(x)$ (Fact~\ref{fact:FMM_convex}), we have that the summation is upper bounded by the terms $\ell=\ell_1$ and $\ell = \log T$. And so the above time complexity is bounded by
\begin{align*}
&~ \Otil_{\epsilon}\left(\frac{T+K}{2^{\ell_1}} \cdot \Tmat\Big(N, n^{a_0}, \left(\frac{\log n}{\delta}\right)^2 \cdot n^{2 \eta} \cdot 2^{2 \ell_1}\Big) + \Tmat\Big( N, n^{a_0}, \left(\frac{\log n}{\delta}\right)^2 \cdot n^{2 \eta} \cdot T^2\Big)\right) \\
\leq &~ \wt{O}\left( n^{1/2-a_1/2} \cdot \Tmat(n, n^{a_0}, n^{a_1}) + \Tmat(n, n^{a_0}, n) \right) \cdot \poly(\epsilon^{-1})
\end{align*}
since $\ell_1 = \log (n^{a_1/2 - \eta} \cdot \frac{\delta}{\log n})$, $N = \widetilde{\Theta}_{\epsilon}(n )$, $T = \alpha^{-1}\epsilon^{-2}\ln n$, $\alpha = \widetilde{\Theta}_{\epsilon}(n^{-1/2+\eta})$. 

\paragraph{Query.} Next we bound the runtime of the query operations of the four data structures.
\begin{itemize}
\item $\textsc{DS}_{\textsc{Inv}}$: In $\textsc{DS}_{\textsc{Inv}}$ (Algorithm~\ref{alg:DS_Inv}), the query operation is called with sets $|I_i| = b$ and $|J| = 1$, and by Lemma~\ref{lem:InverseMaintenanceTwoLevel} its runtime per iteration is
\begin{align*}
O\big(\Tmat(n^{a_0}, n^{a_1}, n^{a_1}) + n^{a_0} \cdot b\big) = \wt{O}_{\epsilon}\big(\Tmat(n^{a_0}, n^{a_1}, n^{a_1}) + n^{a_0 + 1/2 + \eta} \cdot \big)
\end{align*}
since $b = \wt{\Theta}_{\epsilon}(n^{1/2+\eta} )$.
\item $\textsc{DS}_{\textsc{ImplicitInv}}$: In Algorithm~\ref{alg:combine_algo}, the \textsc{QuerySum} operation of $\textsc{DS}_{\textsc{ImplicitInv}}$ (Algorithm~\ref{alg:DS_Implicit_Inv}) is only called once in the algorithm, and by Lemma~\ref{lem:implicit_inverse_maintenance} its runtime over all iterations is $O(n^2)$.
\item $\textsc{DS}_{\textsc{Norm}}$: In $\textsc{DS}_{\textsc{Norm}}$ (Algorithm~\ref{alg:DS_Norm}), the query operation is called with sets $|I_{\ell_2,i}| = O(\epsilon^{-2} \log(n))$ and $|J_{\ell_2}| = 1$, and with sets $|I_{\ell_3,i}| = O(n^{1/3} \log^3(n))$ and $|J_{\ell_3}| = 1$, and by Lemma~\ref{lem:InverseMaintenanceTwoLevel} its runtime per iteration is
\begin{align*}
O\Big(\Tmat(n^{a_0}, n^{a_1}, n^{a_1}) + n^{a_0} \cdot \big(\epsilon^{-2} \log(n) + n^{1/3} \log^3(n)\big)\Big).
\end{align*}
\item $\textsc{DS}_{\textsc{HeavyHitters}}$: In $\textsc{DS}_{\textsc{HeavyHitters}}$ (Algorithm~\ref{alg:DS_Heavy_Hitters}), the query operation is called with sets $|I_{\Phi}| = O(\epsilon_{\heavy}^{-2} \log^2 n)$ and $|J_{\Phi}| = 1$, and with sets $|L| = O(\epsilon_{\heavy}^{-2})$ and $|J| = 1$, and by Lemma~\ref{lem:InverseMaintenanceTwoLevel} its runtime per iteration is
\begin{align*}
O\big(\Tmat(n^{a_0}, n^{a_1}, n^{a_1}) + n^{a_0} \cdot \epsilon_{\heavy}^{-2} \log^2 n\big) 
= \wt{O}_{\epsilon}\big(\Tmat(n^{a_0}, n^{a_1}, n^{a_1}) + n^{a_0 + 6 \eta} \cdot \big),
\end{align*}
since $\epsilon_{\heavy} = \frac{\rho \sqrt{\epsilon}}{2 C_3 \sqrt{n}}$ and $\rho = \widetilde{\Theta}_{\epsilon}(n^{1/2-3\eta})$.
\end{itemize}
Combining these four query time, we have that over all iterations, the total query time is
\begin{align*}
&~ \wt{O}\left(T \cdot \Big(\Tmat(n^{a_0}, n^{a_1}, n^{a_1}) + n^{a_0 + 1/2 + \eta} + n^{a_0 + 1/3} + n^{a_0 + 6 \eta} \Big) \right) \cdot \poly(\epsilon^{-1}) \\
= &~ \wt{O}\left(n^{1/2-\eta} \cdot \Tmat(n^{a_0}, n^{a_1}, n^{a_1}) + n^{1 + a_0} \right) \cdot \poly(\epsilon^{-1}),
\end{align*}
since $T = \alpha^{-1}\epsilon^{-2}\ln n$, $\alpha = \widetilde{\Theta}_{\epsilon}(n^{-1/2+\eta})$, and we set $\eta = 1/10$.

\paragraph{Total runtime.} Combining the time complexities of initialization, reset, partial reset, and query, we have that the total time complexity is bounded by $\poly(\epsilon^{-1})$ times
\begin{align*}
\wt{O}\left(n^{\omega} + n^{1/2-a_0/2} \cdot \Tmat(n,n,n^{a_0}) + n^{1/2-a_1/2} \cdot \Tmat(n,n^{a_0},n^{a_1}) + n^{1/2-\eta} \cdot \Tmat(n^{a_0}, n^{a_1}, n^{a_1}) \right).
\end{align*}

When $a_0 \leq \alpha_*$ and $a_1 \leq a_0 \cdot \alpha_*$, this becomes
\begin{align*}
\wt{O}\left(n^{\omega} + n^{2.5-a_0/2} + n^{1.5+a_0-a_1/2} + n^{1/2-\eta + a_0 + (\omega-1) a_1} \right) \cdot \poly(\epsilon^{-1}).
\end{align*}
\end{proof}

\paragraph{Time complexity when $\omega = 2$.}
We remark that when $\omega = 2$, we can choose the optimal trade-off $a_0 = 1 - \frac{1 - 2 \eta}{9}$ and $a_1 = 1 - \frac{1 - 2 \eta}{3}$, and the time complexity becomes
\begin{align*}
\wt{O}\left(n^{2 + (1 - 2\eta) / 18} \right) \cdot \poly(\epsilon^{-1}).
\end{align*}

Since we choose $\eta = 1/10$, this become 
\begin{align*}
\wt{O}\left(n^{2 + 1 / 22.5} \right) \cdot \poly(\epsilon^{-1}).
\end{align*}

\section{Time Complexity of the Deterministic Algorithm Using Fast Data Structures}\label{sec:time_deterministic}
In this section we show that we can implement Algorithm~\ref{alg:MWU} by using the one-level inverse maintenance data structure (Lemma~\ref{lem:InverseMaintenanceOneLevel}). This algorithm is deterministic since we don't use any randomized techniques.

\begin{theorem}[Time complexity of Algorithm~\ref{alg:MWU} using one-level inverse maintenance]\label{thm:time_deterministic}
For any parameters $\ell_0, \cdots, \ell_{\log T} \in [0,\log T]$, the time complexity of Algorithm~\ref{alg:MWU} when using one-level inverse maintenance is
\begin{align*}
\wt{O}\left(n^{\omega} + n^{5/3} \cdot \sum_{j=0}^{\log T} 2^{3 \ell_j - j} + \sum_{j=0}^{\log T} \Tmat(n,n, n^{1/3} \cdot 2^{3 \ell_j - j}) \cdot 2^{j - \ell_j}\right) \poly(\epsilon^{-1}).
\end{align*}

In particular, when $\omega = 2 + o(1)$, this time complexity is bounded by
\[
\wt{O}\left(n^{2+1/12} \right) \poly(\epsilon^{-1}).
\]
\end{theorem}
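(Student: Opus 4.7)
The plan is to combine three ingredients already established in the paper: the $\ell_3$-stability bounds of Lemmas~\ref{lem:StabilityL3AllIterResistance} and \ref{lem:StabilityL3AllWidth}; the low-rank update scheme of Lemma~\ref{lem:LowRankL3} and Corollary~\ref{cor:LowRankL3}; and the one-level inverse maintenance data structure of Lemma~\ref{lem:InverseMaintenanceOneLevel}, accessed via the matrix-formula wrapper of Lemma~\ref{lem:matrix_formula}. Concretely, I would apply the wrapper to a matrix $\NN(\RRbar)$ whose inverse encodes the residual vector $\CC(\CC^\top\RRbar\CC)^{-1}\CC^\top\RRbar\dd-\dd$, and implement each iteration of Algorithm~\ref{alg:MWU} by pushing the weight changes produced by $\textsc{SelectVectorL3}$ (Algorithm~\ref{alg:select_vector_L3}) through \textsc{Update} and reading the full $n$-dimensional residual via one \textsc{Query} with $\ell_r=n$, $\ell_c=1$ (this vector supplies both the primal update and the width-reduction test). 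Summing the two stability lemmas yields $\sum_{i,e}\bigl((\rr_e^{(i+1)}-\rr_e^{(i)})/\rr_e^{(i)}\bigr)^3 = \wt{O}_\eps(n^{1/3})$, so Lemma~\ref{lem:decomposition_iteration} applied with $\zeta=\wt{O}_\eps(n^{1/3})$ partitions the $T+K=\wt{O}_\eps(n^{1/3})$ iterations into buckets $B_0,\dots,B_{\log T}$ with $|B_j|\le 2^{j+1}$.

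Given parameters $\ell_0,\dots,\ell_{\log T}$, I would configure the data structure to call \textsc{Reset} after every $2^{\ell_j}$ iterations of bucket $B_j$, giving at most $2^{j+1-\ell_j}$ resets per bucket. By Corollary~\ref{cor:LowRankL3} applied with $t=2^{\ell_j}$, the coordinate changes to $\rrbar$ between two consecutive resets in $B_j$ total $k_j := \wt{O}_\eps(n^{1/3}\cdot 2^{3\ell_j-j})$, so by Lemma~\ref{lem:InverseMaintenanceOneLevel} each reset costs $O(\Tmat(n,n,k_j))$ and each query costs $O(k_j^\omega+\Tmat(n,k_j,1))=O(k_j^\omega+nk_j)$. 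Initialization costs $O(n^\omega)$ and the cumulative \textsc{Update} cost is $\wt{O}_\eps(n)$ after summing update sizes across buckets. Summing resets over buckets recovers the third term $\sum_j 2^{j-\ell_j}\,\Tmat(n,n,n^{1/3}\cdot 2^{3\ell_j-j})$. For queries, I would bound the bucket-total by $2^{j+1}(k_j^\omega+nk_j)$; the crude inequality $2^j\le T\le n^{1/3}$ lets me rewrite both $2^j\cdot nk_j$ and (when $\omega=2+o(1)$) $2^j\cdot k_j^\omega$ as at most $n^{5/3}\cdot 2^{3\ell_j-j}$, matching the second term.

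To derive the $\wt{O}(n^{2+1/12})$ bound in the $\omega=2+o(1)$ regime, I would use $\Tmat(n,n,r) = \wt{O}(n^2+nr)$ and set $\ell_j := j/2 + (\log n)/12$. The reset term then splits into $\sum_j 2^{j-\ell_j}n^2 = n^{2-1/12}\sum_j 2^{j/2} = \wt{O}(n^{2+1/12})$, using $\sum_j 2^{j/2}=O(T^{1/2})=O(n^{1/6})$, together with $\sum_j n^{4/3}\cdot 2^{2\ell_j} = n^{3/2}\sum_j 2^j = \wt{O}(n^{11/6})$. The query term becomes $n^{5/3}\sum_j 2^{3\ell_j-j} = n^{5/3}\cdot n^{1/4}\sum_j 2^{j/2} = \wt{O}(n^{2+1/12})$, and both dominate $n^{\omega}=n^{2+o(1)}$.

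The main obstacle I anticipate is the query-cost bookkeeping: the natural per-iteration bound $O(nk_j)$ fits inside $n^{5/3}\sum_j 2^{3\ell_j-j}$ (rather than the tighter $n^{4/3}\sum_j 2^{3\ell_j}$) only because of the slack $2^j\le T\le n^{1/3}$, which in turn hinges on the combined count $T+K=\wt{O}_\eps(n^{1/3})$; any looser bound on width-reduction iterations would lose the claimed exponent. A secondary subtlety is to verify that \textsc{SelectVectorL3} interacts correctly with Algorithm~\ref{alg:MWU}'s interleaved primal and width-reduction steps, so that the bucket-wise update-size guarantee of Lemma~\ref{lem:LowRankL3} is not degraded when both kinds of weight changes feed into the same data structure.
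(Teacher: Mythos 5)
Your proposal follows essentially the same route as the paper: encode the residual via Lemma~\ref{lem:matrix_formula}, feed the $\ell_3$-stability bounds into the bucket decomposition of Lemma~\ref{lem:decomposition_iteration} and the update sizes of Corollary~\ref{cor:LowRankL3}, reset the one-level data structure every $2^{\ell_j}$ iterations within bucket $B_j$, and optimize with $2^{\ell_j}=n^{1/12}2^{j/2}$ when $\omega=2+o(1)$ --- all of which matches the paper's proof, including the final arithmetic.

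The one place your justification is off is the per-query cost. In the one-level data structure the query cost is governed by the \emph{total} number of pending coordinate updates since the most recent \textsc{Reset}, and since the resets for different buckets fire on interleaved schedules, a query issued during an iteration of bucket $j$ may have pending updates of size up to $k=\sum_{j'}k_{j'}$, not just $k_j$; so the correct per-iteration query cost is $O(k^{\omega}+nk)$ with $k=\sum_{j'}k_{j'}$, and the total query cost is $T\cdot(k^{\omega}+nk)$, which is what the paper charges. Your bucket-local charge $\sum_j |B_j|(k_j^{\omega}+nk_j)$ undercounts this, but your subsequent inflation via $2^{j}\le T\le \wt O(n^{1/3})$ lands exactly on $T\cdot n\sum_j k_j = n^{5/3}\sum_j 2^{3\ell_j-j}$, so the stated bound (and the $\wt O(n^{2+1/12})$ conclusion) is unaffected; you should just state the query cost in terms of the aggregate $k$ from the start, together with the standing assumption $k\le n$ that lets you absorb the $k^{\omega}$ term.
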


\begin{proof}
By Lemma~\ref{lem:matrix_formula} there exists a matrix $\NN(\rrbar) \in \rea^{O(n) \times O(n)}$ such that $\Delta^{(i,k)}$ and $\CC \Delta^{(i,k)} - \dd$ can be read off from a column of $\NN(\rrbar^{(i,k)})^{-1}$, and it undergoes coordinate updates to $\rrbar$. We maintain this inverse using the one-level data structure of Lemma~\ref{lem:InverseMaintenanceOneLevel}. 
We use the one-level inverse maintenance data structure to query for $\Delta^{(i,k)}$ and $\uu^{(i,k)}$ exactly in each iteration, and the correctness is straightforward.

Note that the $\rr^{(i,k)}$'s in Algorithm~\ref{alg:MWU} follows the update scheme of Lemma~\ref{lem:LowRankL3} with $\zeta = \wt{O}(n^{1/3} \epsilon^{2/3})$. We maintain a vector $\rrbar^{(i,k)} \approx_{\delta} \rr^{(i,k)}$ for Algorithm~\ref{alg:MWU}, and we update $\rrbar^{(i,k)}$ in a slightly different way than Line~\ref{algline:MWU_select_vector} of Algorithm~\ref{alg:MWU}: We update it by \textsc{SelectVectorL3} (Algorithm~\ref{alg:select_vector_L3}) instead of \textsc{SelectVector}.

Let $\ell_0, \ell_1, \cdots, \ell_{\log T} \in [0,\log T]$ be parameters to be determined later. For any $j \in [0: \log T]$, we perform an update operation of the inverse maintenance data structure of Lemma~\ref{lem:InverseMaintenanceOneLevel} in all iterations $B_j[k]$ where $k \equiv 0 \pmod {2^{\ell_j}}$. Note that by Corollary~\ref{cor:LowRankL3} this update has size $k_j = O\left(\zeta \cdot 2^{3 \ell_j - j} \cdot \frac{\log^6 n}{\delta^3} \right)$. Next we compute the time complexity of all query and update operations.
 and note that it follows the update scheme of Lemma~\ref{lem:LowRankL3}.

{\bf Query.} The total number of updated coordinates when performing a query operation is at most
\begin{align*}
k = \sum_{j=0}^{\log T} k_j = \sum_{j=0}^{\log T} O\left(\zeta \cdot 2^{3 \ell_j - j} \cdot \frac{\log^6 n}{\delta^3} \right)
\end{align*}
Note that we can assume $k \leq n$ since otherwise the query is trivial. So the query time in each iteration is
\begin{align*}
O(k^{\omega} + nk) = n \cdot \sum_{j=0}^{\log T} \wt{O}\left(n^{1/3} \cdot 2^{3 \ell_j - j} \right) \cdot \poly(\epsilon^{-1}).
\end{align*}

{\bf Update.} For any $j \in [0: \log T]$, the update takes time $\Tmat(n,n,k_j)$, and in total it is performed $\frac{|B_j|}{2^{\ell_j}}$ times. So the amortized cost of all updates is
\begin{align*}
\frac{1}{T} \cdot \sum_{j=0}^{\log T} \Tmat(n,n,k_j) \cdot \frac{|B_j|}{2^{\ell_j}} \leq &~ \wt{O}\left(\frac{1}{T} \cdot \sum_{j=0}^{\log T} \Tmat(n,n, n^{1/3} \cdot 2^{3 \ell_j - j}) \cdot 2^{j - \ell_j} \right)  \cdot \poly(\epsilon^{-1}), 
\end{align*}
where we used Lemma~\ref{lem:decomposition_iteration} that $|B_j| \leq 2^{j+1}$.

{\bf Total runtimes.} Combining the query and update operation and the $\wt{O}(n^{\omega}) \poly(\epsilon^{-1})$ initialization time, and since $T = \wt{O}(n^{1/3}) \poly(\epsilon^{-1})$, the total running time is
\begin{align*}
&~ \wt{O}\left(n^{\omega} + n^{5/3} \cdot \sum_{j=0}^{\log T} 2^{3 \ell_j - j} + \sum_{j=0}^{\log T} \Tmat(n,n, n^{1/3} \cdot 2^{3 \ell_j - j}) \cdot 2^{j - \ell_j}\right) \poly(\epsilon^{-1}) \\
\leq &~ \wt{O}\left(n^{\omega} + n^{5/3} \cdot \sum_{j=0}^{\log T} 2^{3 \ell_j - j} + \sum_{j=0}^{\log T} \Big( n^{\frac{\omega-2}{3(1-\alpha)}} \cdot 2^{(3 \ell_j - j)\frac{\omega-2}{1-\alpha}} n^{2-\frac{\alpha(\omega-2)}{1-\alpha}} + n^2 \Big)\cdot 2^{j - \ell_j} \right) \poly(\epsilon^{-1}) \\
= &~ \wt{O}\left(n^{\omega} + n^{5/3} \cdot \sum_{j=0}^{\log T} 2^{3 \ell_j - j} + \sum_{j=0}^{\log T} \Big( n^{\frac{\omega - 3 \alpha \omega + 4}{3(1-\alpha)}} \cdot 2^{\frac{3\omega+\alpha-7}{1-\alpha} \ell_j + \frac{3-\omega-\alpha}{1-\alpha} j} + n^2 \cdot 2^{j-\ell_j}\Big)\right) \poly(\epsilon^{-1}),
\end{align*}
where the second step follows from Fact~\ref{fact:upper_bound_Tmat}.

We choose $\ell_j$ such that $n^2 \cdot 2^{j-\ell_j} = \max\{n^{5/3} \cdot 2^{3 \ell_j - j}, n^{\frac{\omega - 3 \alpha \omega + 4}{3(1-\alpha)}} \cdot 2^{\frac{3\omega+\alpha-7}{1-\alpha} \ell_j + \frac{3-\omega-\alpha}{1-\alpha} j}\}$, i.e.,
\begin{align*}
2^{\ell_j} = &~ n^{1/12} 2^{j/2} ~~\text{ or } ~~ n^{\alpha/3-1/9} 2^{j/3}. 
\end{align*}

So we have the total runtime is the minimum of the following two equations:
\begin{align*}
&~ \wt{O}\left(n^{\omega} + n^{5/3} \cdot \sum_{j=0}^{\log T} n^{1/4} 2^{j/2} + \sum_{j=0}^{\log T} \Big( n^{\frac{\omega - 3 \alpha \omega + 4}{3(1-\alpha)}} \cdot n^{\frac{3\omega+\alpha-7}{12(1-\alpha)}} 2^{\frac{3\omega+\alpha-7}{2(1-\alpha)} j} \cdot 2^{\frac{3-\omega-\alpha}{1-\alpha} j} \Big)\right) \poly(\epsilon^{-1}) \\
= &~ \wt{O}\left(n^{\omega} + n^{5/3} \cdot \sum_{j=0}^{\log T} n^{1/4} 2^{j/2} + \sum_{j=0}^{\log T} \Big( n^{\frac{7 \omega - 12 \alpha \omega + 9 + \alpha}{12(1-\alpha)} } \cdot  2^{\frac{\omega-\alpha-1}{2(1-\alpha)} j} \Big)\right) \poly(\epsilon^{-1}) \\
= &~ \wt{O}\left(n^{\omega} + n^{25/12} +  n^{\frac{9 \omega - 12 \alpha \omega + 7 - \alpha}{12(1-\alpha)}} \right) \poly(\epsilon^{-1}),
\end{align*}
and
\begin{align*}
&~ \wt{O}\left(n^{\omega} + n^{5/3} \cdot \sum_{j=0}^{\log T} n^{\alpha-1/3}  + \sum_{j=0}^{\log T} \Big( n^2 \cdot 2^{2j/3} \cdot n^{-\alpha/3+1/9} \Big)\right) \poly(\epsilon^{-1}) \\
= &~ \wt{O}\left(n^{\omega} + n^{4/3 + \alpha} + n^{2+1/3-\alpha/3} \right) \poly(\epsilon^{-1}) \\
= &~ \wt{O}\left(n^{\omega} + n^{2+1/3-\alpha/3} \right) \poly(\epsilon^{-1}).
\end{align*}

In conclusion, we have that the total runtime is
\begin{align*}
\wt{O}\left(n^{\omega} + \min\left\{n^{2+1/12} +  n^{\omega - \frac{3\omega+\alpha-7}{12(1-\alpha)}}, ~~ n^{2+1/3-\alpha/3}\right\} \right) \poly(\epsilon^{-1}).
\end{align*}

Note that when $\omega = 2$, we have $\alpha = 1$, and the algorithm runs in $\wt{O}(n^{\omega} + n^{2+1/12}) \cdot \poly(\epsilon^{-1})$ time.

\end{proof}

\section{Guarantees of Algorithm \ref{alg:MWU}: MWU with Monotone Weights}\label{sec:MWUMonotone}

The algorithm in this section has been analysed previously as mentioned in the main text. We include a proof similar to that of \cite{adil2022fast} for $\ell_{\infty}$-regression here for completeness. The analysis of Algorithm~\ref{alg:MWU} is based on tracking two potential functions that were defined in Eq.~\eqref{eq:defPhi} and \eqref{eq:defPsi}: 
\begin{equation*}%\label{eq:defPhi}
    \Phi\left(\ww^{( i,k )} \right) \defeq \norm{\ww^{(i,k)}}_1
\end{equation*}
\begin{equation*}%\label{eq:defPsi}
    \Psi(\rrbar^{(i,k)})\defeq \min_{\Delta\in \rea^d }\sum_e \rrbar^{(i,k)}_e (\CC\Delta-\dd)^2_e.
\end{equation*}
Also recall that by Lemma~\ref{lem:PsiPhi} we always have $\Psi(\rrbar^{(i,k)}) \leq e^{\epsilon+\delta} \cdot \Phi(\ww^{(i,k)})$. 
We will show how these potential functions change with a primal step (Line \ref{algline:CheckWidth}) and a width reduction step (Line \ref{lin:WidthReduceEdge}) in Algorithm~\ref{alg:MWU}. 
Finally, to prove our runtime bound, we will first show that if the total number of width reduction steps $K$ is not too large, then $\Phi$ is bounded. We then prove that the number of width reduction steps cannot be too large by using the relation between $\Phi$ and $\Psi$ and their respective changes throughout the algorithm.

\subsection*{Convergence Analysis}

\subsubsection*{Change in $\Phi$}

\begin{restatable}{lemma}{ChangePhiMon}
  \label{lem:ChangePhiMon}
  After $i$ primal steps, and $k$ width-reduction steps,
  the potential $\Phi$ is bounded as follows:
  \begin{align*}
 \Phi\left(\ww^{(i,k)}\right) \leq \Phi(\ww^{(0,0)})\left(1+e^{\epsilon + \delta} \cdot \epsilon\alpha\right)^i \left(1+ e^{\epsilon + \delta} \cdot \frac{\epsilon}{\tau} \right)^k .
 \end{align*}
\end{restatable}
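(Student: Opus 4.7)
The plan is to proceed by induction on the total step count $i+k$, showing that each primal step multiplies $\Phi$ by at most $(1+e^{\epsilon+\delta}\epsilon\alpha)$ and each width-reduction step multiplies it by at most $(1+e^{\epsilon+\delta}\epsilon/\tau)$. The base case $i=k=0$ is trivial. The key quantity to control in both cases is
\[
\sum_e \rr_e^{(i,k)} \bigl|\CC\Delta^{(i,k)}-\dd\bigr|_e,
\]
and the workhorse is a Cauchy--Schwarz estimate combined with Lemma~\ref{lem:PsiPhi}.

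First I would bound the master sum: splitting $\rr_e=\sqrt{\rr_e}\cdot\sqrt{\rr_e}$ and applying Cauchy--Schwarz,
\[
\sum_e \rr_e |\CC\Delta-\dd|_e
\le \sqrt{\sum_e \rr_e}\cdot\sqrt{\sum_e \rr_e(\CC\Delta-\dd)_e^2}.
\]
Using $\rr_e=\ww_e+\tfrac{\epsilon}{n}\|\ww\|_1$ gives $\sum_e \rr_e=(1+\epsilon)\Phi(\ww)$. For the second factor, since $\Delta^{(i,k)}$ is the minimizer with respect to $\rrbar^{(i,k)}$, we have $\sum_e \rrbar_e(\CC\Delta-\dd)_e^2=\Psi(\rrbar^{(i,k)})$, and $\rrbar\approx_\delta\rr$ gives $\sum_e \rr_e(\CC\Delta-\dd)_e^2\le e^{\delta}\Psi(\rrbar^{(i,k)})$. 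Then Lemma~\ref{lem:PsiPhi} yields $\Psi(\rrbar^{(i,k)})\le e^{\epsilon+\delta}\Phi(\ww^{(i,k)})$. Multiplying out and using $(1+\epsilon)\le e^{\epsilon}$ gives $\sum_e \rr_e|\CC\Delta-\dd|_e \le e^{\epsilon+\delta}\Phi(\ww^{(i,k)})$.

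Next I would handle the two step types. For a primal step (Line~\ref{algline:UpdateWPrimal}), $\Phi(\ww^{(i+1,k)})=\Phi(\ww^{(i,k)})+\epsilon\alpha\sum_e \ww_e|\CC\Delta-\dd|_e$, and since $\ww_e\le \rr_e$ the master-sum bound gives the factor $(1+e^{\epsilon+\delta}\epsilon\alpha)$. For a width-reduction step (Line~\ref{algline:UpdateWWidth}) on the set $H=\{e:|\CC\Delta-\dd|_e\ge\tau\}$, direct computation gives $\Phi(\ww^{(i,k+1)})-\Phi(\ww^{(i,k)})=\epsilon\sum_{e\in H}\rr_e$, and the threshold condition $|\CC\Delta-\dd|_e\ge\tau$ on $H$ lets us bound $\sum_{e\in H}\rr_e\le \tau^{-1}\sum_e\rr_e|\CC\Delta-\dd|_e$, which by the master-sum bound is at most $\tfrac{e^{\epsilon+\delta}}{\tau}\Phi(\ww^{(i,k)})$, yielding the factor $(1+e^{\epsilon+\delta}\epsilon/\tau)$.

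The claimed bound then follows from the induction, multiplying the per-step factors over the $i$ primal steps and $k$ width-reduction steps. I do not expect a serious obstacle; the only care needed is tracking the approximation losses from $\rrbar\approx_\delta\rr$ and the conversion $\Psi\to\Phi$ so that the constants collapse into a single $e^{\epsilon+\delta}$ factor, which works precisely because the Cauchy--Schwarz square root halves the $e^{\epsilon+2\delta}$ inside and the outer $\sqrt{1+\epsilon}$ absorbs into $e^{\epsilon/2}$.
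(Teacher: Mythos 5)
Your proof is correct and follows essentially the same route as the paper: induction over the two step types, Cauchy--Schwarz to bound $\sum_e \rr_e|\CC\Delta-\dd|_e$ (the paper uses $\ww_e$ in place of $\rr_e$ for the primal step, a cosmetic difference), the optimality of $\Delta^{(i,k)}$ with respect to $\rrbar$, and Lemma~\ref{lem:PsiPhi} to convert $\Psi(\rrbar)$ into $e^{\epsilon+\delta}\Phi(\ww)$. Your accounting of the approximation factors is also the same as the paper's, so there is nothing to fix.
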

\begin{proof}
  We prove this claim by induction. Initially, $i = k = 0,$
  and $\Phi\left(\ww^{(0,0)}\right) = n,$ and thus, the claim holds trivially. Assume that the claim holds for some $i,k \ge 0$. Denote $\ww = \ww^{(i,k)}$ and $\Delta = \Delta^{(i,k)}$.
\paragraph*{Primal Step.} If the next step is a \emph{primal} step, 

\begin{align*}
\Phi\left(\ww^{( i+1,k)} \right) = &\norm{ \ww+ \epsilon\alpha |\CC\Delta-\dd|\ww}_1 = \|\ww\|_1 + \epsilon\alpha \sum_e \ww_e|\CC\Delta-\dd|_e.
\end{align*}
We next bound $\sum_e |\CC\Delta-\dd|_e\ww_e$.
Using Cauchy-Schwarz inequality,
\begin{equation}\label{eq:CS}
 \sum_e \ww_e |\CC\Delta-\dd|_e  \leq \sqrt{\sum_e \ww_e \sum_e \rr_e(\CC\Delta-\dd)^2_e} \leq \sqrt{e^{\delta} \cdot \Phi(\ww)\Psi(\rrbar)} \leq e^{\epsilon + \delta} \cdot \Phi(\ww),
\end{equation}
where the last inequality follows from Lemma~\ref{lem:PsiPhi}.
We thus have,
\begin{align*}
 \Phi\left(\ww^{( i+1,k)} \right) \leq \Phi(\ww^{(i,k)})\cdot \left(1 + e^{\epsilon + \delta} \cdot \epsilon\alpha\right).
\end{align*}

 \paragraph*{Width Reduction Step.}
Let $\Delta = \Delta^{(i,k)}$ denote the solution returned in Line~\ref{algline:linsysMon} of Algorithm~\ref{alg:MWU}, and let $H$ denote the set of indices $j \in [n]$ such that $|\CC\Delta-\dd|_j \geq \tau $, i.e., the set of indices on which the algorithm performs width reduction.
 We have the following:
\begin{align*}
\Phi(\ww^{(i,k+1)}) & = \sum_{j \notin H}  \ww_j^{(i,k)} + \sum_{j \in H}(\ww_j^{(i,k)} + \epsilon \rr_j^{(i,k)})   =\Phi  + \epsilon \sum_{j \in H}  \rr_j^{(i,k)} \\
&\leq \Phi + \frac{\epsilon}{\tau} \sum_j \rr_j^{(i,k)}|\CC\Delta-\dd|_j \\
&\leq \Phi(\ww^{(i,k)})\left(1 + \frac{\epsilon}{\tau} \cdot e^{\epsilon + \delta} \right).
\end{align*}
The last inequality follows from a reasoning similar to Equation~\eqref{eq:CS}.
\end{proof}

\subsubsection*{Change in $\Psi$}

We now prove how the $\Psi$ potential changes throughout the algorithm.
\begin{restatable}{lemma}{ChangePsiMon}
\label{lem:ChangePsiMon}
After $i$ primal steps and $k$ width reduction steps, if $\delta \leq \epsilon/6$,
\[
\Psi(\rrbar^{(i,k)})\geq \Psi(\rr^{(0,0)})\left(1  +\frac{\epsilon^2\tau^2}{4  n}\right)^k.
\]

\end{restatable}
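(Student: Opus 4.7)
The plan is to induct on the pair $(i,k)$, showing that $\Psi(\rrbar)$ is non-decreasing under primal steps and gains a multiplicative factor of at least $1+\epsilon^2\tau^2/(4n)$ under each width reduction step, with base case $\Psi(\rrbar^{(0,0)}) = \Psi(\rr^{(0,0)})$ since \textsc{SelectVector} returns $\rr^{(0,0)}$ at initialization.

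For the primal step from $(i,k)$ to $(i+1,k)$, the update $\ww^{(i+1,k)}_e = \ww^{(i,k)}_e(1+\epsilon\alpha|\CC\Delta^{(i,k)}-\dd|_e)$ is coordinate-wise non-decreasing, and hence $\rr^{(i+1,k)} \geq \rr^{(i,k)}$ coordinate-wise as well. Since \textsc{SelectVector} either retains the old $\rrbar_e$ or assigns $\rrbar_e\leftarrow \rr^{(i)}_e$ from the current (monotone) history, $\rrbar^{(i+1,k)} \geq \rrbar^{(i,k)}$ coordinate-wise, and monotonicity of $\Psi$ in its weights then gives $\Psi(\rrbar^{(i+1,k)}) \geq \Psi(\rrbar^{(i,k)})$.

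For the width reduction step from $(i,k)$ to $(i,k+1)$, I will first expand the update rule for $\ww$ and use $\|\ww^{(i,k+1)}\|_1 \geq \|\ww^{(i,k)}\|_1$ to verify that $\rr^{(i,k+1)}_e \geq (1+\epsilon)\rr^{(i,k)}_e$ for $e \in H$ and $\rr^{(i,k+1)}_e \geq \rr^{(i,k)}_e$ otherwise, where $H$ is the set of indices flagged by the algorithm. Combined with $\rrbar\approx_\delta \rr$ and the iteration-wise monotonicity of $\rrbar$, this yields $\rrbar^{(i,k+1)}_e \geq e^{-\delta}(1+\epsilon)\rrbar^{(i,k)}_e$ on $H$ and $\rrbar^{(i,k+1)}_e \geq \rrbar^{(i,k)}_e$ off $H$. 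I will then apply Lemma~\ref{lem:PsiChange} with base weights $\rrbar^{(i,k)}$ (so the minimizer $\widetilde\Delta$ in that lemma equals the computed $\Delta^{(i,k)}$) and perturbation $\rr'_e = e^{-\delta}(1+\epsilon)\rrbar^{(i,k)}_e$ on $H$, $\rr'_e = \rrbar^{(i,k)}_e$ off $H$. This choice ensures $\rr' \leq \rrbar^{(i,k+1)}$ coordinate-wise (so $\Psi(\rr')\leq \Psi(\rrbar^{(i,k+1)})$), and the precondition $|\rr'_e-\rrbar^{(i,k)}_e|\leq \rr'_e$ is easily checked. The $e\notin H$ terms contribute zero, while on $H$ the coefficient $\tfrac{\rr'_e-\rrbar^{(i,k)}_e}{\rr'_e} = 1-\tfrac{e^\delta}{1+\epsilon}$ is at least $\epsilon/2$ under $\delta\leq \epsilon/6$. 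Using $(\CC\Delta^{(i,k)}-\dd)_e^2 \geq \tau^2$ on $H$, this gives
\[
\Psi(\rrbar^{(i,k+1)}) \;\geq\; \Psi(\rrbar^{(i,k)}) + \tfrac{\epsilon\tau^2}{2}\sum_{e\in H}\rrbar^{(i,k)}_e.
\]

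To convert this additive gain into a multiplicative one, I will use $|H|\geq 1$ (width reduction fires only when $\|\CC\Delta^{(i,k)}-\dd\|_\infty > \tau$) together with the floor $\rrbar^{(i,k)}_e \geq e^{-\delta}\cdot\tfrac{\epsilon}{n}\Phi(\ww^{(i,k)})$ and the inequality $\Phi(\ww^{(i,k)}) \geq e^{-\epsilon-\delta}\Psi(\rrbar^{(i,k)})$ from Lemma~\ref{lem:PsiPhi}; these combine to $\sum_{e\in H}\rrbar^{(i,k)}_e \geq \tfrac{\epsilon}{n}e^{-\epsilon-2\delta}\Psi(\rrbar^{(i,k)})$, hence $\Psi(\rrbar^{(i,k+1)}) \geq \Psi(\rrbar^{(i,k)})\bigl(1 + \tfrac{\epsilon^2\tau^2}{2n}e^{-\epsilon-2\delta}\bigr)$, and under $\delta\leq\epsilon/6$ (and small $\epsilon$) the factor $e^{-\epsilon-2\delta}$ is at least $1/2$, which closes the induction with the claimed $1+\tfrac{\epsilon^2\tau^2}{4n}$ per width reduction step. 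The main obstacle is the bookkeeping around $\rrbar$ versus $\rr$: the perturbation $\rr'$ must be chosen so that simultaneously (a) the minimizer in Lemma~\ref{lem:PsiChange} coincides with the algorithm's $\Delta^{(i,k)}$ (so that the width bound $\tau$ on $|\CC\Delta^{(i,k)}-\dd|_e$ is accessible), and (b) $\rr'\leq \rrbar^{(i,k+1)}$ (so that the lemma's lower bound on $\Psi(\rr')$ transfers to $\Psi(\rrbar^{(i,k+1)})$); the $e^{-\delta}$ rescaling precisely absorbs the resulting $\delta$-slack under $\delta\leq\epsilon/6$.
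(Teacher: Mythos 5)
Your proposal is correct and follows essentially the same route as the paper: primal steps only increase the monotone weights (hence $\Psi$), and each width-reduction step gains a factor $1+\Omega(\epsilon^2\tau^2/n)$ via Lemma~\ref{lem:PsiChange}, the relative increase of roughly $\epsilon$ on the coordinates of $H$, the width bound $|\CC\Delta^{(i,k)}-\dd|_e\geq\tau$, and the floor $\rr_e\geq\frac{\epsilon}{n}\Phi(\ww)\geq e^{-\epsilon-\delta}\frac{\epsilon}{n}\Psi(\rrbar)$. The only (cosmetic) difference is bookkeeping: you route through an intermediate vector $\rr'$ and monotonicity of $\Psi$ to absorb the $\delta$-slack, whereas the paper bounds $\frac{\rrbar'_e-\rrbar_e}{\rrbar'_e}\geq\frac{\rr'_e-\rr_e}{\rr'_e}-2\delta\geq\epsilon/3$ directly.
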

\begin{proof}
First note that the primal steps only increases $\ww$, and so it only increases $\Psi(\rrbar)$, so it suffices to only consider the width reduction steps.

In this proof from simplicity we use $\rr$ and $\rrbar$ to denote $\rr^{(i,k)}$ and $\rrbar^{(i,k)}$, and $\rr'$ and $\rrbar'$ to denote $\rr^{(i,k+1)}$ and $\rrbar^{(i,k+1)}$. We let $H$ denote the set of indices $j \in [n]$ such that $|\CC\Delta-\dd|_j \geq \tau $, i.e., the set of indices on which the algorithm performs width reduction.

We start by noting the following from Lemma~\ref{lem:PsiChange} that for $\rrbar, \rrbar' \geq 0$
\[
\Psi(\rrbar')\geq \Psi(\rrbar) + \sum_e \left(1-\frac{\rrbar_e}{\rrbar_e'}\right)\rrbar_e(\CC\widetilde{\Delta}-\dd)_e^2,
\]
where $\widetilde{\Delta}$ is the solution of $\Psi(\rrbar)$.
For any coordinate $e \in H$, 
\[
\frac{\rr^{(i,k+1)}_e-\rr^{(i,k)}_e}{\rr_e^{(i,k+1)}} \geq \frac{\ww^{(i,k+1)}_e-\ww^{(i,k)}_e}{\rr_e^{(i,k+1)}} = \epsilon \frac{\rr^{(i,k)}_e}{\rr^{(i,k+1)}_e  }\geq \frac{\epsilon}{1+2\epsilon}.
\]
In the above we used, $\ww_e^{(i,k+1)} = \ww_e^{(i,k)} + \epsilon \rr_e^{(i,k)}$, and  $\rr_e^{(i,k+1)}\leq (1+2\epsilon)\rr_e^{(i,k)}$. 
Now,
\[
\frac{\rrbar'_e - \rrbar_e}{\rrbar'_e}\geq \frac{e^{-\delta} \rr'_e - e^{\delta} \rr_e}{e^{\delta} \rr_e'} = \frac{\rr'_e - \rr_e}{\rr_e'} - \frac{e^{\delta} - e^{-\delta}}{e^{\delta}} \geq \frac{\epsilon}{1+2\epsilon} - 2 \delta.
\]
Since $\delta \leq \epsilon/6$, the above becomes,
\[
\frac{\rrbar'_e - \rrbar_e}{\rrbar'_e}\geq \frac{\epsilon}{3}.
\]

We also know that for all $e \in H$, $|\CC\Delta-\dd|_e\geq \tau$, and since $\rrbar_e \geq e^{-\delta} \rr_e \geq e^{-\delta} \frac{\epsilon}{n} \Phi(\ww)$,
\[
\Psi(\rrbar')\geq \Psi(\rrbar) +  \frac{\epsilon}{3} \cdot e^{-\delta} \frac{\epsilon}{n}\Phi(\ww) \cdot \tau^2 \geq \Psi(\rrbar) + \frac{\epsilon^2 \tau^2 e^{-\epsilon - 2 \delta}}{3 n}\Psi(\rrbar).
\]
Overall, after $k$ such steps and using that $\delta \leq \epsilon/6\leq 1/60$, we get,
\[
\Psi(\rrbar^{(i,k)})\geq \Psi(\rr^{(0,0)})\left(1  +\frac{\epsilon^2\tau^2}{4 n}\right)^k.
\]
\end{proof}

\subsubsection*{Proof of Theorem~\ref{thm:InfRegMainMonotone}}
\begin{proof}
Let $\xxhat = \frac{\xx^{(T)}}{T}$ be the solution returned by Algorithm \ref{alg:MWU}. We first bound the objective value at $\xxhat$. Suppose the algorithm terminates in $T = \alpha^{-1}\epsilon^{-2}\log n$ primal steps and $K \leq \tau/\epsilon^2$ width reduction steps. We can assume this without loss of generality since otherwise we can just halt the algorithm after executing $\tau/\epsilon^2$ width reduction steps. In the end we will show that the algorithm executes $<\tau/\epsilon^2$ width reduction steps. 

We can now apply Lemma \ref{lem:ChangePhiMon} to get,
\[
\Phi\left(\ww^{(T,K)}\right) \le  n e^{2(1+\epsilon)\epsilon\alpha T + 2(1+\epsilon)\epsilon\tau^{-1}K}\leq ne^{2\frac{(1+\epsilon)\log n}{\epsilon}+(1+\epsilon)} \leq ne^{\frac{2(1+\epsilon)}{\epsilon}(1 +\log n)} = n^{O\left(\frac{1}{\epsilon} \right)}.
\]
We next observe from the weight and $\xx$ update steps in our algorithm that, $\ww^{(T,K)} \geq |\xx^{(T)}|/T$.
\[
\ww_e^{(T,K)} \geq  \Pi_{i\geq 0}\left(1+\epsilon\alpha|\CC\Delta^{(i,k)}-\dd|_e\right)\geq \exp\left((1-\epsilon)\epsilon\alpha \sum_{i\geq 0}|\CC\Delta^{(i,k)}-\dd|_e\right).
\]

In the last inequality we used that $\|\CC\Delta^{(i,k)}-\dd\|_{\infty}\leq \tau$ and $\alpha\tau \leq 1$ along with $\exp(\epsilon(1-\epsilon)x) \leq(1+\epsilon x)$ for $0<\epsilon <1/2$ and $0\leq x\leq 1$. We also know that $\ww_e^{(T,K)}\leq \Phi(\ww^{(T,K)}) \leq ne^{(1+\delta)(1+\epsilon)\epsilon\alpha T + (1+\delta)(1+\epsilon)/\epsilon }$
Now, 
\[
\|\CC\xxhat-\dd\|_{\infty}\leq \frac{1}{T}\|\sum_{t=1}^T|\CC\Delta^{(t,k)}-\dd|\|_{\infty} \leq \frac{(1+\epsilon)(1+\delta)}{1-\epsilon} + \frac{(1+\epsilon)(1+\delta)}{\epsilon \alpha T (1-\epsilon)} + \frac{\log n}{(1-\epsilon)\epsilon \alpha T} \leq 1+10\epsilon.
\]
We have shown that if the number of width reduction steps is bounded by $K$ then our algorithm returns the required solution. We will next prove that we cannot have more than $K$ width reduction steps.

Suppose to the contrary, the algorithm takes a width reduction step starting from step $(i,k)$ where $i < T$ and  $k = \tau/\epsilon^2$. Since the conditions for Lemma~\ref{lem:ChangePhiMon} hold for all preceding steps, we must have $\Phi\left(\ww^{(i,k)}\right) \leq n^{O\left(\frac{1}{\epsilon} \right)}$ which combined with Lemma \ref{lem:PsiPhi} implies $\Psi \leq (1+\epsilon)n^{O\left(\frac{1}{\epsilon} \right)}$. Let $L = \dd^{\top}\left(\II - \CC^{\top}(\CC^{\top}\CC)^{-1}\CC\right)\dd$ denote a lower bound on $\Psi(\rr^{(0,0)})$. Using this bound, from lemma \ref{lem:ChangePsiMon},

\[
 {\Psi\left({\rrbar^{(i,k+1)}}\right)} \geq {\Psi\left({\rr^{(0,0)}}\right)}\left(1  +  \frac{\epsilon^2 \tau^2}{4(1+\epsilon)(1+\delta) n}\right)^k\geq  L\left(1  +  \frac{\epsilon^2 \tau^2}{4(1+\epsilon)(1+\delta) n}\right)^k .
\]
Therefore,
\[
n^{O\left(\frac{1}{\epsilon} \right)}\geq  L \left(1  +  \frac{\epsilon^2 \tau^2}{(1+\epsilon)(1+\delta) n}\right)^k.
\]
which is a contradiction if $K> \tau/\epsilon^2$ for the set value of $\tau =  \Theta\left(\frac{n^{\frac{1}{3}}}{\epsilon^{\frac{1}{3}}} \log\frac{n}{L}\right)$. We can thus conclude that we can never have more than $K = \tau/\epsilon^2$ width reduction steps, thus concluding the correctness of the returned solution. The total number of iterations is at most,
\[
T + K \leq \alpha^{-1}\epsilon^{-2}\log n+  \tau \epsilon^{-2} = \Theta\left(n^{1/3}\epsilon^{-7/3}\log^2\frac{n}{L}\right).
\]
\end{proof}

\section{Guarantees of Algorithm~\ref{alg:non_monotone_accel_robust}: Robust Primal Step and Stable Width Reduction Step}\label{sec:Appendix_non_monotone_robust}

\subsection{Starting Point: Non-Monotone MWU Algorithm}
We first present the starting point from where we developed our final algorithm. We extend the algorithm in~\cite{madry2016computing} to general $\ell_{\infty}$-regression. This gives an algorithm which converges in $\wt{O}(n^{1/3} \poly(\epsilon^{-1}))$ iterations and has weights that are not monotonically increasing. The algorithm is presented as Algorithm~\ref{alg:non_monotone_accel_opt} and we have moved the analysis to the end in Appendix~\ref{sec:NonMonMWU} since we do not really use this algorithm directly. The analysis is just kept for completeness. In particular we prove:

\begin{restatable}{theorem}{NonMonOpt}\label{thm:NonMonotoneMWUAcc}
Let $\eta \leq 1/6$. There is an algorithm that does not update the weights monotonically (Algorithm~\ref{alg:non_monotone_accel_opt}, and with input $\begin{bmatrix}
    \CC\\ -\CC
\end{bmatrix}, \begin{bmatrix}
    \dd\\ -\dd
\end{bmatrix}, \epsilon$) returns $\xxhat$ such that $\|\CC\xxhat-\dd\|_{\infty}\leq 1+O(\epsilon)$ in at most $\widetilde{O}\left((n^{1/2-\eta} + n^{2\eta})\epsilon^{-7/3}\right)$ iterations. Each iteration solves a system of linear equations.
\end{restatable}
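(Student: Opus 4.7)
I would mirror the structure of the convergence analysis in Section~\ref{sec:MWUMonotone} for Algorithm~\ref{alg:MWU}, tracking the potentials $\Phi(\ww) = \|\ww\|_1$ and $\Psi(\rr) = \min_\Delta \sum_e \rr_e (\CC\Delta - \dd)_e^2$ defined in~\eqref{eq:defPhi}--\eqref{eq:defPsi}, but with the modifications required because the non-monotone update of Madry \cite{madry2016computing} uses the signed step $1 + \epsilon \overrightarrow{\alpha}^{(i,k)} (\CC\Delta^{(i,k)} - \dd)$ rather than the absolute-value step of Algorithm~\ref{alg:MWU}. The high-level flow is: (i) bound the multiplicative growth of $\Phi$ per primal and per width-reduction step; (ii) bound the multiplicative growth of $\Psi$ from below per width-reduction step; (iii) use $\Psi \lesssim \Phi$ from Lemma~\ref{lem:PsiPhi} to upper bound the number $K$ of width-reduction steps; (iv) use the monotonicity of $\ww_e$ along the trajectory of \emph{large} contributions together with the final size of $\Phi$ to bound $\|\CC\xxhat - \dd\|_{\infty}$; (v) balance $\alpha, \tau, \rho$ to obtain the claimed iteration count.

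\textbf{Change in $\Phi$ per primal step.} Using $\ln(1+x) \leq x + x^2$ on the multiplicative factor $1 + \epsilon \overrightarrow{\alpha}^{(i,k)}_e (\CC\Delta - \dd)_e$, I get
\begin{equation*}
\ln \Phi(\ww^{(i+1,k)}) - \ln \Phi(\ww^{(i,k)}) \;\lesssim\; \epsilon \alpha \cdot \frac{\sum_e \ww_e (\CC\Delta-\dd)_e}{\Phi(\ww)} + \epsilon^2 \alpha^2 \cdot \frac{\sum_e \ww_e (\CC\Delta-\dd)_e^2}{\Phi(\ww)} + \mathrm{h.o.t.}
\end{equation*}
Here the adaptive step size $\overrightarrow{\alpha}^{(i,k)}$ of Algorithm~\ref{alg:non_monotone_accel_opt} (whose precise formula mirrors Line~\ref{algline:updateWRobust} of Algorithm~\ref{alg:non_monotone_accel_robust}) absorbs the sign-dependent correction from $\ln(1+x)$. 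Unlike the monotone case, the first-order term is \emph{signed}: I bound it using the first-order optimality of $\Delta^{(i,k)}$ for $\Psi(\rrbar^{(i,k)})$, which gives $\sum_e \rr_e (\CC\Delta-\dd)_e \cdot \CC_e = 0$ and hence $|\sum_e \ww_e (\CC\Delta - \dd)_e| \lesssim \frac{\epsilon}{n}\|\ww\|_1 \cdot |\sum_e (\CC\Delta - \dd)_e|$, which is controlled by $\sqrt{\Phi \Psi}$ via Cauchy--Schwarz. The second-order term is $\leq \Psi(\rr)/\Phi(\ww) \leq e^{\epsilon + \delta}$ by Lemma~\ref{lem:PsiPhi}. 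Because we only execute the primal step when $\sum_e \rrbar_e |\uu_e|^3 \leq C_3 \rho \Psi$, the width parameter $\rho$ upper bounds the per-step growth, giving $\Phi(\ww^{(i+1,k)}) \leq \Phi(\ww^{(i,k)})(1 + O(\epsilon\alpha))$.

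\textbf{Change in $\Phi$ and $\Psi$ per width-reduction step.} The width-reduction step increases $\ww_e$ by a $(1+\epsilon)$ factor on a selected set $H$ of coordinates whose total $\rrbar$-mass is at most $\tau^{-1}\Psi$, so $\Phi$ grows by at most $1 + O(\epsilon/\tau) \cdot (\Psi/\Phi) \leq 1 + O(\epsilon/\tau)$. For the $\Psi$ growth, I apply Lemma~\ref{lem:PsiChange} to the coordinates in $H$, using that each such $e$ satisfies $|(\CC\Delta - \dd)_e| \gtrsim \rho$ (from the choice of $S$ on Line~\ref{algline:setSRobust} and the $\ell_3$-gating condition). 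This yields $\Psi(\rrbar^{(i,k+1)}) \geq \Psi(\rrbar^{(i,k)})(1 + \Omega(\epsilon \rho^2 / (n \tau)))$ just as in Lemma~\ref{lem:ChangePsiMon}, after accounting for the $\delta$-approximation slack in exactly the same way.

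\textbf{Bounding $T$ and $K$, and parameter balancing.} Combining the two $\Phi$ bounds over $T$ primal and $K$ width steps gives $\Phi(\ww^{(T,K)}) \leq n \exp(O(\epsilon\alpha T + \epsilon K/\tau))$. Setting $T = \alpha^{-1}\epsilon^{-2}\log n$ makes $\alpha \epsilon T = \log n$, which together with the lower bound $\ww_e^{(T,K)} \geq \exp((1-\epsilon)\epsilon\alpha \sum_i (\CC\Delta^{(i,k)} - \dd)_e)$ yields $\|\CC\xxhat - \dd\|_{\infty} \leq 1 + O(\epsilon)$ as in the proof of Theorem~\ref{thm:InfRegMainMonotone}. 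The bound on $K$ follows by contradiction: if $K > \tilde\Theta(n \epsilon^{-2}/\rho^2 \cdot \tau)$, then the $\Psi$ growth rate would force $\Psi > \Phi$, contradicting Lemma~\ref{lem:PsiPhi}. Choosing $\alpha = \widetilde{\Theta}(n^{-1/2+\eta}\epsilon)$, $\rho = \widetilde{\Theta}(n^{1/2-\eta}\epsilon^{-2/3})$, $\tau = \widetilde{\Theta}(n^{1/2+\eta})$ balances the two counts to $T = \widetilde{O}(n^{1/2-\eta}\epsilon^{-7/3})$ and $K = \widetilde{O}(n^{2\eta}\epsilon^{-7/3})$, matching the theorem.

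\textbf{Main obstacle.} The principal difficulty is controlling the first-order term $\sum_e \ww_e (\CC\Delta - \dd)_e$ in the $\Phi$ increase, which is signed and thus not directly bounded by Cauchy--Schwarz in the monotone style of \eqref{eq:CS}. The resolution relies crucially on the first-order optimality of the oracle output $\Delta^{(i,k)}$ with respect to the weighted $\ell_2$ problem, combined with the fact that the ``offset'' contribution $\frac{\epsilon}{n}\Phi(\ww) \cdot \sum_e (\CC\Delta - \dd)_e$ is dominated by the second-order $\Psi$ term precisely when the $\ell_3$-gating condition $\sum_e \rrbar_e|\uu_e|^3 \leq C_3 \rho \Psi$ holds. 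Verifying this interplay, and ensuring the adaptive $\overrightarrow{\alpha}^{(i,k)}$ correctly cancels the $O(\epsilon^2\alpha^2 (\uu_e)^2)$ terms in the Taylor expansion on both positive and negative coordinates, is the technical core of the analysis.
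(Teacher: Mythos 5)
Your overall skeleton (track $\Phi$ and $\Psi$, bound the $\Phi$-growth per step, bound the $\Psi$-growth per width-reduction step, compare against $\Psi\lesssim\Phi$, and read off the final error from the weights of the doubled system) matches the paper's proof in Appendix~\ref{sec:NonMonMWU}. However, you have misidentified where the non-monotonicity actually bites, and the step of your plan that bounds the number $K$ of width-reduction steps has a genuine gap.

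First, the ``main obstacle'' you describe is not one. The signed first-order term $\sum_e \ww_e\overrightarrow{\alpha}_e(\wt{\CC}\Delta-\wt{\dd})_e$ in the $\Phi$ update is bounded in absolute value by $\sum_e\ww_e\overrightarrow{\alpha}_e|\wt{\CC}\Delta-\wt{\dd}|_e \le \alpha_+\sqrt{\Phi\Psi}\le \alpha_+ e^{\epsilon/2}\Phi$ by exactly the same Cauchy--Schwarz step as in the monotone case (this is Lemma~\ref{lem:ChangePhiNMOpt}); no appeal to first-order optimality is needed. Moreover, the inference you draw from optimality is false in general: $\CC^{\top}\RR\uu=0$ does not force $\sum_e\rr_e\uu_e=0$ (hence does not control $\sum_e\ww_e\uu_e$ the way you claim) unless the all-ones vector lies in the column span of $\CC$. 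What the sign change genuinely costs is (i) positivity of the weights, which must be proved (Lemma~\ref{lem:positiveWopt}) via the width bound $\|\wt{\CC}\Delta-\wt{\dd}\|_\infty\le\|\wt{\CC}\Delta-\wt{\dd}\|_3\lesssim n^{1/3}\rho^{1/3}\epsilon^{-1/3}$ extracted from the $\ell_3$ gating condition together with $\alpha\rho^{1/3}\lesssim\epsilon^{1/3}n^{-1/3}$, and (ii) the fact that $\Psi$ itself is no longer monotone.

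Second---and this is the real gap---your bound on $K$ ``by contradiction'' is the monotone argument and does not survive non-monotonicity. Because weights can decrease in a primal step, $\Psi$ can decrease too; Lemma~\ref{lem:ChangePsiNMOpt} shows the per-primal-step loss can be as bad as a factor $(1-10\epsilon\alpha\rho)$, and the gating condition $\sum_e\rr_e|\uu_e|^3\le 2\rho\Psi$ is used precisely to control this loss, not merely the $\Phi$-growth as in your plan. Over $T$ primal steps the accumulated loss is $\exp(-\Theta(\epsilon\alpha\rho T))$, so the correct inequality is
\begin{align*}
n^{O(1/\epsilon)}\;\ge\;\Psi(\rr^{(0,0)})\cdot\exp\left\{\Theta(\epsilon^{4/3}n^{-2\eta})\,K-\Theta(\epsilon\alpha\rho)\,T\right\},
\end{align*}
which yields $K\le O(T)+\widetilde O(n^{2\eta}\epsilon^{-7/3})$ rather than the width-only bound you state. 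Ignoring the decrease term gives an over-optimistic (and incorrect) bound on $K$; including it is exactly what forces $\eta\le 1/6$ and produces the $n^{2\eta}$ term in the claimed iteration count. You would need to add the primal-step lower bound on $\Psi$ to your step (ii) and redo the parameter balancing accordingly.
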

\begin{algorithm}
\caption{Accelerated MWU algorithm with non-monotone weights}\label{alg:non_monotone_accel_opt}
\begin{algorithmic}[1]
\Procedure{MWU-NonMonotone}{$\wt{\CC}, \wt{\dd}, \epsilon$}
\State $\ww^{(0,0)} \gets 1_n, \quad \xx^{(0,0)} \gets 0_d$
\State $\alpha \gets \widetilde{\Theta}\left( n^{-1/2+\eta}\epsilon^{1/3}\right)$, $\alpha_+ \gets \alpha$, $\alpha_- \gets \alpha/(1+2\epsilon)$
\State $\tau \gets \widetilde{\Theta}\left(n^{1-4\eta}\epsilon^{-1/3}\right), \quad \rho \gets \widetilde{\Theta} \left(n^{1/2-3\eta}\right)$
\State $T \gets \alpha^{-1}\ln \frac{n}{\Psi_0}/\epsilon^2$ 
\State $i,k = 0$
\While{$i<T$}\label{algline:Primalopt}
\State $\rr^{(i,k)} \leftarrow \ww^{(i,k)} + \frac{\epsilon}{2n} \sum_{e} \ww^{(i,k)}_e$ 
\State $\Delta^{(i,k)} \leftarrow \arg\min_{\Delta} \sum_e \rr^{(i,k)}_e (\wt{\CC}\Delta-\wt{\dd})_e^2$ \Comment{$\Delta = (\wt{\CC}^{\top}\RR^{(i,k)}\wt{\CC})^{-1} \wt{\CC}^{\top} \RR^{(i,k)} \wt{\dd}$} \label{algline:linsysNM}
\State $\Psi(\rr^{(i,k)}) \gets \sum_e \rr^{(i,k)}_e (\wt{\CC}\Delta^{(i,k)}-\wt{\dd})_e^2$
\If{$\sum_e \rr_e^{(i,k)} |\wt{\CC}\Delta^{(i,k)}-\wt{\dd}|_e^3\leq 2\rho \Psi(\rr^{(i,k)}) $}\Comment{primal step}\label{algline:CheckPrimalNM}
\State $\overrightarrow{\alpha}^{(i,k)} _e = \begin{cases}
\alpha_+ & \text{ if } (\wt{\CC} \Delta^{(i,k)} - \wt{\dd})_e \geq 0\\
\alpha_- & \text{ else }
\end{cases}$
\State $\ww^{(i+1,k)}\gets \ww^{(i,k)}\left(1 + \epsilon \overrightarrow{\alpha}^{(i,k)} (\wt{\CC}\Delta^{(i,k)}-\wt{\dd})\right)$  
\State $\xx^{(i+1)} \gets \xx^{(i)} + \Delta^{(i,k)}$
\State $i \gets i + 1$
\Else\label{algline:Width}\Comment{width reduction step}
\State Let $S$ be the set of coordinates $e$ such that $|\wt{\CC}\Delta^{(i,k)}-\wt{\dd}|_{e}\geq \rho$
\State $H \subseteq S$ be maximal subset such that $\sum_{e\in H}\rr^{(i,k)}_e\leq \tau^{-1}\Psi(\rr^{(i,k)}) $ \label{algline:sum_r_in_H}
\State \indent For all $e\in H $, $\ww_e^{(i,k+1)}\gets (1+\epsilon)\ww_e^{(i,k)} + \frac{\epsilon^2}{n}\Phi(\ww^{(i,k)})$
\State \indent If $H \neq S$, for one $\bar{e}\in S\setminus H$, let $\gamma = \min\{1,\frac{\tau^{-1}}{ \rr^{(i,k)}_{\bar{e}}}\Psi(\rr^{(i,k)}) \}$
\State \indent \indent $\ww_{\bar{e}}^{(i,k+1)}\gets (1+\epsilon\gamma)\ww_{\bar{e}}^{(i,k)} + \frac{\epsilon^2 \gamma}{n}\Phi(\ww^{(i,k)})$ \label{algline:CaseBigR}
\State $k \gets k+1$
\EndIf
\EndWhile
\State \Return $\xx^{(T)}/T$
\EndProcedure
\end{algorithmic}
\end{algorithm}

Let us consider the width reduction steps. Since $\rr_e\geq \frac{\epsilon}{2n}\Phi(\ww)\geq \frac{\epsilon}{2n}\Psi(\ww)$, from the condition of the width steps we get that,
\[
\frac{\epsilon}{2n}\Psi(\rr) |H| \leq\tau^{-1}\Psi(\rr) \Rightarrow |H|\leq O(n^{4\eta}).
\]
The above calculations imply that for $\eta = 1/6$, the algorithm in the worst case can update $\approx n^{2/3}$ coordinates in every iteration. There are a total of $\approx n^{1/3}$ iterations. At an intuitive level, if we change the value of $\rr$ so many coordinates per iteration, and that too by a factor of $\epsilon$, then the entire idea of doing a lazy update is moot. Therefore, we require a new kind of width reduction steps that can schedule these steps in a way amenable to a lazy update schedule, which is what we do in the next section.

\subsection{Warm Up: Stable Width Reduction Step}\label{sec:NonMonStab}

As a warm-up, we first analyze the required stable algorithm, but do not introduce any sketching. In the next section, we will consider the changes to the analysis from this section which occur due to the introduction of sketching. We encourage the reader to go through this section in order to understand how our final algorithm works.

\begin{algorithm}
\caption{Accelerated MWU algorithm with non-monotone weights and stable steps}\label{alg:non_monotone_accel_stab}
\begin{algorithmic}[1]
\Procedure{MWU-NonMonotoneStable}{$\wt{\CC}, \wt{\dd}, \epsilon$}
\State $\ww^{(0,0)} \gets 1_{2n}, \quad \ov{\rr}^{(0,0)} \gets \rr^{(0,0)} \gets (1 + \epsilon) 1_{2n}, \quad \xx^{(0)} \gets 0_d$
\State $\alpha \gets n^{-1/2+\eta} \cdot \epsilon \cdot \log(n)^{-4/3} \log(\frac{n}{\Psi_0})^{-1/3} / 10$, $\alpha_+ \gets \alpha$, $\alpha_- \gets \alpha/(1+2\epsilon)$
\State $\tau \gets n^{1/2-\eta} \cdot \epsilon^{-4} \cdot \log(n)^8 \log(\frac{n}{\Psi_0})^2, \quad \rho \gets n^{1/2-3\eta} \cdot \epsilon^{-2} \cdot \log(n)^4 \log(\frac{n}{\Psi_0}), \quad \delta \gets \frac{\epsilon}{100}$
\State $T \gets \alpha^{-1}\epsilon^{-2}\log n $
\State $i,k = 0$
\While{$i<T$}\label{algline:Primalstab}
\State $\Delta^{(i,k)} \leftarrow \arg\min_{\Delta} \sum_e \rrbar^{(i,k)}_e (\wt{\CC}\Delta-\wt{\dd})_e^2$ \Comment{$\rrbar\approx_{\delta}\rr$}\label{algline:linsysStab}
\State $\Psi(\rrbar^{(i,k)}) \gets \sum_e \rrbar^{(i,k)}_e (\wt{\CC}\Delta^{(i,k)}-\wt{\dd})_e^2$
\If{$\sum_e \rrbar_e^{(i,k)} |\wt{\CC}\Delta^{(i,k)}-\wt{\dd}|_e^3\leq 2\rho\Psi(\rrbar^{(i,k)}) $}\Comment{primal step}
\State $\overrightarrow{\alpha}^{(i,k)}_e = \begin{cases}
\alpha_+ & \text{ if } (\wt{\CC} \Delta^{(i,k)} - \wt{\dd})_e \geq 0\\
\alpha_- & \text{ else }
\end{cases}$
\State $\ww^{(i+1,k)}\gets \ww^{(i,k)}\left(1 + \epsilon \overrightarrow{\alpha}^{(i,k)} (\wt{\CC}\Delta^{(i,k)}-\wt{\dd})\right)$ 
\State $\rr^{(i+1,k)} \leftarrow \ww^{(i+1,k)} + \frac{\epsilon}{2n} \sum_{e} \ww^{(i+1,k)}_e$
\State $\rrbar^{(i+1,k)}\gets$ {\sc SelectVector}$(\rr^{(i+1,k)}, i+1, \delta)$ \Comment{Algorithm~\ref{alg:select_vector}}
\State $\xx^{(i+1)} \gets \xx^{(i)} + \Delta^{(i,k)}$
\State $i \gets i + 1$
\Else\Comment{width reduction step}
\State Let $S$ be the set of coordinates $e$ such that $|\wt{\CC}\Delta^{(i,k)}-\wt{\dd}|_{e}\geq \rho$
\State $H \subseteq S$ be maximal subset such that $\sum_{e\in H}\rrbar_e^{(i,k)}\leq \tau^{-1}\Psi(\rrbar^{(i,k)}) $
\If{$H \neq S$}\label{algline:if_H_neq_S}
\State Pick any $\bar{e}\in S\setminus H$.
\State For all $e\in H \cup \{\ov{e}\} $, $\ww_e^{(i,k+1)}\gets (1+\epsilon)\ww_e^{(i,k)} + \frac{\epsilon^2}{2n}\Phi(\ww^{(i,k)})$ \label{algline:width_update_H_neq_S}
\State $\rr^{(i,k+1)}\gets \ww^{(i,k+1)} + \frac{\epsilon}{2n}\Phi(\ww^{(i,k+1)})$
\State For all $e\in H \cup \{\ov{e}\} $, $\ov{\rr}_e^{(i,k+1)} \gets \rr_e^{(i,k+1)}$ \label{algline:width_update_rrbar_H_neq_S}
\Else
\For{$\zeta = \rho, 2 \rho, 4 \rho, \cdots, 2^{c_\rho} \rho$} %2\sqrt{n/\epsilon}
\State \Comment{$c_{\rho}$ is defined to be the smallest integer $c$ that satisfies $2^c \rho \geq \sqrt{n/\epsilon}$}
%\shunhua{TODO: should be smallest integer $c$ that satisfies $2^c \cdot \rho \geq 2\sqrt{n/\epsilon}$.}
\State Define the set $H_\zeta = \{e \in H \mid |\wt{\CC} \Delta^{(i,k)} - \wt{\dd}|_e \in [\zeta, 2 \zeta)\}$.
\State If $\sum_{e \in H_{\zeta}} \ov{\rr}^{(i,k)}_e |\wt{\CC} \Delta^{(i,k)} - \wt{\dd}|_e^3 \geq \frac{\rho \Psi(\ov{\rr}^{(i,k)})}{\log (\frac{n}{\epsilon \rho})}$, set $\zeta^* \gets \zeta$, and break.\label{algline:zeta*}
\EndFor
\State For all $e \in H_{\zeta^*}$, $\ww_e^{(i,k+1)}\gets (1+\epsilon)\ww_e^{(i,k)} + \frac{\epsilon^2}{2n}\Phi(\ww^{(i,k)})$\label{algline:update_e_in_zeta*}
\State $\rr^{(i,k+1)}\gets \ww^{(i,k+1)} + \frac{\epsilon}{2n}\Phi(\ww^{(i,k+1)})$
\State For all $e\in H_{\zeta^*}$, $\ov{\rr}_e^{(i,k+1)} \gets \rr_e^{(i,k+1)}$ \label{algline:width_update_rrbar_H_eq_S}
\EndIf
\State $k \gets k+1$
\EndIf
\EndWhile
\State \Return $\xx^{(T)}/T$
\EndProcedure
\end{algorithmic}
\end{algorithm}

In this section we define $\wt{\CC} = \begin{bmatrix}
    \CC\\ -\CC
\end{bmatrix}$ and $\wt{\dd} = \begin{bmatrix}
    \dd\\ -\dd
\end{bmatrix}$. Our goal is to prove the following result. We will focus on the convergence and runtime in this section and defer the stability guarantees of the algorithm to Appendix~\ref{sec:StabilityMWU}.

\begin{restatable}{theorem}{StableAccMWU}\label{thm:StableAccMWU}
For $\eta\leq 1/10$, Algorithm~\ref{alg:non_monotone_accel_stab} with input ($\begin{bmatrix}
    \CC\\ -\CC
\end{bmatrix}, \begin{bmatrix}
    \dd\\ -\dd
\end{bmatrix}, \epsilon$) finds $\xxhat\in \mathbb{R}^d$ such that $\|\CC\xxhat-\dd\|_{\infty} \leq 1+O(\epsilon)$ in at most $T + K \leq \Otil(n^{1/2-\eta}\epsilon^{-4})$ iterations. Furthermore, the algorithm satisfies the following additional guarantees:
\begin{enumerate}
    \item In the width reduction step of the algorithm, the algorithm only requires to find at most $\Otil\left(n^{1/2+\eta}\right)$ large coordinates per iteration.
    \item The algorithm satisfies the following low-rank update scheme: There are at most $\frac{T+K}{2^{\ell}}$ number of iterations where $\ov{\rr}$ receives an update of rank $\Otil_{\epsilon}(n^{1/5} 2^{2 \ell})$.
\end{enumerate}
\end{restatable}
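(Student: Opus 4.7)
The plan is to mirror the potential-function analysis of Algorithm~\ref{alg:non_monotone_accel_opt} (Theorem~\ref{thm:NonMonotoneMWUAcc}, analyzed in Appendix~\ref{sec:NonMonMWU}), but adapted to the lazy approximate weights $\rrbar \approx_\delta \rr$ and to the refined bucketed width-reduction rule of Algorithm~\ref{alg:non_monotone_accel_stab}. I would track $\Phi(\ww) = \|\ww\|_1$ and $\Psi(\rrbar)$, upper bound $\Phi$ under primal steps, lower bound $\Psi$ under width-reduction steps, and couple the two via $\Psi(\rrbar) \leq e^{\epsilon+\delta}\Phi(\ww)$ (Lemma~\ref{lem:PsiPhi}). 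The final averaged iterate $\xxhat = \xx^{(T)}/T$ satisfies $\|\CC\xxhat - \dd\|_\infty \leq 1 + O(\epsilon)$ by the standard averaging-and-comparison argument: any coordinate $e$ for which $\sum_i |\wt{\CC}\Delta^{(i,k_i)} - \wt{\dd}|_e$ is too large would force the paired weight $\ww_e + \ww_{-e}$ to exceed the $\Phi$ upper bound.

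For the primal step, the Madry-style cancellation between $e$ and its duplicated counterpart $-e$, present because we run on $\wt{\CC} = [\CC;-\CC]$ with $\alpha_+ = \alpha$ and $\alpha_- = \alpha/(1+2\epsilon)$, eliminates the linear term in $\ww_e + \ww_{-e}$, giving a per-step growth factor $\Phi(\ww^{(i+1,k)}) \leq \Phi(\ww^{(i,k)}) \cdot (1 + O(\epsilon^2\alpha^2))$, with the $\rrbar \approx_\delta \rr$ slack absorbed as an $e^{O(\delta)}$ factor since $\Delta^{(i,k)}$ is the \emph{exact} minimizer with respect to $\rrbar$. Over $T = \tilde\Theta(\alpha^{-1}\epsilon^{-2}\log n)$ primal steps this yields $\Phi(\ww^{(T,k)}) \leq n^{O(1/\epsilon)}$.

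The main obstacle is the width-reduction analysis, especially the $H = S$ branch where we update only the single magnitude bucket $H_{\zeta^*}$. In the $H \neq S$ branch (Line~\ref{algline:if_H_neq_S}), the greedy rule guarantees $\sum_{e \in H \cup \{\bar e\}}\rrbar_e > \tau^{-1}\Psi(\rrbar)$ with every updated coordinate satisfying $|\wt{\CC}\Delta - \wt{\dd}|_e \geq \rho$; multiplying each $\rr_e$ in this set by $\geq 1 + \Omega(\epsilon)$ and invoking Lemma~\ref{lem:PsiChange} produces $\Psi(\rrbar^{(i,k+1)}) \geq \Psi(\rrbar^{(i,k)})\cdot(1 + \Omega(\epsilon\rho^2/\tau))$. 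For the $H = S$ branch I would first observe that $\rrbar_e(\wt{\CC}\Delta - \wt{\dd})_e^2 \leq \Psi(\rrbar)$ and $\rrbar_e \geq \tfrac{\epsilon}{2n}\Phi(\ww) \gtrsim \tfrac{\epsilon}{n}\Psi(\rrbar)$ together force $|\wt{\CC}\Delta - \wt{\dd}|_e \leq O(\sqrt{n/\epsilon})$ for every $e$, so the range $[\rho, O(\sqrt{n/\epsilon}))$ covering $S$ splits into $O(\log(n/(\epsilon\rho)))$ geometric buckets. Coordinates with $|\cdot|_e < \rho$ contribute at most $\rho\Psi$ to the cubic sum, hence the width-step trigger $\sum_e \rrbar_e|\cdot|_e^3 \geq 2\rho\Psi$ forces $\sum_{e \in S}\rrbar_e|\cdot|_e^3 \geq \rho\Psi$, and pigeonhole produces a bucket $H_{\zeta^*}$ with cubic mass $\geq \rho\Psi/\log(n/(\epsilon\rho))$. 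Applying Lemma~\ref{lem:PsiChange} to its weight update then yields $\Psi(\rrbar^{(i,k+1)}) \geq \Psi(\rrbar^{(i,k)})\cdot(1 + \Omega(\epsilon\rho/(\zeta^*\log n))) \geq \Psi(\rrbar^{(i,k)})\cdot(1 + \tilde\Omega(\epsilon^{3/2}\rho/\sqrt n))$ since $\zeta^* \leq O(\sqrt{n/\epsilon})$. The delicate bookkeeping is ensuring the $\ell_3$-to-$\ell_2$ conversion survives the $\rrbar \approx_\delta \rr$ slack and that the bucketing does not miss mass that has been absorbed into the lazily-updated portion of $\rrbar$.

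Combining the upper bound $\Psi \leq e^{O(\epsilon)}\Phi \leq n^{O(1/\epsilon)}$ with the per-width-step lower bound $\Psi \geq \Psi_0\cdot(1 + \tilde\Omega(n^{-3\eta}\poly(\epsilon)))^K$ gives $K \leq \Otil(n^{3\eta}\poly(\epsilon^{-1}))$, which for $\eta \leq 1/10$ is dominated by $T = \Otil(n^{1/2-\eta}\poly(\epsilon^{-1}))$, proving $T + K \leq \Otil(n^{1/2-\eta}\epsilon^{-4})$. The two remaining guarantees follow directly. The updated set in each width step lies in $H \cup \{\bar e\}$, and $\sum_{e \in H}\rrbar_e \leq \tau^{-1}\Psi \leq \tau^{-1}e^{\epsilon+\delta}\Phi$ with $\rrbar_e \geq \tfrac{\epsilon}{2n}\Phi$ gives $|H| \leq \Otil(n/(\tau\epsilon)) \leq \Otil(n^{1/2+\eta})$, yielding the heavy-hitter count. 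For the low-rank schedule I would establish $\ell_2$-stability of primal-step weights via $\sum_e \ln^2(\rr^{(i+1,k)}_e/\rr^{(i,k)}_e) \leq O(\epsilon^2\alpha^2\sum_e(\wt{\CC}\Delta - \wt{\dd})_e^2) \leq \Otil(n^{2\eta})$, bounding $\sum_e(\wt{\CC}\Delta - \wt{\dd})_e^2 \leq O(n/\epsilon)$ via the same $\rrbar_e \geq \tfrac{\epsilon}{2n}\Phi$ and $\Psi \leq O(\Phi)$; plugging $\zeta = \Otil(n^{2\eta})$ into Lemma~\ref{lem:LowRankL2} produces the dyadic schedule with per-level rank $\Otil_\epsilon(n^{2\eta}\cdot 2^{2\ell}) \leq \Otil_\epsilon(n^{1/5}\cdot 2^{2\ell})$ for $\eta \leq 1/10$, and the $K$ width-step updates, each of size $\leq \Otil(n^{1/2+\eta})$, absorb into the same schedule since $K \ll T/2^\ell$ for all dyadic levels $\ell$ of interest.
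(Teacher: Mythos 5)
Your overall architecture (track $\Phi$ and $\Psi$, bound $K$ by comparing $\Psi$'s growth against $\Psi\le e^{O(\epsilon)}\Phi\le n^{O(1/\epsilon)}$, then read off the heavy-hitter count and the $\ell_2$-stable primal schedule) matches the paper, and your treatment of the $H\neq S$ branch, of $|H|\le \Otil(n/(\tau\epsilon))$, and of the primal $\ell_2$-stability is essentially Lemmas~\ref{lem:size_H} and the stability bounds of Section~\ref{sec:stability_algo_warm_up}. But there is a central gap: you only lower-bound $\Psi$ across width-reduction steps and then telescope $\Psi_0\prod(\text{width gains})\le \Psi_{\max}$. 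That telescoping is invalid here because the weights are \emph{non-monotone}: primal steps (and the lazy re-syncs of $\rrbar$ by \textsc{SelectVector}) can \emph{decrease} $\Psi$, and the naive total decrease over all $T$ primal steps is $\exp(-\wt{\Omega}(\epsilon\alpha\rho T))$, which is polynomially large and swamps any upper bound on the product of width gains. The paper's Lemma~\ref{lem:ChangePsiStab} spends most of its effort exactly here: it amortizes the primal decreases over blocks of $L=\Theta(1/(\epsilon\alpha\rho\log^4 n))$ primal steps (using the $j_{i,e}$/$\ell_{i,e}$ bookkeeping to charge deferred lazy updates back to the iterations that caused them), shows the decrease within one block is at most a constant factor, and bounds the width steps per block; summing over $T/L$ blocks gives $K=\Otil(n^{1/3}\rho^{1/3}\epsilon^{-10/3})=\Otil(n^{1/2-\eta}\epsilon^{-4})$ — the same order as $T$, not your claimed $\Otil(n^{3\eta})$.

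A second, related gap concerns guarantee (2). In the $H=S$ branch you keep only the gain $\gtrsim\epsilon\rho/(\zeta^*\log)$ and worst-case $\zeta^*\le O(\sqrt{n/\epsilon})$, discarding all dependence on $|H_{\zeta^*}|=\textsc{Size}(k)$. The paper combines that bound with the complementary one $\gtrsim \epsilon^2|H_{\zeta^*}|(\zeta^*)^2/n$ via $a+b\ge a^{1/3}b^{2/3}$ to obtain a gain $\gtrsim \epsilon^{4/3}\rho^{2/3}\textsc{Size}(k)^{1/3}/n^{1/3}$, and this size-dependence is what Lemma~\ref{lem:low_rank_update_width_reduction_steps} uses to show that width steps with large update size are correspondingly rare. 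Your fallback — ``each width update has size $\le\Otil(n^{1/2+\eta})$ and $K\ll T/2^{\ell}$'' — fails at the top dyadic levels: a rank-$n^{1/2+\eta}$ update needs $n^{1/5}2^{2\ell}\ge n^{1/2+\eta}$, i.e.\ $2^{\ell}\ge n^{1/5}$, where the budget $T/2^{\ell}\le n^{1/5}$ is far below any bound on $K$ you could establish. Finally, a smaller error: the claimed second-order primal growth $\Phi^{(i+1,k)}\le\Phi^{(i,k)}(1+O(\epsilon^2\alpha^2))$ cannot be right — fed into your own averaging argument it would yield $\|\CC\xxhat-\dd\|_\infty\le O(\epsilon)<1=\mathrm{OPT}$. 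The correct and sufficient bound is the first-order one $(1+\epsilon\alpha_+e^{\epsilon+\delta})$ obtained by Cauchy–Schwarz as in Lemma~\ref{lem:ChangePhiStab}; the $\alpha_+/\alpha_-$ asymmetry is needed only in the final exponential lower bound on $\ww_++\ww_-$, not in the $\Phi$ growth.
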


We first define some notation and basic properties.
\subsubsection{Definitions and Basic Properties}
\paragraph{Potentials.}
Recall that we had defined the two potentials of interest as, 
\begin{align*}
\Phi(\ww) = \|\ww\|_1, ~~~ \Psi(\rr) = \min_{\Delta\in \R^d }\sum_{e=1}^{2n} \rr_e (\CC\Delta-\dd)^2_e.
\end{align*}
Also recall that by Lemma~\ref{lem:PsiPhi}, for $\rr = \ww + \frac{\epsilon}{2n}\|\ww\|_1$, and $\rrbar\approx_{\delta}\rr$, we always have $\Psi(\rrbar) \approx_{\delta} \Psi(\rr)$, and $\Psi(\rrbar) \leq e^{\epsilon+\delta} \Phi(\ww)$.

\paragraph{Lazy updates for primal steps.}
Recall that in Algorithm~\ref{alg:select_vector}, for any primal step $i$ and any $e \in [2n]$, we defined $\textsc{LastWidth}(i,e)$ to be the largest $i' \leq i$ such that the algorithm executed a width reduction step from $(i',k)$ to $(i',k+1)$ during which the weight of $e$ is updated, i.e., $\ww_e^{(i',k+1)} \neq \ww_e^{(i',k)}$. 

For any primal step $i$ and any $e \in [2n]$, we also define $\textsc{Last}(i,e)$ to be the largest $i'' \leq i$ such that $\ov{\rr}_e$ was updated by $\textsc{SelectVector}$ in primal step $i''-1$, i.e., $\ov{\rr}_e^{(i'', k_{i''-1})} \neq \ov{\rr}_e^{(i''-1, k_{i''-1})}$.

For any primal step $i$, we define $S_i \subseteq [2n]$ to be the set of coordinates that are being updated by $\textsc{SelectVector}$ in the $i$-th primal step.

Finally, for any primal step $i$ and any $e \in S_i$, we also define $\ell_{i,e}$ to be the smallest integer $\ell$ such that $i \equiv 0 \pmod{2^{\ell}} $ and $\left|\ln\left(\frac{\rr_e^{(i)}}{\rr_e^{(i-2^\ell)}}\right)\right| \geq \frac{\delta}{2 \log n}$. 

\paragraph{Remark:} We note that $\rr = \ww + \frac{\epsilon}{2n}\Phi(\ww)$. Now in our lazy update scheme, $\rr$ can also change due to changes in $\Phi(\ww)$. We claim that we do not need to consider the changes in $\rr$ due to the change in $\Phi$ in the lazy update scheme. This is because, the change in $\Phi$ contributes enough only when the change is $\approx n$ which can happen only $\Otil_{\epsilon}(1)$ times (Refer to Lemma~\ref{lem:ChangePhiStab}), or once every $\Otil_{\epsilon}(\alpha^{-1})$ iterations, and in such cases, the algorithm can reset the values of $\rr$ for all coordinates.

\paragraph{Sizes of width reduction steps.}
We say a width reduction step has size $s$ if it updates the weight of $s$ coordinates, and we denote the size of the $k$-th width reduction step as $\textsc{Size}(k)$. For example, if in the $k$-th width reduction step on Line~\ref{algline:if_H_neq_S} of Algorithm~\ref{alg:non_monotone_accel_stab} we have that $H \neq S$, then we have $\textsc{Size}(k) = |H|+1$ by Line~\ref{algline:width_update_H_neq_S} of Algorithm~\ref{alg:non_monotone_accel_stab}.

\subsubsection{Change in \texorpdfstring{$\Phi$}{Phi}}

\begin{lemma}[Change in $\Phi$ for Algorithm~\ref{alg:non_monotone_accel_stab}]\label{lem:ChangePhiStab}
After $i$ primal steps, and $k$ width-reduction steps, if %$\alpha \geq 2 \tau^{-3/2}$ and 
$\alpha \rho^{1/3}\leq \frac{\epsilon^{1/3}}{10 n^{1/3}}$, the potential $\Phi$ is bounded as follows:
\begin{align*}
\Phi\left(\ww^{(i,k)}\right) \leq \Phi(\ww^{(0,0)}) \cdot \left(1+\epsilon\alpha_+ e^{\epsilon+\delta} \right)^i \cdot \left( 1 + \epsilon e^{\epsilon+2\delta} \cdot (\tau^{-1} + \rho^{-2}) \right)^k.
\end{align*}
Furthermore, after every primal step, the potential can decrease by at most,
 \[
 \Phi\left(\ww^{(i+1,k)}\right) \geq  \Phi\left(\ww^{(i,k)}\right)\left(1-\epsilon\alpha_+ e^{\epsilon+\delta}\right).  
 \]
\end{lemma}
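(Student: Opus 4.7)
The plan is to proceed by induction on the total number of steps $(i,k)$, showing that the per-step multiplicative change in $\Phi$ is bounded as claimed, and then multiplying these factors to obtain the compound statement. The base case $i=k=0$ is immediate, so the whole argument reduces to analysing one primal step and one width-reduction step in isolation.

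For a primal step from $(i,k)$ to $(i+1,k)$, I would start from the identity
\[
\Phi(\ww^{(i+1,k)}) - \Phi(\ww^{(i,k)}) = \epsilon \sum_e \overrightarrow{\alpha}^{(i,k)}_e \, \ww_e^{(i,k)} (\wt{\CC}\Delta^{(i,k)} - \wt{\dd})_e
\]
obtained by expanding the non-monotone weight update. Since $\alpha_- \le \alpha_+$ and the sign-asymmetric choice of $\overrightarrow{\alpha}^{(i,k)}_e$ enforces $|\overrightarrow{\alpha}^{(i,k)}_e (\wt{\CC}\Delta^{(i,k)} - \wt{\dd})_e| \le \alpha_+ |(\wt{\CC}\Delta^{(i,k)} - \wt{\dd})_e|$, the absolute change in $\Phi$ is at most $\epsilon\alpha_+ \sum_e \ww_e |(\wt{\CC}\Delta^{(i,k)} - \wt{\dd})_e|$. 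Mimicking the Cauchy--Schwarz step of Lemma~\ref{lem:ChangePhiMon}, then using $\ww_e \le \rr_e \le e^{\delta}\rrbar_e$ and the bound $\Psi(\rrbar) \le e^{\epsilon+\delta}\Phi(\ww)$ from Lemma~\ref{lem:PsiPhi}, I would deduce $|\Phi^{(i+1,k)} - \Phi^{(i,k)}| \le \epsilon\alpha_+ e^{\epsilon+\delta} \Phi^{(i,k)}$. This produces both the upper factor $(1 + \epsilon\alpha_+ e^{\epsilon+\delta})$ per primal step and the matching lower factor $(1 - \epsilon\alpha_+ e^{\epsilon+\delta})$.

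For a width-reduction step from $(i,k)$ to $(i,k+1)$, the key algebraic observation is that since $\rr_e = \ww_e + \tfrac{\epsilon}{2n}\Phi(\ww)$, each update $(1+\epsilon)\ww_e + \tfrac{\epsilon^2}{2n}\Phi(\ww)$ increases the weight by exactly $\epsilon\rr_e$. Hence $\Phi$ grows by $\epsilon \sum_{e \in U} \rr_e$, where $U = H \cup \{\bar e\}$ in the $H \ne S$ branch and $U = H_{\zeta^*} \subseteq H$ in the $H = S$ branch. I would bound $\sum_{e \in U} \rrbar_e \le (\tau^{-1}+\rho^{-2})\Psi(\rrbar)$: the $\tau^{-1}$ term is maximality of $H$, and the extra $\rho^{-2}$ term (only needed in the $H \ne S$ branch) follows from $\bar e \in S$, giving $|(\wt{\CC}\Delta - \wt{\dd})_{\bar e}| \ge \rho$ together with the trivial inequality $\rrbar_{\bar e} (\wt{\CC}\Delta - \wt{\dd})_{\bar e}^2 \le \Psi(\rrbar)$. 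Converting back via $\rr_e \le e^{\delta}\rrbar_e$ and $\Psi(\rrbar) \le e^{\epsilon+\delta}\Phi(\ww)$ delivers the $(1 + \epsilon e^{\epsilon+2\delta}(\tau^{-1}+\rho^{-2}))$ factor; since width steps only increase $\Phi$, they play no role in the primal-step lower bound.

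The main subtlety (rather than obstacle) is verifying that the non-monotone primal update never drives any $\ww_e$ negative, which is where the hypothesis $\alpha\rho^{1/3} \le \epsilon^{1/3}/(10n^{1/3})$ enters. The primal-step gate $\sum_e \rrbar_e |(\wt{\CC}\Delta - \wt{\dd})_e|^3 \le 2\rho\Psi(\rrbar)$, together with $\rrbar_e \ge \tfrac{\epsilon}{2n}e^{-\delta}\Phi(\ww)$ and $\Psi(\rrbar) \le e^{\epsilon+\delta}\Phi(\ww)$, yields a coordinate-wise cap $|(\wt{\CC}\Delta - \wt{\dd})_e| = O\bigl((n^2\rho/\epsilon^2)^{1/3}\bigr)$, so the assumed bound on $\alpha\rho^{1/3}$ forces $\epsilon\alpha_-|(\wt{\CC}\Delta - \wt{\dd})_e| < 1$ and thereby preserves positivity of the multipliers $1 + \epsilon\overrightarrow{\alpha}^{(i,k)}_e (\wt{\CC}\Delta - \wt{\dd})_e$. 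With positivity in hand and the per-step bounds established above, the lemma follows by a direct induction.
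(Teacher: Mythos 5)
Your proposal is correct and follows essentially the same route as the paper's proof: the same expansion of $\Phi$ and Cauchy--Schwarz bound via Lemma~\ref{lem:PsiPhi} for the primal step, the same observation that each width-reduction update adds exactly $\epsilon\rr_e$, the same split of $\sum_{e\in U}\rrbar_e$ into the $\tau^{-1}$ term (maximality of $H$) and the $\rho^{-2}$ term (from $\bar e\in S$), and the same positivity argument via the primal gate and the hypothesis on $\alpha\rho^{1/3}$ (cf.\ Lemma~\ref{lem:positiveWopt}). One minor slip: the coordinate-wise cap should read $O\bigl((n\rho/\epsilon)^{1/3}\bigr)$ rather than $O\bigl((n^2\rho/\epsilon^2)^{1/3}\bigr)$ --- the derivation you describe does yield the former, which is what is needed for $\epsilon\alpha_-|(\wt{\CC}\Delta-\wt{\dd})_e|$ to stay below a small constant.
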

\begin{proof}
First note that we always have $\ww_e^{(i,k)} > 0$ from the same proof as Lemma~\ref{lem:positiveWopt}.

\noindent {\bf Primal Step.} If the next step is a \emph{primal} step, then 
\begin{align}\label{eq:helpPhi1opt}
\Phi\left(\ww^{( i+1,k)} \right) = &~ \norm{ \ww^{( i,k)} + \epsilon \overrightarrow{\alpha}^{(i,k)} (\wt{\CC}\Delta^{(i,k)}-\wt{\dd})\ww^{(i,k)}}_1 \notag \\
= &~ \|\ww^{(i,k)}\|_1 + \epsilon \sum_e \ww_e^{(i,k)} \overrightarrow{\alpha}_e^{(i,k)} (\wt{\CC}\Delta^{(i,k)}-\wt{\dd})_e.
\end{align}
We first bound $\sum_e \ww_e^{(i,k)} \cdot \overrightarrow{\alpha}_e^{(i,k)} \cdot (\wt{\CC} \Delta^{(i,k)} - \wt{\dd})_e$. Using Cauchy-Schwarz inequality, we have
\begin{align*}
\sum_e \ww_e^{(i,k)} \cdot \overrightarrow{\alpha}_e^{(i,k)} \cdot |\wt{\CC} \Delta^{(i,k)} - \wt{\dd}|_e \leq &~ \sqrt{\Big(\sum_e \ww_e^{(i,k)} (\overrightarrow{\alpha}_e^{(i,k)})^2\Big) \cdot \Big(\sum_e \ww_e^{(i,k)} \cdot (\wt{\CC} \Delta^{(i,k)} - \wt{\dd})_e^2\Big)} \\
\leq &~ \alpha_+ \cdot \sqrt{e^{\delta} \cdot \Phi(\ww^{(i,k)}) \cdot \Psi(\rrbar^{(i,k)})} \\
\leq &~ \alpha_+ \cdot e^{\epsilon+\delta} \cdot \Phi(\ww^{(i,k)}),
\end{align*}
where the second step follows from $\alpha_- < \alpha _+$ and $\ww_e^{(i,k)} \leq e^{\delta} \cdot \rrbar_e^{(i,k)}$, the third step follows from Lemma~\ref{lem:PsiPhi} that $\Psi(\rrbar^{(i,k)}) \leq e^{\epsilon+\delta} \cdot \Phi(\ww^{(i,k)})$.

Now, from Equation~\eqref{eq:helpPhi1opt}, and $\ww_e^{(i,k)} > 0$, we have that $\Phi(\ww^{(i+1,k)})$ is between $\Phi(\ww^{(i,k)}) - \epsilon \sum_e \ww_e^{(i,k)} \overrightarrow{\alpha}_e^{(i,k)} |\wt{\CC}\Delta^{(i,k)}-\wt{\dd}|_e$ and $\Phi(\ww^{(i,k)}) + \epsilon \sum_e \ww_e^{(i,k)} \overrightarrow{\alpha}_e^{(i,k)} |\wt{\CC}\Delta^{(i,k)}-\wt{\dd}|_e$. 

Therefore, we get our bounds,
\[
\Phi\left(\ww^{( i,k)} \right)(1-\epsilon\alpha_+ e^{\epsilon+\delta}) \leq  \Phi\left(\ww^{( i+1,k)} \right) \leq  \Phi\left(\ww^{( i,k)} \right)(1+\epsilon\alpha_+ e^{\epsilon+\delta}).
\]

\noindent {\bf Width Reduction Step.}

When $H\neq S$, we have the following: 
\begin{align*}
 \Phi(\ww^{(i,k+1)}) 
& = \sum_{j \notin H\cup \{\bar{e}\}}  \ww_j^{(i,k)} + \sum_{j \in H \cup\{\ov{e}\}} \left( (1+\epsilon)\ww_j^{(i,k)} +\frac{\epsilon^2 }{2n}\Phi(\ww^{(i,k)})\right) \\
& = \Phi(\ww^{(i,k)}) + \epsilon \sum_{j \in H \cup \{\ov{e}\}} \rr_j^{(i,k)} \\
& \leq \Phi(\ww^{(i,k)})+ \epsilon e^{\delta} \sum_{j \in H}  \rrbar_j^{(i,k)} + \epsilon e^{\delta} \rrbar^{(i,k)}_{\bar{e}}\\
& \leq\Phi(\ww^{(i,k)}) + \epsilon e^{\delta} \tau^{-1} \Psi(\rrbar^{(i,k)}) + \epsilon e^{\delta} \rho^{-2} \Psi(\rrbar^{(i,k)}) \\
& \leq \Phi(\ww^{(i,k)}) + \epsilon e^{\epsilon+2\delta} (\tau^{-1} + \rho^{-2}) \Phi(\ww^{(i,k)}) 
\end{align*}
where the fourth step follows from $\sum_{e\in H}\rrbar_e^{(i,k)}\leq \tau^{-1}\Psi(\rrbar^{(i,k)})$ by the definition of $H$, and that $\ov{\rr}_{\ov{e}}^{(i,k)} \leq \frac{\Psi(\ov{\rr}^{(i,k)})}{(\wt{\CC} \Delta^{(i,k)} - \wt{\dd})_{\ov{e}}^2} \leq \frac{\Psi(\ov{\rr}^{(i,k)})}{\rho^2}$ since $\ov{e} \in S$ and so it satisfies $|\wt{\CC} \Delta^{(i,k)} - \wt{\dd}|_{\ov{e}} \geq \rho$, and the last step follows from $\Psi(\ov{\rr}^{(i,k)}) \leq e^{\epsilon+\delta} \cdot \Phi(\ww^{(i,k)})$ by Lemma~\ref{lem:PsiPhi}.

Now, when $H = S$, we only update weights in a set $H_{\zeta^*} \subseteq H$ (see Line~\ref{algline:update_e_in_zeta*} in Algorithm~\ref{alg:non_monotone_accel_stab}), 
\begin{align*}
\Phi(\ww^{(i,k+1)}) & = \sum_{j \notin H_{\zeta^*}}  \ww_j^{(i,k)} + \sum_{j \in H_{\zeta^*}} \left( (1+ \epsilon )\ww_j^{(i,k)} +\frac{ \epsilon^2 }{2n}\Phi(\ww^{(i,k)})\right) \\
& = \Phi(\ww^{(i,k)}) + \epsilon \sum_{j \in H_{\zeta^*}} \ov{\rr}_j^{(i,k)} \\
& \leq \Phi(\ww^{(i,k)}) + \epsilon \tau^{-1} \Psi(\ov{\rr}^{(i,k)}) \\
& \leq \Phi(\ww^{(i,k)}) + \epsilon e^{\epsilon+\delta} \cdot \tau^{-1} \Phi(\ww^{(i,k)}),
\end{align*}
where the third step follows from $\sum_{e\in H_{\zeta^*}}\rrbar_e^{(i,k)} \leq \sum_{e\in H}\rrbar_e^{(i,k)}\leq \tau^{-1}\Psi(\rrbar^{(i,k)}) $ by the definition of $H$, and the fourth step follows from $\Psi(\rrbar^{(i,k)}) \leq e^{\epsilon+\delta} \cdot \Phi(\ww^{(i,k)})$.

In both cases we have 
\[
\Phi(\ww^{(i,k+1)})\leq \Phi(\ww^{(i,k)})\left(1 + \epsilon e^{\epsilon+2\delta} \cdot (\tau^{-1} + \rho^{-2})\right).
\]
Also note that for a width reduction step, $\Phi(\ww^{(i,k+1)}) \geq \Phi(\ww^{(i,k)})$.
\end{proof}

\subsubsection{Change in \texorpdfstring{$\Psi$}{Psi}}
\begin{lemma}[Bound on $|H|$ in width reduction steps]\label{lem:size_H}
For any width reduction step, the size of $H$ satisfies $|H| \leq \frac{n}{\tau \epsilon} \cdot e^{\epsilon+2\delta}$.
\end{lemma}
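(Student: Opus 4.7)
\textbf{Proof plan for Lemma \ref{lem:size_H}.}

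The plan is to lower bound each summand $\rrbar_e^{(i,k)}$ appearing in the defining inequality of $H$, and then combine with the Lemma~\ref{lem:PsiPhi} upper bound $\Psi(\rrbar^{(i,k)}) \leq e^{\epsilon+\delta} \Phi(\ww^{(i,k)})$. Concretely, by construction of $\rr^{(i,k)} = \ww^{(i,k)} + \frac{\epsilon}{2n} \Phi(\ww^{(i,k)})$ and $\ww^{(i,k)} \geq 0$, we have $\rr_e^{(i,k)} \geq \frac{\epsilon}{2n} \Phi(\ww^{(i,k)})$ for every coordinate $e$, hence $\rrbar_e^{(i,k)} \geq e^{-\delta} \cdot \frac{\epsilon}{2n} \Phi(\ww^{(i,k)})$.

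Summing this lower bound over $e \in H$, and using the defining inequality $\sum_{e\in H}\rrbar_e^{(i,k)}\leq \tau^{-1}\Psi(\rrbar^{(i,k)})$ together with $\Psi(\rrbar^{(i,k)}) \leq e^{\epsilon+\delta}\Phi(\ww^{(i,k)})$, I get
\[
|H| \cdot e^{-\delta} \cdot \frac{\epsilon}{2n}\Phi(\ww^{(i,k)}) \;\leq\; \tau^{-1}\cdot e^{\epsilon+\delta}\Phi(\ww^{(i,k)}),
\]
and canceling $\Phi(\ww^{(i,k)}) > 0$ yields the desired bound (up to the factor of $2$ absorbed into $n$ via the doubling from $\wt{\CC}, \wt{\dd}$).

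There is essentially no obstacle here: the only subtlety is being careful that $\Phi(\ww^{(i,k)}) > 0$ so that cancellation is legal, which follows from $\ww^{(i,k)}_e > 0$ coordinatewise (this positivity is already established in the proof of Lemma~\ref{lem:ChangePhiStab}). The argument uses neither the fact that $H \subseteq S$ nor maximality of $H$; it only uses the uniform lower bound on $\rrbar_e^{(i,k)}$ enforced by the additive regularization $\frac{\epsilon}{2n}\Phi(\ww^{(i,k)})$ in the definition of $\rr^{(i,k)}$, and the relation between $\Psi$ and $\Phi$ from Lemma~\ref{lem:PsiPhi}.
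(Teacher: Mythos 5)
Your proposal is correct and matches the paper's proof essentially verbatim: both lower-bound each $\rrbar_e^{(i,k)}$ by $e^{-\delta}\frac{\epsilon}{2n}\Phi(\ww^{(i,k)})$ using the additive regularization, combine with the defining inequality $\sum_{e\in H}\rrbar_e^{(i,k)}\leq \tau^{-1}\Psi(\rrbar^{(i,k)})$ and the $\Psi\leq e^{\epsilon+\delta}\Phi$ relation from Lemma~\ref{lem:PsiPhi}, and cancel; whether one cancels $\Phi$ or $\Psi$ is immaterial. Your remark about the factor of $2$ from the doubled dimension is apt — the paper's own displayed chain in fact yields $\frac{2n}{\tau\epsilon}e^{\epsilon+2\delta}$ before silently absorbing the $2$.
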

\begin{proof}
Consider a width reduction step which updates $(i,k)$ to $(i,k+1)$. 

Since $\sum_{e\in H}\rrbar_e^{(i,k)}\leq \tau^{-1}\Psi(\rrbar^{(i,k)})$ and $\ov{\rr}_e^{(i,k)} \geq e^{-\delta} \rr_e^{(i,k)} \geq e^{-\delta} \frac{\epsilon}{2n} \Phi(\ww^{(i,k)}) \geq e^{-\epsilon-2\delta} \frac{\epsilon}{2n} \Psi(\ov{\rr}^{(i,k)})$, we have
\begin{align*}
&~ |H| \cdot e^{-\epsilon-2\delta} \cdot \frac{\epsilon}{2n} \Psi(\ov{\rr}^{(i,k)}) \leq \sum_{e\in H}\rrbar_e^{(i,k)} \leq \tau^{-1}\Psi(\rrbar^{(i,k)}) \\
\implies &~ |H| \leq \frac{n}{\tau \cdot \epsilon} \cdot e^{\epsilon+2\delta}. \qedhere
\end{align*} 
\end{proof} 

Let $L$ be the largest power of $2$ such that $L \leq \frac{1}{100 (\log^4 n) \epsilon \alpha \rho}$. Note that we have $L = \Theta(\frac{1}{(\log^4 n) \epsilon \alpha \rho})$. We have the following lemma.
\begin{lemma}[Change in $\Psi$ for Algorithm~\ref{alg:non_monotone_accel_stab}]\label{lem:ChangePsiStab}
For any integer $c \geq 0$, after $L$ primal steps from $(c-1)L$ to $cL$, if $\rho^2 \tau^{-1} \geq 0.1$, the potential $\Psi$ is bounded as follows:

\begin{align*}
\Psi(\ov{\rr}^{(cL,k_{cL})}) 
\geq &~ \Psi(\ov{\rr}^{((c-1)L,k_{(c-1)L})}) \cdot \Big(1 -  \wt{O}(\epsilon \alpha \rho L) \Big) \cdot \prod_{k=k_{(c-1)L}}^{k_{cL}} \left( 1 + O\Big(\frac{\epsilon^{4/3} \rho^{2/3} \cdot \textsc{Size}(k)^{1/3}}{n^{1/3} \cdot \log^{2/3}(\frac{n}{\epsilon \rho})}\Big) \right).
\end{align*}
\end{lemma}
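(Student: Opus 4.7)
My plan is to express the ratio $\Psi(\rrbar^{(cL, k_{cL})})/\Psi(\rrbar^{((c-1)L, k_{(c-1)L})})$ as a telescoping product of per-step factors---one factor per primal step in the batch and one factor per interleaved width-reduction step---and lower bound each factor separately. The $L$ primal-step factors multiply to the $(1 - \wt{O}(\epsilon\alpha\rho L))$ term, and each width-reduction factor yields the claimed $(1 + O(\epsilon^{4/3}\rho^{2/3}\textsc{Size}(k)^{1/3}/(n^{1/3}\log^{2/3}(n/\epsilon\rho))))$ piece.

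For each primal step $(i,k)\to(i+1,k)$, I apply Lemma~\ref{lem:PsiChange} with $\rr = \rrbar^{(i,k)}$, $\rr' = \rrbar^{(i+1,k)}$, and $\tilde\Delta = \Delta^{(i,k)}$ (which by construction minimizes $\sum_e\rrbar^{(i,k)}_e(\wt{\CC}\Delta-\wt{\dd})_e^2$). Writing $\xi_e := |\rrbar^{(i+1,k)}_e/\rrbar^{(i,k)}_e - 1|$, the $\textsc{SelectVector}$ guarantee combined with Lemma~\ref{lem:ChangePhiStab} gives $\xi_e = O(\epsilon\alpha|\wt{\CC}\Delta^{(i,k)}-\wt{\dd}|_e)$ (the $\Phi(\ww)$-drift contribution is negligible per step). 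Applying H\"older with exponents $(3/2, 3)$ and the primal-entry condition $\sum_e\rrbar_e|\wt{\CC}\Delta-\wt{\dd}|_e^3 \leq 2\rho\Psi(\rrbar)$:
\[
\sum_e \rrbar_e^{(i,k)}(\wt{\CC}\Delta^{(i,k)}-\wt{\dd})_e^2\,\xi_e \;\leq\; \Big(\sum_e \rrbar_e^{(i,k)}|\wt{\CC}\Delta^{(i,k)}-\wt{\dd}|_e^3\Big)^{2/3}\Big(\sum_e \rrbar_e^{(i,k)}\xi_e^3\Big)^{1/3} \;=\; O(\epsilon\alpha\rho)\,\Psi(\rrbar^{(i,k)}).
\]
Thus each primal step decreases $\Psi$ by at most a $(1-O(\epsilon\alpha\rho))$ factor, and telescoping across $L$ primal steps produces the $(1 - \wt{O}(\epsilon\alpha\rho L))$ contribution.

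For each width-reduction step $(i,k)\to(i,k+1)$ with update set $U$ on which $\ww_e$ is multiplied by at least $(1+\epsilon)$, Lemma~\ref{lem:PsiChange} gives $\Psi(\rrbar^{(i,k+1)}) - \Psi(\rrbar^{(i,k)}) \geq \Omega(\epsilon)\sum_{e\in U}\rrbar^{(i,k)}_e(\wt{\CC}\Delta^{(i,k)}-\wt{\dd})_e^2$. In the $H=S$ branch where $U=H_{\zeta^*}$, every $b_e := |\wt{\CC}\Delta^{(i,k)}-\wt{\dd}|_e$ lies in $[\zeta^*, 2\zeta^*)$, so setting $R := \sum_{e \in U}\rrbar_e^{(i,k)}$ we have $\sum_{e\in U}\rrbar_e b_e^2 \geq R\zeta^{*2}$. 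The key identity producing the cube-root exponents is
\[
R\zeta^{*2} \;=\; (R\zeta^{*3})^{2/3}\cdot R^{1/3},
\]
which I lower bound by two \emph{distinct} inequalities: $R\zeta^{*3} \geq \tfrac{1}{8}\sum_{e\in U}\rrbar_e b_e^3 \geq \Omega(\rho\Psi/\log(n/\epsilon\rho))$ from the break condition on Line~\ref{algline:zeta*}, and $R \geq \textsc{Size}(k)\cdot\Omega(\epsilon\Psi/n)$ from the floor $\rrbar_e \geq \tfrac{\epsilon}{2n}\Phi(\ww)$. Combining these gives $R\zeta^{*2} \geq \Omega(\epsilon^{1/3}\rho^{2/3}\textsc{Size}(k)^{1/3}\Psi/(n^{1/3}\log^{2/3}(n/\epsilon\rho)))$, and the $\Omega(\epsilon)$ prefactor recovers the claimed per-step bound. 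In the $H \neq S$ branch a short case split on whether $\sum_{e\in H}\rrbar_e \geq \tau^{-1}\Psi/2$ shows that either $H$ itself or the forced $\ov e \in S\setminus H$ (whose weight must be large by the maximality of $H$) contributes $\sum_{e\in U}\rrbar_e \geq \Omega(\tau^{-1}\Psi)$; since all $b_e \geq \rho$ on $U$, the increase is $\Omega(\epsilon\rho^2/\tau)\Psi$, which dominates the claimed factor after a routine parameter check reducing to $\rho^{4/3}/\tau \gtrsim \epsilon^{1/3}/(n^{1/3}\log^{2/3})$ under the algorithm's choice of $\rho$ and $\tau$.

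\textbf{Main obstacle.} The delicate algebraic step is the identity $R\zeta^{*2} = (R\zeta^{*3})^{2/3}R^{1/3}$ together with the two \emph{different} lower bounds---the bucket-mass bound $R\zeta^{*3} \geq \Omega(\rho\Psi/\log)$ from the break condition, and the cardinality bound $R \geq \Omega(\epsilon\,\textsc{Size}(k)\,\Psi/n)$ from the floor on $\rrbar_e$. Any attempt to eliminate $\zeta^*$ by substituting only one of these inequalities produces square-root rather than cube-root exponents in $\textsc{Size}(k)$ and $\rho$, which would be too weak for the downstream iteration-count analysis. Choosing the H\"older exponents $(3/2,3)$ to match the $\ell_3$ primal entrance condition is the other place where a specific choice is necessary, but it becomes routine once identified.
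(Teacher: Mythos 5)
Your treatment of the width-reduction steps is essentially the paper's argument: the identity $R\zeta^{*2} = (R\zeta^{*3})^{2/3}R^{1/3}$ is the same device the paper uses in the form $a+b\geq a^{1/3}b^{2/3}$, fed by the bucket-mass bound from the break condition on Line~\ref{algline:zeta*} and the cardinality bound from $\rrbar_e \geq e^{-\epsilon-2\delta}\frac{\epsilon}{2n}\Psi(\rrbar)$, and the $H\neq S$ branch via $\rho^2\tau^{-1}$ is also as in the paper (where no case split is needed, since maximality of $H$ directly gives $\sum_{e\in H\cup\{\ov e\}}\rrbar_e^{(i,k)}\geq\tau^{-1}\Psi(\rrbar^{(i,k)})$).

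The primal-step half, however, has a genuine gap. You assert that $\xi_e = |\rrbar_e^{(i+1,k)}/\rrbar_e^{(i,k)}-1| = O(\epsilon\alpha\,|\wt{\CC}\Delta^{(i,k)}-\wt{\dd}|_e)$, but this is the bound for the \emph{true} weights $\rr$, not the lazily maintained $\rrbar$. By the design of \textsc{SelectVector}, $\rrbar_e$ stays frozen for up to $2^{\ell_{i,e}}$ primal iterations and then jumps from $\rr_e^{(j_{i,e})}$ to $\rr_e^{(i+1)}$ all at once, so the relevant relative change is an accumulated sum $\sum_{j=j_{i,e}}^{i}\epsilon\alpha|\wt{\CC}\Delta^{(j,k_j)}-\wt{\dd}|_e$ over many \emph{past} oracle outputs. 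Plugging this into Lemma~\ref{lem:PsiChange} produces cross terms $\rrbar_e|\wt{\CC}\Delta^{(j,k_j)}-\wt{\dd}|_e\,(\wt{\CC}\Delta^{(i,k_i)}-\wt{\dd})_e^2$ with $j<i$, which after AM-GM leave residual quantities $\sum_{e\in S_i}\sum_{j=j_{i,e}}^{i-1}\rrbar_e^{(j,k_j)}|\wt{\CC}\Delta^{(j,k_j)}-\wt{\dd}|_e^3$ that cannot be controlled within iteration $i$ alone (the set $S_i$ is only a subset of coordinates, so the $\ell_3$ entrance condition at step $j$ does not directly apply). The paper's per-step bound is in fact $\Psi(\rrbar^{(i+1,k_i)})\geq(1-\wt{O}(\epsilon\alpha\rho\,2^{\ell_i}))\Psi(\rrbar^{(i,k_i)})$ minus such a deferred term, and the claimed $(1-\wt{O}(\epsilon\alpha\rho L))$ factor only emerges after an amortization over the whole batch of $L$ steps: one uses that each coordinate appears in at most one $S_{i'}$ reaching back past a given $j$, and that $\sum_{i}2^{\ell_i}\leq L\log n$ (which is also where the $\wt{O}$, i.e.\ the extra $\log n$, comes from). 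Your uniform per-step $(1-O(\epsilon\alpha\rho))$ claim is false for steps where \textsc{SelectVector} fires at a high dyadic level, and the argument as written does not recover the lemma; the missing ingredient is exactly the deferred-charge/amortization analysis across the batch.
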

\begin{proof}
\noindent {\bf Width Reduction Step.}
We first consider the width reduction steps. Note that if the width step updates a coordinate $e$, then it always updates $\ov{\rr}^{(i,k+1)}_{\ov{e}}$ to its exact value $\rr^{(i,k+1)}_{\ov{e}}$ (see Line~\ref{algline:width_update_rrbar_H_neq_S} and \ref{algline:width_update_rrbar_H_eq_S} in Algorithm~\ref{alg:non_monotone_accel_stab}. So we always have $\ov{\rr}^{(i,k+1)}_{e} = \rr^{(i,k+1)}_{e}$ if $e$ is being updated by the $k$-th width reduction step, and otherwise we have $\ov{\rr}^{(i,k+1)}_{e} = \ov{\rr}^{(i,k)}_{e}$.

Next we prove an upper bound on $\rr^{(i,k+1)}_e$. If the width reduction step updates the weight of a coordinate $e$ to be $\ww_e^{(i,k+1)}\gets (1+\epsilon)\ww_e^{(i,k)} + \frac{\epsilon^2}{2n} \Phi(\ww^{(i,k)})$, then we have 
\begin{align}\label{eq:Psi_width_re}
\rr^{(i,k+1)}_e = &~ \ww^{(i,k+1)}_e + \frac{\epsilon}{2n} \Phi(\ww^{(i,k+1)}) \notag \\
= &~ (1 + \epsilon) \ww^{(i,k)}_e + \frac{\epsilon^2}{2n} \Phi(\ww^{(i,k)}) + \frac{\epsilon}{2n} \Phi(\ww^{(i,k+1)}) \notag \\
= &~ (1 + \epsilon) \rr^{(i,k)}_e + \frac{\epsilon}{2n} \Phi(\ww^{(i,k+1)}) - \frac{\epsilon}{2n} \Phi(\ww^{(i,k)}) \notag \\
\leq &~ (1 + \epsilon) \rr^{(i,k)}_e + \frac{\epsilon}{2n} 2\epsilon \cdot \Phi(\ww^{(i,k+1)}) \notag \\
\leq &~ (1 + 3 \epsilon) \rr^{(i,k)}_e,
\end{align}
where the fourth step follows from $\Phi(\ww^{(i,k+1)}) \leq \Phi(\ww^{(i,k)}) \cdot (1 +  \epsilon e^{\epsilon+2\delta} (\tau^{-1} + \rho^{-2})) \leq \Phi(\ww^{(i,k)}) \cdot (1 + 2 \epsilon)$ by Lemma~\ref{lem:ChangePhiStab}. Note that we also have $\rr^{(i,k+1)}_e \geq \rr^{(i,k)}_e$ since we are only increasing the weights. Also,
\begin{align}\label{eq:Psi_width_re_diff}
\frac{\ov{\rr}^{(i,k+1)}_e - \ov{\rr}^{(i,k)}_e}{\ov{\rr}^{(i,k+1)}_e} \geq &~ \frac{\rr^{(i,k+1)}_e - (1 + 0.1 \epsilon) \rr^{(i,k)}_e}{\rr^{(i,k+1)}_e} \notag \\
\geq &~ \frac{\ww^{(i,k+1)}_e - \ww^{(i,k)}_e - 0.1 \epsilon \rr^{(i,k)}_e}{\rr^{(i,k+1)}_e} \notag \\
= &~ \frac{(1+\epsilon)\ww_e^{(i,k)} + \frac{\epsilon^2}{2n}\Phi(\ww^{(i,k)}) - \ww_e^{(i,k)} - 0.1 \epsilon \rr^{(i,k)}_e}{\rr^{(i,k+1)}_e} \notag \\
= &~ \frac{0.9 \epsilon \rr^{(i,k)}_e}{\rr^{(i,k+1)}_e}
\geq \frac{0.9 \epsilon}{1 + 3 \epsilon}.
\end{align}
where the first step follows from $\ov{\rr}^{(i,k+1)}_e = \rr^{(i,k+1)}_e$, and $\ov{\rr}^{(i,k)}_e \leq e^{\delta} \rr^{(i,k)}_e \leq (1 + 0.1 \epsilon) \rr^{(i,k)}_e$ since $\delta = \epsilon/100$, and the last step follows from Eq.~\eqref{eq:Psi_width_re}.

Next we consider the two cases of $H \neq S$ and $H = S$ separately.

{\bf When $H\neq S$.} Using Lemma~\ref{lem:PsiChange} we have the following: 
\begin{align*}
\Psi(\ov{\rr}^{(i,k+1)}) \geq &~ \Psi(\ov{\rr}^{(i,k)})+ \sum_e \left(\frac{\ov{\rr}^{(i,k+1)}_e-\ov{\rr}^{(i,k)}_e}{\ov{\rr}^{(i,k+1)}_e}\right)\ov{\rr}^{(i,k)}_e (\wt{\CC} \Delta^{(i,k)}-\wt{\dd})_e^2 \\
\geq &~ \Psi(\ov{\rr}^{(i,k)}) + \frac{0.9 \epsilon}{1 + 3 \epsilon} \cdot \sum_{e \in H \cup \{\ov{e}\}} \ov{\rr}^{(i,k)}_e (\wt{\CC} \Delta^{(i,k)}-\wt{\dd})_e^2 \\
\geq &~ \Psi(\ov{\rr}^{(i,k)}) + \frac{0.9 \epsilon}{1 + 3 \epsilon} \cdot \rho^2 \cdot \sum_{e \in H \cup \{\ov{e}\}} \ov{\rr}^{(i,k)}_e \\
\geq &~ \Psi(\ov{\rr}^{(i,k)}) + \frac{0.9 \epsilon}{1 + 3 \epsilon} \cdot \rho^2 \cdot \tau^{-1} \cdot \Psi(\rrbar^{(i,k)}) \\
= &~ \Psi(\ov{\rr}^{(i,k)}) \cdot \left( 1 + O(\epsilon \rho^{2} \tau^{-1})\right)
\geq \Psi(\ov{\rr}^{(i,k)}) \cdot \left( 1 + O\left( \frac{\epsilon^{4/3} \rho^{2/3} \cdot |H|^{1/3}}{n^{1/3}} \right) \right),
\end{align*}
where the second step follows from when $H \neq S$ the width reduction step only updates edges in $H \cup \{\ov{e}\}$ and Eq.~\eqref{eq:Psi_width_re_diff}, the third step follows from every $e \in S$ satisfies $|\wt{\CC}\Delta^{(i,k)}-\wt{\dd}|_{e}\geq \rho$ and $H \cup \{\ov{e}\} \subseteq S$, the fourth step follows from $H \subseteq S$ is a maximal subset such that $\sum_{e\in H}\rrbar_e^{(i,k)}\leq \tau^{-1}\Psi(\rrbar^{(i,k)})$ so we must have $\sum_{e\in H \cup \{\ov{e}\}}\rrbar_e^{(i,k)}\geq \tau^{-1}\Psi(\rrbar^{(i,k)})$, and the last step follows from Lemma~\ref{lem:size_H} that $|H| \leq \frac{n}{\tau \epsilon} \cdot e^{\epsilon + 2 \delta}$ and our assumption $\rho^2 \tau^{-1} \geq 0.1$ and so $\rho^2 \tau^{-1} \geq 0.2 \cdot \rho^{2/3} \tau^{-1/3}$.

{\bf When $H = S$.} In this case any $e \notin H = S$ must satisfy $|\wt{\CC}\Delta^{(i,k)}-\wt{\dd}|_e < \rho$, so we have
\[
\sum_{e \notin H}\ov{\rr}^{(i,k)}_e|\wt{\CC}\Delta^{(i,k)}-\wt{\dd}|_e^3 \leq \max_{e\notin H}\{|\wt{\CC}\Delta^{(i,k)}-\wt{\dd}|_e\}\sum_{e \notin H}\ov{\rr}^{(i,k)}_e (\wt{\CC}\Delta^{(i,k)}-\wt{\dd})_e^2 \leq \rho\Psi(\ov{\rr}^{(i,k)}).
\]
Since this is a width reduction step, we know that $\sum_{e }\ov{\rr}^{(i,k)}_e|\wt{\CC}\Delta^{(i,k)}-\wt{\dd}|_e^3 \geq 2\rho\Psi(\ov{\rr}^{(i,k)})$, and therefore combining this and the inequality above, we must have
\[
\sum_{e\in H }\ov{\rr}^{(i,k)}_e|\wt{\CC}\Delta^{(i,k)}-\wt{\dd}|_e^3 \geq \rho\Psi(\ov{\rr}^{(i,k)}).
\]

Recall that for any $\zeta = \rho, 2 \rho, 4 \rho, \cdots,  2^{c_\rho} \rho$ where $2^{c_{\rho}} \rho \geq \sqrt{n/\epsilon}$, the algorithm defines the set $H_{\zeta} = \{e \in H \mid |\wt{\CC} \Delta^{(i,k)} - \wt{\dd}|_e \in [\zeta, 2 \zeta)\}$. Note that since $\Psi(\ov{\rr}^{(i,k)}) = \sum_e \ov{\rr}^{(i,k)}_e (\wt{\CC} \Delta^{(i,k)} - \wt{\dd})_e^2$, for any $e$ we have 
\[
|\wt{\CC} \Delta^{(i,k)} - \wt{\dd}|_e \leq \left(\frac{\Psi(\ov{\rr}^{(i,k)})}{\ov{\rr}^{(i,k)}_e}\right)^{1/2} \leq \sqrt{\frac{e^{\epsilon+2\delta} n}{\epsilon}} < 2 \sqrt{\frac{n}{\epsilon}},
\]
where the second step follows from $\ov{\rr}^{(i,k)}_e \geq e^{-\delta} \frac{\epsilon}{n} \Phi(\ww^{(i,k)}) \geq e^{-\epsilon-2\delta} \frac{\epsilon}{n} \Psi(\ov{\rr}^{(i,k)})$ using Lemma~\ref{lem:PsiPhi}. So we have
\[
\bigsqcup_{\zeta = \rho}^{2^{c_{\rho}} \rho} H_{\zeta} = H.
\]
There are at most $\log(\frac{n}{\epsilon \rho})$ such $\zeta$'s, so this implies that there must exist a $\zeta^*$ that satisfies $\sum_{e \in H_{\zeta^*}} \ov{\rr}^{(i,k)}_e |\wt{\CC} \Delta^{(i,k)} - \wt{\dd}|_e^3 \geq \frac{\rho \Psi(\ov{\rr}^{(i,k)})}{\log(\frac{n}{\epsilon \rho})}$, and we can find it on Line~\ref{algline:zeta*} in Algorithm~\ref{alg:non_monotone_accel_stab}. 

Again using Lemma~\ref{lem:PsiChange} we have that
\begin{align}\label{eq:Psi_width_H_equal_S}
\Psi(\ov{\rr}^{(i,k+1)}) \geq &~ \Psi(\ov{\rr}^{(i,k)})+ \sum_e \left(\frac{\ov{\rr}^{(i,k+1)}_e-\ov{\rr}^{(i,k)}_e}{\ov{\rr}^{(i,k+1)}_e}\right)\ov{\rr}^{(i,k)}_e (\wt{\CC} \Delta^{(i,k)}-\wt{\dd})_e^2 \notag \\
\geq &~ \Psi(\ov{\rr}^{(i,k)}) + \frac{0.9 \epsilon}{1 + 3 \epsilon} \cdot \sum_{e \in H_{\zeta^*}} \ov{\rr}^{(i,k)}_e (\wt{\CC} \Delta^{(i,k)}-\wt{\dd})_e^2,
\end{align}
where the second step follows from when $H = S$ the width reduction step only updates edges in $H_{\zeta^*}$ and Eq.~\eqref{eq:Psi_width_re_diff},

Next we prove two lower bounds of $\sum_{e \in H_{\zeta^*}} \ov{\rr}^{(i,k)}_e (\wt{\CC} \Delta^{(i,k)}-\wt{\dd})_e^2$. On the one hand, we have
\begin{align*}
\sum_{e \in H_{\zeta^*}} \ov{\rr}^{(i,k)}_e (\wt{\CC} \Delta^{(i,k)}-\wt{\dd})_e^2 \geq &~ |H_{\zeta^*}| \cdot e^{-\epsilon - 2\delta} \frac{\epsilon}{n} \Psi(\ov{\rr}^{(i,k)}) \cdot (\zeta^*)^2,
\end{align*}
which follows from $|\wt{\CC} \Delta^{(i,k)} - \wt{\dd}|_e \geq \zeta^*$ for all $e \in H_{\zeta^*}$, and that $\ov{\rr}^{(i,k)}_e \geq e^{-\delta} \cdot \frac{\epsilon}{n} \Phi(\ww^{(i,k)}) \geq e^{-\epsilon-2\delta} \cdot \frac{\epsilon}{n} \Psi(\ov{\rr}^{(i,k)})$. On the other hand, we also have
\begin{align*}
\sum_{e \in H_{\zeta^*}} \ov{\rr}^{(i,k)}_e (\wt{\CC} \Delta^{(i,k)}-\wt{\dd})_e^2 \geq &~ \frac{1}{\max_{e \in H_{\zeta^*}} |\wt{\CC} \Delta^{(i,k)}-\wt{\dd}|_e} \sum_{e \in H_{\zeta^*}} \ov{\rr}^{(i,k)}_e |\wt{\CC} \Delta^{(i,k)}-\wt{\dd}|_e^3 \\
\geq &~ \frac{1}{2 \zeta^*} \cdot \frac{\rho \Psi(\ov{\rr}^{(i,k)})}{\log(\frac{n}{\epsilon \rho})},
\end{align*}
where the second step follows from $|\wt{\CC} \Delta^{(i,k)} - \wt{\dd}|_e \leq 2 \zeta^*$ for all $e \in H_{\zeta^*}$, and $\sum_{e \in H_{\zeta^*}} \ov{\rr}^{(i,k)}_e |\wt{\CC} \Delta^{(i,k)} - \wt{\dd}|_e^3 \geq \frac{\rho \Psi(\ov{\rr}^{(i,k)})}{\log(\frac{n}{\epsilon \rho})}$. 

Plugging these two lower bounds into Eq.~\eqref{eq:Psi_width_H_equal_S} we have
\begin{align*}
\Psi(\ov{\rr}^{(i,k+1)}) \geq &~ \Psi(\ov{\rr}^{(i,k)}) \cdot \Big(1 + \frac{0.9 \epsilon}{1 + 3 \epsilon} \cdot \max\Big\{e^{-\epsilon-2\delta} \cdot \frac{\epsilon \cdot |H_{\zeta^*}|\cdot (\zeta^*)^2}{n},~~ \frac{1}{2 \log(\frac{n}{\epsilon \rho})} \cdot \frac{\rho}{\zeta^*}\Big\} \Big) \\
\geq &~ \Psi(\ov{\rr}^{(i,k)}) \cdot \left(1 + O\left(\epsilon^{4/3} \cdot \frac{|H_{\zeta^*}|^{1/3} \cdot \rho^{2/3}}{n^{1/3} \cdot \log^{2/3}(\frac{n}{\epsilon \rho})} \right) \right),
\end{align*}
where the second step follows from $a+b \geq a^{1/3} b^{2/3}$.

Finally, note that when $H \neq S$ the size of the width reduction step is $\textsc{Size}(k) = |H| + 1$, and when $H = S$ the size of the width reduction step is $\textsc{Size}(k) = |H_{\zeta^*}|$. So combining these two cases we can conclude that we always have
\[
\Psi(\ov{\rr}^{(i,k+1)}) \geq \Psi(\ov{\rr}^{(i,k)}) \cdot \left(1 + O\Big( \frac{\epsilon^{4/3} \cdot \rho^{2/3} \cdot \textsc{Size}(k)^{1/3}}{n^{1/3} \cdot \log^{2/3}(\frac{n}{\epsilon \rho})}\Big) \right). 
\]

\noindent {\bf Primal step.} Next we prove that $\Psi$ doesn't decrease by too much after any primal step $i$.

Since $\ww_e^{(i+1,k_i)} = \ww_e^{(i,k_i)}(1+\epsilon\overrightarrow{\alpha}_e^{(i,k_i)} (\wt{\CC}\Delta^{(i,k_i)}-\wt{\dd})_e)$, we have
\begin{align}\label{eq:Psi_primal_re}
\rr_e^{(i+1,k_i)} & = \ww_e^{(i+1,k_i)} + \frac{\epsilon}{n}\Phi(\ww^{(i+1,k_i)}) \notag \\
&\geq \ww_e^{(i,k_i)} -  \epsilon\alpha_+ \ww^{(i,k_i)}_e |\wt{\CC}\Delta^{(i,k_i)}-\wt{\dd}|_e + \frac{\epsilon}{n} \Phi(\ww^{(i,k_i)}) (1-\epsilon\alpha_+ e^{\epsilon+\delta}) \notag \\
& \geq \ww^{(i,k_i)}_e (1-\epsilon) + \frac{\epsilon}{n} \Phi(\ww^{(i,k_i)}) (1-\epsilon) = \rr^{(i,k_i)}_e (1-\epsilon),
\end{align}
where the second step follows from $\overrightarrow{\alpha}^{(i,k_i)} \leq \alpha_+$ and $\Phi(\ww^{(i+1,k_i)}) \geq \Phi(\ww^{(i,k_i)}) \cdot (1-\epsilon\alpha_+ e^{\epsilon+\delta})$ by Lemma~\ref{lem:ChangePhiStab}, the third step follows from $\alpha_+ |\wt{\CC} \Delta^{(i,k_i)} - \wt{\dd}| \leq 0.1$ from the same proof as that of Lemma~\ref{lem:positiveWopt}. 
Similarly we also have
\begin{equation}\label{eq:Psi_primal_re_UB}
\rr_e^{(i+1,k_i)} \leq (1 + \epsilon) \rr_e^{(i,k_i)}.
\end{equation}

We also have,
\begin{align}\label{eq:Psi_primal_re_diff}
    \left|\frac{\rr_e^{(i+1,k_i)}-\rr_e^{(i,k_i)}}{\rr_e^{(i+1,k_i)}}\right| & \leq \frac{|\ww_e^{(i+1,k_i)}-\ww_e^{(i,k_i)}| + \frac{\epsilon}{n}|\Phi(\ww^{(i+1,k_i)})-\Phi(\ww^{(i,k_i)})|}{\rr_e^{(i,k_i)} (1-\epsilon)} \notag \\
    & \leq \frac{\epsilon \alpha_+ \ww_e^{(i,k_i)} |\wt{\CC}\Delta^{(i,k_i)}-\wt{\dd}|_e + \frac{\epsilon}{n}\cdot e^{\epsilon+\delta} \epsilon\alpha_+ \Phi(\ww^{(i,k_i)})}{\rr^{(i,k_i)}_e (1-\epsilon)} \notag \\
    & \leq \frac{\epsilon \alpha_+ \rr^{(i,k_i)}_e|\wt{\CC}\Delta^{(i,k_i)}-\wt{\dd}|_e + e^{\epsilon+\delta} \epsilon\alpha_+\rr^{(i,k_i)}_e}{\rr^{(i,k_i)}_e (1-\epsilon)} \notag \\
    & \leq (1+2\epsilon)\epsilon\alpha_+ |\wt{\CC}\Delta^{(i,k_i)}-\wt{\dd}|_e + (1+4\epsilon)\epsilon\alpha_+. 
\end{align}

Next we consider the two cases that could happen to the coordinate $e$ in the $i$-th primal step. From now on, when it's clear from the context, we will use $\rr_e^{(i)}$ to refer to $\rr_e^{(i,k_{i-1})}$, and similarly $\rrbar_e^{(i)}$ to refer to $\rrbar_e^{(i,k_{i-1})}$, so that this is consistent with the notations used in \textsc{SelectVector}.
\begin{enumerate}
\item If \textsc{SelectVector} doesn't update $\ov{\rr}_e$ on the $i$-th iteration, then we have $\ov{\rr}^{(i+1,k_i)}_e = \ov{\rr}^{(i,k_i)}_e$. 
\item If \textsc{SelectVector} does update $\ov{\rr}_e$ on the $i$-th iteration, i.e., $e \in S_i$, then defining $j_{i,e} := \max\{\textsc{LastWidth}(i,e), \textsc{Last}(i,e)\}$, i.e., $j_{i,e} \leq i$ is the last primal iterate during which the algorithm updates $\ww_e$. Define $\ell_{i,e}$ to be the smallest integer $\ell$ such that $i+1 \equiv 0 \pmod{2^{\ell}}$ and $|\ln(\frac{\rr_e^{(i+1)}}{\rr_e^{(i+1-2^\ell)}})| \geq \frac{\delta}{2 \log n}$. 
By definition we have $\ov{\rr}^{(i+1,k_i)}_e = \rr^{(i+1,k_i)}_e$, and $\ov{\rr}^{(i,k_i)}_e = \rr^{(j_{i,e},k_{j_{i,e}})}_e$.

Further note that in this case we have the following properties:
\begin{itemize}
\item For all $j \in [j_{i,e}+1, i]$, the value of $\rrbar_e$ remains the same for all width reduction steps between the $(j-1)^{th}$ and $j^{th}$ primal steps, i.e., $\rrbar^{(j,k_{j-1})}_e = \rrbar^{(j,k_{j})}_e$. This is because there is no width reduction step that updates the weight of $e$ in these iterations.
\item For all $j \in [j_{i,e}+1, i]$, $\rr_e^{(j,k_j)} \approx_{\delta} \ov{\rr}_e^{(j,k_j)} = \rr_e^{(j_{i,e},k_{j_{i,e}})}$. This is because the value of $\ov{\rr}_e$ remains the same from primal iterations $(j_{i,e}+1)$ to $i$, and it is always a $\delta$-approximation of the true value of $\rr_e$.
\item If $i+1-2^{\ell_{i,e}} > j_{i,e}$, then
\begin{align*}
|\rr_e^{(i+1-2^{\ell_{i,e}})} - \rr_e^{(j_{i,e},k_{j_{i,e}})}| \leq &~ \delta \rr_e^{(i+1-2^{\ell_{i,e}})} \\
\leq &~ 2 \log n \cdot |\ln(\frac{\rr_e^{(i+1)}}{\rr_e^{(i+1-2^\ell_{i,e})}})| \cdot \rr_e^{(i+1-2^{\ell_{i,e}})} \\
\leq &~ 5 \log n \cdot |\rr_e^{(i+1)} - \rr_e^{(i+1-2^{\ell_{i,e}})}|
\end{align*}
where the first step follows from $\rr_e^{(i+1-2^{\ell_{i,e}})} \approx_{\delta} \ov{\rr}_e^{(i+1-2^{\ell_{i,e}})} = \rr_e^{(j_{i,e},k_{j_{i,e}})}$ since $i+1-2^{\ell_{i,e}} > j_{i,e}$, the second step follows from $|\ln(\frac{\rr_e^{(i+1)}}{\rr_e^{(i+1-2^\ell_{i,e})}})| \geq \frac{\delta}{2 \log n}$, the third step follows from $|\ln(\frac{\rr_e^{(i+1)}}{\rr_e^{(i+1-2^\ell_{i,e})}})| \leq 2 |\frac{\rr_e^{(i+1)}}{\rr_e^{(i+1-2^\ell_{i,e})}} - 1|$ since $|\ln(\frac{\rr_e^{(i+1)}}{\rr_e^{(i+1-2^\ell_{i,e})}})| \leq 0.1$, and this is because $\rr_e^{(i,k_i)} \approx_{\delta} \rr_e^{(j_{i,e}, k_{j_{i,e}})} \approx_{\delta} \rr_e^{(i+1-2^{\ell_{i,e}})} $ by the second bullet and $\rr_e^{(i+1,k_i)} \approx_{\epsilon} \rr_e^{(i,k_i)}$ by Eq.~\eqref{eq:Psi_primal_re} and \eqref{eq:Psi_primal_re_UB}.
\end{itemize}
\end{enumerate}
Because of the third bullet point above, if $j_{i,e} < i+1-2^{\ell_{i,e}}$ then we have
\begin{align*}
|\ov{\rr}^{(i+1,k_i)}_e - \ov{\rr}^{(i ,k_i)}_e| = &~ |\rr_e^{(i+1,k_i)} - \rr_e^{(j_{i,e},k_{j_{i,e}})}| \\
\leq &~ |\rr^{(i+1,k_i)}_e - \rr^{(i+1-2^{\ell_{i,e}})}_e| + |\rr^{(i+1-2^{\ell_{i,e}})}_e - \rr^{(j_{i,e},k_{j_{i,e}})}_e| \\
\leq &~ 10 \log n\cdot |\rr^{(i+1,k_i)}_e - \rr^{(i+1-2^{\ell_{i,e}})}_e|.
\end{align*}
From now on we can without loss of generality assume that $j_{i,e} \geq i+1-2^{\ell_{i,e}}$, since otherwise we can upper bound $|\ov{\rr}^{(i+1,k_i)}_e - \ov{\rr}^{(i ,k_i)}_e|$ by $\wt{O}(|\rr^{(i+1,k_i)}_e - \rr^{(i+1-2^{\ell_{i,e}})}_e|)$ instead of $|\rr_e^{(i+1,k_i)} - \rr_e^{(j_{i,e},k_{j_{i,e}})}|$.
Now,
\begin{align*}
\Psi(\ov{\rr}^{(i+1,k_i)}) 
\substack{(i)\\ \geq} &~ \Psi(\ov{\rr}^{(i,k_i)}) - \sum_e \Big|\frac{\rrbar_e^{(i+1,k_i)}-\ov{\rr}^{(i,k_i)}_e}{\ov{\rr}_e^{(i+1,k_i)}} \Big| \ov{\rr}^{(i,k_i)}_e (\wt{\CC}\Delta^{(i,k_i)}-\wt{\dd})_e^2\\
\substack{(ii)\\ \geq}&~ \Psi(\ov{\rr}^{(i,k_i)}) - \sum_{e \in S_i} \Big|\frac{\rr_e^{(i+1,k_i)} - \rr_e^{(j_{i,e},k_{j_{i,e}})}}{\rr_e^{(i+1,k_i)}} \Big| \ov{\rr}^{(i,k_i)}_e (\wt{\CC}\Delta^{(i,k_i)}-\wt{\dd})_e^2 \\
\substack{(iii)\\ \geq} &~ \Psi(\ov{\rr}^{(i,k_i)}) - \sum_{e \in S_i} \Big|\frac{\sum_{j=j_{i,e}}^{i} (\rr_e^{(j+1,k_j)} - \rr_e^{(j,k_j)}) }{\rr_e^{(i+1,k_i)}} \Big| \ov{\rr}^{(i,k_i)}_e (\wt{\CC}\Delta^{(i,k_i)}-\wt{\dd})_e^2 \\
\substack{(iv)\\ \geq} &~ \Psi(\ov{\rr}^{(i,k_i)}) - \frac{e^{2\delta}}{(1-\epsilon)} \cdot \sum_{e \in S_i} \sum_{j=j_{i,e}}^{i} \Big|\frac{ (\rr_e^{(j+1,k_j)} - \rr_e^{(j,k_j)}) }{\rr_e^{(j+1,k_j)}} \Big| \ov{\rr}^{(i,k_i)}_e (\wt{\CC}\Delta^{(i,k_i)}-\wt{\dd})_e^2 \\
\substack{(v)\\ \geq} &~ \Psi(\ov{\rr}^{(i,k_i)}) - O(1) \cdot \sum_{e \in S_i} \sum_{j=j_{i,e}}^{i} \Big(  \epsilon\alpha |\wt{\CC}\Delta^{(j,k_j)}-\wt{\dd}|_e + \epsilon\alpha \Big) \ov{\rr}^{(i,k_i)}_e (\wt{\CC}\Delta^{(i,k_i)}-\wt{\dd})_e^2 \\
\substack{(vi)\\ \geq} &~ \Psi(\ov{\rr}^{(i,k_i)}) - O(1) \cdot \epsilon \alpha \cdot \sum_{e \in S_i} (i + 1 - j_{i,e}) \cdot \ov{\rr}^{(i,k_i)}_e (\wt{\CC}\Delta^{(i,k_i)}-\wt{\dd})_e^2 \\
&~ - O(1) \cdot \epsilon \alpha \cdot \sum_{e \in S_i} (i + 1 - j_{i,e}) \cdot \ov{\rr}^{(i,k_i)}_e |\wt{\CC}\Delta^{(i,k_i)}-\wt{\dd}|_e^3\\
&~ - O(1) \cdot \epsilon \alpha \cdot \sum_{e \in S_i} \sum_{j=j_{i,e}}^{i-1} \ov{\rr}^{(j,k_j)}_e |\wt{\CC}\Delta^{(j,k_j)}-\wt{\dd}|_e^3,
\end{align*}
where $(i)$ follows from Lemma~\ref{lem:PsiChange}, $(ii)$ follows from $\ov{\rr}^{(i+1,k_i)}_e = \rr^{(i+1,k_i)}_e$, and $\ov{\rr}^{(i,k_i)}_e = \rr^{(j_{i,e},k_{j_{i,e}})}_e$ for $e \in S_i$, $(iv)$ follows from $\rr_e^{(j+1,k_j)} \approx_{\delta} \rr_e^{(j_{i,e},k_{j_{i,e}})}$ and $\rr_e^{(i,k_i)} \approx_{\delta} \rr_e^{(j_{i,e},k_{j_{i,e}})}$ as we argued above, and since we also have $\rr_e^{(i+1,k_i)} \geq (1-\epsilon) \rr_e^{(i,k)}$ from Eq.~\eqref{eq:Psi_primal_re}, combining these we have $\rr_e^{(j+1,k_j)} \leq e^{\delta} \rr_e^{(j_{i,e},k_{j_{i,e}})} \leq e^{2\delta} \rr_e^{(i,k_i)} \leq \frac{e^{2\delta}}{(1-\epsilon)} \rr_e^{(i+1,k)}$. Step $(v)$ follows from Eq.~\eqref{eq:Psi_primal_re_diff}, $(vi)$ follows from AM-GM inequality that $|\wt{\CC}\Delta^{(j,k_j)}-\wt{\dd}|_e (\wt{\CC}\Delta^{(i,k_i)}-\wt{\dd})_e^2 \leq \frac{1}{3} \cdot |\wt{\CC}\Delta^{(j,k_j)}-\wt{\dd}|_e^3 + \frac{2}{3} \cdot |\wt{\CC}\Delta^{(i,k_i)}-\wt{\dd}|_e^3$ and that $\rr_e^{(j,k_j)} \approx_{\delta} \rr_e^{(j_{i,e},k_{j_{i,e}})} \approx_{\delta} \ov{\rr}_e^{(i,k_i)}$ for all $e \in S_i$ and $j \in [j_{i,e}, i]$.

With an abuse of notation, let $\ell_i$ denote the largest integer such that $i+1 \equiv 0 \pmod{2^{\ell_i}}$. Since we assumed that $j_{i,e} \geq i+1-2^{\ell_{i,e}}$ for all $e \in S_i$, we have $i+1 - j_{i,e} \leq 2^{\ell_{i,e}} \leq 2^{\ell_i}$. Also note that in primal steps we have $\sum_e \rrbar_e^{(i,k)} |\wt{\CC}\Delta^{(i,k)}-\wt{\dd}|_e^3\leq 2\rho\Psi(\rrbar^{(i,k)})$, so the above equation becomes, for some $C_1 = \wt{O}(1)$
\begin{align}\label{eq:Psi_primal_one_step}
\Psi(\ov{\rr}^{(i+1,k_i)}) \geq \Big( 1 - C_1 \cdot \epsilon \alpha \rho \cdot 2^{\ell_i} \Big) \cdot \Psi(\ov{\rr}^{(i,k_i)}) - C_1 \cdot \epsilon \alpha \cdot \sum_{e \in S_i} \sum_{j=j_{i,e}}^{i-1} \ov{\rr}^{(j,k_j)}_e |\wt{\CC}\Delta^{(j,k_j)}-\wt{\dd}|_e^3.
\end{align}

Recall that we defined $L = \Theta(\frac{1}{(\log^4 n) \cdot \epsilon \alpha \rho})$ to be a power of $2$. Next we consider the iterations between $(c-1) L$ and $cL$ using Eq.~\eqref{eq:Psi_primal_one_step}. We have
\begin{align*}
&~ \Psi(\ov{\rr}^{(cL, k_{cL-1})}) \\
\geq &~ \Big(1 -  C_1 \cdot \epsilon \alpha \rho \cdot 2^{\ell_{cL-1}}\Big) \cdot \Psi(\ov{\rr}^{(cL-1, k_{cL-1})}) - C_1 \cdot \epsilon \alpha \cdot \sum_{e \in S_{cL-1}} \sum_{j=j_{cL-1,e}}^{cL-2} \ov{\rr}^{(j,k_j)}_e |\wt{\CC}\Delta^{(j,k_j)}-\wt{\dd}|_e^3 \\
\geq &~ \Big(1 -  C_1 \cdot \epsilon \alpha \rho \cdot (2^{\ell_{cL-1}} + 2^{\ell_{cL-2}}) \Big) \cdot \frac{\Psi(\ov{\rr}^{(cL-1, k_{cL-1})})}{\Psi(\ov{\rr}^{(cL-1, k_{cL-2})})} \cdot \Psi(\ov{\rr}^{cL-2, k_{cL-2}}) \\
&~ - \Big(1 -  C_1 \cdot \epsilon \alpha \rho \cdot 2^{\ell_{cL-1}}\Big) \cdot \frac{\Psi(\ov{\rr}^{(cL-1, k_{cL-1})})}{\Psi(\ov{\rr}^{(cL-1, k_{cL-2})})} \cdot C_1 \cdot \epsilon \alpha \cdot \sum_{e \in S_{cL-2}} \sum_{j=j_{cL-2,e}}^{cL-3} \ov{\rr}^{(j,k_j)}_e |\wt{\CC}\Delta^{(j,k_j)}-\wt{\dd}|_e^3  \\
&~ - C_1 \cdot \epsilon \alpha \cdot \sum_{e \in S_{cL-1}} \sum_{j=j_{cL-1,e}}^{cL-2} \ov{\rr}^{(j,k_j)}_e |\wt{\CC}\Delta^{(j,k_j)}-\wt{\dd}|_e^3 \\
\geq &~ \Big(1 -  C_1 \cdot \epsilon \alpha \rho \cdot (2^{\ell_{cL-1}} + 2^{\ell_{cL-2}} + 2) \Big) \cdot \frac{\Psi(\ov{\rr}^{(cL-1, k_{cL-1})})}{\Psi(\ov{\rr}^{(cL-1, k_{cL-2})})} \cdot \Psi(\ov{\rr}^{cL-2, k_{cL-2}}) \\
&~ - \Big(1 -  C_1 \cdot \epsilon \alpha \rho \cdot 2^{\ell_{cL-1}}\Big) \cdot \frac{\Psi(\ov{\rr}^{(cL-1, k_{cL-1})})}{\Psi(\ov{\rr}^{(cL-1, k_{cL-2})})} \cdot C_1 \cdot \epsilon \alpha \cdot \sum_{e \in S_{cL-2}} \sum_{j=j_{cL-2,e}}^{cL-3} \ov{\rr}^{(j,k_j)}_e |\wt{\CC}\Delta^{(j,k_j)}-\wt{\dd}|_e^3  \\
&~ - C_1 \cdot \epsilon \alpha \cdot \sum_{e \in S_{cL-1}} \sum_{j=j_{cL-1,e}}^{cL-3} \ov{\rr}^{(j,k_j)}_e |\wt{\CC}\Delta^{(j,k_j)}-\wt{\dd}|_e^3 \\
\geq &~ \cdots \\
\geq &~ \Big(1 -  C_1 \cdot \epsilon \alpha \rho \cdot (2L + \sum_{i=(c-1)L+1}^{cL} 2^{\ell_{i-1}}) \Big) \cdot \prod_{i=(c-1)L}^{cL} \frac{\Psi(\ov{\rr}^{(i-1, k_{i-1})})}{\Psi(\ov{\rr}^{(i-1, k_{i-2})})} \cdot \Psi(\ov{\rr}^{(c-1)L, k_{(c-1)L}}),
\end{align*}
where the second step follows from $\Psi(\ov{\rr}^{cL-1,k_{cL-2}}) \geq \Big(1 -  C_1 \cdot \epsilon \alpha \rho \cdot 2^{\ell_{cL-2}}\Big) \cdot \Psi(\ov{\rr}^{cL-2, k_{cL-2}}) - C_1 \cdot \epsilon \alpha \cdot \sum_{e \in S_{cL-2}} \sum_{j=j_{cL-2,e}}^{cL-3} \ov{\rr}^{(j,k_j)}_e |\wt{\CC}\Delta^{(j,k_j)}-\wt{\dd}|_e^3$, the third step follows from taking out all terms $\ov{\rr}^{(j,k_j)}_e |\wt{\CC}\Delta^{(j,k_j)}-\wt{\dd}|_e^3$ for $j = cL-2$ and that $\sum_e \rrbar_e^{(j,k_{j})} |\wt{\CC}\Delta^{(j,k_{j})}-\wt{\dd}|_e^3\leq 2\rho\Psi(\rrbar^{(j,k_{j})})$, and the last two steps follow from repeat this process for $L$ times, and noting that for any $cL-t$, an coordinate $e$ can only be in one $S_{cL-t'}$ where $t' < t$ and $j_{cL-t',e} \leq cL-t$, and also noting that we proved that we can wlog assume $j_{cL-t,e} \geq cL-t+1-2^{\ell_{cL-t,e}}$ for all $t\leq L$, and this is then $\geq (c-1)L$ since $L$ is a power of $2$. 

Finally, note that $\prod_{i=(c-1)L}^{cL} \frac{\Psi(\ov{\rr}^{(i-1, k_{i-1})})}{\Psi(\ov{\rr}^{(i-1, k_{i-2})})}$ is exactly the increase that we get from the width reduction steps, and also note that we have $\sum_{i=(c-1)L+1}^{cL} 2^{\ell_{i-1}} \leq L \cdot \log n$ since by definition $\ell_{i-1}$ is the largest integer $\ell$ such that $i \equiv 0 \pmod{2^{\ell}}$. And this gives the claimed lower bound of this lemma. 
\end{proof}

\subsubsection{Putting Everything Together: Analysis of Algorithm}
Next we analyze the iteration complexity and the error of Algorithm~\ref{alg:non_monotone_accel_stab}. We first bound the total number of width reduction steps. In the following lemma we denote the hidden factors in Lemma~\ref{lem:ChangePsiStab} as $C_2 \leq O(\log^3 n)$ and $C_3 \geq O(1)$ such that
\begin{align*}
\Psi(\ov{\rr}^{(cL,k_{cL})}) 
\geq &~ \Psi(\ov{\rr}^{((c-1)L,k_{(c-1)L})}) \cdot \Big(1 -  C_2 \epsilon \alpha \rho L \Big) \cdot \prod_{k=k_{(c-1)L}}^{k_{cL}} \left( 1 + C_3 \frac{\epsilon^{4/3} \rho^{2/3} \cdot \textsc{Size}(k)^{1/3}}{n^{1/3} \cdot \log^{2/3}(\frac{n}{\epsilon \rho})} \right).
\end{align*}
\begin{lemma}[Number of width reduction steps]\label{lem:number_of_width_reduction_steps}
The total number of width reduction steps of Algorithm~\ref{alg:non_monotone_accel_stab} is at most $O\left(\frac{n^{1/3} \rho^{1/3} \log^5 n}{\epsilon^{10/3}} \cdot \log(\frac{n}{\Psi_0}) \right)$ for large enough $n$. 
\end{lemma}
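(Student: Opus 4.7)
The plan is to sandwich $\log(\Psi^{(T,K)}/\Psi_0)$ between an upper bound derived from the $\Phi$-potential analysis and a lower bound derived from the $\Psi$-potential analysis, then invert the resulting inequality to bound $K$. The width-reduction increase of $\Psi$ depends on $\textsc{Size}(k)^{1/3}$, and since every step satisfies $\textsc{Size}(k) \geq 1$, each width step contributes at least a fixed multiplicative factor to the product, giving a bound proportional to $K$.

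First I would upper bound $\log(\Psi(\ov{\rr}^{(T,K)})/\Psi_0)$ using Lemma~\ref{lem:PsiPhi} together with Lemma~\ref{lem:ChangePhiStab}. This yields
\[
\log\!\left(\Psi(\ov{\rr}^{(T,K)})/\Psi_0\right) \;\leq\; \log(2n/\Psi_0) + O(\epsilon + \delta) + \epsilon\alpha_+ e^{\epsilon+\delta}\,T + \epsilon\, e^{\epsilon+2\delta}(\tau^{-1}+\rho^{-2})\,K,
\]
and using $\epsilon\alpha T = \epsilon^{-1}\log n$ from the parameter choice, the dominant ``fixed'' contribution is $O(\log(n/\Psi_0)/\epsilon)$, plus a term linear in $K$ that will be absorbed later.

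Second I would produce a lower bound on the same log-ratio by chaining Lemma~\ref{lem:ChangePsiStab} across consecutive blocks of $L$ primal steps. With the choice $L = \Theta(1/(\epsilon\alpha\rho\log^4 n))$, each block loses only $(1 - \wt{O}(1/\log^4 n))$ in $\Psi$ and multiplies $\Psi$ by $\prod_{k\in\text{block}}(1+\gamma_k)$, where $\gamma_k = \Theta\bigl(\epsilon^{4/3}\rho^{2/3}\,\textsc{Size}(k)^{1/3}/(n^{1/3}\log^{2/3}(n/(\epsilon\rho)))\bigr)$. Telescoping the $T/L$ blocks and taking logarithms gives
\[
\log\!\left(\Psi(\ov{\rr}^{(T,K)})/\Psi_0\right) \;\geq\; -\wt{O}(\epsilon\alpha\rho T) + \sum_{k=0}^{K-1} \log(1 + \gamma_k),
\]
with $\epsilon\alpha\rho T = \rho\log n/\epsilon$ by direct substitution.

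Third I would combine the two inequalities to isolate $\sum_k \log(1+\gamma_k)$ on one side and everything else on the other:
\[
\sum_{k=0}^{K-1}\log(1+\gamma_k) \;\leq\; \wt{O}(\rho\log n/\epsilon) + O(\log(n/\Psi_0)/\epsilon) + O(\epsilon(\tau^{-1}+\rho^{-2})\,K).
\]
Before lower-bounding the LHS, I would invoke Lemma~\ref{lem:size_H} (which caps $|H|$ and hence $\textsc{Size}(k)$) to verify that $\gamma_k \leq 1$ for the parameter regime $\eta \leq 1/10$, so that $\log(1+\gamma_k) \geq \gamma_k/2$. Together with $\textsc{Size}(k)\geq 1$, this gives $\sum_k \log(1+\gamma_k) \geq \Omega\bigl(K \cdot \epsilon^{4/3}\rho^{2/3}/(n^{1/3}\log^{2/3})\bigr)$.

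Finally I would absorb the term $O(\epsilon(\tau^{-1}+\rho^{-2})K)$ on the RHS into the LHS by verifying that $\epsilon(\tau^{-1}+\rho^{-2}) \ll \epsilon^{4/3}\rho^{2/3}/(n^{1/3}\log^{2/3})$ under the chosen $\tau,\rho,\eta$, which reduces to a short arithmetic check using $\rho = \widetilde{\Theta}(n^{1/2-3\eta}\epsilon^{-2})$, $\tau = \widetilde{\Theta}(n^{1/2-\eta}\epsilon^{-4})$ and $\eta\leq 1/10$. Rearranging yields
\[
K \;\leq\; \wt{O}\!\left(\frac{n^{1/3}\rho^{1/3}\log n}{\epsilon^{7/3}} + \frac{n^{1/3}}{\rho^{2/3}\epsilon^{4/3}}\log(n/\Psi_0)\right),
\]
and tracking all polylogarithmic factors and $\poly(\epsilon^{-1})$ slack (in particular the $\log^{2/3}(n/(\epsilon\rho))$ inside $\gamma_k$ and the $\wt{O}$ hiding in the primal loss) gives the stated $O(n^{1/3}\rho^{1/3}\log^5 n \cdot \epsilon^{-10/3} \log(n/\Psi_0))$ bound. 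I expect the main obstacle to be the parameter bookkeeping in the final step: verifying both that $\gamma_k\leq 1$ so the concavity of $\log$ is usable, and that the $K$-linear error terms from the $\Phi$-upper bound and from the per-block primal loss are strictly dominated by the width-reduction gain in the chosen parameter regime.
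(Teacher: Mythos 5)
Your proposal is correct, and it uses the same two pillars as the paper (the $\Phi$-derived upper bound on $\Psi$ via Lemma~\ref{lem:PsiPhi} and Lemma~\ref{lem:ChangePhiStab}, and the per-width-step multiplicative gain from Lemma~\ref{lem:ChangePsiStab} with $\textsc{Size}(k)\geq 1$), but it assembles them differently in two respects. First, the paper does \emph{not} telescope across blocks: it bounds the number of width steps $K_c$ in each single block of $L$ primal steps by comparing $\Psi$ at the end of that block against the global invariant $\Psi \geq \Psi_0/(1+2\epsilon)$ (Lemma~\ref{lem:lower_bound_Psi}), and then multiplies by the number of blocks $T/L$. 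Your global telescoping, which charges the cumulative primal loss $\exp(-\wt{O}(\epsilon\alpha\rho T))$ only once, only needs $\Psi(\rr^{(0,0)})\geq\Psi_0$ and in fact yields a slightly sharper exponent on $\epsilon$ (roughly $\epsilon^{-7/3}$ versus the paper's $\epsilon^{-10/3}$), which of course still implies the stated bound. Second, the paper breaks the circular dependence of the $\Phi$ upper bound on $K$ by assuming the algorithm halts after a fixed $K'$ width steps and then showing $K<K'$, whereas you keep the $O(\epsilon(\tau^{-1}+\rho^{-2})K)$ term explicit and absorb it into the left-hand side; your parameter check ($\rho^2/\tau=O(1)$, $\eta\leq 1/10$) does go through, so this is a legitimate and arguably cleaner alternative. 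The remaining items you flag (e.g.\ $\gamma_k\leq 1$ so that $\log(1+\gamma_k)\geq\gamma_k/2$) are harmless either way, since $\log(1+\gamma_k)\geq\min\{\gamma_k/2,\log 2\}$ suffices.
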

\begin{proof}
Let $K$ denote the total number of width reduction steps. We first assume that we halt the algorithm if there are more than $K' = 10^4 C_3^{-1} \cdot \frac{n^{1/3} \rho^{1/3} \log^6 n}{\epsilon^{10/3}} \cdot \log(\frac{n}{\Psi_0})$ width reduction steps. We will then prove that $K \leq 9000 C_3^{-1} \cdot \frac{n^{1/3} \rho^{1/3} \log^5 n}{\epsilon^{10/3}} \cdot \log(\frac{n}{\Psi_0}) < K'$, which means we can make this assumption without changing the algorithm. Under this assumption, and using Lemma~\ref{lem:ChangePhiStab}, we have that during the algorithm, we always have
\begin{align}\label{eq:upper_bound_Phi}
\Phi\left(\ww^{(i,k)}\right) \leq &~ \Phi(\ww^{(0,0)}) \cdot \left(1+\epsilon\alpha e^{\epsilon+\delta} \right)^T \cdot \left( 1 + \epsilon e^{\epsilon+2\delta} (\tau^{-1} + \rho^{-2})) \right)^{K'} \notag \\
\leq &~ n \cdot \exp(2 \epsilon\alpha \cdot \alpha^{-1} \epsilon^{-2} \log n) \cdot \exp(2 \epsilon (\tau^{-1} + \rho^{-2}) 10^4 C_3^{-1} \cdot \frac{n^{1/3} \rho^{1/3} \log^6 n}{\epsilon^{10/3}} \cdot \log(\frac{n}{\Psi_0})) \notag \\
\leq &~ n^{2 /\epsilon + 10^4 C_3^{-1}/\epsilon} \leq n^{3 \log n / \epsilon}, 
\end{align}
where in the second and third steps we used the parameters of Algorithm~\ref{alg:non_monotone_accel_stab} that $T = \alpha^{-1} \epsilon^{-2} \log n$, $\tau = n^{1/2-\eta} \cdot \epsilon^{-4} \cdot \log(n)^8 \log(\frac{n}{\Psi_0})^2$, $\rho = n^{1/2-3\eta} \cdot \epsilon^{-2} \cdot \log(n)^4 \log(\frac{n}{\Psi_0})$, and $\eta \leq 1/10$, in the last step we assume $n$ is large enough such that $n \geq 10^4 C_3^{-1}$.

Consider any integer $c \geq 1$. We next bound the number of width reduction steps between primal steps $(c-1)L$ and $cL$, and in this proof we denote this number as $K_c$.

Using Lemma~\ref{lem:ChangePsiStab}, we have
\begin{align*}
\left( 1 + C_3 \frac{\epsilon^{4/3} \cdot \rho^{2/3}}{n^{1/3} \log^{2/3}(\frac{n}{\epsilon\rho})} \right)^{K_c} \leq &~ \frac{2 \Psi\left(\ov{\rr}^{(cL,k_{cL})}\right)}{\Psi(\ov{\rr}^{((c-1)L,k_{(c-1)L})})}  
\leq  \frac{4 \Psi\left(\ov{\rr}^{(cL,k_{cL})}\right)}{\Psi_0} 
\leq \frac{10 \cdot n^{3 \log n / \epsilon}}{\Psi_0} \\
\implies K_c \leq &~ \frac{3 \cdot n^{1/3} \log n}{C_3 \cdot \epsilon^{4/3} \cdot \rho^{2/3}} \cdot \frac{3 \log n}{\epsilon} \cdot \log(\frac{n}{\Psi_0}),
\end{align*}
where the first step follows from Lemma~\ref{lem:ChangePsiStab} and that we always have $\textsc{Size}(k) \geq 1$ for all $k$, and that the $C_2 \epsilon \alpha \rho \cdot L$ factor of Lemma~\ref{lem:ChangePsiStab} is upper bounded by $1/2$ since $L \leq \frac{1}{100 (\log^4 n) \epsilon \alpha \rho}$, the second step follows from the same proof as Lemma~\ref{lem:lower_bound_Psi} that $\Psi(\ov{\rr}^{((c-1)L,k_{(c-1)L})})\geq \frac{1}{1+2\epsilon}\cdot\Psi(\ov{\rr}^{(0,0)})$ and Lemma~\ref{lem:lower_bound_Psi_0} that $\Psi(\ov{\rr}^{(0,0)}) \geq \Psi_0$, and the third step follows from $\Psi(\rrbar^{(cL,k_{cL})}) \leq e^{\epsilon + \delta} \Phi(\ww^{(cL,k_{cL})}) \leq e^{\epsilon + \delta} n^{3 \log n / \epsilon}$ by Lemma~\ref{lem:PsiPhi} and Eq.~\eqref{eq:upper_bound_Phi}.

Note that the above bound holds for any integer $c$. Since there are in total $T = \alpha^{-1}\epsilon^{-2}\log n $ number of primal steps, and since $L \leq \frac{1}{1000 \epsilon \alpha \rho \log^4 n}$, the total number of width reduction steps is upper bounded by
\begin{align*}
\frac{T}{L} \cdot K_c \leq &~ \frac{1000 \rho \log^3 n}{\epsilon} \cdot \frac{3 \cdot n^{1/3} \log n}{C_3 \cdot \epsilon^{4/3} \cdot \rho^{2/3}} \cdot \frac{3 \log n}{\epsilon} \cdot \log(\frac{n}{\Psi_0}) \\
\leq &~ 9000 C_3^{-1} \cdot \frac{n^{1/3} \rho^{1/3} \log^5 n}{\epsilon^{10/3}} \cdot \log(\frac{n}{\Psi_0}). \qedhere
\end{align*}
\end{proof}

Next we bound the error of Algorithm~\ref{alg:non_monotone_accel_stab}.
\begin{lemma}[Error of Algorithm~\ref{alg:non_monotone_accel_stab}]\label{lem:error_non_monotone_acc_stab}
Algorithm~\ref{alg:non_monotone_accel_stab} outputs a vector $\xxhat \in \R^d$ such that $\|\CC \xxhat - \dd\|_{\infty} \leq 1 + O(\epsilon)$.
\end{lemma}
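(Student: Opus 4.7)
The plan is to follow the standard MWU convergence template adapted to non-monotone weights via the doubled-system trick built into $\wt{\CC}$ and $\wt{\dd}$. For each original coordinate $e \in [n]$, let $e^+ := e$ and $e^- := n+e$ index the two copies in the doubled system, let $y_i := (\CC\Delta^{(i,k_i)} - \dd)_e$, and split $P := \sum_{i: y_i \geq 0} y_i$ and $N := -\sum_{i: y_i < 0} y_i$, so that $(\CC\xx^{(T)} - \dd)_e = P - N$ and the goal reduces to showing $|P - N|/T \leq 1 + O(\epsilon)$ for every $e$.

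First I would use Lemma~\ref{lem:ChangePhiStab} together with Lemma~\ref{lem:number_of_width_reduction_steps} to obtain $\ln \Phi(\ww^{(T,K)}) \leq (1 + O(\epsilon)) \epsilon \alpha T$. The key checks are that $\epsilon\alpha T = \log n/\epsilon$ dominates the initial $\ln(2n)$ since $\epsilon^2 \alpha T = \log n$, and that under $\eta \leq 1/10$ and the chosen $\tau, \rho$, the width-reduction contribution $\epsilon(\tau^{-1}+\rho^{-2}) K$ is only $O(\epsilon)$. Since $\ww_e^{(T,K)} \leq \Phi(\ww^{(T,K)})$ for every $e$, this upper bounds $\ln\ww_{e^\pm}^{(T,K)}$ by the same quantity.

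Next I would lower bound $\ln \ww_{e^\pm}^{(T,K)}$. Width-reduction steps only increase weights, so only the primal-step multipliers matter. The primal-step condition $\sum_e \rrbar_e |y_i|^3 \leq 2\rho\Psi$ together with $\rrbar_e \geq \tfrac{\epsilon}{2n}\Phi(\ww)$ gives the per-step width bound $|y_i| \leq O((\rho n/\epsilon)^{1/3}) = \wt{O}(n^{1/2-\eta}/\epsilon)$, and under the algorithm's parameters $\epsilon\alpha_\pm |y_i| \leq \tfrac{1}{10}$. Applying $\ln(1+x) \geq x - x^2$ termwise and using $\alpha_- = \alpha_+/(1+2\epsilon)$ then yields, after combining with the upper bound from the previous step,
\[
P - \frac{N}{1+2\epsilon} \leq (1 + O(\epsilon)) T + O(\epsilon \alpha) \sum_i y_i^2,
\]
together with the symmetric inequality obtained by applying the same argument to $e^-$.

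The main obstacle is closing the constants in the quadratic error. Bounding $\sum_i y_i^2 \leq \max_i |y_i| \cdot (P+N) = \wt{O}(n^{1/2-\eta}/\epsilon) \cdot (P+N)$ and using the identity $\alpha \cdot n^{1/2-\eta} = \wt{\Theta}(\epsilon)$ reduces the error to $c\, \epsilon (P+N)$, where the absolute constant $c$ is small: the explicit $1/10$ factor built into the definition of $\alpha$ makes $\alpha_+ \cdot (\rho n/\epsilon)^{1/3} \leq 1/10$, ensuring $c < 1/(1+2\epsilon)$. Summing the two symmetric inequalities then leaves a strictly positive net coefficient $\tfrac{2\epsilon}{1+2\epsilon} - 2c\epsilon$ on $P+N$ and yields $P + N = O(T/\epsilon)$. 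A careful bootstrap then plugs this estimate back together with a tighter bound on $\sum_i y_i^2$ via the summed primal condition $\sum_i \sum_e \rrbar_e |y_i|^3 \leq 2\rho \sum_i \Psi(\rrbar^{(i,k_i)}) \leq \wt{O}(\rho T \Phi^{up})$, which refines the error to $O(\epsilon) T$ on each individual inequality and gives $|P-N| \leq (1+O(\epsilon)) T$. Dividing by $T$ yields $|(\CC\xxhat - \dd)_e| \leq 1 + O(\epsilon)$ for every $e$, completing the proof.
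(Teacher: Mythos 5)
Your overall route is the same as the paper's: pass to the doubled system, upper-bound $\ln\Phi(\ww^{(T,K)})$ via Lemma~\ref{lem:ChangePhiStab} and the width-step count, lower-bound $\ln\ww_{e^\pm}^{(T,K)}$ coordinate-wise, and compare. The setup, the bound $\ln\Phi(\ww^{(T,K)})\leq(1+O(\epsilon))\epsilon\alpha T$, the per-step width bound $\alpha_\pm|y_i|\leq 1/10$, and the first-pass conclusion $P+N=O(T/\epsilon)$ are all fine. The genuine gap is in the final ``bootstrap.'' For a single fixed coordinate $e$, the summed primal condition gives you nothing better than the per-step width bound: it yields $\sum_i|y_i|^3\leq O(\rho n T/\epsilon)$, hence by H\"older $\sum_i y_i^2\leq O\bigl((\rho n/\epsilon)^{2/3}T\bigr)=\Theta(T/\alpha^2)$, so $\epsilon\alpha\sum_i y_i^2=\Theta(\epsilon T/\alpha)=\wt{\Theta}(n^{1/2-\eta})\,T$, which vastly exceeds $O(\epsilon)T$. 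Plugging $P+N=O(T/\epsilon)$ into $\sum_i y_i^2\leq\max_i|y_i|\cdot(P+N)$ does no better: it gives $\epsilon\alpha\sum_i y_i^2\leq\epsilon\cdot(\alpha\max|y_i|)\cdot O(T/\epsilon)=O(T)$, i.e.\ only a constant-factor approximation. There is no absolute bound on $\sum_i y_{i,e}^2$ of order $T/\alpha$ available — a coordinate can sit at the maximum width $\Theta(1/\alpha)$ in every primal step — so any argument that treats the second-order term as an unsigned additive error to be bounded by $O(\epsilon)T$ must fail.

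The fix is to keep the quadratic error in the form $c\epsilon(P+N)$ with $c=O(1)\cdot\alpha\max_i|y_i|<1/(1+2\epsilon)$ (exactly the absorption you already perform when bounding $P+N$) and absorb it into the coefficients of the \emph{individual} inequality rather than only the summed one: from $P-\tfrac{N}{1+2\epsilon}\leq(1+O(\epsilon))T+c\epsilon(P+N)$ one gets $P(1-c\epsilon)\leq(1+O(\epsilon))T+N\bigl(\tfrac{1}{1+2\epsilon}+c\epsilon\bigr)$, and the condition $c\leq\tfrac{1}{1+2\epsilon}$ makes the resulting coefficient of $N$ at most that of $P$, giving $P-N\leq(1+O(\epsilon))T$ directly, with no bootstrap and no bound on $\sum_i y_i^2$ needed. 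This is precisely the role of the asymmetry $\alpha_-=\alpha_+/(1+2\epsilon)$; the paper packages the same cancellation more cleanly by using $1+\epsilon x\geq\exp(\epsilon(1-\epsilon)x)$ for $x\in[0,1]$ and $1+\epsilon x\geq\exp(\epsilon(1+\epsilon)x)$ for $x\in[-1,0]$ applied to $x=\alpha_\pm y_i$, which turns the second-order loss into a multiplicative $(1\mp\epsilon)$ correction on the exponent and lets it read off $|\sum_i y_i|\leq\ln\Phi(\ww^{(T,K)})/(\epsilon(1-\epsilon)\alpha)$ in one line.
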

\begin{proof}
First note that all the requirements on the parameters of the lemmas in Section~\ref{sec:NonMonStab} are satisfied by the parameters of Algorithm~\ref{alg:non_monotone_accel_stab} for $\eta \leq 1/10$.

Let $\xxhat = \frac{\xx}{T}$ be the solution returned by Algorithm \ref{alg:non_monotone_accel_stab}. We will bound the objective value at $\xxhat$. The algorithm has $T = \alpha^{-1}\epsilon^{-2}\log n$ primal steps, and by Lemma~\ref{lem:number_of_width_reduction_steps} we know that it has at most $K \leq O\left(\frac{n^{1/3} \rho^{1/3} \log^5 n}{\epsilon^{10/3}} \cdot \log(\frac{n}{\Psi_0}) \right)$ width reduction steps. We can now apply Lemma \ref{lem:ChangePhiStab} to get,
\[
\Phi\left(\ww^{(T,K)}\right) \le  n\cdot  e^{O(\epsilon\alpha T+\epsilon(\tau^{-1}+\rho^{-2}) K)} \leq  n^{O\left(\frac{1}{\epsilon} \right)},
\]
where the second step follows from the parameters of Algorithm~\ref{alg:non_monotone_accel_stab} that $\alpha = n^{-1/2+\eta} \cdot \epsilon \cdot \log(n)^{-4/3} \log(\frac{n}{\Psi_0})^{-1/3} / 10$, $\tau = n^{1/2-\eta} \cdot \epsilon^{-4} \cdot \log(n)^8 \log(\frac{n}{\Psi_0})^2$, and $\rho = n^{1/2-3\eta} \cdot \epsilon^{-2} \cdot \log(n)^4 \log(\frac{n}{\Psi_0})$.

We bound the $\ell_{\infty}$ norm of $\CC \xxhat - \dd = \frac{1}{T} \cdot \sum_{i=0}^{T-1} (\CC \Delta^{(i,k_i)} - \dd)$ using the upper bound of the potential. Since $\wt{\CC} = \begin{bmatrix}
    \CC \\-\CC
\end{bmatrix}$, and $\wt{\dd}= \begin{bmatrix}
    \dd \\-\dd
\end{bmatrix}$, we have that the weights $\ww \in \mathbb{R}^{2n}$. Therefore, for $\ww_+\in \mathbb{R}^n$ and $\ww_{-}\in \mathbb{R}^n$, we can write $\ww^{(i,k)} = \begin{bmatrix}
    \ww_+^{(i,k)}\\ \ww_{-}^{(i,k)}
\end{bmatrix}$, and we have that $\Phi(\ww) = \sum_{e\in [n]}{\ww_+}_e + {\ww_{-}}_e$. We can similarly define $\ov{\rr}_+$ and $\ov{\rr}_{-}$ such that $\ov{\rr} = \begin{bmatrix}
    \ov{\rr}_+\\ \ov{\rr}_{-}
\end{bmatrix}$. Since $\Delta^{(i,k)}$ is obtained by solving,
\[
\Delta^{(i,k)} = \arg\min_{\Delta}\sum_{e\in [2n]}\ov{\rr}^{(i,k)}_e (\wt{\CC}\Delta-\wt{\dd})_e^2 = \sum_{e\in [n]}(\ov{\rr}_+^{(i,k)} + \ov{\rr}_{-})^{(i,k)}_e (\CC\Delta-\dd)_e^2, 
\]
the update rule $\ww^{(i+1,k)} = \ww^{(i,k)} \cdot \big(1 + \epsilon \overrightarrow{\alpha}^{(i,k)} (\wt{\CC} \Delta^{(i,k)} - \wt{\dd})\big)$ implies that in every primal step,
\begin{align*}
(\ww_+)_e^{(i+1,k)}=&~ (\ww_+)_e^{(i,k)} \cdot \big(1 + \epsilon \overrightarrow{\alpha}^{(i,k)} (\CC \Delta^{(i,k)} - \dd)\big), \\
(\ww_-)_e^{(i+1,k)}=&~ (\ww_-)_e^{(i,k)} \cdot \big(1 - \epsilon\overrightarrow{\alpha}^{(i,k)} (\CC \Delta^{(i,k)} - \dd)\big).
\end{align*}
Now,
\begin{align*}
(\ww_+)^{(T,K)}_e = &~ (\ww_+)^{(0,0)}_e \cdot \prod_{i=0}^{T-1} \Big(1 + \epsilon \overrightarrow{\alpha}^{(i,k_i)}_e (\CC \Delta^{(i,k_i)} - \dd)_e\Big) \\
= &~ \prod_{i: (\CC \Delta^{(i,k_i)} - \dd)_e \geq 0} (1 + \epsilon \alpha_+ (\CC \Delta^{(i,k_i)} - \dd)_e) \cdot \prod_{i: (\CC \Delta^{(i,k_i)} - \dd)_e < 0} (1 + \epsilon \alpha_- (\CC \Delta^{(i,k_i)} - \dd)_e) \\
\geq &~ \exp\left(\epsilon(1-\epsilon)\alpha \cdot \sum_{i=0}^{T-1} (\CC \Delta^{(i,k_i)} - \dd)_e\right),
\end{align*}
where the second step follows from $\ww_+^{(0,0)} = 1_{n}$, and $\overrightarrow{\alpha}_e^{(i,k_i)} = \alpha_+$ if $(\CC \Delta^{(i,k_i)} - \dd)_e \geq 0$ and $\overrightarrow{\alpha}_e^{(i,k_i)} = \alpha_-$ otherwise, the third step follows from $1 + \epsilon x \geq \exp(\epsilon (1-\epsilon) x)$ for all $0 \leq x \leq 1$ and $1 + \epsilon x \geq \exp(\epsilon (1 + \epsilon) x)$ for all $-1 \leq x \leq 0$, and we have that $|\overrightarrow{\alpha}^{(i,k)} \cdot (\CC \Delta^{(i,k)} - \dd)| \leq  \frac{1}{10}$ by Lemma~\ref{lem:positiveWopt}. Similarly, we also get,
\[
(\ww_{-})^{(T,K)}_e \geq \exp\left(\epsilon(1-\epsilon)\alpha \cdot \sum_{i=0}^{T-1} - (\CC \Delta^{(i,k_i)} - \dd)_e\right).
\]

This implies that
\begin{align}\label{eq:finalBound}
\left|\sum_{i=0}^{T-1} (\CC \Delta^{(i,k_i)} - \dd)_e\right| \leq \frac{\ln\left((\ww_+)^{(T,K)}_e +(\ww_{-})_e^{(T,K)}\right)}{\epsilon(1-\epsilon)\alpha} \leq \frac{\ln(\Phi(\ww^{(T,K)}))}{\epsilon(1-\epsilon)\alpha}.
\end{align}

So we have
\begin{align*}
\|\CC \xxhat - \dd\|_{\infty} = &~ \frac{1}{T} \max_e \left|\sum_{i=0}^{T-1} (\CC \Delta^{(i,k_i)} - \dd)_e\right| \\
\leq &~ \frac{\ln(\Phi(\ww^{(T,K)}))}{\alpha T} \\
\leq &~ \frac{\ln n + (1+\epsilon)\epsilon \alpha T + (1+\epsilon)}{\epsilon(1-\epsilon)\alpha T} \\
\leq &~ 1 + 10\epsilon. \qedhere
\end{align*}
\end{proof}

\subsection{Guarantees of Algorithm~\ref{alg:non_monotone_accel_robust}: Robust Primal Step}\label{sec:NonMonRob}

In this section, we finally present the analysis of Algorithm~\ref{alg:non_monotone_accel_robust} which additionally approximates the primal steps via a {\it sketch}. We will again use the two potentials as defined in Eq.~\eqref{eq:defPhi} and \eqref{eq:defPsi}.
% \begin{align*}
% \Phi\left(\ww^{( i,k )} \right) \defeq &~ \norm{\ww^{(i,k)}}_1 \\
% \Psi(\rrbar^{(i,k)})\defeq &~ \min_{\Delta\in \rea^d }\sum_e \rrbar^{(i,k)}_e (\CC\Delta-\dd)^2_e.
% \end{align*}

In the next section, we will first present the properties of the sketching matrices that we need to use. This constitutes the major part of the section. The remaining would directly build on the analysis of the previous section.

\subsubsection{Sketching Bounds}
\begin{lemma}[Coordinate-wise embedding, Lemma~E.5 of \cite{lsz19}]\label{lem:CE_one_sketch}
Let $\SS \in \R^{b \times n}$ be sampled from distribution $\Pi$ such that each entry is $+\frac{1}{\sqrt{b}}$ with probability $1/2$ and $-\frac{1}{\sqrt{b}}$ with probability $1/2$.
For any fixed vectors $\gg, \hh\in \R^n$, the following properties hold:
\begin{align*}
1. & \E_{\SS \sim \Pi}[ \gg^{\top} \SS^\top \SS \hh ] = \gg^{\top} \hh, \\
2. & \E_{\SS \sim \Pi}[ (\gg^{\top} \SS^\top \SS \hh )^2 ] \leq (\gg^{\top} \hh)^2 + \frac{C_1}{b} \|\gg\|_2^2 \| \hh \|_2^2, \\
3. & \Pr_{\SS \sim \Pi}\Big[ | \gg^{\top} \SS^\top \SS \hh - \gg^{\top} \hh | \leq \frac{ C_2 }{ \sqrt{b} } \|\gg\|_2 \| \hh \|_2 \Big] \geq 1 - 1 / n^4.
\end{align*}
where $C_1 = O(1)$, $C_2 = O(\log n)$.
\end{lemma}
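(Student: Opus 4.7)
}
Write $\SS = \tfrac{1}{\sqrt{b}} [\sigma_1, \ldots, \sigma_b]^{\top}$ where $\sigma_1, \ldots, \sigma_b \in \{\pm 1\}^n$ are independent Rademacher vectors with independent coordinates. Then
\[
\gg^{\top} \SS^{\top} \SS \hh \;=\; \frac{1}{b} \sum_{k=1}^{b} X_k, \qquad X_k \;:=\; (\sigma_k^{\top} \gg)(\sigma_k^{\top} \hh),
\]
and the $X_k$ are i.i.d. copies of $X := (\sigma^{\top} \gg)(\sigma^{\top} \hh)$ for a single Rademacher vector $\sigma$. All three parts reduce to understanding $X$.

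\emph{For (1)}, expand $X = \sum_{i,j} g_i h_j \sigma_i \sigma_j$ and use $\E[\sigma_i \sigma_j] = \mathbbm{1}[i=j]$ to get $\E[X] = \gg^{\top} \hh$; linearity then yields $\E[\gg^{\top}\SS^{\top}\SS\hh] = \gg^{\top}\hh$.

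\emph{For (2)}, I would compute $\E[X^2]$ by expanding the fourfold sum and using the standard moment identity $\E[\sigma_i \sigma_j \sigma_p \sigma_q]$ $= 1$ exactly when the multiset $\{i,j,p,q\}$ has all even multiplicities (i.e.\ the three pairings $\{ij|pq\}$, $\{ip|jq\}$, $\{iq|jp\}$, with the diagonal $i=j=p=q$ counted once). Grouping by pairings gives
\[
\E[X^2] \;=\; 2(\gg^{\top}\hh)^2 + \|\gg\|_2^2 \|\hh\|_2^2 - 2 \sum_i g_i^2 h_i^2 \;\leq\; 2(\gg^{\top}\hh)^2 + \|\gg\|_2^2 \|\hh\|_2^2.
\]
By independence of the $X_k$'s, $\E[(\gg^{\top}\SS^{\top}\SS\hh)^2] = \tfrac{1}{b}\E[X^2] + \tfrac{b-1}{b}(\gg^{\top}\hh)^2$, and combining with Cauchy--Schwarz $(\gg^{\top}\hh)^2 \leq \|\gg\|_2^2 \|\hh\|_2^2$ yields (2) with $C_1 = 2$.

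\emph{For (3)}, I would apply a concentration inequality for Rademacher quadratic forms. Each $X_k = \sigma_k^{\top} A \sigma_k$ is a quadratic form in $\sigma_k$ with the symmetric matrix $A = \tfrac{1}{2}(\gg \hh^{\top} + \hh \gg^{\top})$, for which $\|A\|_F \leq \|\gg\|_2 \|\hh\|_2$ and $\|A\|_{\mathrm{op}} \leq \|\gg\|_2 \|\hh\|_2$. The Hanson--Wright inequality then gives that $X_k - \E[X_k]$ is subexponential with those parameters. Applying Bernstein's inequality to the i.i.d.\ sum $\tfrac{1}{b} \sum_{k=1}^{b} (X_k - \gg^{\top}\hh)$, a deviation $t = \tfrac{C_2}{\sqrt{b}} \|\gg\|_2 \|\hh\|_2$ achieves failure probability $\exp(-\Omega(\min(C_2^2, C_2 \sqrt{b})))$, so choosing $C_2 = \Theta(\log n)$ gives the $1 - 1/n^4$ guarantee.

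\emph{Main obstacle.} Parts (1) and (2) are a clean combinatorial/moment calculation. The only real subtlety is (3): carefully translating the Hanson--Wright subexponential tail into the stated Bernstein-type high-probability bound with the right scaling $\|\gg\|_2 \|\hh\|_2 / \sqrt{b}$, while absorbing the $\log n$ factor to hit the $1/n^4$ failure probability. Keeping separate track of the Frobenius-norm regime (Gaussian-like tail, scaling with $1/\sqrt{b}$) and the operator-norm regime (exponential tail, scaling with $1/b$) is what forces $C_2 = O(\log n)$ rather than $C_2 = O(1)$.
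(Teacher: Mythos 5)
Your proposal is correct. Note first that the paper does not prove this lemma at all: it is imported as a black box from [lsz19, Lemma~E.5], and none of the appendices (including the ``Missing Proofs'' section) contain an argument for it, so there is no in-paper proof to compare against. On the merits, your decomposition $\gg^{\top}\SS^{\top}\SS\hh = \tfrac{1}{b}\sum_k (\sigma_k^{\top}\gg)(\sigma_k^{\top}\hh)$ is the right reduction; the fourth-moment identity in part (2) is correct (the Rademacher correction $-2\sum_i g_i^2 h_i^2$ is nonpositive and can be dropped), and the i.i.d.\ variance computation gives $C_1 = 2$ after one application of Cauchy--Schwarz. For part (3), the Hanson--Wright route works, though it is slightly heavier than necessary: since $\sigma_k^{\top}\gg$ and $\sigma_k^{\top}\hh$ are each sub-Gaussian with $\psi_2$-norm $O(\|\gg\|_2)$ and $O(\|\hh\|_2)$ by Khintchine/Hoeffding, their product is sub-exponential with $\psi_1$-norm $O(\|\gg\|_2\|\hh\|_2)$ directly, and Bernstein for i.i.d.\ sub-exponential sums gives the tail $\exp(-c\min(C_2^2, C_2\sqrt{b}))$ exactly as you state. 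The only hypothesis you should make explicit is that $\sqrt{b}$ exceeds a fixed constant so that the linear branch of the Bernstein minimum does not dominate; this is trivially satisfied for the parameter settings in the paper ($b = \wt{\Theta}(n^{1/2+\eta})$), and then $C_2 = \Theta(\log n)$ yields failure probability $n^{-4}$.
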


\begin{lemma}[Bounds for the vector $\wh{\uu}$]\label{lem:hat_u_bounds} 
For all $i \in [0:T]$, the vector $\wh{\uu}^{(i,k)} = (\RRbar^{(i,k)})^{-1/2} \cdot (\SS^{(i)})^{\top} \SS^{(i)} \cdot (\RRbar^{(i,k)})^{1/2} \uu^{(i,k)}$ satisfies the following properties:
\begin{enumerate}
\item {\bf Expectation.} $\E_{\SS^{(i)}}[\wh{\uu}^{(i,k)} \mid \SS^{(0)}, \cdots, \SS^{(i-1)}] = \uu^{(i,k)}$.
\item {\bf Variance.} For any vector $\gg \in \R^n$ that is independent of $\SS^{(i)}$, 
\[
\Var_{\SS^{(i)}}[\sum_e \gg_e \wh{\uu}^{(i,k)}_e \mid \SS^{(0)}, \cdots, \SS^{(i-1)}] \leq \frac{C_1}{b} \cdot \gg^{\top} (\RRbar^{(i,k)})^{-1} \gg \cdot (\uu^{(i,k)})^{\top} \RRbar^{(i,k)} \uu^{(i,k)}.
\]
In particular, if $\ww^{(i,k)} \geq 0$, then this implies that:
\begin{itemize}
\item $\Var_{\SS^{(i)}}\left[\wh{\uu}^{(i,k)}_e \mid \SS^{(0)}, \cdots, \SS^{(i-1)} \right] \leq \frac{C_1 n}{\epsilon b}$.
\item $\E_{\SS^{(i)}}[\wh{\Psi}(\rrbar^{(i,k)}, \SS^{(i)}) \mid \SS^{(0)}, \cdots, \SS^{(i-1)}] \leq (1 + \frac{C_1 \cdot n}{b}) \cdot \Psi(\rrbar^{(i,k)})$.
\item $\Var_{\SS^{(i)}}[\sum_e \ww^{(i,k)}_e \wh{\uu}^{(i,k)}_e \mid \SS^{(0)}, \cdots, \SS^{(i-1)}] \leq \frac{C_1}{b} \cdot \Phi(\ww^{(i,k)}) \cdot \Psi(\rrbar^{(i,k)})$.
\end{itemize}
\item {\bf Coordinate-wise absolute value.} For any vector $\gg \in \R^n$ that is independent of $\SS^{(i)}$, and when conditioned on any $\SS^{(0)}, \cdots, \SS^{(i-1)}$, we have
\[
\Pr_{\SS^{(i)}}\left[|\sum_e \gg_e (\wh{\uu}^{(i,k)}_e - \uu^{(i,k)}_e)| \leq \frac{C_2}{\sqrt{b}} \cdot (\gg^{\top} (\RRbar^{(i,k)})^{-1} \gg)^{1/2} \cdot \Psi(\rrbar^{(i,k)})^{1/2} \right] \geq 1 - 1/n^4.
\]
In particular, this implies that when conditioned on any $\SS^{(0)}, \cdots, \SS^{(i-1)}$:
\begin{itemize}
\item For any $e \in [n]$, if $\ww_e^{(i,k)} \geq 0$, then
\[
\Pr_{\SS^{(i)}}\left[|\wh{\uu}^{(i,k)}_e - \uu^{(i,k)}_e| \leq \frac{C_2}{\sqrt{b}} \cdot \frac{\sqrt{n}}{\sqrt{\epsilon}} \cdot \Phi(\ww^{(i,k)})^{-1/2} \cdot \Psi(\rrbar^{(i,k)})^{1/2} \right] \geq 1 - 1/n^4.
\]
\item If $\ww^{(i,k)} \geq 0$, then 
\begin{align*}
\Pr_{\SS^{(i)}}\left[| \sum_e \ww_e^{(i,k)} \cdot (\wh{\uu}^{(i,k)}_e - \uu^{(i,k)}_e)| \leq \frac{C_2}{\sqrt{b}} \cdot \Phi(\ww^{(i,k)})^{1/2} \cdot \Psi(\rrbar^{(i,k)})^{1/2} \right] \geq &~ 1 - 1/n^4, \\
\Pr_{\SS^{(i)}}\left[| \sum_e \rrbar_e^{(i,k)} \cdot (\wh{\uu}^{(i,k)}_e - \uu^{(i,k)}_e)| \leq \frac{C_2 (1 + \epsilon)}{\sqrt{b}} \cdot \Phi(\ww^{(i,k)})^{1/2} \cdot \Psi(\rrbar^{(i,k)})^{1/2} \right] \geq &~ 1 - 1/n^4.
\end{align*}
\end{itemize}
\item {\bf Symmetry.} When conditioned on any $\SS^{(0)}, \cdots, \SS^{(i-1)}$, for any $i \in [T]$ and any $e \in [n]$, the distribution of $\wh{\uu}^{(i,k)}_e - \uu^{(i,k)}_e$ is symmetric around zero, i.e., for any $z \in \R$, 
\[
\Pr_{\SS^{(i)}}[\wh{\uu}^{(i,k)}_e - \uu^{(i,k)}_e = z] = \Pr_{\SS^{(i)}}[\wh{\uu}^{(i,k)}_e - \uu^{(i,k)}_e = -z].
\]
\end{enumerate}
\end{lemma}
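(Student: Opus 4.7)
The plan is to reduce every item to Lemma~\ref{lem:CE_one_sketch} by rewriting the relevant quantities as bilinear forms $\hh^{\top}\SS^{\top}\SS\mm$. Specifically, for any vector $\gg$ independent of $\SS^{(i)}$, one has
\[
\sum_e \gg_e \wh{\uu}^{(i,k)}_e \;=\; \Bigl((\RRbar^{(i,k)})^{-1/2}\gg\Bigr)^{\top} (\SS^{(i)})^{\top}\SS^{(i)} \Bigl((\RRbar^{(i,k)})^{1/2}\uu^{(i,k)}\Bigr),
\]
so one can directly substitute $\hh=(\RRbar^{(i,k)})^{-1/2}\gg$ and $\mm=(\RRbar^{(i,k)})^{1/2}\uu^{(i,k)}$ into the three conclusions of Lemma~\ref{lem:CE_one_sketch}. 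Throughout I would use the normal-equations identity $(\uu^{(i,k)})^{\top}\RRbar^{(i,k)}\uu^{(i,k)}=\Psi(\rrbar^{(i,k)})$, which holds because $\Delta^{(i,k)}$ is the $\RRbar^{(i,k)}$-weighted least-squares minimizer of Line~\ref{algline:linsysRobust}. The unbiasedness of item~1 is immediate by taking $\gg=\mathbf{e}_e$ and using part~(1) of Lemma~\ref{lem:CE_one_sketch}. The general variance and concentration inequalities of items~2 and 3 are likewise immediate from parts~(2) and (3).

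For the specialized consequences, I would unwind the weight comparisons. Using $\rrbar_e^{(i,k)} \approx_{\delta}\rr_e^{(i,k)}=\ww_e^{(i,k)}+\tfrac{\epsilon}{2n}\Phi(\ww^{(i,k)})$ together with $\ww^{(i,k)}\ge 0$, one gets $\rrbar_e^{(i,k)}\ge e^{-\delta}\tfrac{\epsilon}{2n}\Phi(\ww^{(i,k)})$ and $\sum_e (\ww_e^{(i,k)})^2/\rrbar_e^{(i,k)}\le e^{\delta}\Phi(\ww^{(i,k)})$, as well as $\sum_e \rrbar_e^{(i,k)}\le e^{\delta}(1+\epsilon)\Phi(\ww^{(i,k)})$. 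Plugging these respectively into the general variance/concentration bound with $\gg=\mathbf{e}_e$, $\gg=\ww^{(i,k)}$ and $\gg=\rrbar^{(i,k)}$ yields the three variance corollaries and the three concentration corollaries, absorbing the $e^{\delta},(1+\epsilon)$ factors into the constants $C_1,C_2$. For the $\wh{\Psi}$ bound I would take the definition $\wh{\Psi}(\rrbar^{(i,k)},\SS^{(i)})=\sum_e\rrbar_e^{(i,k)}(\wh{\uu}_e^{(i,k)})^2$, apply part~(2) of Lemma~\ref{lem:CE_one_sketch} coordinate-wise to $(\bb_e^{\top}\SS^{\top}\SS\aa)^2$ with $\bb_e=\rrbar_e^{-1/2}\mathbf{e}_e$ and $\aa=(\RRbar^{(i,k)})^{1/2}\uu^{(i,k)}$, and sum over $e$; the cross term telescopes to $\sum_e\rrbar_e(\uu_e)^2=\Psi(\rrbar)$ while the error term contributes $\tfrac{C_1}{b}\sum_e 1\cdot\Psi(\rrbar)=\tfrac{C_1 n}{b}\Psi(\rrbar)$.

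The remaining item is the symmetry statement, which is not a direct consequence of Lemma~\ref{lem:CE_one_sketch}. Here I would decompose
\[
\wh{\uu}^{(i,k)}_e-\uu^{(i,k)}_e \;=\; (\rrbar^{(i,k)}_e)^{-1/2}\sum_{r=1}^{b}\SS^{(i)}_{r,e}\sum_{e'\ne e}\SS^{(i)}_{r,e'}\,\aa_{e'},
\]
obtained by isolating the diagonal contribution $\sum_r(\SS^{(i)}_{r,e})^2=1$, which produces exactly $\uu^{(i,k)}_e$. Now consider the sign-flip map that negates column $e$ of $\SS^{(i)}$: by the Rademacher symmetry of each entry, the resulting matrix has the same distribution as $\SS^{(i)}$, while the displayed right-hand side is odd in $\SS^{(i)}_{\cdot,e}$ and hence is negated. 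This gives $\wh{\uu}^{(i,k)}_e-\uu^{(i,k)}_e \stackrel{d}{=} -(\wh{\uu}^{(i,k)}_e-\uu^{(i,k)}_e)$, which is exactly the claimed symmetry. I expect this decomposition step to be the only mildly subtle part of the proof; everything else is a mechanical specialization of Lemma~\ref{lem:CE_one_sketch} combined with the standard weight bounds implied by $\rrbar\approx_{\delta}\rr$ and the normal-equations identity.
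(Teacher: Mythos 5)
Your proof is correct. Parts 1--3 follow essentially the same route as the paper: rewrite $\sum_e \gg_e \wh{\uu}^{(i,k)}_e$ as the bilinear form $\bigl((\RRbar^{(i,k)})^{-1/2}\gg\bigr)^{\top}(\SS^{(i)})^{\top}\SS^{(i)}\bigl((\RRbar^{(i,k)})^{1/2}\uu^{(i,k)}\bigr)$, invoke the three parts of Lemma~\ref{lem:CE_one_sketch}, and specialize via $\gg = \mathbf{e}_e$, $\gg=\ww^{(i,k)}$, $\gg=\rrbar^{(i,k)}$ together with the bounds $\rrbar_e \gtrsim \frac{\epsilon}{2n}\Phi(\ww)$, $\sum_e (\ww_e)^2/\rrbar_e \lesssim \Phi(\ww)$ and $(\uu)^{\top}\RRbar\uu=\Psi(\rrbar)$; in fact your coordinate-wise treatment of $\wh{\Psi}$ supplies a bullet that the paper's written proof skips. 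Where you genuinely diverge is Part 4: the paper expands $\gg^{\top}(\SS^{\top}\SS-\II)\hh=\sum_{i<j}$-terms and runs an induction over columns with a conditional global sign flip, proving symmetry for \emph{arbitrary} $\gg,\hh$; you instead isolate the diagonal contribution $\|\ss_e\|_2^2\,\aa_e=\aa_e$ so that $\wh{\uu}_e-\uu_e$ is linear in column $e$ of $\SS^{(i)}$ and does not otherwise involve that column, and then negate only column $e$, which preserves the law of $\SS^{(i)}$ and negates the quantity. Your argument is shorter and more transparent, at the price of using that $\gg$ is a single standard basis vector -- which is all the lemma statement requires.
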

\begin{proof}
In this proof we will use the following fact from Algorithm~\ref{alg:non_monotone_accel_robust}: for any $i \in [0:T]$, $\uu^{(i,k)}$, $\ww^{(i,k)}$, and $\rrbar^{(i,k)}$ are random variables that depend on $\SS^{(0)}, \cdots, \SS^{(i-1)}$, and $\wh{\uu}^{(i,k)}$ is a random variable that depends on $\SS^{(0)}, \cdots, \SS^{(i-1)}$, and $\SS^{(i)}$.

{\bf Part 1 (Expectation).} For any fixed $\SS^{(0)}, \cdots, \SS^{(i-1)}$, we have
\begin{align*}
\E_{\SS^{(i)}}[\wh{\uu}^{(i,k)}] = &~ \E_{\SS^{(i)}}[(\RRbar^{(i,k)})^{-1/2} \cdot (\SS^{(i)})^{\top} \SS^{(i)} \cdot (\RRbar^{(i,k)})^{1/2} \uu^{(i,k)}] \\
= &~ (\RRbar^{(i,k)})^{-1/2} \cdot (\RRbar^{(i,k)})^{1/2} \uu^{(i,k)} \\
= &~ \uu^{(i,k)},
\end{align*}
where the second step follows from Part~1 of Lemma~\ref{lem:CE_one_sketch}.

{\bf Part 2 (Variance).} Consider any fixed $\SS^{(0)}, \cdots, \SS^{(i-1)}$. For any vector $\gg \in \R^n$ that is independent of $\SS^{(i)}$, we have
\begin{align*}
\Var_{\SS^{(i)}}[\sum_e \gg_e \wh{\uu}^{(i,k)}_e] = &~ \E_{\SS^{(i)}}[(\sum_e \gg_e \wh{\uu}^{(i,k)}_e)^2] - (\sum_e \gg_e \uu^{(i,k)}_e)^2.
\end{align*}
We also have 
\begin{align*}
\E_{\SS^{(i)}}[(\sum_e \gg_e \wh{\uu}^{(i,k)}_e)^2] = &~ \E_{\SS^{(i)}}[(\gg^{\top} \wh{\uu}^{(i,k)})^2] \\
= &~ \E_{\SS^{(i)}}\left[\left(\gg^{\top} (\RRbar^{(i,k)})^{-1/2} \cdot (\SS^{(i)})^{\top} \SS^{(i)} \cdot (\RRbar^{(i,k)})^{1/2} \uu^{(i,k)}\right)^2\right] \\
\leq &~ (\gg^{\top} \uu^{(i,k)})^2 + \frac{C_1}{b} \cdot \gg^{\top} (\RRbar^{(i,k)})^{-1} \gg \cdot (\uu^{(i,k)})^{\top} \RRbar^{(i,k)} \uu^{(i,k)},
\end{align*}
where the third step follows from Part~2 of Lemma~\ref{lem:CE_one_sketch}.

So we have
\begin{align*}
\Var_{\SS^{(i)}}[\sum_e \gg_e \wh{\uu}^{(i,k)}_e] \leq \frac{C_1}{b} \cdot \gg^{\top} (\RRbar^{(i,k)})^{-1} \gg \cdot (\uu^{(i,k)})^{\top} \RRbar^{(i,k)} \uu^{(i,k)}.
\end{align*}

In particular, if $\ww^{(i,k)} \geq 0$, then this implies the following bounds:
\begin{itemize}
\item For any $e \in [n]$, let $\gg = 1_e$, then we have 
\begin{align*}
\Var_{\SS^{(i)}}\left[\wh{\uu}^{(i,k)}_e\right] \leq &~ \frac{C_1}{b} \cdot \frac{\Psi(\rrbar^{(i,k)})}{\rrbar_e^{(i)}} \leq \frac{C_1 n}{\epsilon b},
\end{align*}
where the second step follows from $(1+\delta)\rrbar\geq \rr^{(i,k)} = \ww^{(i,k)} + \frac{\epsilon}{n} \Phi(\ww^{(i,k)}) \geq \frac{\epsilon}{n} \Phi(\ww^{(i,k)})$ when $\ww^{(i,k)} \geq 0$.
\item Let $g = \ww^{(i,k)}$. Then we have
\begin{align*}
\Var_{\SS^{(i)}}[\sum_e \ww^{(i,k)}_e \wh{\uu}^{(i,k)}_e] \leq &~ \frac{C_1}{b} \cdot (\ww^{(i,k)})^{\top} (\RRbar^{(i,k)})^{-1} \ww^{(i,k)} \cdot (\uu^{(i,k)})^{\top} \RRbar^{(i,k)} \uu^{(i,k)} \\
\leq &~ \frac{C_1}{b} \cdot (\sum_e \frac{(\ww^{(i,k)}_e)^2}{\rrbar^{(i,k)}_e}) \cdot \Psi(\rrbar^{(i,k)}) \\
\leq &~ \frac{C_1}{b} \cdot \Phi(\ww^{(i,k)}) \cdot \Psi(\rrbar^{(i,k)}),
\end{align*}
where the second step follows from $\Psi(\rrbar^{(i,k)}) = (\uu^{(i,k)})^{\top} \RRbar^{(i,k)} \uu^{(i,k)}$, the third step follows from $(1+\delta)\rrbar \geq \rr^{(i,k)} = \ww^{(i,k)} + \frac{\epsilon}{n} \Phi(\ww^{(i,k)}) \geq \ww^{(i,k)}$ when $\ww^{(i,k)} \geq 0$, and $\sum_e \ww^{(i,k)}_e = \Phi(\ww^{(i,k)})$.
\end{itemize}

{\bf Part 3 (Coordinate-wise absolute value).} 
Consider any fixed $\SS^{(0)}, \cdots, \SS^{(i-1)}$. For any $\gg \in \R^n$ that is independent of $\SS^{(i)}$, using Part~3 of Lemma~\ref{lem:CE_one_sketch} we have that with probability at least $ 1 - 1/n^4$ over the randomness of $\SS^{(i)}$,
\begin{align*}
|\sum_e \gg_e (\wh{\uu}^{(i,k)}_e - \uu^{(i,k)}_e)| = &~ \left|\gg^{\top} (\RRbar^{(i,k)})^{-1/2}  \cdot \left( (\SS^{(i)})^{\top} \SS^{(i)} \cdot (\RRbar^{(i,k)})^{1/2} \uu^{(i,k)} - (\RRbar^{(i,k)})^{1/2} \uu^{(i,k)} \right)\right| \\
\leq &~ \frac{C_2}{\sqrt{b}} \|(\RRbar^{(i,k)})^{-1/2} \gg\|_2 \cdot \|(\RRbar^{(i,k)})^{1/2} \uu^{(i,k)}\|_2 \\
= &~ \frac{C_2}{\sqrt{b}} \cdot (\gg^{\top} (\RRbar^{(i,k)})^{-1} \gg)^{1/2} \cdot \Psi(\rrbar^{(i,k)})^{1/2},
\end{align*}
where the third step follows from $\Psi(\rrbar^{(i,k)}) = \sum_{e'} \rrbar^{(i,k)}_{e'} (\uu^{(i,k)}_{e'})^2$.

In particular, if $\ww^{(i,k)} \geq 0$, then this implies the following bounds:
\begin{itemize}
\item For any $e \in [n]$, let $\gg = 1_e$, we have that with probability at least $1 - 1/n^4$ over the randomness of $\SS^{(i)}$,
\begin{align*}
\left|\wh{\uu}^{(i,k)}_e - \uu^{(i,k)}_e \right| \leq &~ \frac{C_2}{\sqrt{b}} (\rrbar_e^{(i,k)})^{-1/2} \cdot \Psi(\rrbar^{(i,k)})^{1/2} \\
\leq &~ \frac{C_2}{\sqrt{b}} \cdot \frac{\sqrt{n}}{\sqrt{\epsilon}} \cdot \Phi(\rrbar^{(i,k)})^{-1/2} \cdot \Psi(\rrbar^{(i,k)})^{1/2} 
\end{align*}
where the second step follows from $(1+\delta)\rrbar\geq\rr_e^{(i,k)} = \ww_e^{(i,k)} + \frac{\epsilon}{n} \Phi(\ww^{(i,k)}) \geq \frac{\epsilon}{n} \Phi(\ww^{(i,k)})$ since $\ww_e^{(i,k)} \geq 0$. 
\item Let $\gg = \ww^{(i,k)}$, we have that with probability at least $1 - 1/n^4$ over the randomness of $\SS^{(i)}$,
\begin{align*}
| \sum_e \ww_e^{(i,k)} \cdot (\wh{\uu}^{(i,k)}_e - \uu^{(i,k)}_e)| \leq &~ \frac{C_2}{\sqrt{b}} (\sum_e \frac{(\ww_e^{(i,k)})^2}{\rrbar_e^{(i,k)}})^{1/2} \cdot \Psi(\rrbar^{(i,k)})^{1/2} \\
\leq &~ \frac{C_2}{\sqrt{b}} \Phi(\ww^{(i,k)})^{1/2} \Psi(\rrbar^{(i,k)})^{1/2}
\end{align*}
where the second step follows from $(1+\delta)\rrbar\geq\rr^{(i,k)} = \ww^{(i,k)} + \frac{\epsilon}{n} \Phi(\ww^{(i,k)}) \geq \ww^{(i,k)} \geq 0$ since $\ww^{(i,k)} \geq 0$ and hence $\Psi(\rrbar^{(i,k)}) \geq 0$.

Similarly let $\gg = \rrbar^{(i,k)}$ we have
\begin{align*}
| \sum_e \rrbar_e^{(i,k)} \cdot (\wh{\uu}^{(i,k)}_e - \uu^{(i,k)}_e)| \leq &~ \frac{C_2}{\sqrt{b}} (\sum_e \frac{(\rrbar_e^{(i,k)})^2}{\rrbar_e^{(i,k)}})^{1/2} \cdot \Psi(\rrbar^{(i,k)})^{1/2} \\
\leq &~ \frac{C_2 (1 + \epsilon)}{\sqrt{b}} \Phi(\ww^{(i,k)})^{1/2} \Psi(\rrbar^{(i,k)})^{1/2}
\end{align*}
where the second step follows from $\sum_e \rr_e^{(i,k)} = \Phi(\ww^{(i,k)}) + n \cdot \frac{\epsilon}{n} \Phi(\ww^{(i,k)})$.
\end{itemize}

{\bf Part 4 (Symmetry).} It suffices to prove that for any vectors $\gg, \hh\in \R^n$, the distribution of $\gg^{\top} (\SS^{\top} \SS - I) \hh$ is symmetric around zero when $\SS \in \R^{b \times n}$ is sampled from distribution $\Pi$ such that each entry is $+\frac{1}{\sqrt{b}}$ with probability $1/2$ and $-\frac{1}{\sqrt{b}}$ with probability $-1/2$.

Let $\ss_1, \cdots, \ss_n \in \R^b$ denote the columns of $\SS$. We have
\begin{align*}
\gg^{\top} (\SS^{\top} \SS - I) \hh = &~ \sum_{i=1}^n \sum_{j\neq i} \gg_i \hh_j \langle \ss_i, \ss_j \rangle \\
= &~ \sum_{i=1}^{n-1} \Big\langle \ss_i, \sum_{j=i+1}^n (\gg_i \hh_j + \gg_j \hh_i )\ss_j \Big\rangle.
\end{align*}
Define $x_i = \Big\langle \ss_i, \sum_{j=i+1}^n (\gg_i \hh_j + \gg_j \hh_i )\ss_j \Big\rangle$ for all $i \in [n-1]$. Observe that when fixing any vectors $\ss_{i+1} = \ss^*_{i+1}, \cdots, \ss_{n} = \ss^*_{n}$, since $\ss_i$ is independent of them and each entry of $\ss_i$ is $+\frac{1}{\sqrt{b}}$ or $-\frac{1}{\sqrt{b}}$ with $1/2$ probability, we have
\begin{align*}
&~ \Pr\Big[\Big\langle \ss_i, \sum_{j=i+1}^n (\gg_i \hh_j + \gg_j \hh_i )\ss_j \Big\rangle = z ~\Big|~ \ss_{i+1} = \ss^*_{i+1}, \cdots, \ss_{n} = \ss^*_{n} \Big] \\
= &~ \Pr\Big[\Big\langle \ss_i, \sum_{j=i+1}^n (\gg_i \hh_j + \gg_j \hh_i )\ss_j \Big\rangle = -z ~\Big|~ \ss_{i+1} = -\ss^*_{i+1}, \cdots, \ss_{n} = -\ss^*_{n} \Big] 
\end{align*}
So we can prove the claim by induction: First note that the term $x_{n-1} = \Big\langle \ss_{n-1}, (\gg_i \hh_j + \gg_j \hh_i )\ss_n \Big\rangle$ is symmetric around zero by the definition of $\SS$. Suppose that for some $i^*$ we have $\Pr[\sum_{i=i^*}^{n-1} x_i = z] = \Pr[\sum_{i=i^*}^{n-1} x_i = -z]$, then we have for any $z \in \R$,
\begin{align*}
\Pr\Big[\sum_{i=i^*-1}^{n-1} x_i = z\Big]
= &~ \sum_{t \in R} \Pr\Big[\sum_{i=i^*}^{n-1} x_i = t\Big] \cdot \Pr\Big[x_{i^*-1} = z-t ~\Big|~ \sum_{i=i^*}^{n-1} x_i = t\Big] \\
= &~ \sum_{t \in R} \Pr\Big[\sum_{i=i^*}^{n-1} x_i = -t\Big] \cdot \Pr\Big[x_{i^*-1} = -z+t ~\Big|~ \sum_{i=i^*}^{n-1} x_i = -t\Big] \\
= &~ \Pr\Big[\sum_{i=i^*-1}^{n-1} x_i = -z\Big]
\end{align*}
where the second step follows from $\Pr\Big[\sum_{i=i^*}^{n-1} x_i = t\Big] = \Pr\Big[\sum_{i=i^*}^{n-1} x_i = -t\Big]$ by induction hypothesis, and $\Pr\Big[x_{i^*-1} = z-t ~\Big|~ \sum_{i=i^*}^{n-1} x_i = t\Big] = \Pr\Big[x_{i^*-1} = -(z-t) ~\Big|~ \sum_{i=i^*}^{n-1} x_i = -t\Big]$ by our previous observation.
\end{proof}

The variance and coordinate-wise bounds of the previous lemma only hold when the weights are non negative. Next we prove that we can maintain this non negativity as long as $|\wh{\uu}^{(i,k)} - \uu^{(i,k)}|$ is bounded.

\begin{lemma}[Positivity of the weights]\label{lem:positivity_weights}
Let $k_i$ denote the number of width reduction steps taken by the algorithm when the $i^{th}$ primal step is being executed. For all $i \in [0:T]$, in the $i$-th primal iteration of Algorithm~\ref{alg:non_monotone_accel_robust}, if
\[
\ww^{(i,k_i)} \geq 0, ~~~\text{and}~~~ |\wh{\uu}^{(i,k_i)} - \uu^{(i,k_i)}| \leq \frac{100 C_2 \sqrt{n}}{\sqrt{b \epsilon}},
\]
then we have
\[
\|\uu^{(i,k_i)}\|_{\infty} \leq 2C_3^{1/3}\frac{n^{1/2-\eta}}{\epsilon^{1/3}}, ~~~ |\overrightarrow{\alpha}^{(i,k_i)} \wh{\uu}^{(i,k_i)}| \leq 1 / 10, ~~~\text{and}~~~\ww^{(i+1,k_{i})} \geq 0.
\]
\end{lemma}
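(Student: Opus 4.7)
The three conclusions will be established in order, with each one feeding the next.

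First I will bound $\|\uu^{(i,k_i)}\|_\infty$ using the primal-step test on Line~\ref{algline:CheckPrimalRobust}, namely $\sum_e \rrbar_e^{(i,k_i)} |\uu^{(i,k_i)}_e|^3 \leq C_3 \rho\, \Psi(\rrbar^{(i,k_i)})$. Because $\rrbar_e^{(i,k_i)} \geq e^{-\delta} \rr_e^{(i,k_i)} \geq e^{-\delta} \cdot (\epsilon/(2n)) \cdot \Phi(\ww^{(i,k_i)})$ and, by Lemma~\ref{lem:PsiPhi}, $\Phi(\ww^{(i,k_i)}) \geq e^{-\epsilon-\delta} \Psi(\rrbar^{(i,k_i)})$, isolating a single coordinate gives $|\uu^{(i,k_i)}_e|^3 \leq C_3 \rho \cdot (2n/\epsilon) \cdot e^{\epsilon+2\delta}$. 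Plugging in $\rho = \widetilde{\Theta}(n^{1/2-3\eta}\epsilon^{-2})$ and absorbing the polylog/$\epsilon$ factors hidden in $\rho$ yields the claimed bound $\|\uu^{(i,k_i)}\|_\infty \leq 2 C_3^{1/3}\, n^{1/2-\eta}/\epsilon^{1/3}$.

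Next, combining this bound with the hypothesis $|\wh{\uu}^{(i,k_i)}_e - \uu^{(i,k_i)}_e| \leq 100 C_2 \sqrt{n}/\sqrt{b\epsilon}$ and the parameter choice $b = \widetilde{\Theta}(n^{1/2+\eta}\epsilon^{-2})$, I get $100 C_2 \sqrt{n}/\sqrt{b\epsilon} = \widetilde{O}(n^{1/4-\eta/2}\epsilon^{1/2})$, which is dominated by the $\|\uu^{(i,k_i)}\|_\infty$ term. Hence $|\wh{\uu}^{(i,k_i)}_e| \leq 3 C_3^{1/3}\, n^{1/2-\eta}/\epsilon^{1/3}$ up to lower-order terms, so
\[
\alpha\, |\wh{\uu}^{(i,k_i)}_e| \;\leq\; \widetilde{O}(n^{-1/2+\eta}\epsilon) \cdot \widetilde{O}(n^{1/2-\eta}/\epsilon^{1/3}) \;=\; \widetilde{O}(\epsilon^{2/3}).
\]
In particular $\epsilon\alpha |\wh{\uu}^{(i,k_i)}_e| \leq 1/2$, so in the positive branch $\overrightarrow{\alpha}^{(i,k_i)}_e = \alpha(1+\epsilon\alpha \wh{\uu}^{(i,k_i)}_e) \leq 2\alpha$, and in the negative branch $|\overrightarrow{\alpha}^{(i,k_i)}_e| = \alpha/(1-\epsilon\alpha \wh{\uu}^{(i,k_i)}_e) \leq \alpha$. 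Either way $|\overrightarrow{\alpha}^{(i,k_i)}_e \wh{\uu}^{(i,k_i)}_e| \leq 2\alpha |\wh{\uu}^{(i,k_i)}_e| = \widetilde{O}(\epsilon^{2/3})$, which is at most $1/10$ for sufficiently small $\epsilon$.

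For the final conclusion $\ww^{(i+1,k_i)} \geq 0$, since $\ww^{(i+1,k_i)}_e = \ww^{(i,k_i)}_e \cdot (1 + \epsilon\,\overrightarrow{\alpha}^{(i,k_i)}_e \wh{\uu}^{(i,k_i)}_e)$ and $\ww^{(i,k_i)} \geq 0$ by hypothesis, it suffices to verify that the multiplier is non-negative coordinate-wise. On the branch $\wh{\uu}^{(i,k_i)}_e \geq 0$ the multiplier equals $1 + \epsilon\alpha(1+\epsilon\alpha \wh{\uu}^{(i,k_i)}_e)\wh{\uu}^{(i,k_i)}_e \geq 1 > 0$ directly. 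On the branch $\wh{\uu}^{(i,k_i)}_e < 0$, a brief algebraic simplification gives
\[
1 + \epsilon \cdot \frac{\alpha\,\wh{\uu}^{(i,k_i)}_e}{1 - \epsilon\alpha\,\wh{\uu}^{(i,k_i)}_e} \;=\; \frac{1}{1 - \epsilon\alpha\,\wh{\uu}^{(i,k_i)}_e},
\]
whose denominator exceeds $1$ since $\wh{\uu}^{(i,k_i)}_e < 0$. Thus $\ww^{(i+1,k_i)} \geq 0$. The only delicate part of the argument is the second step: keeping careful track of which terms in $|\wh{\uu}|$ come from the primal-step test versus the sketching error, and verifying that the parameter choices of $\alpha$, $\rho$, and $b$ in Algorithm~\ref{alg:non_monotone_accel_robust} are indeed compatible enough to make $\alpha|\wh{\uu}|$ polynomially small in $\epsilon$ so that the $1/10$ bound survives all the $\widetilde{O}(\cdot)$ polylog slack.
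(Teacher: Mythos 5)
Your proposal is correct and follows essentially the same route as the paper's proof: bound $\|\uu^{(i,k_i)}\|_\infty$ from the primal-step test via $\rrbar_e \geq e^{-\delta}\tfrac{\epsilon}{2n}\Phi(\ww) \geq e^{-\epsilon-2\delta}\tfrac{\epsilon}{2n}\Psi(\rrbar)$, add the sketching-error hypothesis to control $\alpha|\wh{\uu}_e|$, and then check positivity branch by branch (your exact identity $1+\epsilon\alpha\wh{\uu}_e/(1-\epsilon\alpha\wh{\uu}_e) = 1/(1-\epsilon\alpha\wh{\uu}_e)$ is a slightly cleaner version of the paper's bound $1-\epsilon/10>0$, and your single-coordinate extraction matches the paper's $\|\uu\|_\infty\leq\|\uu\|_3$ step). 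The loose $\epsilon$-power bookkeeping you flag in your last step mirrors the paper's own, so no substantive gap.
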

\begin{proof}
For simplicity, we will let $k$ denote $k_i$. Observe that, when we do a primal step i.e., the condition on Line~\ref{algline:CheckPrimalRobust} is true, since for all $e$, $\rrbar^{(i,k)}\geq e^{-\delta}\rr^{(i,k)}_e \geq \frac{e^{-\delta}\epsilon}{2n}\Phi(\ww^{(i,k)})\geq \frac{e^{-\delta}\epsilon}{2n}\Psi(\rrbar^{(i,k)})$,
    \[
    \frac{e^{-\delta}\epsilon}{2n}\Psi(\rrbar^{(i,k)})\|\uu^{(i,k)}\|_3^3 \leq C_3 \rho \Psi(\rrbar^{(i,k)}).
    \]
    Using the value of $\rho$, this implies that,
    \[
    \|\uu^{(i,k)}\|_{\infty}
\leq \|\uu^{(i,k)}\|_3 \leq \frac{6C_3^{1/3}n^{1/2-\eta}}{\epsilon^{1/3}}.
    \]
Then using our other assumption that $|\wh{\uu}^{(i,k)} - \uu^{(i,k)}| \leq \frac{100 C_2 \sqrt{n}}{\sqrt{b \epsilon}}$, we have
\begin{equation*}
|\wh{\uu}^{(i,k)}_e| \leq |\uu^{(i,k)}_e| + |\wh{\uu}^{(i,k)}_e - \uu^{(i,k)}_e| \leq \frac{6C_3^{1/3} n^{1/2-\eta}}{\epsilon^{1/3}} + \frac{100 C_2 \sqrt{n}}{\sqrt{b \epsilon}}.
\end{equation*}
We also have
\begin{align*}
|\overrightarrow{\alpha}^{(i,k)} \wh{\uu}^{(i,k)}| \leq \alpha  \cdot \Big( \frac{6C_3^{1/3}n^{1/2-\eta}}{\epsilon^{1/3}} + \frac{100 C_2 \sqrt{n}}{\sqrt{b \epsilon}} \Big) \leq 1/10.
\end{align*}

If $\wh{\uu}^{(i,k)}_e \geq 0$, then we directly have $\ww^{(i+1,k)}_e \geq \ww^{(i,k)}_e \geq 0$.
If $\wh{\uu}^{(i,k)}_e < 0$, then
\begin{align*}
\ww^{(i+1,k)}_e = &~ \ww^{(i,k)}_e \cdot \big(1 + \epsilon \overrightarrow{\alpha}^{(i,k)}_e \wh{\uu}^{(i,k)}_e\big) \\
\geq &~ \ww^{(i,k)}_e \cdot \big(1 - \epsilon/10\big) > 0. \qedhere
\end{align*}
\end{proof}

Next we use the above basic properties of the approximate vectors $\wh{\uu}^{(i,k)}$'s to prove a concentration property of the sum of the $\wh{\uu}^{(i,k)}$'s over all iterations. Our proof crucially uses the following concentration inequality of martingales:

\begin{lemma}[Freedman's inequality, \cite{freedman1975tail}]\label{thm:freedman}
Consider a martingale $Y_0, Y_1, \cdots, Y_n$ with difference sequence $X_1, X_2, \cdots, X_n$, i.e., $Y_0 = 0$, and for all $i \in [n]$, $Y_i = Y_{i-1} + X_i$ and $\E_{i-1}[Y_i] = Y_{i-1}$. Suppose $|X_i| \leq R$ almost surely for all $i \in [n]$. Define the predictable quadratic variation process of the martingale as $W_i = \sum_{j=1}^{i} \E_{j-1}[X_j^2]$, for all $i \in [n]$. Then for all $u \geq 0$, $\sigma^2 > 0$,
\[
\Pr\left[ \exists i \in [n]: |Y_i| \geq u \text{ and } W_i \leq \sigma^2 \right] \leq 2 \exp\left(-\frac{u^2/2}{\sigma^2 + R u / 3}\right).
\]
\end{lemma}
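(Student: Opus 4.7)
The plan is to prove this via the classical exponential supermartingale method, which is the standard route to Freedman-type inequalities. The starting point is the elementary pointwise bound: for all real $x$ with $|x| \leq R$ and all $\lambda \geq 0$,
\[
e^{\lambda x} \leq 1 + \lambda x + g(\lambda)\, x^2, \qquad g(\lambda) := \frac{e^{\lambda R} - 1 - \lambda R}{R^2}.
\]
Taking conditional expectations and using $1 + t \leq e^t$ together with $\E_{i-1}[X_i] = 0$ (from the martingale property), this yields $\E_{i-1}[e^{\lambda X_i}] \leq \exp\bigl(g(\lambda)\, \E_{i-1}[X_i^2]\bigr)$.

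Next I would introduce the process
\[
Z_i := \exp\bigl(\lambda Y_i - g(\lambda)\, W_i\bigr), \qquad Z_0 = 1,
\]
and verify it is a supermartingale: since $W_i - W_{i-1} = \E_{i-1}[X_i^2]$ is predictable,
\[
\E_{i-1}[Z_i] = Z_{i-1}\cdot e^{-g(\lambda)\,\E_{i-1}[X_i^2]} \cdot \E_{i-1}\bigl[e^{\lambda X_i}\bigr] \leq Z_{i-1}.
\]

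To handle the first-passage form of the statement, I would introduce the stopping time $\tau := \inf\{i \leq n : Y_i \geq u \text{ and } W_i \leq \sigma^2\}$ (with $\tau = \infty$ if no such $i$ exists) and apply optional stopping to the bounded stopping time $\tau \wedge n$. On the event $\{\tau \leq n\}$, we have $\lambda Y_\tau - g(\lambda) W_\tau \geq \lambda u - g(\lambda)\sigma^2$, and hence by Markov's inequality,
\[
\Pr[\tau \leq n] \leq \E[Z_{\tau \wedge n}] \cdot \exp\bigl(-\lambda u + g(\lambda)\sigma^2\bigr) \leq \exp\bigl(-\lambda u + g(\lambda)\sigma^2\bigr).
\]
The same argument applied to the martingale $-Y_i$ contributes the factor of $2$ for the absolute value.

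Finally, I would optimize over $\lambda > 0$. Using the standard bound $g(\lambda) \leq \lambda^2 / \bigl(2(1 - \lambda R/3)\bigr)$ for $\lambda R < 3$ (which follows from $e^x - 1 - x \leq x^2/(2 - 2x/3)$ on $x \in [0,3)$) and choosing $\lambda = u / (\sigma^2 + R u/3)$ makes the exponent equal $-u^2 / \bigl(2(\sigma^2 + R u/3)\bigr)$, yielding the claim. The only delicate point is making the stopping-time argument rigorous so that the event $\{W_i \leq \sigma^2\}$ is genuinely used to bound the exponential term — predictability of $W_i$ ensures $W_\tau \leq \sigma^2$ on $\{\tau \leq n\}$, so this is routine. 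There is no real conceptual obstacle; the proof is standard and the main work is the algebraic optimization of $\lambda$.
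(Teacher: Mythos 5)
The paper does not prove this lemma; it imports Freedman's inequality directly from \cite{freedman1975tail} as a black box. Your proposed proof is correct and is essentially the standard exponential-supermartingale argument (the same one underlying Freedman's original proof and its modern presentations, e.g.\ Tropp's): the pointwise bound $e^{\lambda x}\leq 1+\lambda x+g(\lambda)x^2$ for $|x|\leq R$, the supermartingale $Z_i=\exp(\lambda Y_i-g(\lambda)W_i)$ using predictability of $W_i$, optional stopping at the first passage time, the union bound over $\pm Y$ for the factor of $2$, and the bound $g(\lambda)\leq \lambda^2/(2(1-\lambda R/3))$ with $\lambda=u/(\sigma^2+Ru/3)$ all check out, and the exponent evaluates exactly to $-\tfrac{u^2/2}{\sigma^2+Ru/3}$. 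There is nothing to compare against in the paper, and no gap in your argument.
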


For any $i \in [0:T]$, let $k_i$ denote the value of the width reduction step counter $k$ when $i$ is being incremented, and in the next lemma for simplicity of notations we will use the superscript $^{(i)}$ for the variables with superscript $^{(i,k_i)}$.
\begin{lemma}[Bounds of sum of $\wh{\uu}$ over all rounds]\label{lem:sum_hat_u}
Let $k_i$ denote the number of width reduction steps taken by the algorithm when the $i^{th}$ primal step is being executed. Let $a_0, \cdots, a_T \in \R$ be an arbitrary sequence such that each $a_i$ only depends on $\SS^{(0)}, \cdots, \SS^{(i-1)}$ and each $|a_i| \leq C_a$. Then, for all $i \in [0:T-1]$, the vector $\wh{\uu}^{(i,k_i)} = (\RRbar^{(i,k_i)})^{-1/2} \cdot (\SS^{(i)})^{\top} \SS^{(i)} \cdot (\RRbar^{(i,k_i)})^{1/2} \uu^{(i,k_i)}$ satisfies the following properties: 
\begin{align*}
& \E_{\SS^{(0)},\cdots,\SS^{(T-1)}}\left[\sum_{i=0}^{T-1} a_i \cdot (\wh{\uu}^{(i,k_i)}_e - \uu^{(i,k_i)}_e)\right] = 0, \\
& \Pr_{\SS^{(0)},\cdots,\SS^{(T-1)}}\left[ \left|\sum_{i=0}^{T-1} a_i \cdot (\wh{\uu}^{(i,k_i)}_e - \uu^{(i,k_i)}_e)\right| \leq \frac{10 C_a (C_1 + C_2) \log n \cdot \sqrt{nT}}{\sqrt{b \epsilon}} \right] \geq 1 - 1/n^3.
\end{align*}
\end{lemma}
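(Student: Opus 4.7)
The plan is to recognize $\sum_{i=0}^{T-1} a_i \cdot (\wh{\uu}^{(i,k_i)}_e - \uu^{(i,k_i)}_e)$ as a martingale and apply Freedman's inequality (Lemma~\ref{thm:freedman}). Define the filtration $\mathcal{F}_i = \sigma(\SS^{(0)}, \ldots, \SS^{(i)})$ with $\mathcal{F}_{-1}$ trivial, and set $X_i = a_i \cdot (\wh{\uu}^{(i,k_i)}_e - \uu^{(i,k_i)}_e)$ and $Y_t = \sum_{i=0}^{t-1} X_i$. Since $a_i$, $\uu^{(i,k_i)}$, and $\rrbar^{(i,k_i)}$ are $\mathcal{F}_{i-1}$-measurable while $\wh{\uu}^{(i,k_i)}$ depends also on $\SS^{(i)}$, Part~1 of Lemma~\ref{lem:hat_u_bounds} gives $\E[X_i \mid \mathcal{F}_{i-1}] = a_i \cdot \E[\wh{\uu}^{(i,k_i)}_e - \uu^{(i,k_i)}_e \mid \mathcal{F}_{i-1}] = 0$. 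This yields the first claim $\E[Y_T] = 0$ by the tower property, and shows that $\{Y_t\}$ is a martingale with difference sequence $\{X_i\}$.

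For the concentration bound I would bound both the predictable quadratic variation and the per-step magnitude. For the variance, since $\ww^{(i,k_i)} \geq 0$ throughout (which follows inductively from Lemma~\ref{lem:positivity_weights}, provided we have established the coordinate-wise high-probability event described below), Part~2 of Lemma~\ref{lem:hat_u_bounds} with $\gg = 1_e$ gives $\Var[X_i \mid \mathcal{F}_{i-1}] \leq C_a^2 \cdot \frac{C_1 n}{\epsilon b}$, so $W_T \leq \sigma^2 := \frac{C_a^2 C_1 n T}{\epsilon b}$. For the per-step bound, Part~3 of Lemma~\ref{lem:hat_u_bounds} (again with $\gg = 1_e$) gives that, conditionally on $\mathcal{F}_{i-1}$, with probability at least $1 - 1/n^4$,
\[
|\wh{\uu}^{(i,k_i)}_e - \uu^{(i,k_i)}_e| \leq \frac{C_2}{\sqrt{b}} \cdot \frac{\sqrt{n}}{\sqrt{\epsilon}} \cdot \Phi(\ww^{(i,k_i)})^{-1/2} \Psi(\rrbar^{(i,k_i)})^{1/2} \leq \frac{2 C_2 \sqrt{n}}{\sqrt{b\epsilon}},
\]
using $\Psi(\rrbar) \leq e^{\epsilon+\delta} \Phi(\ww)$ from Lemma~\ref{lem:PsiPhi}. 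A union bound over $i \in [0{:}T{-}1]$ implies that with probability at least $1 - T/n^4 \geq 1 - 1/(2n^3)$ (using $T = \poly(n)$) we have $|X_i| \leq R := \frac{2 C_a C_2 \sqrt{n}}{\sqrt{b \epsilon}}$ for all $i$.

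To apply Freedman's inequality, which requires an almost sure bound, I would truncate: let $\tilde{X}_i = X_i \cdot \mathbb{1}[|X_i| \leq R]$ and define $\tilde{Y}_t$ analogously, re-centering if needed so that the conditional mean is zero; on the high-probability event that all $|X_i| \leq R$, we have $Y_T = \tilde{Y}_T$, and the centering shift contributes a negligible bias bounded by $T/n^4 \cdot \max|X_i|$ which is absorbed into the stated bound. Then Freedman's inequality with $u = \frac{10 C_a (C_1 + C_2) \log n \cdot \sqrt{nT}}{\sqrt{b\epsilon}}$, $\sigma^2$ as above, and $R$ as above gives
\[
\frac{u^2/2}{\sigma^2 + Ru/3} \geq \Omega\bigl((C_1 + C_2) \log n\bigr),
\]
so $2 \exp(-u^2 / (2\sigma^2 + 2Ru/3)) \leq 1/(2n^3)$. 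Combining with the union bound from the previous step yields the claimed bound with probability at least $1 - 1/n^3$.

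\paragraph{Main obstacle.} The main subtlety is reconciling the high-probability (rather than almost sure) per-step bound of Part~3 of Lemma~\ref{lem:hat_u_bounds} with the almost-sure hypothesis of Freedman's inequality, and simultaneously ensuring that $\ww^{(i,k_i)} \geq 0$ holds throughout so that the variance bound of Part~2 is valid. I plan to handle this by inductively maintaining, via Lemma~\ref{lem:positivity_weights}, that the coordinate-wise bound on $|\wh{\uu}^{(i,k_i)} - \uu^{(i,k_i)}|$ holds across all iterations on a good event of probability at least $1 - 1/(2n^3)$, and then working either with the truncated martingale described above or conditioning on this good event when invoking Freedman.
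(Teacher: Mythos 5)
Your proposal follows the same route as the paper: set up the martingale, bound the predictable quadratic variation via Part~2 of Lemma~\ref{lem:hat_u_bounds}, get the per-step bound from Part~3, and apply Freedman's inequality after truncating. You also correctly identify the one genuine subtlety, namely that Part~3 only gives a high-probability (not almost-sure) increment bound.

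Where you diverge from the paper is in how that subtlety is resolved, and your resolution as written has a gap. The paper defines a \emph{stopped} truncation ($\ov{\tau}^{(i)}$ is set to $0$ as soon as any $i' \leq i$ violates the threshold) and then invokes Part~4 of Lemma~\ref{lem:hat_u_bounds} — the symmetry of $\wh{\uu}^{(i,k_i)}_e - \uu^{(i,k_i)}_e$ around zero conditionally on the past — to conclude that the symmetrically truncated increment still has conditional mean exactly zero. No re-centering is needed, and the truncated process is an honest martingale; this is precisely why the symmetry property was proven in Lemma~\ref{lem:hat_u_bounds}. Your proposal instead truncates per step and asserts the re-centering bias is ``bounded by $T/n^4 \cdot \max|X_i|$,'' but $\max|X_i|$ is the \emph{untruncated} increment and is not controlled at that point in the argument, so this bound is not justified as stated. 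The gap is repairable without symmetry: by Cauchy--Schwarz, $|\E_{i-1}[X_i \mathbbm{1}[|X_i|>R]]| \leq \E_{i-1}[X_i^2]^{1/2} \cdot \Pr_{i-1}[|X_i|>R]^{1/2} \leq C_a \sqrt{C_1 n/(\epsilon b)} \cdot n^{-2}$, which summed over $T$ steps is negligible against the target $u$. Either fix this step along those lines, or — more cleanly — use the symmetry property the paper already provides. One further small point: your inductive maintenance of $\ww^{(i,k_i)} \geq 0$ via ``conditioning on a good event'' is delicate, since conditioning can destroy the martingale property; the paper's stopped truncation handles this automatically because once the good event fails the increments are identically zero and the conditional variance bound holds trivially.
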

\begin{proof}
Consider a fixed $e \in [n]$. Define the following truncated sequence: for $\tau^{(i)} = \wh{\uu}_e^{(i,k_i)} - \uu_e^{(i,k_i)}$,  
\[
\ov{\tau}^{(i)} = 
\begin{cases}
\tau^{(i)} & \text{if } |\tau^{(i')}| \leq \frac{C_2 \sqrt{n}}{\sqrt{b \epsilon}} \text{ for all } i' \leq i\\
0 & \text{otherwise.}
\end{cases}
\]
Also define $y^{(0)} = 0$ and $y^{(i+1)} = y^{(i)} + a_i \cdot \ov{\tau}^{(i)}$. We use the notation $\E_{i}[\cdot] = \E[\cdot \mid \SS^{(0)}, \cdots, \SS^{(i)}]$ to denote the expectation conditioned on $\SS^{(0)}, \cdots, \SS^{(i)}$.

From Part 1 of Lemma~\ref{lem:hat_u_bounds} we have $\E_{i-1}[\tau^{(i)}] = 0$, and since $a_i$ only depends on $\SS^{(0)}, \cdots, \SS^{(i-1)}$, we also have $\E_{i-1}[a_i \cdot \tau^{(i)}] = 0$, and so we have $\E_{\SS^{(0)},\cdots,\SS^{(T-1)}}[\sum_{i=0}^{T-1} a_i \cdot (\wh{\uu}^{(i,k_i)}_e - \uu^{(i,k_i)}_e)] = 0$. From Part 4 of Lemma~\ref{lem:hat_u_bounds} we have that conditioned on $\SS^{(0)}, \cdots, \SS^{(i)}$, $\Pr[\tau^{(i)} \geq \frac{C_2 \sqrt{n}}{\sqrt{b \epsilon}}] = \Pr[\tau^{(i)} \leq -\frac{C_2 \sqrt{n}}{\sqrt{b \epsilon}}]$, and therefore $\E_{i-1}[\ov{\tau}^{(i)}] = 0$, and $\E_{i-1}[a_i \cdot \ov{\tau}^{(i)}] = 0$. So we have that the sequence $y^{(0)}, \cdots, y^{(T)}$ is a martingale.

Next we bound the quadratic variation. For any $i \in [T]$, if there exist $i' \leq i$ such that $|\tau^{(i')}| > \frac{C_2 \sqrt{n}}{\sqrt{b \epsilon}}$, then we have $(\ov{\tau}^{(i)})^2 = 0$. Otherwise by Lemma~\ref{lem:positivity_weights} we have $\ww^{(i,k_i)} \geq 0$, and since $\E_{i-1}[\wh{\uu}_e^{(i,k_i)}] = \uu_e^{(i,k_i)}$, we have
\begin{align*}
\E_{i-1}[(\tau^{(i)})^2] = &~ \Var_{\SS^{(i)}}[\wh{\uu}_e^{(i,k_i)} \mid \SS^{(0)}, \cdots, \SS^{(i-1)}] 
\leq \frac{C_1 n}{\epsilon b}
\end{align*}
where the last step follows from Part 2 of Lemma~\ref{lem:hat_u_bounds}. Combining these two cases we have $\E_{i-1}[(\ov{\tau}^{(i)})^2] \leq \frac{C_1 n}{\epsilon b}$, and so we have $\E_{i-1}[(a_i \ov{\tau}^{(i)})^2] \leq C_a^2 \cdot \frac{C_1 n}{\epsilon b}$. 
Define $W_i = \sum_{j=1}^i \E_{j-1}[(a_i \ov{\tau}^{(j)})^2]$, and we have
\begin{align*}
W_i \leq \frac{C_a^2 C_1 n i}{\epsilon b}.
\end{align*}

Using Freedman's inequality for our martingale $y^{(0)}, \cdots, y^{(T)}$ with parameters $R = \frac{C_a C_2 \sqrt{n}}{\sqrt{b \epsilon}}$, $\sigma^2 = \frac{C_a^2 C_1 n T}{\epsilon b}$, and $u = \frac{10 C_a (C_1 + C_2) \log n \cdot \sqrt{nT}}{\sqrt{b \epsilon}}$, we have 
\begin{align*}
\Pr\left[ |y^{(T)}| \geq u \right] =
&~ \Pr\left[ |y^{(T)}| \geq u \text{ and } W_T \leq \sigma^2 \right] \\
\leq &~ 2 \exp\left(-\frac{u^2/2}{\sigma^2 + R u / 3}\right) \leq 1/n^4.
\end{align*}
Finally, note that $y^{(T)} = \sum_{i=0}^{T-1} a_i \cdot (\wh{\uu}_e^{(i,k_i)} - \uu_e^{(i,k_i)})$ if we have $|\wh{\uu}_e^{(i,k_i)} - \uu_e^{(i,k_i)}| \leq \frac{C_2 \sqrt{n}}{\sqrt{b \epsilon}}$ for all $i$. By Part~3 of Lemma~\ref{lem:hat_u_bounds} and union bound over all iterations, we have that this happens with probability at least $1-1/n^3$. Combining this with the above equation, we have
\[
\Pr_{\SS^{(0)},\cdots,\SS^{(T-1)}}\left[ \left|\sum_{i=0}^{T-1} a_i \cdot (\wh{\uu}^{(i,k_i)}_e - \uu^{(i,k_i)}_e)\right| \leq \frac{10 C_a (C_1 + C_2) \log n \cdot \sqrt{nT}}{\sqrt{b \epsilon}} \right] \geq 1 - 1/n^3.
\]
\end{proof}

\subsubsection{Analysis of Algorithm~\ref{alg:non_monotone_accel_robust}}
Next we analyze the guarantee and iteration complexity of Algorithm~\ref{alg:non_monotone_accel_robust}, we again first prove the change of the potentials $\Phi$ (Def. Eq.~\eqref{eq:defPhi}) and $\Psi$ (Def. Eq.~\eqref{eq:defPsi} as in the previous sections. 

From the argument in the proof of Lemma~\ref{lem:sum_hat_u}, we have that with probability at least $1-1/n^3$, we have $|\wh{\uu}_e^{(i,k_i)} - \uu_e^{(i,k_i)}| \leq \frac{C_2 \sqrt{n}}{\sqrt{b \epsilon}}$ for all $i$, and then by Lemma~\ref{lem:positivity_weights} we have $\ww^{(i,k_i)} \geq 0$ for all $i$. In this section, we assume that we are conditioned on this event, and the failure of this event corresponds to the failure of our algorithm, which happens with probability at most $1/n^3$, as stated in Theorem~\ref{thm:RobustAccMWU}.

\subsubsection*{Change in $\Phi$}
\begin{restatable}{lemma}{ChangePhiRobust}
  \label{lem:ChangePhiRobust}
  With probability at least $1-1/n^3$, after $i$ primal steps, and $k$ width-reduction steps, if $\alpha \rho^{1/3}\leq \frac{\epsilon^{1/3}}{10 n^{1/3}}$,
  the potential $\Phi$ is bounded as follows:
  \begin{multline*}
\Phi\left(\ww^{(i,k)}\right) \leq \left(\Phi(\ww^{(0,0)})\right)\left( 1 + e^{\epsilon+\delta}\epsilon \alpha  \cdot \left(1 + \frac{C_2}{\sqrt{b}}\right) + 2e^{\epsilon+2\delta} \epsilon^2 \alpha^2 \left(1 + \frac{C_2^2 \cdot n}{b \epsilon}\right) \right)^i\\ \left( 1 + \epsilon e^{\epsilon+2\delta} \cdot (\tau^{-1} + \rho^{-2}) \right)^k.
 \end{multline*}
 Furthermore, after every primal step, the potential can decrease by at most,
 \[
 \Phi\left(\ww^{(i+1,k)}\right) \geq  \Phi\left(\ww^{(i,k)}\right)\left( 1 - e^{\epsilon+\delta}\epsilon \alpha  \cdot \left(1 + \frac{C_2}{\sqrt{b}}\right) - 2e^{\epsilon+2\delta} \epsilon^2 \alpha^2 \left(1 + \frac{C_2^2 \cdot n}{b \epsilon}\right) \right).  
 \]
\end{restatable}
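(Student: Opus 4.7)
I plan to prove this by induction on $i+k$, handling the width reduction and primal cases separately. The width reduction step does not involve the sketch, so the analysis is identical to the corresponding case in Lemma~\ref{lem:ChangePhiStab}: I would simply repeat that computation to obtain the multiplicative factor $(1+\epsilon e^{\epsilon+2\delta}(\tau^{-1}+\rho^{-2}))$ for each of the $k$ width reduction steps. The real work is in the primal step.

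For a primal step, I will expand the update $\ww^{(i+1,k)}=\ww^{(i,k)}(1+\epsilon\overrightarrow{\alpha}^{(i,k)}\wh{\uu}^{(i,k)})$ and compute
\[
\Phi(\ww^{(i+1,k)})=\Phi(\ww^{(i,k)})+\epsilon\sum_e \ww_e^{(i,k)}\overrightarrow{\alpha}_e^{(i,k)}\wh{\uu}_e^{(i,k)}.
\]
The key step is to rewrite $\overrightarrow{\alpha}_e \wh{\uu}_e$ as $\alpha \wh{\uu}_e + \epsilon\alpha^2 \wh{\uu}_e^2 + O(\epsilon^2\alpha^3|\wh{\uu}_e|^3)$, valid in both sign cases because $|\epsilon\alpha\wh{\uu}_e|\le 1/10$ by Lemma~\ref{lem:positivity_weights}. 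This reduces the bound to controlling a linear term $\alpha\sum_e \ww_e \wh{\uu}_e$ and a quadratic term $\epsilon\alpha^2\sum_e \ww_e \wh{\uu}_e^2$.

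For the linear term, I would use Part~3 of Lemma~\ref{lem:hat_u_bounds} applied with $\gg=\ww^{(i,k)}$ to get, with probability $\geq 1-1/n^4$,
\[
\Bigl|\sum_e \ww_e\wh{\uu}_e - \sum_e \ww_e \uu_e\Bigr|\leq \tfrac{C_2}{\sqrt{b}}\Phi(\ww)^{1/2}\Psi(\rrbar)^{1/2}\leq \tfrac{C_2 e^{(\epsilon+\delta)/2}}{\sqrt{b}}\Phi(\ww),
\]
and bound $|\sum_e \ww_e \uu_e|\leq \sqrt{e^\delta\Phi(\ww)\Psi(\rrbar)}\leq e^{\epsilon+\delta}\Phi(\ww)$ by Cauchy--Schwarz as in Lemma~\ref{lem:ChangePhiStab}. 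Together these yield $|\sum_e\ww_e\wh{\uu}_e|\leq e^{\epsilon+\delta}(1+C_2/\sqrt{b})\Phi(\ww)$. For the quadratic term, I split $\wh{\uu}_e^2 \leq 2\uu_e^2 + 2(\wh{\uu}_e-\uu_e)^2$ and bound each piece: the first via $\sum_e\ww_e\uu_e^2\leq e^\delta\Psi(\rrbar)\leq e^{\epsilon+2\delta}\Phi(\ww)$, the second via the coordinatewise Part~3 bound $|\wh{\uu}_e-\uu_e|\leq \tfrac{C_2\sqrt{n}}{\sqrt{b\epsilon}}\Phi(\ww)^{-1/2}\Psi(\rrbar)^{1/2}$ summed against $\ww_e$.

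Combining these estimates gives
\[
\Bigl|\sum_e \ww_e \overrightarrow{\alpha}_e\wh{\uu}_e\Bigr|\leq \alpha e^{\epsilon+\delta}(1+C_2/\sqrt{b})\Phi(\ww)+2\epsilon\alpha^2 e^{\epsilon+2\delta}\bigl(1+\tfrac{C_2^2 n}{b\epsilon}\bigr)\Phi(\ww),
\]
from which both the upper and lower bounds of the lemma follow by multiplying by $\epsilon$, adding (resp.\ subtracting) to $\Phi(\ww^{(i,k)})$, and chaining multiplicatively over the $i$ primal steps. Finally I take a union bound over the $T+K=\mathrm{poly}(n)$ iterations to ensure the Part~3 estimates hold simultaneously, losing at most $1/n^3$ in the failure probability; this is also the event under which Lemma~\ref{lem:positivity_weights} guarantees $\ww^{(i,k)}\geq 0$ throughout (needed to invoke Part~3 in the form stated). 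The main subtle point is making sure the quadratic correction term $\epsilon\alpha^2 \cdot C_2^2 n/(\epsilon b)$ appears with exactly the constant the statement claims; all other manipulations are routine once the sketch concentration inequalities are in place.
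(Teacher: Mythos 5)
Your proposal matches the paper's proof essentially step for step: the same induction, the same reduction of the width-reduction case to the computation in Lemma~\ref{lem:ChangePhiStab}, and the same linear/quadratic decomposition of the primal step, with the linear term handled by Cauchy--Schwarz plus Part~3 of Lemma~\ref{lem:hat_u_bounds} applied to $\gg=\ww^{(i,k)}$ and the quadratic term by the split $\wh{\uu}_e^2\leq 2\uu_e^2+2(\wh{\uu}_e-\uu_e)^2$. The one subtlety you flag resolves cleanly: from the definition of $\overrightarrow{\alpha}^{(i,k)}_e$ one has the exact inequality $\overrightarrow{\alpha}^{(i,k)}_e\wh{\uu}^{(i,k)}_e\leq \alpha\wh{\uu}^{(i,k)}_e+\epsilon\alpha^2(\wh{\uu}^{(i,k)}_e)^2$ in both sign cases, so there is no cubic remainder to absorb and the constants come out exactly as stated.
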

\begin{proof}
  We prove this claim by induction. Initially, $i = k = 0,$ and $\Phi\left(\ww^{(0,0)}\right) =2 n,$ and thus, the claim holds trivially. Assume that the claim holds for some $i,k \ge 0.$
We will use $\Phi$ as an abbreviated notation for $\Phi\left(\ww^{(i,k)}\right)$ below and $\ww$ to denote $\ww^{(i,k)}$
\paragraph*{Primal Step.} 
Since we update the weights to be $\ww^{(i+1,k)} = \ww^{(i,k)} \cdot \big(1 + \epsilon\overrightarrow{\alpha}^{(i,k)} (\CC \Delta^{(i,k)} - \dd)\big)$, we have
\begin{align}\label{eq:potential_increase}
\Phi(\ww^{(i+1,k)}) = &~ \Phi(\ww^{(i,k)}) + \epsilon \cdot \sum_e \ww^{(i,k)}_e \cdot \overrightarrow{\alpha}^{(i,k)}_e \cdot \wh{\uu}^{(i,k)}_e \notag \\
\leq &~ \Phi(\ww^{(i,k)}) + \epsilon\alpha \cdot \sum_e \ww^{(i,k)}_e \cdot \wh{\uu}^{(i,k)}_e + \epsilon^2\alpha^2 \cdot \sum_{e} \ww^{(i,k)}_e \cdot (\wh{\uu}^{(i,k)}_e)^2,
\end{align}
where the second step follows from our definition $\overrightarrow{\alpha}^{(i,k)}_e = \begin{cases}
\alpha \cdot (1 + \epsilon\alpha \wh{\uu}^{(i,k)}_e) & \text{ if } \wh{\uu}^{(i,k)}_e \geq 0\\
\alpha / (1 - \epsilon\alpha \wh{\uu}^{(i,k)}_e) & \text{ else }
\end{cases}$.

Next we bound the two terms in Eq.~\eqref{eq:potential_increase} separately. For the first term $\sum_e \ww_e^{(i,k)} \cdot \wh{\uu}_e^{(i,k)},$ using Cauchy-Schwarz inequality, 
\begin{align*}
\sum_e \ww_e^{(i,k)} \cdot \uu_e^{(i,k)} \leq &~ \sqrt{\Big(\sum_e \ww_e^{(i,k)} \Big) \Big(\sum_e \ww_e^{(i,k)} \cdot (\uu_e^{(i,k)})^2\Big)} \\
\leq &~ \sqrt{e^{\delta}\Phi(\ww^{(i,k)}) \cdot \Psi(\rrbar^{(i,k)})} \\
\leq &~  e^{\epsilon+\delta} \cdot \Phi(\ww^{(i,k)}),
\end{align*}
where the third step follows from Lemma~\ref{lem:PsiPhi}. Next, from Part 3 of Lemma~\ref{lem:hat_u_bounds}, it holds that with probability $1-1/n^4$, $| \sum_e \ww_e^{(i,k)} \cdot (\wh{\uu}^{(i,k)}_e - \uu^{(i,k)}_e)| \leq \frac{C_2}{\sqrt{b}} \cdot \Phi(\ww^{(i,k)})^{1/2} \cdot \Psi(\rrbar^{(i,k)})^{1/2}$. So we have,

\begin{align}\label{eq:potential_increase_1}
\sum_e \ww_e^{(i,k)} \cdot \wh{\uu}_e^{(i,k)} \leq &~ \sum_e \ww_e^{(i,k)} \cdot \uu_e^{(i,k)} + |\sum_e \ww_e^{(i,k)} \cdot (\wh{\uu}_e^{(i,k)} - \uu_e^{(i,k)})| \notag \\
\leq &~ e^{\epsilon +\delta} \cdot \Phi(\ww^{(i,k)}) + \frac{C_2}{\sqrt{b}} \cdot \Phi(\ww^{(i,k)})^{1/2} \cdot \Psi(\rrbar^{(i,k)})^{1/2} \notag \\
\leq &~ e^{\epsilon +\delta} \cdot (1 + \frac{C_2}{\sqrt{b}}) \cdot \Phi(\ww^{(i,k)}).
\end{align}
In the last step we used that $\Psi(\bar{\rr})\leq e^{\delta}\Psi(\rr) \leq e^{\epsilon+2\delta}\Phi(\ww).$

For the second term $\sum_{e} \ww^{(i,k)}_e \cdot (\wh{\uu}^{(i,k)}_e)^2$ in Eq.~\eqref{eq:potential_increase}, we first bound $\sum_{e} \ww^{(i,k)}_e \cdot (\uu^{(i,k)}_e)^2$. We have

\begin{align*}
\sum_{e} \ww^{(i,k)}_e \cdot (\uu^{(i,k)}_e)^2 \leq &~ e^{\delta}\sum_{e} \rrbar^{(i,k)}_e \cdot (\uu^{(i,k)}_e)^2 
= e^{\delta}\Psi(\rrbar^{(i,k)}) \leq e^{\epsilon + 2\delta} \Phi(\ww^{(i,k)}).
\end{align*}

Next again using Part 3 of Lemma~\ref{lem:hat_u_bounds} that with probability $1-1/n^4$, $|\wh{\uu}^{(i,k)}_e - \uu^{(i,k)}_e| \leq \frac{C_2}{\sqrt{b}} \cdot \frac{\sqrt{n}}{\sqrt{\epsilon}} \cdot \Phi(\ww^{(i,k)})^{-1/2} \cdot \Psi(\rrbar^{(i,k)})^{1/2}$, we get

\begin{align}\label{eq:potential_increase_2}
\sum_{e} \ww^{(i,k)}_e \cdot (\wh{\uu}^{(i,k)}_e)^2 \leq &~ 2 \sum_{e} \ww^{(i,k)}_e \cdot (\uu^{(i,k)}_e)^2 + 2 \sum_{e} \ww^{(i,k)}_e \cdot (\uu^{(i,k)}_e - \wh{\uu}^{(i,k)}_e)^2 \notag \\
\leq &~ 2 e^{\epsilon+2\delta} \Phi(\ww^{(i,k)}) + 2 \sum_{e} \ww^{(i,k)}_e \cdot \frac{C_2^2 \cdot n}{b \epsilon} \cdot \Phi(\ww^{(i,k)})^{-1} \cdot \Psi(\rrbar^{(i,k)}) \notag \\
\leq &~ 2 e^{\epsilon+2\delta} \left(1 + \frac{C_2^2 \cdot n}{b \epsilon}\right) \cdot \Phi(\ww^{(i,k)}).
\end{align}

Plugging Eq.~\eqref{eq:potential_increase_1} and \eqref{eq:potential_increase_2} into Eq.~\eqref{eq:potential_increase}, we have
\begin{align*}
\Phi(\ww^{(i+1,k)}) \leq &~ \Phi(\ww^{(i,k)}) + \epsilon \alpha \cdot e^{\epsilon+\delta} \cdot \left(1 + \frac{C_2}{\sqrt{b}}\right) \cdot \Phi(\ww^{(i,k)}) + 2\epsilon^2 \alpha^2 \cdot e^{\epsilon+2\delta} \left(1 + \frac{C_2^2 \cdot n}{b \epsilon}\right) \cdot \Phi(\ww^{(i,k)}) \\
= &~ \Phi(\ww^{(i,k)}) \cdot \left( 1 + e^{\epsilon+\delta}\epsilon \alpha  \cdot \left(1 + \frac{C_2}{\sqrt{b}}\right) + 2e^{\epsilon+2\delta} \epsilon^2 \alpha^2 \left(1 + \frac{C_2^2 \cdot n}{b \epsilon}\right) \right).
\end{align*}
The decrease of the potential follows from a similar proof.

\paragraph*{Width Reduction Step.}
The proof is the same as that of Lemma~\ref{lem:ChangePhiStab} since the algorithms are the same.
\end{proof}

\subsubsection*{Change in $\Psi$}
We recall the definitions of $\textsc{LastWidth}(i,e)$, $\textsc{Last}(i,e)$, $S_i \subseteq [2n]$, $\ell_{i,e}$, $\textsc{Size}(k)$, and $L \leq \frac{1}{100 (\log^4 n) \epsilon \alpha \rho}$ from Section~\ref{sec:NonMonStab}. We will use these in the proof of the following lemma. 

\begin{lemma}[Change in $\Psi$ for Algorithm~\ref{alg:non_monotone_accel_robust}]\label{lem:ChangePsiRobust}
For any integer $c \geq 0$, after $L$ primal steps from $(c-1)L$ to $cL$, if $\rho^2 \tau^{-1} \geq 0.1$, the potential $\Psi$ is bounded as follows:
\begin{align*}
\Psi(\ov{\rr}^{(cL,k_{cL})}) 
\geq &~ \Psi(\ov{\rr}^{((c-1)L,k_{(c-1)L})}) \cdot \Big(1 -  \wt{O}(\epsilon \alpha \rho L) \Big) \cdot \prod_{k=k_{(c-1)L}}^{k_{cL}} \left( 1 + O\Big(\frac{\epsilon^{4/3} \rho^{2/3} \cdot \textsc{Size}(k)^{1/3}}{n^{1/3} \cdot \log^{2/3}(\frac{n}{\epsilon \rho})}\Big) \right).
\end{align*}
\end{lemma}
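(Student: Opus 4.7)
The plan is to mirror the proof of Lemma~\ref{lem:ChangePsiStab}, correcting for the fact that Algorithm~\ref{alg:non_monotone_accel_robust} updates weights using the sketched vector $\wh{\uu}^{(i,k)}$ instead of $\uu^{(i,k)}$ in the primal step. Since the width-reduction step is identical in the two algorithms, the product factor $\prod_{k=k_{(c-1)L}}^{k_{cL}}(1+O(\cdot))$ of the claim is inherited verbatim: I would simply repeat the case split ($H\neq S$ versus $H=S$) using Lemma~\ref{lem:PsiChange} together with the choice of $\zeta^*$ on Line~\ref{algline:zeta*}, and conclude via AM-GM exactly as in Lemma~\ref{lem:ChangePsiStab}. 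All the new work lies on the primal side.

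For the primal block, I would condition on the high-probability event $|\wh{\uu}_e^{(i,k_i)}-\uu_e^{(i,k_i)}|\leq \wt{O}(\sqrt{n/(b\epsilon)})$ holding uniformly in $i,e$, which by Part~3 of Lemma~\ref{lem:hat_u_bounds} together with a union bound has probability at least $1-1/n^3$. Lemma~\ref{lem:positivity_weights} then gives $\ww^{(i,k_i)}\geq 0$ and $\epsilon\alpha|\wh{\uu}_e^{(i,k_i)}|\leq 1/10$, so the coordinate-wise step sizes $\overrightarrow{\alpha}^{(i,k_i)}$ remain in $[\alpha,(1+2\epsilon)\alpha]$. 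Re-deriving Eq.~\eqref{eq:Psi_primal_re}--\eqref{eq:Psi_primal_re_diff} with $\wh{\uu}$ replacing $\wt{\CC}\Delta-\wt{\dd}$ then yields the same single-step bound on $|(\rr_e^{(i+1,k_i)}-\rr_e^{(i,k_i)})/\rr_e^{(i+1,k_i)}|$ plus an additional additive term of size $O(\epsilon\alpha)|\wh{\uu}_e^{(i,k_i)}-\uu_e^{(i,k_i)}|$.

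Next, I would replay the chain of inequalities (i)--(vi) of the stable proof, using Lemma~\ref{lem:PsiChange} to lower bound $\Psi(\ov{\rr}^{(i+1,k_i)})$ and AM-GM on $|\uu_e^{(j,k_j)}|\cdot(\uu_e^{(i,k_i)})^2$. This reproduces Eq.~\eqref{eq:Psi_primal_one_step} verbatim, plus an extra ``sketching error'' term
\[
E_i \;=\; O(\epsilon\alpha)\sum_{e\in S_i}\sum_{j=j_{i,e}}^{i}\bigl|\wh{\uu}_e^{(j,k_j)}-\uu_e^{(j,k_j)}\bigr|\cdot \ov{\rr}_e^{(i,k_i)}(\uu_e^{(i,k_i)})^2.
\]
Telescoping the ``stable'' portion of the recursion across the $L$ primal steps between $(c-1)L$ and $cL$ yields the factor $(1-\wt{O}(\epsilon\alpha\rho L))$ exactly as in Lemma~\ref{lem:ChangePsiStab}, so the entire problem reduces to absorbing $\sum_i E_i$ into that same factor.

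The main obstacle is this absorption step. Using the pointwise bound $|\wh{\uu}_e^{(j,k_j)}-\uu_e^{(j,k_j)}|\leq \wt{O}(\sqrt{n/(b\epsilon)})$, swapping the $i$ and $j$ summations, and using the combinatorial estimate $\sum_{i\in[(c-1)L+1,cL]} 2^{\ell_i}\leq L\log n$, one obtains $\sum_i E_i\leq \wt{O}\bigl(\epsilon\alpha\sqrt{n/(b\epsilon)}\cdot L\bigr)\cdot \Psi(\ov{\rr}^{(cL,k_{cL})})$, where $\Psi$ at the right endpoint upper-bounds $\Psi$ throughout the block up to constants. Under the parameter choices $b=\wt{\Theta}(n^{1/2+\eta}\epsilon^{-2})$ and $\rho=\wt{\Theta}(n^{1/2-3\eta}\epsilon^{-2})$ with $\eta\leq 1/10$, one checks that $\sqrt{n/(b\epsilon)}\leq \wt{O}(\rho)$, so $\sum_i E_i\leq \wt{O}(\epsilon\alpha\rho L)\cdot \Psi(\ov{\rr}^{(cL,k_{cL})})$, which after rearranging matches the target factor. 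A tighter route is to invoke the Freedman-based Lemma~\ref{lem:sum_hat_u} after reordering the sums so that the coefficient of each $(\wh{\uu}_e^{(j,k_j)}-\uu_e^{(j,k_j)})$ is predictable with respect to $\SS^{(0)},\ldots,\SS^{(j-1)}$; this replaces the $L$-factor by $\sqrt{L}$ and slightly improves the dependence on $b$, but the naive pointwise argument already suffices in the regime $\eta\leq 1/10$.
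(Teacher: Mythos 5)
Your proposal is correct and follows essentially the same route as the paper: inherit the width-reduction factor unchanged, redo the single-step bounds on $(\rr_e^{(i+1,k_i)}-\rr_e^{(i,k_i)})/\rr_e^{(i+1,k_i)}$ with $\wh{\uu}$ in place of $\uu$, split $\wh{\uu}=\uu+(\wh{\uu}-\uu)$ in the chain (i)--(vi), and absorb the resulting sketching error into the $\big(1-\wt{O}(\epsilon\alpha\rho L)\big)$ factor. The only deviation is that the paper keeps the absolute value outside the inner sum over $j$ and bounds the cross-term by the Freedman-based Lemma~\ref{lem:sum_hat_u}, getting $\wt{O}(\alpha)\sqrt{n\,2^{\ell_i}/(b\epsilon)}$ rather than your pointwise $\wt{O}(\alpha)\,2^{\ell_i}\sqrt{n/(b\epsilon)}$; as you observe, both reduce to the same condition $\sqrt{n/(b\epsilon)}\lesssim\rho$ at $\ell_i=0$, which holds for $\eta\leq 1/10$, so your simpler route suffices for this lemma.
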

\begin{proof}
{\bf Primal steps.}
For a primal step, $\ww_e^{(i+1,k_i)} = \ww_e^{(i,k_i)}(1+\epsilon\overrightarrow{\alpha}^{(i,k_i)}_e \wh{\uu}^{(i,k)}_e))$, we have
\begin{align*}
\rr^{(i+1,k_i)}_e - \rr^{(i,k_i)}_e = &~ \ww^{(i+1,k_i)}_e - \ww^{(i,k_i)}_e + \frac{\epsilon}{n} \cdot (\Phi(\ww^{(i+1,k_i)}) - \Phi(\ww^{(i,k_i)})) \notag \\
= &~ \epsilon \overrightarrow{\alpha}_e^{(i,k_i)}  \wh{\uu}^{(i,k_i)}_e \ww^{(i,k_i)}_e + \frac{\epsilon}{n} \cdot (\Phi(\ww^{(i+1,k_i)}) - \Phi(\ww^{(i,k_i)})) \notag
\end{align*}
Since from Lemma~\ref{lem:ChangePhiRobust}, $\Phi(\ww^{(i,k_i)}) \cdot (1-\wt{\Omega}(\epsilon\alpha)) \leq \Phi(\ww^{(i+1,k_i)}) \leq \Phi(\ww^{(i,k_i)}) \cdot (1+\wt{O}(\epsilon\alpha))$, and $\rr_e^{(i,k_i)} \geq \frac{\epsilon}{2n} \Phi(\ww^{(i,k_i)})$, we have
\begin{align}\label{eq:r_i_diff_bounds}
\epsilon \overrightarrow{\alpha}_e^{(i,k_i)}  \wh{\uu}^{(i,k_i)}_e \ww^{(i,k_i)}_e - \wt{\Omega}(\epsilon\alpha) \cdot \rr_e^{(i,k_i)} \leq \rr^{(i+1,k_i)}_e - \rr^{(i,k_i)}_e \leq \epsilon \overrightarrow{\alpha}_e^{(i,k_i)}  \wh{\uu}^{(i,k_i)}_e \ww^{(i,k_i)}_e + \wt{O}(\epsilon\alpha) \cdot \rr_e^{(i,k_i)}.
\end{align}
Since $|\overrightarrow{\alpha}^{(i,k_i)} \wh{\uu}^{(i,k_i)}| \leq 1 / 10$ by Lemma~\ref{lem:positivity_weights}, and $\ww_e^{(i,k_i)} \leq \rr_e^{(i,k_i)}$ we also have 
\begin{align}\label{eq:r_i_diff_bounds_2}
(1 - \epsilon) \rr_e^{(i,k_i)} \leq \rr_e^{(i+1,k_i)} \leq (1 + \epsilon) \rr_e^{(i,k_i)}.
\end{align}

Next we consider the two cases that could happen to the coordinate $e$ in the $i$-th primal step. From now on, when it's clear from the context, we will use $\rr_e^{(i)}$ to refer to $\rr_e^{(i,k_{i-1})}$, and similarly $\rrbar_e^{(i)}$ to refer to $\rrbar_e^{(i,k_{i-1})}$, so that this is consistent with the notations used in \textsc{SelectVector}.
\begin{enumerate}
\item If \textsc{SelectVector} doesn't update $\ov{\rr}_e$ on the $i$-th iteration, then we have $\ov{\rr}^{(i+1,k_i)}_e = \ov{\rr}^{(i,k_i)}_e$. 
\item  If \textsc{SelectVector} updates $\ov{\rr}_e$ on the $i$-th iteration, i.e., $e \in S_i$, then we make the same definitions as the proof of Lemma~\ref{lem:ChangePsiStab}: we define $j_{i,e} := \max\{\textsc{LastWidth}(i,e), \textsc{Last}(i,e)\}$, i.e., $j_{i,e} \leq i$ is the last primal iterate during which the algorithm updates $\ww_e$. We also define $\ell_{i,e}$ to be the smallest integer $\ell$ such that $i+1 \equiv 0 \pmod{2^{\ell}}$ and $|\ln(\frac{\rr_e^{(i+1)}}{\rr_e^{(i+1-2^\ell)}})| \geq \frac{\delta}{2 \log n}$. 
By definition we have $\ov{\rr}^{(i+1,k_i)}_e = \rr^{(i+1,k_i)}_e$, and $\ov{\rr}^{(i,k_i)}_e = \rr^{(j_{i,e},k_{j_{i,e}})}_e$. 

Then by the same argument as Lemma~\ref{lem:ChangePsiStab}, in this case we have the following properties:
\begin{itemize}
\item For all $j \in [j_{i,e}+1, i]$, the value of $\rrbar_e$ remains the same for all width reduction steps between the $(j-1)^{th}$ and $j^{th}$ primal steps, i.e., $\rrbar^{(j,k_{j-1})}_e = \rrbar^{(j,k_{j})}_e$. 
\item For all $j \in [j_{i,e}+1, i]$, $\rr_e^{(j,k_j)} \approx_{\delta} \ov{\rr}_e^{(j,k_j)} = \rr_e^{(j_{i,e},k_{j_{i,e}})}$. 
\item If $i+1-2^{\ell_{i,e}} > j_{i,e}$, then $|\rr_e^{(i+1-2^{\ell_{i,e}})} - \rr_e^{(j_{i,e},k_{j_{i,e}})}| \leq 5 \log n \cdot |\rr_e^{(i+1)} - \rr_e^{(i+1-2^{\ell_{i,e}})}|$.
\end{itemize}
\end{enumerate}
And same as Lemma~\ref{lem:ChangePsiStab}, we can without loss of generality assume that $j_{i,e} \geq i+1-2^{\ell_{i,e}}$, since otherwise we can upper bound $|\ov{\rr}^{(i+1,k_i)}_e - \ov{\rr}^{(i ,k_i)}_e|$ by $\wt{O}(|\rr^{(i+1,k_i)}_e - \rr^{(i+1-2^{\ell_{i,e}})}_e|)$ instead of $|\rr_e^{(i+1,k_i)} - \rr_e^{(j_{i,e},k_{j_{i,e}})}|$.

Abbreviating $\wt{\CC}\Delta^{(i,k_i)}-\wt{\dd}$ as $\uu^{(i,k_i)}$, now we have

\begin{align*}
\Psi(\ov{\rr}^{(i+1,k_i)}) 
\substack{(i)\\ \geq} &~ \Psi(\ov{\rr}^{(i,k_i)}) - \sum_e \Big(\frac{\rrbar_e^{(i+1,k_i)}-\ov{\rr}^{(i,k_i)}_e}{\ov{\rr}_e^{(i+1,k_i)}} \Big) \ov{\rr}^{(i,k_i)}_e (\uu^{(i,k_i)}_e)^2\\
\substack{(ii)\\ =}&~ \Psi(\ov{\rr}^{(i,k_i)}) - \sum_{e \in S_i} \Big(\frac{\rr_e^{(i+1,k_i)} - \rr_e^{(j_{i,e},k_{j_{i,e}})}}{\rr_e^{(i+1,k_i)}} \Big) \ov{\rr}^{(i,k_i)}_e (\uu^{(i,k_i)}_e)^2 \\
\substack{(iii)\\ =} &~ \Psi(\ov{\rr}^{(i,k_i)}) - \sum_{e \in S_i} \sum_{j=j_{i,e}}^{i} \frac{ (\rr_e^{(j+1,k_j)} - \rr_e^{(j,k_j)}) }{\rr_e^{(i+1,k_i)}} \ov{\rr}^{(i,k_i)}_e (\uu^{(i,k_i)}_e)^2 \\
\substack{(iv)\\ \geq} &~ \Psi(\ov{\rr}^{(i,k_i)}) - \sum_{e \in S_i} \sum_{j=j_{i,e}}^{i} \frac{ \epsilon \overrightarrow{\alpha}_e^{(j,k_j)}  \wh{\uu}^{(j,k_j)}_e \ww^{(j,k_j)}_e + \wt{O}(\epsilon\alpha) \cdot \rr_e^{(j,k_j)}}{\rr_e^{(i+1,k_i)}} \ov{\rr}^{(i,k_i)}_e (\uu^{(i,k_i)}_e)^2 \\
\substack{(v)\\ =} &~ \Psi(\ov{\rr}^{(i,k_i)}) - \wt{O}(\epsilon \alpha) \cdot \sum_{e \in S_i} \sum_{j=j_{i,e}}^{i} \ov{\rr}^{(i,k_i)}_e (\uu^{(i,k_i)}_e)^2  \\
&~ - \sum_{e \in S_i} \sum_{j=j_{i,e}}^{i} \frac{ \epsilon \overrightarrow{\alpha}_e^{(j,k_j)}  \wh{\uu}^{(j,k_j)}_e \ww^{(j,k_j)}_e}{\rr_e^{(i+1,k_i)}} \ov{\rr}^{(i,k_i)}_e (\uu^{(i,k_i)}_e)^2\\
\substack{(vi)\\ \geq} &~ \Psi(\ov{\rr}^{(i,k_i)}) - \wt{O}(\epsilon \alpha) \cdot \sum_{e \in S_i} \sum_{j=j_{i,e}}^{i} \ov{\rr}^{(i,k_i)}_e (\uu^{(i,k_i)}_e)^2 \\
&~ - \sum_{e \in S_i} \sum_{j=j_{i,e}}^{i} \frac{ \epsilon \overrightarrow{\alpha}_e^{(j,k_j)} \ww^{(j,k_j)}_e}{\rr_e^{(i+1,k_i)}} \ov{\rr}^{(i,k_i)}_e |\uu^{(j,k_j)}_e|(\uu^{(i,k_i)}_e)^2 \\
&~ - \epsilon \sum_{e \in S_i}  |\sum_{j=j_{i,e}}^{i} \overrightarrow{\alpha}_e^{(j,k_j)}  (\wh{\uu}^{(j,k_j)}_e - \uu^{(j,k_j)}_e) \ww^{(j,k_j)}_e| \frac{\ov{\rr}^{(i,k_i)}_e}{\rr_e^{(i+1,k_i)}} (\uu^{(i,k_i)}_e)^2.
\end{align*}

Here $(i)$ follows from Lemma~\ref{lem:PsiChange}, $(ii)$ follows from $\ov{\rr}^{(i+1,k_i)}_e = \rr^{(i+1,k_i)}_e$, and $\ov{\rr}^{(i,k_i)}_e = \rr^{(j_{i,e},k_{j_{i,e}})}_e$ for $e \in S_i$, $(iv)$ follows from Eq.~\eqref{eq:r_i_diff_bounds}. To obtain(v) we use, $\rr_e^{(j+1,k_j)} \approx_{\delta} \rr_e^{(j_{i,e},k_{j_{i,e}})}$, $\rr_e^{(i,k_i)} \approx_{\delta} \rr_e^{(j_{i,e},k_{j_{i,e}})}$ as we argued above, and $\rr_e^{(i+1,k_i)} \approx_{\epsilon} \rr_e^{(i,k)}$ from Eq.~\eqref{eq:r_i_diff_bounds_2}. Combining these we have $\rr_e^{(j,k_j)} \approx_{3\epsilon} \rr_e^{(i+1,k)}$. Further, $(vi)$ follows from splitting $\wh{\uu}^{(j,k_j)}_e = \uu^{(j,k_j)}_e + (\wh{\uu}^{(j,k_j)}_e - \uu^{(j,k_j)}_e)$.

Now, we can bound the third term using AM-GM inequality --  $|\wt{\CC}\Delta^{(j,k_j)}-\wt{\dd}|_e (\wt{\CC}\Delta^{(i,k_i)}-\wt{\dd})_e^2 \leq \frac{1}{3} \cdot |\wt{\CC}\Delta^{(j,k_j)}-\wt{\dd}|_e^3 + \frac{2}{3} \cdot |\wt{\CC}\Delta^{(i,k_i)}-\wt{\dd}|_e^3$, use that $\rr_e^{(j,k_j)} \approx_{\delta} \rr_e^{(j_{i,e},k_{j_{i,e}})} \approx_{\delta} \ov{\rr}_e^{(i,k_i)}$ for all $e \in S_i$ and $j \in [j_{i,e}, i]$, and bound the fourth term by Lemma~\ref{lem:sum_hat_u},
\[
|\sum_{j=j_{i,e}}^i \overrightarrow{\alpha}_e^{(j,k_j)}  (\wh{\uu}^{(j,k)}_e - \uu^{(j,k)}_e) \ww_e^{(j,k_j)}| \leq \Otil(\alpha)\sqrt{\frac{n (i + 1 - j_{i,e})}{b\epsilon}} \cdot \rr_e^{(i+1,k_i)},
\]
to get,

\begin{align*}
\Psi(\ov{\rr}^{(i+1,k_i)}) 
\geq &~ \Psi(\ov{\rr}^{(i,k_i)}) - O(\epsilon \alpha) \cdot \sum_{e \in S_i} (i + 1 - j_{i,e}) \cdot \ov{\rr}^{(i,k_i)}_e (\uu^{(i,k_i)}_e)^2 \\
&~ - O(\epsilon \alpha) \cdot \sum_{e \in S_i} (i + 1 - j_{i,e}) \cdot \ov{\rr}^{(i,k_i)}_e |\uu^{(i,k_i)}_e|^3 
 - O(\epsilon \alpha) \cdot \sum_{e \in S_i} \sum_{j=j_{i,e}}^{i-1} \ov{\rr}^{(j,k_j)}_e |\uu^{(i,k_i)}_e|^3 \\
&~ - \Otil(\epsilon\alpha) \cdot \sum_{e \in S_i}\sqrt{\frac{n (i + 1 - j_{i,e})}{b\epsilon}} \cdot \ov{\rr}^{(i,k_i)}_e (\uu^{(i,k_i)}_e)^2,
\end{align*}

Similar to the proof of Lemma~\ref{lem:ChangePsiStab}, abusing the notation, let $\ell_i$ denote the largest integer such that $i+1 \equiv 0 \pmod{2^{\ell_i}}$. Since we assumed that $j_{i,e} \geq i+1-2^{\ell_{i,e}}$ for all $e \in S_i$, we have $i+1 - j_{i,e} \leq 2^{\ell_{i,e}} \leq 2^{\ell_i}$. Also note that in primal steps we have $\sum_e \rrbar_e^{(i,k)} |\wt{\CC}\Delta^{(i,k)}-\wt{\dd}|_e^3\leq 2\rho\Psi(\rrbar^{(i,k)})$, and further note that $\sqrt{\frac{n 2^{\ell_i}}{b}} \leq \rho \cdot 2^{\ell_i}$ since $b = \wt{\Theta}(n^{1/2+\eta})$ and $\rho = \wt{\Theta}(n^{1/2-3\eta})$ and $\eta \leq 1/10$, so the above equation becomes, for some $C_1 = \wt{O}(1)$
\begin{align}
\Psi(\ov{\rr}^{(i+1,k_i)}) \geq \Big( 1 - C_1 \cdot \epsilon \alpha \rho \cdot 2^{\ell_i} \Big) \cdot \Psi(\ov{\rr}^{(i,k_i)}) - C_1 \cdot \epsilon \alpha \cdot \sum_{e \in S_i} \sum_{j=j_{i,e}}^{i-1} \ov{\rr}^{(j,k_j)}_e |\wt{\CC}\Delta^{(j,k_j)}-\wt{\dd}|_e^3.
\end{align}

The remaining proof is the same as in the proof of Lemma~\ref{lem:ChangePsiStab}.

\paragraph*{Width Reduction Step.}
The proof is the same as that of Lemma~\ref{lem:ChangePhiStab} since the algorithms are the same.
\end{proof}

\subsubsection*{Proof of Theorem~\ref{thm:RobustAccMWU}}
\begin{proof}
Let $\xxhat = \frac{\xx^{(T)}}{T}$ be the solution returned by Algorithm \ref{alg:non_monotone_accel_robust}. We will first prove that $\Phi(\ww^{(T,K)}) \leq n^{\wt{O}(1/\epsilon)}$. Then, from Lemma~\ref{lem:number_of_width_reduction_steps}, the number of width reduction steps are bounded by $\wt{O}\left(\frac{n^{1/3} \rho^{1/3} }{\epsilon^{10/3}}\right) \leq \Otil(\tau + \rho^{2})$. We will then show how to bound the objective value at $\wh{\xx}$.

Since Algorithm~\ref{alg:non_monotone_accel_robust} has at most $\alpha^{-1}\log n/\epsilon^2$ primal steps, and suppose the algorithm has at most $\Otil(\tau + \rho^{2})$ width reduction steps\footnote{Similar to Lemma~\ref{lem:number_of_width_reduction_steps}, it is sufficient to argue that this is a sufficient upper bound on the number of width reduction steps}, from Lemma~\ref{lem:ChangePhiRobust}, 
\[
\Phi\left(\ww^{(T,K)}\right) \le  2n\cdot  e^{(1+\wt{O}(\epsilon\alpha)) \cdot \alpha^{-1}\frac{\log n}{\epsilon^2} 
 + \frac{K}{(\tau + \rho^2)}}\leq   n^{\wt{O}\left(\frac{1}{\epsilon} \right)}.
\]
We can now follow the same argument as in Lemma~\ref{lem:error_non_monotone_acc_stab} up to Eq,~\eqref{eq:finalBound} to get,

\begin{align*}
\left|\sum_{i=0}^{T-1} \wh{\uu}^{(i)}_e \right| = \left|\sum_{i=0}^{T-1} (\CC \Delta^{(i,k_i)} - \dd)_e\right| \leq \frac{\ln(\Phi(\ww^{(T,K)}))}{\epsilon(1-\epsilon)\alpha}.
\end{align*}

Additionally, from Lemma~\ref{lem:sum_hat_u} we have that with probability $1 - 1/n^2$, for all $e \in [n]$
\[
\left|\sum_{i=0}^{T-1} (\wh{\uu}^{(i)}_e - \uu^{(i)}_e)\right| \leq \frac{10 (C_1 + C_2) \log n \cdot \sqrt{nT}}{\sqrt{b \epsilon}}.
\]
Now,

\begin{align*}
\left|\sum_{i=0}^{T-1} \uu^{(i)}_e \right|&\leq \left|\sum_{i=0}^{T-1} \wh{\uu}^{(i)}_e \right| 
 + \left|\sum_{i=0}^{T-1} (\wh{\uu}^{(i)}_e - \uu^{(i)}_e)\right|\\
 &\leq \frac{\ln(\Phi(\ww^{(T,K)}))}{\epsilon(1-\epsilon)\alpha} + \frac{10 (C_1 + C_2) \log n \cdot \sqrt{nT}}{\sqrt{b \epsilon}}.
\end{align*}
So we have for $b \geq \frac{n \alpha \log n}{\epsilon},$
\begin{align*}
\|\CC \xxhat - \dd\|_{\infty} = &~ \frac{1}{T} \max_e \left|\sum_{i=0}^{T-1} \uu^{(i)}_e\right| \\
\leq &~ \frac{\ln(\Phi(\ww^{(T,K)}))}{\epsilon(1-\epsilon)\alpha T} + \frac{10 (C_1 + C_2) \log n \cdot \sqrt{n}}{\sqrt{b \epsilon T}}\\
\leq &~ 1+10\epsilon + O(\epsilon)\\
\leq &~ 1 + O(\epsilon).
\end{align*}

Therefore, the total number of iterations for $\eta = 1/10$ is,
\[
T + K \leq \Otil(1) \left( \alpha^{-1}\epsilon^{-2}  + \frac{n^{1/3}\rho^{1/3}}{\epsilon^{10/3}}\right)= \tilde{O}(1) n^{2/5}\epsilon^{-4}.
\]

Similar to the proof of Theorem~\ref{thm:StableAccMWU}, we also have $|H| \leq \Otil(n^{1/2+\eta})$.
\end{proof}

\section{Stability Guarantees of Algorithms~\ref{alg:MWU} and~\ref{alg:non_monotone_accel_robust}}\label{sec:StabilityMWU}
In this section we prove the stability guarantees of Algorithms~\ref{alg:MWU} and~\ref{alg:non_monotone_accel_robust}. 

For primal steps, we will prove that the $\ell_2$ or $\ell_3$ norm of the relative changes in resistances are bounded, and this means we can maintain a coordinate-wise approximation of the resistances under a low-rank update scheme. 

For width reduction steps, we directly prove that the number of coordinates updated in each iteration follow the same low-rank update scheme.

In Section~\ref{sec:stability_algo_1}, we prove the stability guarantees of the iterates of the multiplicative weights update algorithm with monotone weights given in Algorithm~\ref{alg:MWU}. Both primal and width reduction steps of this algorithm satisfy the stronger $\ell_3$-stability guarantee, i.e., the $\ell_3$ norm of the relative changes in resistances is bounded.

In Section~\ref{sec:stability_algo_warm_up}, we prove that the primal steps of Algorithm~\ref{alg:non_monotone_accel_stab} satisfy the weaker $\ell_2$-stability guarantee, and we also prove that its width reduction steps follow a low-rank update scheme.

In Section~\ref{sec:stability_algo_2}, we prove a robust $\ell_2$-stability guarantee for the primal steps of Algorithm~\ref{alg:non_monotone_accel_robust}, and the width reduction steps are the same as that of Algorithm~\ref{alg:non_monotone_accel_stab}.

\subsection{Stability Guarantees of Algorithm~\ref{alg:MWU}}\label{sec:stability_algo_1}

% In this section we prove the stability guarantees of the iterates of the multiplicative weights update algorithm with monotone weights given in Algorithm~\ref{alg:MWU}. We will prove that throughout the algorithm, the $\ell_3$-norm of the relative changes in resistances is bounded for both primal and width reduction steps.

\MonotonePrimal*

\begin{proof}
Note that from Lemma~\ref{lem:PsiChange}, for $k = k_i$ in each primal iteration the potential is increased by at least
\begin{align}\label{eq:Stable1}
\Psi(\rrbar^{(i+1,k)}) - \Psi(\rrbar^{(i,k)}) \geq &~ \sum_e \left(1-\frac{\rrbar^{(i,k)}_e}{\rrbar^{(i+1,k)}_e}\right)\rrbar^{(i,k)}_e (\CC\Delta^{(i,k)}-\dd)_e^2 \nonumber\\
= &~ \sum_e \left(\frac{\rrbar_e^{\left(i + 1,k\right)}-\rrbar_e^{\left(i,k \right)}}{\rrbar_e^{\left(i,k \right)}} \right) \frac{(\rrbar_e^{(i,k)})^2}{\rrbar^{(i+1,k)}_e} (\CC\Delta^{(i,k)}-\dd)_e^2.
\end{align}
We know from the algorithm that for a primal step, $\ww^{(i+1,k)} = \ww^{(i,k)}(1+\epsilon\alpha|\CC\Delta^{(i,k)}-\dd|)$. We will use this to compute the relative change in resistances. Let $\rr' = \rr^{(i+1,k)}$, $\rr = \rr^{(i,k)}$ and $\Delta = \Delta^{(i,k)}$.

\begin{align*}
    \frac{\rr_e'-\rr_e}{\rr_e}& = \frac{\ww'_e - \ww_e}{\rr_e} + \frac{\epsilon}{n}\frac{\Phi(\ww')-\Phi(\ww)}{\rr_e}\\
    & \leq \epsilon\alpha |\CC\Delta-\dd|_e + \frac{\epsilon}{n}\frac{\Phi(\ww')-\Phi(\ww)}{\rr_e}.
\end{align*}
From the above, we note that,
\begin{align*}
|\CC\Delta-\dd|_e^2 \geq &~ \left(\frac{1}{\epsilon\alpha}\frac{\rr_e'-\rr_e}{\rr_e} - \frac{1}{\alpha n}\frac{\Phi(\ww')-\Phi(\ww)}{\rr_e}\right)^2 \\
\geq &~ \frac{1}{\epsilon^2\alpha^2}\left(\frac{\rr_e'-\rr_e}{\rr_e}\right)^2 - 2\frac{1}{\epsilon\alpha^2 n}\frac{\left(\Phi(\ww')-\Phi(\ww)\right)\left(\rr_e'-\rr_e\right)}{\rr_e^2}.
\end{align*}
We know that for a primal step, $\Phi(\ww')-\Phi(\ww)\leq (1+\epsilon)\epsilon \alpha \Phi(\ww)$. Using this and $e \in S_i$, the above becomes, 
\[
|\CC\Delta-\dd|_e^2 \geq \frac{1}{3\epsilon^2\alpha^2}\left(\frac{\rr_e'-\rr_e}{\rr_e}\right)^2
\]

Now, for $e$, let $i'_e$ denote the last primal iterate where $e$ was updated via a width reduction step. Now, since the $\rr$'s are increasing,
\[
\frac{\rrbar_e^{(i+1,k_i)}-\rrbar^{(i,k_i)}_e}{\rrbar^{(i,k_i)}_e}=\sum_{j=i'}^i\frac{\rrbar_e^{(j+1,k_j)}-\rrbar^{(j,k_j)}_e}{\rrbar^{(j,k_j)}_e}\geq \frac{\rr_e^{(i+1,k_i)}-\rr^{(i,k_i)}_e}{\rr^{(i,k_i)}_e}.
\]
Using these bounds in Eq.~\eqref{eq:Stable1},
\begin{align*}
  \Psi(\rrbar^{(i+1,k)}) & \geq  \Psi(\rrbar^{(i,k)})  + \sum_e \left(\frac{\rrbar_e^{\left(i + 1,k\right)}-\rrbar_e^{\left(i,k \right)}}{\rrbar_e^{\left(i,k \right)}} \right) \frac{(\rrbar_e^{(i,k)})^2}{\rrbar^{(i+1,k)}_e} (\CC\Delta^{(i,k)}-\dd)_e^2\\
  & \geq \Psi(\rrbar^{(i,k)}) + \frac{(1-\delta)}{\epsilon^2\alpha^2}\sum_e \left(\frac{\rr_e^{(i+1,k_i)}-\rr^{(i,k_i)}_e}{\rr^{(i,k_i)}_e}\right)^3\rrbar_e^{(i,k)}\\
  & \geq \Psi(\rrbar^{(i,k)}) + \frac{(1-\delta)^3}{\epsilon\alpha^2 n}\sum_e \left(\frac{\rr_e^{(i+1,k_i)}-\rr^{(i,k_i)}_e}{\rr^{(i,k_i)}_e}\right)^3\Psi(\rrbar_e^{(i,k)})\\
  & \geq \Psi(\rrbar^{(i,k)})\left(1 + \frac{1}{10\epsilon\alpha^2 n}\sum_e \left(\frac{\rr_e^{(i+1,k_i)}-\rr^{(i,k_i)}_e}{\rr^{(i,k_i)}_e}\right)^3\right)
\end{align*}

We can now recurse on the above and get,
\begin{equation}\label{eq:Stable4}
  \Psi(\rr^{(T,K)}) \geq \Psi(\rr^{(0,0)})\Pi_{t\geq 0}\left(1 + \frac{1}{\epsilon\alpha^2n} \sum_{e\in S_t}\left(\frac{\rr^{(t+1,k_t)}_e}{\rr^{(t,k_t)}_e} - 1\right)^3 \right).
\end{equation}
Taking logs,
\[
\log \frac{\Psi(\rr^{(T,K)})}{\Psi(\rr^{(0,0)})} \geq \frac{1}{\epsilon(1+\epsilon)\alpha^2n}\sum_{t\geq 0}\sum_{e\in S_t}\left(\frac{\rr^{(t+1,k_t)}_e}{\rr^{(t,k_t)}_e} - 1\right)^3 .
\]
Since $\Psi(\rr^{(T,K)}) \leq \Phi(\ww^{(T,K)})\leq n^{O(1/\epsilon)}$, and $\Psi(\rr^{(0,0)})\geq L$,
\[
\sum_{t\geq 0}\sum_{e\in S_t}\left(\frac{\rr^{(t+1,k_t)}_e}{\rr^{(t,k_t)}_e} - 1\right)^3  \leq \Otil(\alpha^2 n).
\]

\end{proof}

\MonotoneWidth*

\begin{proof}
Again from Lemma~\ref{lem:PsiChange},
\[
\Psi(\rrbar^{(i,k+1)})\geq \Psi(\rrbar^{(i,k)}) + \sum_e \left(\frac{\rrbar_e^{\left(i,k+1\right)}-\rrbar_e^{\left(i,k \right)}}{\rrbar_e^{\left(i,k\right)}} \right) \frac{(\rrbar_e^{(i,k)})^2}{\rrbar^{(i,k+1)}_e} (\CC\Delta^{(i,k)}-\dd)_e^2.
\]
Now, $\rrbar_e^{(i, k+1)}\leq \rrbar_e^{(i, k)}(1+\epsilon)(1+\delta)$ and for $e\in H_k$, we know that $|\CC\Delta^{(i,k)}-\dd|_e\geq \tau.$ Also, 
\[
\frac{\rrbar_e^{\left(i,k+1\right)}-\rrbar_e^{\left(i,k \right)}}{\rrbar_e^{\left(i,k \right)}} \geq \frac{\rr_e^{\left(i,k+1\right)}-\rr_e^{\left(i,k \right)}}{\rr_e^{\left(i,k \right)}} -\frac{2\delta}{1-\delta}\geq \frac{1}{10}\frac{\rr_e^{\left(i,k+1\right)}-\rr_e^{\left(i,k \right)}}{\rr_e^{\left(i,k \right)}}.
\]

This gives us,
\[
\Psi(\rrbar^{(i,k+1)})\geq \Psi(\rrbar^{(i,k)}) + \sum_e \left(\frac{\rr_e^{\left(i,k+1\right)}-\rr_e^{\left(i,k \right)}}{\rr_e^{\left(i,k\right)}} \right)^3 \frac{\rr_e^{(i,k)}}{1+\epsilon} \frac{\tau^2}{\epsilon^2}.
\]

Further, using that $\rr_e^{(i,k)}\geq \epsilon\Psi(\rr_e^{(i,k)})/n$, and $\Psi(\rr^{(T,K)})\leq (1+\epsilon)\Phi(\ww^{(T,K)}) \leq n^{O(1/\epsilon)}$,
\[
n^{O(1/\epsilon)}\geq L\Pi_{k\geq 0}\left(1+\sum_e \left(\frac{\rr_e^{\left(i_k,k+1\right)}-\rr_e^{\left(i_k,k \right)}}{\rr_e^{\left(i_k,k\right)}} \right)^3 \frac{1}{(1+\epsilon)\epsilon} \frac{\tau^2}{n}\right)
\]

In other words,
\[
\sum_k\sum_e \left(\frac{\rr_e^{\left(i,k+1\right)}-\rr_e^{\left(i,k \right)}}{\rr_e^{\left(i,k\right)}} \right)^3 \leq \widetilde{O} (n^{1/3}).
\]
\end{proof}

\subsection{Algorithm Warm Up: Low-Rank Update Scheme}\label{sec:stability_algo_warm_up}
% \deeksha{Need to combine this with the next one and remove the extra lemma on the primal steps.}
\begin{lemma}[Low-rank update scheme of width reduction steps]\label{lem:low_rank_update_width_reduction_steps}
Let $\eta = 1/10$. For every $\ell = 0,1, 2, \cdots,$ $ \log(\frac{10 n^{2/5}}{\tau^{1/2} \epsilon^{1/2}})$, in Algorithm~\ref{alg:non_monotone_accel_stab} there are at most $\frac{T}{2^{\ell}}$ number of width reductions steps in which $\ov{\rr}$ receives an update of rank $O\left(n^{1/5} 2^{2 \ell} \cdot (\log n)^{28/3} \log(\frac{n}{\Psi_0})^{4/3} \epsilon^{-1}\right)$. 
\end{lemma}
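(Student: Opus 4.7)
The plan is to establish a global $\ell_3$-type bound on the sizes of all width reduction steps via the $\Psi$-progress lemma, and then obtain the stated low-rank decomposition by a standard dyadic bucketing (level-set) argument.

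First I would apply Lemma~\ref{lem:ChangePsiStab} to each consecutive block of $L$ primal iterations. Because $L \leq \frac{1}{100 (\log^4 n)\epsilon\alpha\rho}$, the per-block decrease factor $1 - \wt{O}(\epsilon\alpha\rho L)$ is bounded below by $1/2$. Each individual width-reduction factor $1 + C \epsilon^{4/3}\rho^{2/3}\textsc{Size}(k)^{1/3}/(n^{1/3}\log^{2/3}(n/\epsilon\rho))$ is at most $O(n^{-1/5})$ because $\textsc{Size}(k) \leq |H|+1 \leq O(n/(\tau\epsilon))$ by Lemma~\ref{lem:size_H}, so I can take logarithms and use $\log(1+x) \geq x/2$. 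Telescoping the resulting per-block inequality over all $T/L$ blocks gives
\[
\sum_k \frac{\epsilon^{4/3}\rho^{2/3}\textsc{Size}(k)^{1/3}}{n^{1/3}\log^{2/3}(n/\epsilon\rho)} \leq O(T/L) + O\!\left(\log\frac{\Psi_{\mathrm{final}}}{\Psi_0}\right).
\]
The second term is only $\wt{O}(1/\epsilon)\log(n/\Psi_0)$ by the upper bound $\Phi_{\mathrm{final}}\leq n^{O(\log n/\epsilon)}$ (Eq.~\eqref{eq:upper_bound_Phi}) combined with Lemma~\ref{lem:PsiPhi}, while $T/L = \wt{O}(\epsilon\alpha\rho T) = \wt{O}(\rho\log n/\epsilon)$ dominates. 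Substituting the parameter values $\rho$, $\alpha$, $\tau$, $T$ from Algorithm~\ref{alg:non_monotone_accel_stab} for $\eta = 1/10$ and rearranging yields the global estimate
\[
\sum_k \textsc{Size}(k)^{1/3} \leq O\!\left(n^{2/5}\epsilon^{-3} (\log n)^{7} \log(n/\Psi_0)^{1/3}\right).
\]

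Second, fix $\ell$ in the stated range and let $N_\ell$ count width reduction steps whose size is at least the threshold $s_\ell := c\, n^{1/5}2^{2\ell}(\log n)^{28/3}\log(n/\Psi_0)^{4/3}\epsilon^{-1}$ for a suitable absolute constant $c$. Each such step contributes at least $s_\ell^{1/3}$ to the global sum, so
\[
N_\ell \leq \frac{O\!\left(n^{2/5}\epsilon^{-3}(\log n)^{7}\log(n/\Psi_0)^{1/3}\right)}{c^{1/3}\,n^{1/15}\,2^{2\ell/3}(\log n)^{28/9}\log(n/\Psi_0)^{4/9}\epsilon^{-1/3}} = \frac{O\!\left(n^{1/3}\epsilon^{-8/3}(\log n)^{35/9}\log(n/\Psi_0)^{-1/9}\right)}{c^{1/3}\,2^{2\ell/3}}.
\]
Comparing to $T/2^\ell = 10\, n^{2/5}\epsilon^{-3}(\log n)^{7/3}\log(n/\Psi_0)^{1/3}/2^\ell$, the desired inequality $N_\ell \leq T/2^\ell$ reduces to $2^{\ell/3} \leq n^{1/15}\epsilon^{1/3}\cdot (\text{residual polylog})$. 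By the choice $\tau = n^{2/5}\epsilon^{-4}(\log n)^{8}\log(n/\Psi_0)^{2}$ one has $n^{2/5}/(\tau^{1/2}\epsilon^{1/2}) = n^{1/5}\epsilon^{3/2}/((\log n)^4\log(n/\Psi_0))$, so $2^\ell \leq 10\,n^{2/5}/(\tau^{1/2}\epsilon^{1/2})$ gives exactly the required bound on $2^{\ell/3}$ after taking cube roots, completing the verification once the absolute constant $c$ is absorbed into the residual.

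The main obstacle is not conceptual but bookkeeping: one must carefully track how the polylogarithmic factors from $\rho^{2/3}$, $\log^{2/3}(n/\epsilon\rho)$, and $T/L$ combine to produce the specific exponent $(\log n)^{28/3}\log(n/\Psi_0)^{4/3}\epsilon^{-1}$ in the stated size threshold. Observe that the $\epsilon^{-1}$ factor in the threshold is tight since $N_\ell$ has an $\epsilon^{-8/3}$ while $T/2^\ell$ has only $\epsilon^{-3}$, so the size threshold must contribute an extra $\epsilon^{-1/3 \cdot 3}=\epsilon^{-1}$ per cube root to balance. Everything else is routine substitution, and the decisive inequality $2^\ell \leq n^{1/5}$ (up to $\epsilon$ and polylog) is exactly the upper endpoint of the stated $\ell$-range.
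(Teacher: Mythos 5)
Your proof is correct, and it takes a recognizably different route from the paper's. The paper works bucket by bucket: for each dyadic size class $[n^{1/5}2^{2\ell},4n^{1/5}2^{2\ell}]$ and each block of $L$ primal steps it bounds the count $K_{c,\ell}$ directly from the multiplicative growth of $\Psi$ in Lemma~\ref{lem:ChangePsiStab}, sums over the $T/L$ blocks, and then performs a reindexing $\ell\mapsto\ell'$ to absorb the accumulated polylog and $\epsilon$ factors into the rank threshold. You instead take logarithms of the same telescoped inequality once, globally, to extract the single ``mass'' bound $\sum_k\textsc{Size}(k)^{1/3}\leq O(n^{2/5}\epsilon^{-3}(\log n)^7\log(n/\Psi_0)^{1/3})$, and then apply a Markov-type threshold argument. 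Both arguments lean on exactly the same two inputs (Lemma~\ref{lem:ChangePsiStab} together with $L\lesssim(\epsilon\alpha\rho\log^4 n)^{-1}$, and the terminal bound $\Psi\leq e^{\epsilon+\delta}\Phi\leq n^{O(\log n/\epsilon)}$), so the difference is organizational rather than conceptual; your version is arguably cleaner in that it isolates one reusable inequality and avoids the paper's somewhat awkward final index shift, at the cost of having to justify the linearization $\log(1+x)\geq x/2$. On that point, two small slips that do not affect validity: the per-step factor is $O(\epsilon)$, not $O(n^{-1/5})$ (plugging $\textsc{Size}(k)\leq O(n/(\tau\epsilon))$ into $\epsilon^{4/3}\rho^{2/3}\textsc{Size}(k)^{1/3}/(n^{1/3}\log^{2/3}(n/\epsilon\rho))$ gives $C_3\epsilon/\log^{2/3}(n/\epsilon\rho)$ for $\eta=1/10$), which is still $O(1)$ so the linearization stands; and the reduced inequality should read $2^{\ell/3}\lesssim n^{1/15}\epsilon^{-1/3}\cdot(\text{polylog})$ rather than $\epsilon^{1/3}$, but since the range constraint delivers $2^{\ell/3}\leq O(n^{1/15}\epsilon^{1/2}(\log n)^{-4/3}\log(n/\Psi_0)^{-1/3})$ and $\log(n/\Psi_0)\geq\log n$, the verification goes through either way.
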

\begin{proof}
First note that from Lemma~\ref{lem:size_H}, we have that for any width reduction step, the size of $H$ satisfies $|H| \leq \frac{n}{\tau \epsilon} \cdot e^{\epsilon + 2 \delta}$, and hence the update to $\ov{\rr}$ in any width reduction step has size at most $\frac{n}{\tau \epsilon} \cdot e^{\epsilon + 2 \delta} + 1 \leq n^{1/5} 2^{2\ell_{\max}}$ where $\ell_{\max} := \log(\frac{10 n^{2/5}}{\tau^{1/2} \epsilon^{1/2}})$.

Consider any fixed integer $c \in [1:\frac{T}{L}]$. Consider any integer $\ell \in [0:\ell_{\max}]$, and let $K_{c,\ell}$ denote the number of width steps between primal iterations $(c-1)L$ and $(cL)$ such that $\textsc{Size}(k) \in [n^{1/5} 2^{2\ell}, 4 \cdot n^{1/5} 2^{2\ell}]$. Using Lemma~\ref{lem:ChangePsiStab} and using a proof similar to that of Lemma~\ref{lem:number_of_width_reduction_steps}, we have
\begin{align*}
\left( 1 + C_3 \frac{\epsilon^{4/3} \cdot \rho^{2/3} \cdot (n^{1/5} 2^{2 \ell})^{1/3}}{n^{1/3} \log^{2/3}(\frac{n}{\epsilon\rho})} \right)^{K_{c,\ell}} \leq &~ \frac{2 \Psi\left(\ov{\rr}^{(cL,k_{cL})}\right)}{\Psi(\ov{\rr}^{((c-1)L,k_{(c-1)L})})}  
\leq \frac{4 \Psi\left(\ov{\rr}^{(cL,k_{cL})}\right)}{\Psi_0} 
\leq \frac{10 n^{3 \log n / \epsilon}}{\Psi_0} \\
\implies K_{c,\ell} \leq &~ O\left(\frac{n^{4/15} \log n}{\epsilon^{4/3} \cdot \rho^{2/3} \cdot 2^{2\ell/3}} \cdot \frac{\log n}{\epsilon} \cdot \log\big(\frac{n}{\Psi_0}\big)\right),
\end{align*}
where the first step follows from Lemma~\ref{lem:ChangePsiStab} and that the $C_2 \epsilon \alpha \rho \cdot L$ factor of Lemma~\ref{lem:ChangePsiStab} is upper bounded by $1/2$ since $L \leq \frac{1}{100 (\log^4 n) \epsilon \alpha \rho}$, the second step follows from the same proof as Lemma~\ref{lem:lower_bound_Psi} that $\Psi(\ov{\rr}^{((c-1)L,k_{(c-1)L})})\geq \frac{1}{1+2\epsilon}\cdot\Psi(\ov{\rr}^{(0,0)})$ and Lemma~\ref{lem:lower_bound_Psi_0} that $\Psi(\ov{\rr}^{(0,0)}) \geq \Psi_0$, the third step follows from $\Psi(\rrbar^{(cL,k_{cL})}) \leq e^{\epsilon + \delta} \Phi(\ww^{(cL,k_{cL})}) \leq e^{\epsilon + \delta} n^{3 \log n / \epsilon}$ by Lemma~\ref{lem:PsiPhi} and Eq.~\eqref{eq:upper_bound_Phi} of Lemma~\ref{lem:number_of_width_reduction_steps}.

So over $T = \alpha^{-1}\epsilon^{-2}\log n$ primal steps, and since $L = \Theta(\frac{1}{\epsilon \alpha \rho \log^4 n})$, the total number of width reduction steps with update size in $[n^{1/5} 2^{2\ell}, 4 \cdot n^{1/5} 2^{2\ell}]$ is upper bounded by
\begin{align}\label{eq:UB_width_steps_ell}
\frac{T}{L} \cdot K_{c,\ell} \leq &~ T \cdot O((\log^4 n) \epsilon \alpha \rho) \cdot O\left(\frac{n^{4/15} \log n}{\epsilon^{4/3} \cdot \rho^{2/3} \cdot 2^{2\ell/3}} \cdot \frac{\log n}{\epsilon} \cdot \log\big(\frac{n}{\Psi_0}\big)\right) \notag \\
= &~ T \cdot O\left(\frac{n^{4/15} \alpha \rho^{1/3} (\log n)^6}{\epsilon^{4/3} \cdot 2^{2\ell/3}} \cdot \log\big(\frac{n}{\Psi_0}\big)\right) \notag \\
= &~ T \cdot O\left(\frac{n^{-1/15} (\log n)^6}{\epsilon \cdot 2^{2\ell/3}} \cdot \log\big(\frac{n}{\Psi_0}\big)\right),
\end{align}
where the third step follows from $\alpha = O(n^{-1/2+\eta} \cdot \epsilon \cdot \log(n)^{-4/3} \log(\frac{n}{\Psi_0})^{-1/3})$ and $\rho = n^{1/2-3\eta} \cdot \epsilon^{-2} \cdot \log(n)^4 \log(\frac{n}{\Psi_0})$.

Since we only consider $\ell \leq \ell_{\max} = \log(\frac{10 n^{2/5}}{\tau^{1/2} \epsilon^{1/2}})$, we have
\begin{align*}
2^{\ell} \leq O\left(\frac{n^{2/5}}{\tau^{1/2} \epsilon^{1/2}}\right)
= O\left(\frac{n^{2/5}}{n^{1/4-\eta/2} \cdot \epsilon^{-3/2} \cdot \log(n)^4 \log(\frac{n}{\Psi_0})}\right) 
= O\left(\frac{n^{\eta/2+3/20} \epsilon^{3/2}}{\log(n)^4 \log(\frac{n}{\Psi_0})}\right),
\end{align*}
where the second step follows from $\tau = n^{1/2-\eta} \cdot \epsilon^{-4} \cdot \log(n)^8 \log(\frac{n}{\Psi_0})^2$. So we have
\[
O\left(\frac{n^{\eta/6+1/20} \epsilon^{1/2}}{\log(n)^{4/3} \log(\frac{n}{\Psi_0})^{1/3}}\right) \cdot \frac{1}{2^{\ell/3}} \geq 1,
\]
so we can multiply this factor to the upper bound of Eq.~\eqref{eq:UB_width_steps_ell}, and we have that the total number of width reduction steps with update size in $[n^{1/5} 2^{2\ell}, 4 \cdot n^{1/5} 2^{2\ell}]$ is upper bounded by
\begin{align*}
&~ T \cdot O\left(\frac{n^{-1/15} (\log n)^6}{\epsilon \cdot 2^{2\ell/3}} \cdot \log\big(\frac{n}{\Psi_0}\big)\right) \cdot O\left(\frac{n^{\eta/6+1/20} \epsilon^{1/2}}{\log(n)^{4/3} \log(\frac{n}{\Psi_0})^{1/3}}\right) \cdot \frac{1}{2^{\ell/3}} \\
\leq &~ T \cdot O\left(\frac{1}{2^{\ell}} \cdot \frac{(\log n)^{14/3} \log(\frac{n}{\Psi_0})^{2/3}}{\epsilon^{1/2}}\right),
\end{align*}
where the second step follows from $\eta = 1/10$.

Letting $\ell' = \ell - \log\left(O\left(\frac{(\log n)^{14/3} \log(\frac{n}{\Psi_0})^{2/3}}{\epsilon^{1/2}}\right)\right)$, we have that there are at most $T \cdot \frac{1}{2^{\ell'}}$ width reduction steps with update size $O\left(n^{1/5} 2^{2 \ell'} \cdot (\log n)^{28/3} \log(\frac{n}{\Psi_0})^{4/3} \epsilon^{-1}\right)$.
\end{proof}

\begin{lemma}[$\ell_2$ stability of primal steps]
Algorithm~\ref{alg:non_monotone_accel_stab} satisfies that for all primal steps $i$,
\[
\sum_e \left(\log(\rr_e^{(i+1,k_i)})-\log (\rr_e^{(i,k_i)})\right)^2 \leq \Otil(n^{2\eta}\epsilon^{3}).
\]
\end{lemma}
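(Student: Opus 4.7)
\bigskip

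\noindent\textbf{Proof plan.} The plan is to reduce the $\ell_2$ norm of the log-ratios to the $\ell_2$ norm of the primal oracle residual $\uu^{(i,k_i)} = \wt{\CC}\Delta^{(i,k_i)} - \wt{\dd}$, and then invoke the primal-step condition together with the uniform lower bound $\rrbar_e \geq \tfrac{\epsilon}{2n}\Phi(\ww^{(i,k_i)})$ to control that quantity.

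\noindent\textbf{Step 1: from log-ratios to relative changes.} By Eq.~\eqref{eq:Psi_primal_re} and \eqref{eq:Psi_primal_re_UB} in the proof of Lemma~\ref{lem:ChangePsiStab}, a primal step satisfies $\rr_e^{(i+1,k_i)} \in [(1-\epsilon), (1+\epsilon)]\cdot \rr_e^{(i,k_i)}$. Since $|\log(1+x)| \leq 2|x|$ for $|x|\leq 1/2$, I can bound
\[
\left|\log \rr_e^{(i+1,k_i)} - \log \rr_e^{(i,k_i)}\right|^2 \leq 4\left(\frac{\rr_e^{(i+1,k_i)} - \rr_e^{(i,k_i)}}{\rr_e^{(i+1,k_i)}}\right)^2,
\]
and then invoke Eq.~\eqref{eq:Psi_primal_re_diff} of Lemma~\ref{lem:ChangePsiStab}, which gives $|\rr_e^{(i+1,k_i)}-\rr_e^{(i,k_i)}|/\rr_e^{(i+1,k_i)} \leq O(\epsilon\alpha)(|\uu^{(i,k_i)}_e| + 1)$.

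\noindent\textbf{Step 2: summing over coordinates.} Using $(a+b)^2 \leq 2a^2+2b^2$ this yields
\[
\sum_e (\log \rr_e^{(i+1,k_i)} - \log \rr_e^{(i,k_i)})^2 \leq O(\epsilon^2\alpha^2)\Big(\sum_e (\uu^{(i,k_i)}_e)^2 + 2n\Big).
\]
So everything boils down to bounding $\sum_e (\uu^{(i,k_i)}_e)^2$ at a primal step.

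\noindent\textbf{Step 3: bounding $\sum_e (\uu^{(i,k_i)}_e)^2$.} Since we are at a primal step we have $\ww^{(i,k_i)} \geq 0$ (cf.\ Lemma~\ref{lem:positivity_weights}), and thus $\rrbar_e^{(i,k_i)} \geq e^{-\delta}\rr_e^{(i,k_i)} \geq e^{-\delta}\tfrac{\epsilon}{2n}\Phi(\ww^{(i,k_i)})$. Combining with $\Psi(\rrbar^{(i,k_i)}) = \sum_e \rrbar_e^{(i,k_i)}(\uu^{(i,k_i)}_e)^2$ and Lemma~\ref{lem:PsiPhi} ($\Psi(\rrbar^{(i,k_i)}) \leq e^{\epsilon+\delta}\Phi(\ww^{(i,k_i)})$) gives the crude but sufficient bound
\[
\sum_e (\uu^{(i,k_i)}_e)^2 \leq \frac{\Psi(\rrbar^{(i,k_i)})}{\min_e \rrbar_e^{(i,k_i)}} \leq \frac{2n\,e^{\epsilon+2\delta}}{\epsilon} = O(n/\epsilon).
\]

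\noindent\textbf{Step 4: plug in parameters.} Combining Steps~2 and 3 yields
\[
\sum_e \left(\log \rr_e^{(i+1,k_i)} - \log \rr_e^{(i,k_i)}\right)^2 \leq O(\epsilon^2\alpha^2)\cdot O(n/\epsilon) = O(n\epsilon\alpha^2).
\]
Substituting the value $\alpha = \widetilde{\Theta}(n^{-1/2+\eta}\epsilon)$ from Algorithm~\ref{alg:non_monotone_accel_stab} gives $n\epsilon\alpha^2 = \widetilde{O}(n^{2\eta}\epsilon^3)$, as claimed.

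\noindent\textbf{Where I expect difficulty.} There is no hard obstacle: the algebra is driven entirely by bounds already established in the change-in-$\Psi$ analysis (Lemma~\ref{lem:ChangePsiStab}). The only subtle point is to make sure the crude $\ell_\infty$-to-$\ell_2$ reduction $\rrbar_e \geq \tfrac{\epsilon}{2n}\Phi(\ww)$ is sharp enough; it is, because the factor $n/\epsilon$ combines with $\alpha^2 = \widetilde\Theta(n^{-1+2\eta}\epsilon^2)$ to produce exactly $n^{2\eta}\epsilon^3$. (A finer Hölder-type bound using the primal-step cubic condition $\sum_e \rrbar_e|\uu_e|^3 \leq O(\rho)\Psi(\rrbar)$ could also be used, but is not needed for the stated $\widetilde O(n^{2\eta}\epsilon^3)$ rate.)
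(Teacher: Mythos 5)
Your proof is correct and follows essentially the same route as the paper's: bound the per-coordinate log-ratio by $O(\epsilon\alpha)(|\uu_e^{(i,k_i)}|+1)$, then control $\sum_e (\uu_e^{(i,k_i)})^2 = O(n/\epsilon)$ via $\rrbar_e \geq \tfrac{\epsilon}{2n}\Phi(\ww)$ and $\Psi(\rrbar)\leq e^{\epsilon+\delta}\Phi(\ww)$, and plug in $\alpha$. The only nitpick is that positivity of the weights for Algorithm~\ref{alg:non_monotone_accel_stab} should be cited to (the proof of) Lemma~\ref{lem:positiveWopt} rather than Lemma~\ref{lem:positivity_weights}, but the fact itself is established in the paper.
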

\begin{proof}
We will first compute the ratio $\frac{\rr_e^{(i+1,k)}}{\rr_e^{(i,k)}}$.
    \begin{align*}
   \frac{\rr_e^{(i+1,k)}}{\rr_e^{(i,k)}} &\leq \frac{\ww_e^{(i,k)}(1+\epsilon \overrightarrow{\alpha}^{(i,k)} |\wt{\CC}\Delta^{(i,k)}-\wt{\dd}|_e)+ \frac{\epsilon}{2n}\Phi(\ww^{(i+1,k)})}{\rr_e^{(i,k)}}\\
    & \leq \frac{\ww_e^{(i,k)} + \epsilon \alpha\ww^{(i,k)}_e|\wt{\CC}\Delta^{(i,k)}-\wt{\dd}|_e +\frac{\epsilon}{2n}(1+\epsilon\alpha e^{\epsilon + \delta})\Phi(\ww^{(i,k)})}{\rr^{(i,k)}_e}\\
    & \leq 1 + \epsilon \alpha |\wt{\CC}\Delta^{(i,k)}-\wt{\dd}|_e + \epsilon\alpha e^{\epsilon + \delta},
    \end{align*}
where the second step follows from Lemma~\ref{lem:ChangePhiStab}.
    
    Now taking $\log$ and using that $\log(1+x) \leq x$ for all $x\geq -1$,
    \[
     \log \left(\frac{\rr_e^{(i+1,k)}}{\rr_e^{(i,k)}}\right) \leq \epsilon \alpha |\wt{\CC}\Delta^{(i,k)}-\wt{\dd}|_e + \epsilon\alpha e^{\epsilon + \delta}.
    \]
    Similarly we also have $\log \left(\frac{\rr_e^{(i+1,k)}}{\rr_e^{(i,k)}}\right) \geq -\epsilon \alpha |\wt{\CC}\Delta^{(i,k)}-\wt{\dd}|_e - \epsilon\alpha e^{\epsilon + \delta}$.
    
    Squaring and summing over all $e$,
    \begin{align*}
    \sum_e \log \left(\frac{\rr_e^{(i+1,k)}}{\rr_e^{(i,k)}}\right)^2 & \leq 2\epsilon^2 \alpha^2 \sum_e |\wt{\CC}\Delta^{(i,k)}-\wt{\dd}|^2_e + 2 \epsilon^2 \alpha^2 e^{2(\epsilon + \delta)} n\\
    &\leq 4\epsilon \alpha^2 n (1+\delta) \sum_e \frac{\rrbar^{(i,k)}_e |\wt{\CC}\Delta^{(i,k)}-\wt{\dd}|^2_e}{\Phi(\ww^{(i,k)})} + 2 e^{2(\epsilon + \delta)}\epsilon^2 \alpha^2 n\\
    & = O(\epsilon\alpha^2 n) = O\left(n^{2\eta} \cdot \epsilon^3 \cdot \log(n)^{-8/3} \log(\frac{n}{\Psi_0})^{-2/3}\right). %= \Otil(n^{2\eta} \epsilon^3).
    \end{align*}
\end{proof}

From the above lemma and Lemma~\ref{lem:LowRankL2}, we directly have the following corollary.
\begin{corollary}[Low-rank update scheme of primal steps]
For every $\ell = 0,1, \cdots, \log T$, in Algorithm~\ref{alg:non_monotone_accel_stab} there are at most $\frac{T}{2^{\ell}}$ number of primal steps in which $\ov{\rr}$ receives an update of rank $O\left(n^{2\eta} 2^{2\ell}\cdot \epsilon \cdot \log(n)^{-2/3} \log(\frac{n}{\Psi_0})^{-2/3} \right)$. 
\end{corollary}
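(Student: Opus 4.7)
The corollary is a direct consequence of the $\ell_2$-stability lemma immediately preceding it together with the generic low-rank update scheme of Lemma~\ref{lem:LowRankL2}. My plan is as follows. First, I would read off from the $\ell_2$-stability lemma the value of the stability parameter $\zeta$ for the sequence $\rr^{(0,k_0)}, \rr^{(1,k_1)}, \ldots$ produced by the primal steps of Algorithm~\ref{alg:non_monotone_accel_stab}: namely, for every primal step $i$,
\[
\sum_e \Big(\log \rr_e^{(i+1,k_i)} - \log \rr_e^{(i,k_i)}\Big)^2 \;\le\; \zeta, \qquad \zeta \;=\; O\!\left(n^{2\eta}\cdot \epsilon^3 \cdot (\log n)^{-8/3}\cdot \log(n/\Psi_0)^{-2/3}\right).
\]
I would then invoke Lemma~\ref{lem:LowRankL2} with this $\zeta$ and with $\delta = \epsilon/100$ (the value fixed in Algorithm~\ref{alg:non_monotone_accel_stab}). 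The conclusion of that lemma is that \textsc{SelectVector} maintains a vector $\ov{\rr}\approx_\delta \rr$ which, for every $\ell\in [0:\log T]$, receives an update of rank at most $O((\log n/\delta)^2\cdot \zeta \cdot 2^{2\ell})$ on at most $T/2^\ell$ iterations.

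Next I would just substitute the chosen values and simplify:
\[
\Big(\tfrac{\log n}{\delta}\Big)^2 \cdot \zeta \;=\; O\!\left(\tfrac{(\log n)^2}{\epsilon^2}\right) \cdot O\!\left(n^{2\eta}\cdot \epsilon^3 \cdot (\log n)^{-8/3}\cdot \log(n/\Psi_0)^{-2/3}\right) \;=\; O\!\left(n^{2\eta}\cdot \epsilon\cdot (\log n)^{-2/3}\cdot \log(n/\Psi_0)^{-2/3}\right),
\]
which, multiplied by the $2^{2\ell}$ factor, is exactly the advertised update rank.

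The only slightly delicate point, and the place I would have to be a little careful, is that Algorithm~\ref{alg:non_monotone_accel_stab} uses the \emph{modified} \textsc{SelectVector} from Algorithm~\ref{alg:select_vector}, whose set $S$ is intersected with the condition $\textsc{LastWidth}(i,e) \le i-2^\ell$, whereas Lemma~\ref{lem:LowRankL2} is stated for the original \textsc{SelectVector} that updates every coordinate satisfying only the log-ratio threshold. I would handle this by the argument already flagged in the discussion above Algorithm~\ref{alg:select_vector}: whenever a coordinate $e$ is touched by a width-reduction step, Lines~\ref{algline:width_update_rrbar_H_neq_S} and~\ref{algline:width_update_rrbar_H_eq_S} of Algorithm~\ref{alg:non_monotone_accel_stab} immediately reset $\ov{\rr}_e \gets \rr_e$, so the $\approx_\delta$ invariant is \emph{re-established} at every such step. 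Consequently, the remaining coordinates that the modified \textsc{SelectVector} does update still satisfy the same log-ratio condition for a window $[i-2^\ell,i]$ free of width-reduction interference, and the counting argument in the proof of Lemma~\ref{lem:LowRankL2} goes through verbatim with the same bound. This step is really the only thing that needs a line of justification; the rest is pure substitution, and there is no genuine obstacle beyond tracking the logarithmic factors carefully.
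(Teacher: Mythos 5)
Your proposal is correct and matches the paper's own derivation: the corollary is stated as a direct consequence of the preceding $\ell_2$-stability lemma combined with Lemma~\ref{lem:LowRankL2}, and your substitution of $\zeta$ and $\delta=\epsilon/100$ recovers the advertised rank exactly. Your handling of the modified \textsc{SelectVector} is the same justification the paper gives in the remark following Algorithm~\ref{alg:select_vector}.
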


\subsection{Algorithm~\ref{alg:non_monotone_accel_robust}}\label{sec:stability_algo_2}

\paragraph{Low-rank update scheme}
First note that the width reduction steps of Algorithm~\ref{alg:non_monotone_accel_robust} are the same as Algorithm~\ref{alg:non_monotone_accel_stab}, so they follow the same low-rank update scheme as Lemma~\ref{lem:low_rank_update_width_reduction_steps}.

Next we prove the robust $\ell_2$ stability guarantees of the primal steps of Algorithm~\ref{alg:non_monotone_accel_robust}, and this combined with Lemma~\ref{lem:LowRankL2Robust} will give us the desired low-rank update scheme. 

\begin{lemma}[Robust $\ell_2$ stability of primal steps]
For every primal step $(i,k)$ of Algorithm~\ref{alg:non_monotone_accel_robust}, define a ``fake'' weight:
\begin{align}\label{eq:def_tilde_r}
\wt{\rr}^{(i+1,k)} = \rr^{(i+1,k)} - \ww^{(i,k)} \epsilon \alpha \cdot ( \wh{\uu}^{(i,k)} - \uu^{(i,k)}) \cdot \frac{(1 + \overrightarrow{\alpha}^{(i,k)} \wh{\uu}^{(i,k)})}{(1 + \alpha \wh{\uu}^{(i,k)})}.
\end{align}

Every primal step of Algorithm~\ref{alg:non_monotone_accel_robust} satisfies the following robust $\ell_2$ stability property if $b \geq \frac{n \alpha \log^4 n}{\epsilon^3}$: 
\begin{enumerate}
    \item 
    \[
    \sum_e \ln\left(\frac{\wt{\rr}_e^{(i+1,k)}}{\rr_e^{(i,k)}}\right)^2 \leq O(n^{2\eta}). 
    \]
    \item $\forall t \in [T]$, $\forall e$, with probability $1 - 1/n^4$,
    \[
    \left|\sum_{i = t'-t}^{t'} \ln\left(\frac{\wt{\rr}_e^{(i,k)}}{\rr_e^{(i,k)}}\right)\right| \leq O(\epsilon). 
    \]
\end{enumerate}
\end{lemma}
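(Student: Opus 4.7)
The plan is to interpret $\wt{\rr}^{(i+1,k)}$ as the weight the algorithm would produce if the sketching noise $\wh{\uu}^{(i,k)} - \uu^{(i,k)}$ were erased from the update, and then to leverage the martingale structure of that noise. Substituting the rule $\ww^{(i+1,k)}_e = \ww^{(i,k)}_e(1 + \epsilon\overrightarrow{\alpha}^{(i,k)}_e\wh{\uu}^{(i,k)}_e)$ into the definition of $\wt{\rr}^{(i+1,k)}$ and handling the two sign cases of $\overrightarrow{\alpha}^{(i,k)}_e$, the first step is to verify the algebraic identity
\[
1 + \epsilon\overrightarrow{\alpha}^{(i,k)}_e\wh{\uu}^{(i,k)}_e \;-\; \epsilon\alpha(\wh{\uu}^{(i,k)}_e - \uu^{(i,k)}_e)\cdot\frac{1 + \overrightarrow{\alpha}^{(i,k)}_e\wh{\uu}^{(i,k)}_e}{1 + \alpha\wh{\uu}^{(i,k)}_e} \;=\; 1 + \epsilon\alpha\,\uu^{(i,k)}_e + O(\epsilon^2\alpha^2),
\]
so that, after including the $\frac{\epsilon}{2n}\Phi(\ww^{(i+1,k)})$ term in $\rr^{(i+1,k)}_e$ and using $\Phi(\ww^{(i+1,k)})/\Phi(\ww^{(i,k)}) = 1 + O(\epsilon\alpha)$ from Lemma~\ref{lem:ChangePhiRobust}, we obtain $\wt{\rr}^{(i+1,k)}_e/\rr^{(i,k)}_e = 1 + O(\epsilon\alpha(|\uu^{(i,k)}_e|+1))$.

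For Part~1, I then take logs, square, and sum over $e$. Using $\sum_e \rr^{(i,k)}_e (\uu^{(i,k)}_e)^2 = \Psi(\rr^{(i,k)}) \leq e^{\epsilon+\delta}\Phi(\ww^{(i,k)})$ together with the uniform lower bound $\rr^{(i,k)}_e \geq \frac{\epsilon}{2n}\Phi(\ww^{(i,k)})$ gives $\sum_e (\uu^{(i,k)}_e)^2 \leq O(n/\epsilon)$, so
\[
\sum_e \ln^2\!\Bigl(\tfrac{\wt{\rr}^{(i+1,k)}_e}{\rr^{(i,k)}_e}\Bigr) \leq O(\epsilon^2\alpha^2)\cdot\Bigl(\tfrac{n}{\epsilon} + n\Bigr) = O(\epsilon\alpha^2 n),
\]
and substituting $\alpha = \wt\Theta(n^{-1/2+\eta}\epsilon)$ yields the claimed $O(n^{2\eta})$ bound. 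For Part~2, the same expansion produces
\[
\ln\!\Bigl(\tfrac{\wt{\rr}^{(i,k)}_e}{\rr^{(i,k)}_e}\Bigr) \;=\; -\epsilon\alpha\,\beta^{(i-1)}_e\bigl(\wh{\uu}^{(i-1,k)}_e - \uu^{(i-1,k)}_e\bigr) \;+\; O(\epsilon^2\alpha^2),
\]
with $|\beta^{(i-1)}_e| = \Theta(1)$. I rewrite $\beta^{(i-1)}_e$ as its value when $\wh{\uu}^{(i-1)}$ is replaced by $\uu^{(i-1)}$ plus a correction of order $|\wh{\uu}^{(i-1)}-\uu^{(i-1)}| = \wt O(\sqrt{n/(b\epsilon)})$ by Lemma~\ref{lem:hat_u_bounds}; the modified coefficient is then measurable with respect to $\SS^{(0)},\ldots,\SS^{(i-2)}$ and $\wt O(1)$-bounded, so Lemma~\ref{lem:sum_hat_u} applies with $C_a = O(\epsilon\alpha)$. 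A union bound over $e \in [2n]$ and $t' \in [T]$ then yields, with probability $\geq 1 - 1/n^4$,
\[
\Bigl|\sum_{i=t'-t}^{t'} \ln\!\Bigl(\tfrac{\wt{\rr}^{(i,k)}_e}{\rr^{(i,k)}_e}\Bigr)\Bigr| \;\leq\; \wt O\!\bigl(\sqrt{\alpha^2 n t\epsilon/b}\bigr) + O(t\,\epsilon^2\alpha^2),
\]
and plugging in $t \leq T = \alpha^{-1}\epsilon^{-2}\log n$ together with $b \geq n\alpha\log^4 n/\epsilon^3$ makes both terms $O(\epsilon)$.

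\paragraph{Main obstacle.} The whole argument hinges on the algebraic identity above, and in particular on the fact that the peculiar ratio $(1 + \overrightarrow{\alpha}\wh{\uu})/(1 + \alpha\wh{\uu})$ in the definition of $\wt{\rr}^{(i+1,k)}$ is exactly tuned to cancel the two-case nonlinearity of $\overrightarrow{\alpha}^{(i,k)}$. The sign-dependent form of $\overrightarrow{\alpha}^{(i,k)}$ forces a case split, and care will be needed to verify that the higher-order remainder stays uniformly $O(\epsilon^2\alpha^2)$ even when $|\wh{\uu}^{(i,k)}_e|$ is as large as the $\wt O(n^{1/2-\eta}/\epsilon^{1/3})$ bound guaranteed by Lemma~\ref{lem:positivity_weights}. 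Once that identity and its remainder estimate are nailed down, Part~1 collapses to the potential bounds used throughout Section~\ref{sec:NonMonRob}, and Part~2 is essentially a direct application of the martingale concentration already established in Lemma~\ref{lem:sum_hat_u}.
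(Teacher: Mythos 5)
Your Part~1 follows the paper's argument: the same decomposition of $\wt{\rr}^{(i+1,k)}-\rr^{(i,k)}$, the bound $|\wt{\rr}^{(i+1,k)}_e/\rr^{(i,k)}_e - 1| \leq O(\epsilon\alpha(|\uu^{(i,k)}_e|+1))$, and the potential inequalities to get $O(\epsilon\alpha^2 n)=O(n^{2\eta})$. One caveat: your intermediate identity with remainder "$O(\epsilon^2\alpha^2)$" is not literally true — the cross terms are of size $\epsilon^2\alpha^2|\wh{\uu}_e||\wh{\uu}_e-\uu_e|$, which is much larger than $\epsilon^2\alpha^2$ and must be re-absorbed into $O(\epsilon\alpha|\uu_e|)+O(\epsilon\alpha)$ using $|\wh{\uu}_e-\uu_e|\leq C_2\sqrt{n/(b\epsilon)}$ and the hypothesis on $b$; you flag this yourself and the conclusion you draw is the correct one.

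The gap is in Part~2, at the decisive step. Writing $\ln(\wt{\rr}^{(i,k)}_e/\rr^{(i,k)}_e) \approx -\epsilon\alpha\,\beta^{(i-1)}_e(\wh{\uu}^{(i-1)}_e-\uu^{(i-1)}_e)$ with $\beta^{(i-1)}_e = \frac{\ww_e^{(i,k)}}{\rr_e^{(i,k)}(1+\alpha\wh{\uu}_e^{(i-1,k)})}$, you propose to replace $\beta^{(i-1)}_e$ by its predictable surrogate and bound the correction by "order $|\wh{\uu}^{(i-1)}_e-\uu^{(i-1)}_e|$". With an $O(1)$ constant this fails: the resulting quadratic term is $\epsilon\alpha\sum_i O(1)\cdot|\wh{\uu}^{(i-1)}_e-\uu^{(i-1)}_e|^2 = \Theta(\alpha t n/b)$, which at $t=T=\alpha^{-1}\epsilon^{-2}\log n$ and $b=n\alpha\log^4 n/\epsilon^3$ equals $\Theta(\epsilon/(\alpha\log^3 n))=\Theta(n^{1/2-\eta}/\operatorname{polylog}(n))$, nowhere near $O(\epsilon)$. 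What saves the argument is that the correction actually carries an extra factor of $\alpha$: the $(1+\alpha\wh{\uu}_e)^{-1}$ factor changes by $O(\alpha|\wh{\uu}_e-\uu_e|)$, and the dependence of $\rr_e^{(i,k)}$ and $\Phi(\ww^{(i,k)})$ on $\SS^{(i-1)}$ (which you must also strip out — the surrogate has to substitute $\uu^{(i-1)}$ for $\wh{\uu}^{(i-1)}$ \emph{inside} these quantities, since they are computed after step $i-1$) contributes only $O(\epsilon\alpha|\wh{\uu}_e-\uu_e|)$ plus an aggregate-noise term of size $O(\epsilon\alpha\sqrt{n/(b\epsilon)})$. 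With the $\alpha$ in place the quadratic term becomes $O(\epsilon\alpha^2 tn/b)=O(\epsilon/\log n)$, which is fine but is also absent from your stated final bound $\wt O(\sqrt{\alpha^2 nt\epsilon/b})+O(t\epsilon^2\alpha^2)$ (it dominates the second term by a factor $n/(b\epsilon)\gg 1$). So as written the key estimate is both under-quantified and under-accounted.

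For comparison, the paper does not use a "would-be predictable surrogate" at all: it only replaces $(1+\alpha\wh{\uu}_e^{(i-1)})$ by $(1+\alpha\uu_e^{(i-1)})$, and then handles the residual $\SS^{(i-1)}$-dependence of $\rr_e^{(i,k)}$ and $\Phi(\ww^{(i,k)})$ by a recursive peeling: it expands $\frac{1}{\rr_e^{(i,k)}}-\frac{1}{\rr_e^{(i-1,k)}}$, applies the martingale bound of Lemma~\ref{lem:sum_hat_u} to the predictable part at each level, and iterates $2\log n$ times so that the leftover carries a factor $(0.1)^{2\log n}$. Your route, once the correction is correctly quantified as $O(\alpha|\wh{\uu}_e-\uu_e|)$ plus negligible global-noise terms, would avoid that recursion and be genuinely simpler; but that quantification is exactly the content you have skipped.
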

\begin{proof}

{\bf Part 1 (primal steps: $\ell_2$ norm of tilde version).} 
We first note that by the definition of $\wt{\rr}^{(i+1,k)}$ we have
\begin{align*}
&~ \wt{\rr}^{(i+1,k)} - \rr^{(i,k)} \\
= &~ \rr^{(i+1,k)} - \rr^{(i,k)} - \epsilon \alpha \cdot \ww^{(i,k)} \cdot (\wh{\uu}^{(i,k)} - \uu^{(i,k)}) \cdot \frac{(1 + \overrightarrow{\alpha}^{(i,k)} \wh{\uu}^{(i,k)})}{(1 + \alpha \wh{\uu}^{(i,k)})} \\
= &~ \ww^{(i,k)} \cdot \epsilon \overrightarrow{\alpha}^{(i,k)} \wh{\uu}^{(i,k)} + \frac{\epsilon}{n} \Big(\Phi(\ww^{(i+1,k)}) - \Phi(\ww^{(i,k)})\Big) \\
&~ - \epsilon \alpha \cdot \ww^{(i,k)} \cdot (\wh{\uu}^{(i,k)} - \uu^{(i,k)}) \cdot \frac{(1 + \overrightarrow{\alpha}^{(i,k)} \wh{\uu}^{(i,k)})}{(1 + \alpha \wh{\uu}^{(i,k)})} \\
= &~ \epsilon \overrightarrow{\alpha}^{(i,k)} \cdot \ww^{(i,k)} \cdot \uu^{(i,k)} + \epsilon (\overrightarrow{\alpha}^{(i,k)} - \alpha) \cdot \ww^{(i,k)} \cdot (\wh{\uu}^{(i,k)} - \uu^{(i,k)}) \\
&~ + \left(1 - \frac{(1 + \overrightarrow{\alpha}^{(i,k)} \wh{\uu}^{(i,k)})}{(1 + \alpha \wh{\uu}^{(i,k)})} \right) \cdot \epsilon \alpha \cdot \ww^{(i,k)} \cdot (\wh{\uu}^{(i,k)} - \uu^{(i,k)}) + \frac{\epsilon}{n} \Big(\Phi(\ww^{(i+1,k)}) - \Phi(\ww^{(i,k)})\Big)
\end{align*}
where we used $\rr^{(i,k)} = \ww^{(i,k)} + \frac{\epsilon}{m} \Phi(\ww^{(i,k)})$ and $\ww^{(i+1,k)} = \ww^{(i,k)} (1 + \epsilon \overrightarrow{\alpha}^{(i,k)} \wh{\uu}^{(i,k)})$ in the second step.

Next we provide upper bounds for the terms appearing in the previous equation. By Part~3 of Lemma~\ref{lem:hat_u_bounds}, we have that with probability $1-1/n^3$, for all $e$ we have $|\wh{\uu}^{(i,k)}_e - \uu^{(i,k)}_e| \leq \frac{C_2}{\sqrt{b}} \cdot \frac{\sqrt{n}}{\sqrt{\epsilon}}$. Then using Lemma~\ref{lem:positivity_weights} and the definition of $\overrightarrow{\alpha}^{(i,k)}$ we have $\overrightarrow{\alpha}^{(i,k)} \leq (1 + \epsilon) \alpha$ and $|\overrightarrow{\alpha}^{(i,k)} \wh{\uu}^{(i,k)}| \leq 0.1$. By Lemma~\ref{lem:ChangePhiRobust} we have with probability at least $1-1/n^3$, $\Phi(\ww^{(i+1,k)}) \leq \Phi(\ww^{(i,k)}) \cdot (1 + 10 \epsilon \alpha)$. By the definition that $\overrightarrow{\alpha}^{(i,k)}_e = \begin{cases}
\alpha \cdot (1 + \epsilon\alpha \wh{\uu}^{(i,k)}_e) & \text{ if } \wh{\uu}^{(i,k)}_e \geq 0\\
\alpha / (1 - \epsilon\alpha \wh{\uu}^{(i,k)}_e) & \text{ else }
\end{cases}$, we have 
\begin{align*}
|\overrightarrow{\alpha}^{(i,k)}_e - \alpha| = &~ \begin{cases}
\epsilon \alpha^2 \wh{\uu}^{(i,k)}_e & \text{ if } \wh{\uu}^{(i,k)}_e \geq 0\\
\epsilon \alpha^2 |\wh{\uu}^{(i,k)}_e| \cdot
(1 - \epsilon \alpha \wh{\uu}^{(i,k)}_e)^{-1} & \text{ else }
\end{cases} \\
\leq &~ (1 + \epsilon) \epsilon \alpha^2 |\wh{\uu}^{(i,k)}_e|.
\end{align*}

Plugging these upper bounds into the previous equation, we have that with probability $1 - 1/n^3$,
\begin{align*}
&~ |\wt{\rr}^{(i+1,k)} - \rr^{(i,k)}| \\
\leq &~ (1 + \epsilon) \epsilon \alpha \cdot \ww^{(i,k)} \cdot |\uu^{(i,k)}| + (1 + \epsilon) \epsilon^2 \alpha^2 \cdot \ww^{(i,k)} \cdot |\wh{\uu}^{(i,k)}| \cdot |\wh{\uu}^{(i,k)} - \uu^{(i,k)}| \\ 
&~ + 2 \epsilon \alpha^2 |\wh{\uu}^{(i,k)}|^2 \cdot \epsilon \alpha \cdot \ww^{(i,k)} \cdot |\wh{\uu}^{(i,k)} - \uu^{(i,k)}|  + 10 \epsilon \alpha \cdot \frac{\epsilon}{n} \cdot \Phi(\ww^{(i,k)}) \\
\leq &~ (1 + \epsilon) \epsilon \alpha \cdot \ww^{(i,k)} \cdot |\uu^{(i,k)}| + 2 \epsilon^2 \alpha^2 \cdot \ww^{(i,k)} \cdot |\wh{\uu}^{(i,k)}| \cdot |\wh{\uu}^{(i,k)} - \uu^{(i,k)}| + 10 \epsilon \alpha \cdot \frac{\epsilon}{n} \cdot \Phi(\ww^{(i,k)}) \\ 
% \leq &~ (1 + \epsilon) \epsilon \alpha \cdot \ww^{(i,k)} \cdot |\uu^{(i,k)}| + 2 \epsilon^2 \alpha^2 \cdot \ww^{(i,k)} \cdot (|\uu^{(i,k)}| + |\wh{\uu}^{(i,k)} - \uu^{(i,k)}|) \cdot |\wh{\uu}^{(i,k)} - \uu^{(i,k)}| + 10 \epsilon \alpha \cdot \frac{\epsilon}{n} \cdot \Phi(\ww^{(i,k)}) \\
\leq &~ (1 + \epsilon) \epsilon \alpha (1 + \frac{2 C_2 \sqrt{n \epsilon} \alpha}{\sqrt{b}}) \cdot \ww^{(i,k)} \cdot |\uu^{(i,k)}| + (1 + \epsilon) \epsilon \alpha^2 \cdot \ww^{(i,k)} \cdot \frac{C_2^2 n}{b} + 10 \epsilon \alpha \cdot \frac{\epsilon}{n} \cdot \Phi(\ww^{(i,k)}).
\end{align*}

Since $\rr^{(i,k)} = \ww^{(i,k)} + \frac{\epsilon}{n} \Phi(\ww^{(i,k)})$,
\begin{align*}
|\frac{\wt{\rr}^{(i+1,k)} - \rr^{(i,k)}}{\rr^{(i,k)}}| \leq &~ (1 + \epsilon) \epsilon \alpha (1 + \frac{2 C_2 \sqrt{n \epsilon} \alpha}{\sqrt{b}}) \cdot |\uu^{(i,k)}| + (1 + \epsilon) \epsilon \alpha^2 \cdot \frac{C_2^2 n}{b} + 10 \epsilon \alpha \\
\leq &~ 2 \epsilon \alpha \cdot |\uu^{(i,k)}| + O(\epsilon \alpha),
\end{align*}
where the second step follows from $b \geq \frac{n\alpha\log n}{\epsilon}$.

Now taking $\ln$ and using that $|\ln(1+x)| \leq 2 |x|$ for all $|x| \geq -0.5$, we have
\begin{align*}
\left|\ln\left(\frac{\wt{\rr}^{(i+1,k)}_e}{\rr^{(i,k)}_e}\right)\right| \leq &~ 2 \left|\frac{\wt{\rr}_e^{(i+1,k)} - \rr_e^{(i,k)}}{\rr_e^{(i,k)}}\right| \leq 4 \epsilon \alpha \cdot |\uu_e^{(i,k)}| + O(\epsilon \alpha).
\end{align*}
Squaring and summing over all $e$, we have
\begin{align*}
\sum_e \ln\left(\frac{\wt{\rr}^{(i+1,k)}_e}{\rr^{(i,k)}_e}\right)^2 \leq &~ 32 \epsilon^2 \alpha^2 \sum_e (\uu^{(i,k)}_e)^2 + O(\epsilon^2 \alpha^2 n) \\
\leq &~ 32 \epsilon \alpha^2 n \sum_e \frac{\rr_e^{(i,k)}(\uu^{(i,k)}_e)^2}{\Psi(\rr^{(i,k)})} + O(\epsilon^2 \alpha^2 n) \\
\leq &~ O(\epsilon \alpha^2 n) = O(n^{2 \eta} ),
\end{align*}
where the second step follows from $\rr^{(i,k)} = \ww^{(i,k)} + \frac{\epsilon}{n} \Phi(\ww^{(i,k)}) \geq \frac{\epsilon}{n} \Phi(\ww^{(i,k)})$, and the last step follows from $\alpha = \wt{\Theta}(n^{-1/2 + \eta} \epsilon)$.

{\bf Part 2 (primal steps: error of tilde version over all iterations).}
Consider any fixed coordinate $e$, and consider the $t$ iterations between $t'-t$ to $t'$. We have
\begin{align}\label{eq:robust_stability}
&~ \sum_{i = t'-t}^{t'} \ln\left(\frac{\wt{\rr}_e^{(i,k)}}{\rr_e^{(i,k)}}\right) \notag \\
= &~ \sum_{i = t'-t}^{t'} \ln\left( 1 - \frac{\ww_e^{(i-1,k)} \cdot \epsilon \alpha \cdot ( \wh{\uu}_e^{(i-1,k)} - \uu_e^{(i-1,k)})}{\rr_e^{(i,k)}} \cdot \frac{(1 + \overrightarrow{\alpha}_e^{(i-1,k)} \wh{\uu}_e^{(i-1,k)})}{(1 + \alpha \wh{\uu}_e^{(i-1,k)})} \right) \notag \\
\leq &~ \epsilon \alpha \cdot \sum_{i = t'-t}^{t'} \frac{\ww_e^{(i-1,k)} \cdot ( \uu_e^{(i-1,k)} - \wh{\uu}_e^{(i-1,k)})}{\rr_e^{(i,k)}} \cdot \frac{(1 + \overrightarrow{\alpha}_e^{(i-1,k)} \wh{\uu}_e^{(i-1,k)})}{(1 + \alpha \wh{\uu}_e^{(i-1,k)})} \notag \\
= &~ \epsilon \alpha \cdot \sum_{i = t'-t}^{t'} \frac{(\rr_e^{(i,k)} - \frac{\epsilon}{n} \Phi(\ww^{(i,k)})) \cdot ( \uu_e^{(i-1,k)} - \wh{\uu}_e^{(i-1,k)})}{\rr_e^{(i,k)} (1 + \alpha \wh{\uu}_e^{(i-1,k)})} \notag \\
\leq &~ \epsilon \alpha \cdot \sum_{i = t'-t}^{t'} \frac{(\rr_e^{(i,k)} - \frac{\epsilon}{n} \Phi(\ww^{(i,k)})) \cdot ( \uu_e^{(i-1,k)} - \wh{\uu}_e^{(i-1,k)})}{\rr_e^{(i,k)} (1 + \alpha \uu_e^{(i-1,k)})}  \notag \\
&~ + 2 \epsilon \alpha^2 \cdot \sum_{i = t'-t}^{t'} \frac{| \rr_e^{(i,k)} - \frac{\epsilon}{n} \Phi(\ww^{(i,k)})|}{\rr_e^{(i,k)}} \cdot |  \uu_e^{(i-1,k)} - \wh{\uu}_e^{(i-1,k)}|^2 \notag \\
\leq &~ \epsilon \alpha \cdot \left|\sum_{i = t'-t}^{t'} \frac{\uu_e^{(i-1,k)} - \wh{\uu}_e^{(i-1,k)}}{1 + \alpha \uu_e^{(i-1,k)}}\right| + \epsilon \alpha \cdot \left|\sum_{i = t'-t}^{t'} \frac{\frac{\epsilon}{n} \Phi(\ww^{(i,k)}) \cdot ( \uu_e^{(i-1,k)} - \wh{\uu}_e^{(i-1,k)})}{\rr_e^{(i,k)} (1 + \alpha \uu_e^{(i-1,k)})}\right| \notag \\
&~ + 2 \epsilon \alpha^2 \sum_{i = t'-t}^{t'} | \uu_e^{(i-1,k)} - \wh{\uu}_e^{(i-1,k)}|^2,
\end{align}
where the first step follows from the definition \eqref{eq:def_tilde_r} that $\wt{\rr}^{(i,k)} = \rr^{(i,k)} - \ww^{(i-1,k)} \epsilon \alpha \cdot ( \wh{\uu}^{(i-1,k)} - \uu^{(i-1,k)}) \cdot \frac{(1 + \overrightarrow{\alpha}^{(i-1,k)} \wh{\uu}^{(i-1,k)})}{(1 + \alpha \wh{\uu}^{(i-1,k)})}$,
the second step follows from $\ln(1+x) \leq x$ for all $x$, the third step follows from $\ww_e^{(i,k)} = \ww_e^{(i-1,k)} \cdot (1 + \overrightarrow{\alpha}^{(i-1,k)} \wh{\uu}^{(i-1,k)})$ and $\rr_e^{(i,k)} = \ww_e^{(i,k)} + \frac{\epsilon}{n} \Phi(\ww^{(i,k)})$ and so $\ww_e^{(i-1,k)} = \frac{\rr_e^{(i,k)} - \frac{\epsilon}{n} \Phi(\ww^{(i,k)})}{1 + \overrightarrow{\alpha}^{(i-1,k)} \wh{\uu}^{(i-1,k)}}$, the fourth step follows from $|\alpha \wh{\uu}_e^{(i-1,k)}| \leq 0.1$ with probability $1-1/n^4$ and $|\alpha \uu_e^{(i-1,k)}| \leq 0.1$, and hence 
\begin{align*}
&~ \left|\frac{1}{1 + \alpha \uu_e^{(i-1,k)}} - \frac{1}{{1 + \alpha \uu_e^{(i-1,k)} + \alpha (\wh{\uu}_e^{(i-1,k)} - \uu_e^{(i-1,k)})}}\right| \\
= &~ \left|\frac{\alpha (\wh{\uu}_e^{(i-1,k)} - \uu_e^{(i-1,k)})}{\left(1 + \alpha \uu_e^{(i-1,k)}\right) \cdot \left(1 + \alpha \uu_e^{(i-1,k)} + \alpha (\wh{\uu}_e^{(i-1,k)} - \uu_e^{(i-1,k)}) \right)}\right| 
\leq  2 \alpha |\wh{\uu}_e^{(i-1,k)} - \uu_e^{(i-1,k)}|.
\end{align*}
Next we bound the three terms in Eq.~\eqref{eq:robust_stability} one by one.

For the first term, since $|\alpha \uu_e^{(i,k)}| \leq 0.1$ and each $\uu_e^{(i,k)}$ only depends on the randomness of $\SS^{(0)}, \cdots, \SS^{(i-1)}$, using Lemma~\ref{lem:sum_hat_u} we have
\begin{align}\label{eq:robust_stability_1}
\epsilon \alpha \cdot \left|\sum_{i = t'-t}^{t'} \frac{\uu_e^{(i-1,k)} - \wh{\uu}_e^{(i-1,k)}}{1 + \alpha \uu_e^{(i-1,k)}}\right| \leq \epsilon \alpha \cdot O( \frac{\sqrt{n t}}{\sqrt{b \epsilon}}) = O( \frac{\sqrt{n t \epsilon} \alpha}{\sqrt{b}}). 
\end{align}

For the third term, since with probability $1-1/n^3$ we have $|\wh{\uu}^{(i,k)}_e - \uu^{(i,k)}_e| \leq \frac{C_2}{\sqrt{b}} \cdot \frac{\sqrt{n}}{\sqrt{\epsilon}}$ for all $i$, we have
\begin{align}\label{eq:robust_stability_3}
2 \epsilon \alpha^2 \sum_{i = t'-t}^{t'} | \uu_e^{(i-1,k)} - \wh{\uu}_e^{(i-1,k)}|^2 \leq &~ \epsilon \alpha^2 \cdot O\left(\frac{n t}{b \epsilon}\right) = O\left(\frac{n t \alpha^2}{b}\right).
\end{align}

It's more complicated to upper bound the second term, and we start by upper bounding the following term:
\begin{align}\label{eq:robust_stability_2_1}
&~ \left|\sum_{i = t'-t}^{t'} \frac{\frac{\epsilon}{n} \Phi(\ww^{(i-1,k)}) \cdot ( \uu_e^{(i-1,k)} - \wh{\uu}_e^{(i-1,k)})}{\rr_e^{(i,k)} (1 + \alpha \uu_e^{(i-1,k)})}\right| \notag \\
\leq &~ \left|\sum_{i = t'-t}^{t'} \frac{\frac{\epsilon}{n} \Phi(\ww^{(i-1,k)}) \cdot ( \uu_e^{(i-1,k)} - \wh{\uu}_e^{(i-1,k)})}{\rr_e^{(i-1,k)} (1 + \alpha \uu_e^{(i-1,k)})}\right|\notag \\
&~ + \left|\sum_{i = t'-t}^{t'}\Big(\frac{1}{\rr_e^{(i,k)}} - \frac{1}{\rr_e^{(i-1,k)}}\Big) \frac{\frac{\epsilon}{n} \Phi(\ww^{(i-1,k)}) \cdot ( \uu_e^{(i-1,k)} - \wh{\uu}_e^{(i-1,k)})}{ (1 + \alpha \uu_e^{(i-1,k)})}\right| \notag \\
\leq &~ O(\frac{\sqrt{nt}}{\sqrt{b \epsilon}}) + \left|\sum_{i = t'-t}^{t'}  \frac{\overrightarrow{\alpha}_e^{(i-1,k)} \wh{\uu}_e^{(i-1,k)} \ww_e^{(i-1,k)}}{\rr_e^{(i-1,k)}} \cdot \frac{\frac{\epsilon}{n} \Phi(\ww^{(i-1,k)}) \cdot (\uu_e^{(i-1,k)} - \wh{\uu}_e^{(i-1,k)})}{\rr_e^{(i,k)} (1 + \alpha \uu_e^{(i-1,k)})} \right| \notag \\ 
&~ + \left|\sum_{i = t'-t}^{t'}  \frac{\frac{\epsilon}{n} (\Phi(\ww^{(i,k)}) - \Phi(\ww^{(i-1,k)})) \cdot \frac{\epsilon}{n} \Phi(\ww^{(i-1,k)})}{\rr_e^{(i,k)} \rr_e^{(i-1,k)}} \cdot \frac{(\uu_e^{(i-1,k)} - \wh{\uu}_e^{(i-1,k)})}{(1 + \alpha \uu_e^{(i-1,k)})} \right| \notag \\
\leq &~ O(\frac{\sqrt{nt}}{\sqrt{b \epsilon}} + \frac{\alpha t \sqrt{\epsilon n}}{\sqrt{b}}) + \left|\sum_{i = t'-t}^{t'}  \frac{\overrightarrow{\alpha}_e^{(i-1,k)} \wh{\uu}_e^{(i-1,k)} \ww_e^{(i-1,k)}}{\rr_e^{(i-1,k)}} \cdot \frac{\frac{\epsilon}{n} \Phi(\ww^{(i-1,k)}) \cdot (\uu_e^{(i-1,k)} - \wh{\uu}_e^{(i-1,k)})}{\rr_e^{(i,k)} (1 + \alpha \uu_e^{(i-1,k)})} \right|,
\end{align}
where in the second step we upper bound the first term using Lemma~\ref{lem:sum_hat_u} since $\Phi(\ww^{(i-1,k)})$, $\rr_e^{(i-1,k)}$, $1+\alpha\uu_e^{(i-1,k)}$ are all independent of $\wh{\uu}_e^{(i-1,k)}$, and we upper bound the second term by $\rr_e^{(i,k)} = \rr_e^{(i-1,k)} + \overrightarrow{\alpha}_e^{(i-1,k)} \wh{\uu}_e^{(i-1,k)} \ww_e^{(i-1,k)} + \frac{\epsilon}{n} (\Phi(\ww^{(i,k)}) - \Phi(\ww^{(i-1,k)}))$ and hence
\begin{align*}
\frac{1}{\rr_e^{(i,k)}} - \frac{1}{\rr_e^{(i-1,k)}}
= &~ \frac{\overrightarrow{\alpha}_e^{(i-1,k)} \wh{\uu}_e^{(i-1,k)} \ww_e^{(i-1,k)} + \frac{\epsilon}{n} (\Phi(\ww^{(i,k)}) - \Phi(\ww^{(i-1,k)}))}{\rr_e^{(i,k)} \rr_e^{(i-1,k)}},
\end{align*}
and the third step follows from $\Phi(\ww^{(i,k)}) \leq \Phi(\ww^{(i-1,k)}) \cdot (1 + O(\epsilon \alpha))$, and $\rr_e^{(i,k)} \geq 0.9 \rr_e^{(i-1,k)} \geq 0.9 \frac{\epsilon}{n} \Phi(\ww^{(i-1,k)})$, and with probability $1-1/n^3$ we have $|\wh{\uu}^{(i,k)}_e - \uu^{(i,k)}_e| \leq \frac{C_2}{\sqrt{b}} \cdot \frac{\sqrt{n}}{\sqrt{\epsilon}}$ for all $i$.

Note that the term $\frac{\overrightarrow{\alpha}_e^{(i-1,k)} \wh{\uu}_e^{(i-1,k)} \ww_e^{(i-1,k)}}{\rr_e^{(i-1,k)}}$ is independent of $\wh{\uu}_e^{(i-1,k)}$ and it's at most $0.1$ with probability $1-1/n^3$. So we can continue the same upper bound of Eq.~\eqref{eq:robust_stability_2_1} for $2 \log n$ times and we have
\begin{align}\label{eq:robust_stability_2_2}
&~ \left|\sum_{i = t'-t}^{t'} \frac{\frac{\epsilon}{n} \Phi(\ww^{(i-1,k)}) \cdot ( \uu_e^{(i-1,k)} - \wh{\uu}_e^{(i-1,k)})}{\rr_e^{(i,k)} (1 + \alpha \uu_e^{(i-1,k)})}\right| \notag \\
\leq &~ O\Big((\frac{\sqrt{nt}}{\sqrt{b \epsilon}} + \frac{\alpha t \sqrt{\epsilon n}}{\sqrt{b}})\cdot \log n \Big) \notag \\ 
&~ + \left|\sum_{i = t'-t}^{t'}  \Big(\frac{\overrightarrow{\alpha}_e^{(i-1,k)} \wh{\uu}_e^{(i-1,k)} \ww_e^{(i-1,k)}}{\rr_e^{(i-1,k)}}\Big)^{2 \log n} \cdot \frac{\frac{\epsilon}{n} \Phi(\ww^{(i-1,k)}) \cdot (\uu_e^{(i-1,k)} - \wh{\uu}_e^{(i-1,k)})}{\rr_e^{(i,k)} (1 + \alpha \uu_e^{(i-1,k)})} \right| \notag \\
\leq &~ O\Big((\frac{\sqrt{nt}}{\sqrt{b \epsilon}} + \frac{\alpha t \sqrt{\epsilon n}}{\sqrt{b}})\cdot \log n \Big).
\end{align}
We are finally ready to upper bound the second term of Eq.~\eqref{eq:robust_stability}. We have
\begin{align}\label{eq:robust_stability_2}
&~ \left|\sum_{i = t'-t}^{t'} \frac{\frac{\epsilon}{n} \Phi(\ww^{(i,k)}) \cdot ( \uu_e^{(i-1,k)} - \wh{\uu}_e^{(i-1,k)})}{\rr_e^{(i,k)} (1 + \alpha \uu_e^{(i-1,k)})}\right| \notag \\
\leq &~ \left|\sum_{i = t'-t}^{t'} \frac{\frac{\epsilon}{n} \Phi(\ww^{(i-1,k)}) }{\rr_e^{(i,k)}} \cdot \frac{(\uu_e^{(i-1,k)} - \wh{\uu}_e^{(i-1,k)})}{(1 + \alpha \uu_e^{(i-1,k)})} \right|  \notag \\
&~ + \left|\sum_{i = t'-t}^{t'} \Big(\frac{\frac{\epsilon}{n} \Phi(\ww^{(i,k)})}{\rr_e^{(i,k)}} - \frac{\frac{\epsilon}{n} \Phi(\ww^{(i-1,k)})}{\rr_e^{(i,k)}}\Big) \cdot \frac{(\uu_e^{(i-1,k)} - \wh{\uu}_e^{(i-1,k)})}{(1 + \alpha \uu_e^{(i-1,k)})} \right| \notag \\
\leq &~ O\Big((\frac{\sqrt{nt}}{\sqrt{b \epsilon}} + \frac{\alpha t \sqrt{\epsilon n}}{\sqrt{b}})\cdot \log n \Big), 
\end{align}
where in the second step we upper bound the first term by Eq.~\eqref{eq:robust_stability_2_2}, and we upper bound the second term by Lemma~\ref{lem:ChangePhiRobust} that $\Phi(\ww^{(i,k)}) \leq \Phi(\ww^{(i-1,k)}) \cdot (1 + O(\epsilon \alpha))$ and $\rr_e^{(i,k)} \geq 0.9 \rr_e^{(i-1,k)} \geq 0.9 \frac{\epsilon}{n} \Phi(\ww^{(i-1,k)})$, and with probability $1-1/n^3$ we have $|\wh{\uu}^{(i,k)}_e - \uu^{(i,k)}_e| \leq \frac{C_2}{\sqrt{b}} \cdot \frac{\sqrt{n}}{\sqrt{\epsilon}}$ for all $i$.

Plugging the three upper bounds Eq.~\eqref{eq:robust_stability_1}, \eqref{eq:robust_stability_2}, and \eqref{eq:robust_stability_3} into Eq.~\eqref{eq:robust_stability}, we have
\begin{align*}
\sum_{i = t'-t}^{t'} \ln(\frac{\wt{\rr}_e^{(i,k)}}{\rr_e^{(i,k)}}) \leq &~ O( \frac{\sqrt{n t \epsilon} \alpha}{\sqrt{b}}) + \epsilon \alpha \cdot O\Big((\frac{\sqrt{nt}}{\sqrt{b \epsilon}} + \frac{\alpha t \sqrt{\epsilon n}}{\sqrt{b}})\cdot \log n \Big) + O(\frac{n t \alpha^2}{b}) 
\leq  O(\epsilon),
\end{align*}
where the last step follows from $b = \frac{n \alpha \log^4 n}{\epsilon^3}$, and $t \leq T = \alpha^{-1}\epsilon^{-2}\ln n$.

We can similarly prove an upper bound of $-\sum_{i = t'-t}^{t'} \ln(\frac{\wt{\rr}_e^{(i,k)}}{\rr_e^{(i,k)}})$. Similar to Eq.~\eqref{eq:robust_stability}, we have
\begin{align*}
-\sum_{i = t'-t}^{t'} \ln(\frac{\wt{\rr}_e^{(i,k)}}{\rr_e^{(i,k)}}) 
= &~ -\sum_{i = t'-t}^{t'} \ln\left( 1 - \frac{\ww_e^{(i-1,k)} \cdot \epsilon \alpha \cdot ( \wh{\uu}_e^{(i-1,k)} - \uu_e^{(i-1,k)})}{\rr_e^{(i,k)}} \cdot \frac{(1 + \overrightarrow{\alpha}_e^{(i-1,k)} \wh{\uu}_e^{(i-1,k)})}{(1 + \alpha \wh{\uu}_e^{(i-1,k)})} \right)  \\
\leq &~ \epsilon \alpha \cdot \left|\sum_{i = t'-t}^{t'} \frac{\ww_e^{(i-1,k)} \cdot ( \uu_e^{(i-1,k)} - \wh{\uu}_e^{(i-1,k)})}{\rr_e^{(i,k)}} \cdot \frac{(1 + \overrightarrow{\alpha}_e^{(i-1,k)} \wh{\uu}_e^{(i-1,k)})}{(1 + \alpha \wh{\uu}_e^{(i-1,k)})} \right| \\
&~ + \epsilon^2 \alpha^2 \cdot \sum_{i = t'-t}^{t'} \left( \frac{\ww_e^{(i-1,k)} \cdot ( \uu_e^{(i-1,k)} - \wh{\uu}_e^{(i-1,k)})}{\rr_e^{(i,k)}} \cdot \frac{(1 + \overrightarrow{\alpha}_e^{(i-1,k)} \wh{\uu}_e^{(i-1,k)})}{(1 + \alpha \wh{\uu}_e^{(i-1,k)})} \right)^2 \\
\leq &~ O(\epsilon) + O(\frac{\epsilon nt \alpha^2}{b}) \leq O(\epsilon),
\end{align*}
where the second step follows from $\ln(1 + x) \geq x - x^2$ for all $|x| \leq 0.5$, and in the third step we bound the first term in the same way as Eq.~\eqref{eq:robust_stability} and \eqref{eq:robust_stability_1}, \eqref{eq:robust_stability_2}, \eqref{eq:robust_stability_3}, and we bound the second term by the property that with probability $1-1/n^3$ we have $|\wh{\uu}^{(i,k)}_e - \uu^{(i,k)}_e| \leq \frac{C_2}{\sqrt{b}} \cdot \frac{\sqrt{n}}{\sqrt{\epsilon}}$ for all $i$, and $\overrightarrow{\alpha}_e^{(i-1,k)} \wh{\uu}_e^{(i-1,k)}) \leq 0.1$, $\alpha \wh{\uu}_e^{(i-1,k)}) \leq 0.1$, and $\rr_e^{(i,k)} \geq 0.9 \rr_e^{(i-1,k)} \geq 0.9 \ww_e^{(i-1,k)}$.
\end{proof}

Combining the above lemma and Lemma~\ref{lem:LowRankL2Robust}, we directly have the following corollary.
\begin{corollary}[Low-rank update scheme of primal steps]
For every $\ell = 0,1, \cdots, \log T$, in Algorithm~\ref{alg:non_monotone_accel_robust} there are at most $\frac{T}{2^{\ell}}$ number of primal steps in which $\ov{\rr}$ receives an update of rank $O((\frac{\log n}{\delta})^2 \cdot n^{2 \eta} \cdot 2^{2\ell})$.
\end{corollary}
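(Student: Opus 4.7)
The plan is to invoke Lemma~\ref{lem:LowRankL2Robust} as a black box, after verifying that the two hypotheses of that lemma are met by the sequences $\{\rr^{(i,k_i)}\}$ and $\{\wt{\rr}^{(i,k_i)}\}$ produced by Algorithm~\ref{alg:non_monotone_accel_robust}, with the auxiliary sequence $\wt{\rr}^{(i+1,k)}$ defined by Eq.~\eqref{eq:def_tilde_r} in the preceding robust stability lemma. Since Algorithm~\ref{alg:non_monotone_accel_robust} uses \textsc{SelectVector} exactly as specified in Algorithm~\ref{alg:select_vector}, the update scheme guaranteed by Lemma~\ref{lem:LowRankL2Robust} describes precisely the updates to $\ov{\rr}$.

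First, I would check Hypothesis~(1) of Lemma~\ref{lem:LowRankL2Robust}. This is exactly Part~1 of the preceding Robust $\ell_2$ Stability lemma, which states $\sum_e \ln(\wt{\rr}_e^{(i+1,k)}/\rr_e^{(i,k)})^2 \leq O(n^{2\eta})$ for every primal step. Thus we may take $\zeta = O(n^{2\eta})$.

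Next I would check Hypothesis~(2): for every $t \leq t'$ and every $e$, with probability $1-1/n^4$,
\[
\Big|\sum_{i=t'-t}^{t'} \ln(\wt{\rr}_e^{(i,k)}/\rr_e^{(i,k)})\Big| \leq \frac{\delta}{10 \log n}.
\]
Part~2 of the preceding lemma gives this sum bounded by $O(\epsilon)$; since $\delta$ in Algorithm~\ref{alg:non_monotone_accel_robust} is chosen on the order of $\epsilon$ (analogous to $\delta=\epsilon/100$ in Algorithm~\ref{alg:non_monotone_accel_stab}), making this bound $\leq \delta/(10\log n)$ requires absorbing an extra $\log n$ factor. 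This is where the sketch size $b$ enters: the $O(\epsilon)$ bound comes from terms of the form $\tilde{O}(\sqrt{n T\epsilon}\alpha/\sqrt{b})$ and $\tilde{O}(nT\alpha^2/b)$, so enlarging $b$ by a $\poly\log n$ factor — exactly what the chosen value $b = \tilde{\Theta}(n^{1/2+\eta}\epsilon^{-2})$ already allows — tightens the bound to $\delta/(10\log n)$ as required. A standard union bound over the $O((T+K)^2)$ pairs $(t,t')$ preserves the $1-1/n^3$ success probability promised in Theorem~\ref{thm:RobustAccMWU}.

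With both hypotheses in hand, Lemma~\ref{lem:LowRankL2Robust} delivers exactly the claim: the approximate vector $\ov{\rr}$ maintained by \textsc{SelectVector} is a $\delta$-coordinate-wise approximation of $\rr$, and in every block of $2^\ell$ consecutive primal iterations it undergoes an update of size $O((\log n/\delta)^2 \cdot \zeta \cdot 2^{2\ell}) = O((\log n/\delta)^2 \cdot n^{2\eta} \cdot 2^{2\ell})$. Summing across the $T/2^\ell$ such blocks yields the corollary's statement. The main subtlety (and the only real obstacle) is the bookkeeping of constants and $\log n$ factors between the $O(\epsilon)$ bound of the stability lemma and the $\delta/(10\log n)$ bound required by Lemma~\ref{lem:LowRankL2Robust}; everything else is a mechanical invocation since the martingale-cancellation work already lives in the preceding lemma.
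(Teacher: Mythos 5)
Your proposal is correct and follows exactly the paper's route: the paper derives this corollary in one line by combining the robust $\ell_2$ stability lemma for the primal steps with Lemma~\ref{lem:LowRankL2Robust}. You additionally flag the genuine bookkeeping point — reconciling the $O(\epsilon)$ bound of Part~2 of the stability lemma with the $\delta/(10\log n)$ hypothesis of Lemma~\ref{lem:LowRankL2Robust} via the choice of $b$ — which the paper silently elides, so your write-up is, if anything, more careful.
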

\section{Missing Proofs}\label{sec:missing_proofs}
\subsection*{Proof of Lemma~\ref{lem:PsiPhi}} 
\begin{proof}
We first prove that $\Psi(\rrbar) \approx_{\delta} \Psi(\rr)$. Let $\Delta^* := \arg \min_{\Delta\in \rea^d }\sum_e \rrbar_e (\CC\Delta-\dd)^2_e$. We have
\begin{align*}
\Psi(\rr) = \min_{\Delta\in \rea^d } \sum_e \rr_e (\CC\Delta-\dd)^2_e \leq \sum_e \rr_e (\CC\Delta^*-\dd)^2_e \leq e^{\delta} \cdot \sum_e \rrbar_e (\CC\Delta^*-\dd)^2_e = e^{\delta} \cdot  \Psi(\rrbar).
\end{align*}
Similarly, we can also show $\Psi(\rrbar) \leq e^{\delta} \cdot \Psi(\rr)$, and so $\Psi(\rrbar) \approx_{\delta} \Psi(\rr)$.

Next we prove $\Psi(\rrbar) \leq e^{\epsilon+\delta} \cdot \Phi(\ww)$. The following inequalities follow from Hölder's inequality and the optimum $\xx^{\star}$ of the problem satisfies $\|\CC\xx^{\star}-\dd\|_{\infty}\leq 1$:
\begin{align*}
\Psi(\rrbar) = &~ \min_{\Delta\in \rea^d} \sum_{e=1}^n\rrbar_e(\CC\Delta-\dd)_e^2 \\
\leq &~ \sum_e\rrbar_e(\CC\xx^{\star}-\dd)_e^2
 \leq e^{\delta} \cdot \|\CC\xx^{\star}-\dd\|^2_{\infty}\left(\|\ww\|_1 + \epsilon \|\ww\|_1\right)\leq e^{\epsilon + \delta} \cdot \Phi(\ww). \qedhere
\end{align*}
\end{proof}

\subsection*{Proof of Lemma~\ref{lem:lower_bound_Psi_0}} 
\begin{proof}
Note that $\rr_e^{(0,0)} = 1+\epsilon$ for all $e$. Therefore,
\[
\Psi(\rr^{(0,0)}) = (1+\epsilon)\min_{\Delta}\|\CC\Delta-\dd\|_2^2.
\]
Using KKT conditions and the solution of the above problem that $\Delta^* = (\CC^{\top} \CC)^{-1} \CC^{\top} \dd$, we get the required lower bound.
\end{proof}

\subsection*{Proof of Lemma~\ref{lem:PsiChange}} 
This is similar to the proof of Lemma 5.11 of \cite{adil2019iterative}.

\begin{proof}
    We have defined,
    \[
    \Psi(\rr) = \min_{\Delta}\sum_e \rr_e (\CC\Delta-\dd)_e^2.
    \]
    This is the same as,
    \[
    \Psi(\rr) = \min_{\rho,\Delta}\sum_e \rr_e \rho_e^2, \text{ s.t. } \rho = \CC\Delta-\dd.
    \] 
    From Lagrange duality and using strong duality,
    \begin{align*}
    \Psi(\rr) & = \min_{\rho,\Delta} \max_{\yy}\sum_e \rr_e \rho_e^2 + 2\yy^{\top}(\rho - \CC\Delta +\dd) \\
    & = \max_{\yy}\min_{\rho,\Delta}\sum_e \rr_e \rho_e^2 + 2\yy^{\top}(\rho - \CC\Delta +\dd).
    \end{align*}
    Using optimality conditions for $\Delta$ and $\rho$, we get that the optimizers $\rho^{\star}$ and $\Delta^{\star}$ satisfy,
    \[
    \RR \rho^{\star} = - \yy, \quad \CC^{\top}\yy = 0.
    \]
    Using these values back in our program, we get,
    \begin{equation}\label{eq:Prog1}
    \Psi(\rr) = \max_{\yy: \CC^{\top}\yy = 0} 2\yy^{\top}\dd - \yy^{\top}\RR^{-1}\yy.
    \end{equation}
    We now note that, the following program has minimizer $(\zz^{\star},\theta^{\star}) = (\yy^{\star},1)$ and achieves the same optimum value as Program~\ref{eq:Prog1}.
    \begin{equation}\label{eq:Prog2}
        \Psi(\rr) = \max_{\zz,\theta: \CC^{\top}\zz = 0} 2\theta\zz^{\top}\dd - \theta^2 \zz^{\top}\RR^{-1}\zz.
    \end{equation}
    Since $\zz^{\star} = \yy^{\star}$ and $\theta^{\star} = 1$ is an optimum for~\eqref{eq:Prog2},
    \[
    \left[\frac{d}{d\theta}\left(2\theta{\yy^{\star}}^{\top}\dd - \theta^2 {\yy^{\star}}^{\top}\RR^{-1}\yy^{\star}\right)\right]_{\theta=1} = 0.
    \]
    As a result, we get that,
    \[
    \dd^{\top}\yy^{\star} = {\yy^{\star}}^{\top}\RR^{-1}\yy^{\star},
    \]
    and $\Psi(\rr) = \dd^{\top}\yy^{\star}$. Using this, we claim that the following program has objective value $2 - 1/\Psi(\rr)$:
    \[
    2 - \frac{1}{\Psi(\rr)} = \max_{\substack{\CC^{\top}\yy = 0\\ \dd^{\top}\yy = 1}} 2 \dd^{\top}\yy - \yy^{\top}\RR^{-1}\yy.
    \]
    The reason is that for any feasible $\yy$ of \eqref{eq:Prog1}, the vector $\yy' := \yy/\dd^{\top} \yy$ is a feasible solution of this new program, and in particular the optimal solution $\yy^*$ of \eqref{eq:Prog1} induces a solution $\yy^* / \dd^{\top} \yy^* = \yy^* / \Psi(\rr)$ for this new program, and it gives an objective value of $2 - \frac{1}{\Psi(\rr)}$. Now it suffices to prove that $\yy^* / \dd^{\top} \yy^*$ is the optimal solution of this new program. Suppose by contrary that there is another vector $\yy'$ that is the optimal solution of this new program, then we have $\dd^{\top} \yy' =1$ and $\yy'^{\top} \RR^{-1} \yy' \leq 1/\Psi(\rr)$. But then the vector $\yy' \cdot \Psi(\rr)$ would be a solution for \eqref{eq:Prog1} which gives a larger objective value than $\Psi(\rr)$, and this contradicts with the definition that $\Psi(\rr)$ is the optimal objective value of \eqref{eq:Prog1}.
    
    We can now write the following problem,
    \begin{equation}\label{eq:Prog3}
    \frac{1}{\Psi(\rr)} = \min_{\substack{\CC^{\top}\yy = 0\\ \dd^{\top}\yy = 1}} \yy^{\top}\RR^{-1}\yy,
    \end{equation}
    with optimum value at $\yytil = \frac{\yy^{\star}}{\Psi(\rr)}$. Let us now consider Program~\ref{eq:Prog3} at $\rr'$, i.e., $\Psi(\rr')$. We note that $\yytil=\frac{\yy^{\star}}{\Psi(\rr)}$ is a feasible solution for the new program as well. Therefore,
    \[
    \frac{1}{\Psi(\rr')} = \min_{\substack{\CC^{\top}\yy = 0\\ \dd^{\top}\yy = 1}} \yy^{\top}{\RR'}^{-1}\yy \leq \frac{1}{\Psi(\rr)^2} {\yy^{\star}}^{\top}{\RR'}^{-1}\yy^{\star} =\frac{1}{\Psi(\rr)^2} \sum_e \frac{\rr_e}{\rr_e'} \rr_e (\CC\Delta^{\star}-\dd)_e^2.
    \]
    Rearranging the above,
    \[
    \frac{1}{\Psi(\rr')} \leq \frac{1}{\Psi(\rr)}\left(1 - \frac{ \sum_e \left(\frac{\rr_e' - \rr_e}{\rr_e'}\right) \rr_e (\CC\Delta^{\star}-\dd)_e^2}{\Psi(\rr)}\right).
    \]
    And since we assumed that $|\rr_e'-\rr_e| \leq \rr'_e$, we have that $\sum_e \left(\frac{\rr_e' - \rr_e}{\rr_e'}\right) \rr_e (\CC\Delta^{\star}-\dd)_e^2 \leq \Psi(\rr)$, and hence $\left(1 - \frac{ \sum_e \left(\frac{\rr_e' - \rr_e}{\rr_e'}\right) \rr_e (\CC\Delta^{\star}-\dd)_e^2}{\Psi(\rr)}\right)^{-1} \geq \left(1 + \frac{ \sum_e \left(\frac{\rr_e' - \rr_e}{\rr_e'}\right) \rr_e (\CC\Delta^{\star}-\dd)_e^2}{\Psi(\rr)}\right)$, and so we have
    \[
    \Psi(\rr') \geq \Psi(\rr)+  \sum_e \left(\frac{\rr_e' - \rr_e}{\rr_e'}\right) \rr_e (\CC\Delta^{\star}-\dd)_e^2.
    \]
\end{proof}

\subsection*{Proof of Fact~\ref{fact:FMM_convex}}

\begin{proof}
Let $x = \alpha \cdot p + (1-\alpha) \cdot q$ for $\alpha \in (0,1)$. For notational simplicity, we assume that $n^{\beta}$, $n^p$, $n^q$, and $n^{\alpha}$ are all integers. 
Consider two rectangular matrices of dimensions $n \times n^x$ and $n^x \times n^{\beta}$. Since $\alpha  p \leq x$, we can tile the $n \times n^x$ rectangular matrix with matrices of dimensions $n^{\alpha} \times n^{\alpha  p}$, and tile the $n^x \times n^{\beta}$ rectangular matrix with matrices of dimensions $n^{\alpha p} \times n^{\alpha  \beta}$. Then, the  product of the two tiled matrices can be obtained by viewing it as a multiplication of a matrix of dimensions $n/n^{\alpha} \times n^{x}/n^{\alpha  p}$ with a matrix of dimensions $n^{x}/n^{\alpha  p} \times n^{\beta} / n^{\alpha \beta}$, where each ``element'' of the first matrix is itself a matrix of dimensions $n^{\alpha} \times n^{\alpha  p}$, and each ``element'' of the second matrix is itself a matrix of dimensions $n^{\alpha p} \times n^{\alpha  \beta}$. With this recursion in tow, we obtain the following upper bound. 
\begin{align*}
\Tmat(n, n^x, n^{\beta}) 
\leq & \Tmat(n^{\alpha}, n^{\alpha p}, n^{\alpha \beta}) \cdot \Tmat(n/n^\alpha, n^x/n^{\alpha  p}, n^{\beta} / n^{\alpha \beta}) \\
= & \Tmat(n^{\alpha}, n^{\alpha  p}, n^{\alpha \beta}) \cdot \Tmat(n^{(1-\alpha)}, n^{(1-\alpha) q}, n^{(1-\alpha) \beta})\\
\leq & n^{\alpha \cdot \omega_{\beta}(p) + o(1)} \cdot n^{(1-\alpha) \cdot \omega_{\beta}(q) + o (1)},
\end{align*} 
where the last step follows from denoting $m_1 = n^\alpha$ and  observing that by the definition of $\omega_{\beta}(x)$, multiplying matrices of dimensions $n^\alpha \times n^{\alpha p}$ and $n^{\alpha p} \times n^{\alpha \beta}$ costs $m_1^{\omega_{\beta}(p) + o(1)}$, which is exactly $n^{\alpha (\omega_{\beta}(p) + o(1))}$, and similarly denoting $m_2 = n^{1-\alpha}$, we have that multiplying matrices of dimensions $n^{1-\alpha} \times n^{(1-\alpha) q}$ and $n^{(1-\alpha) q} \times n^{(1-\alpha) \beta}$ costs $m_2^{\omega_{\beta}(q) + o(1)}$, which is exactly $n^{(1-\alpha) (\omega_{\beta}(q) + o(1))}$. Comparing exponents, this implies that 
\begin{align*}
\omega_{\beta}(x) &\leq \alpha\cdot\omega_{\beta}(p) + (1-\alpha)\cdot \omega_{\beta}(q) ,
\end{align*} 
which proves the convexity of the function $\omega_{\beta}(x)$. 
\end{proof}

\subsection*{Missing Proofs of inverse mainenance data structures}
\subsection*{Proof of Lemma~\ref{lem:InverseMaintenanceOneLevel}}

\begin{proof}
The initialization and update operations are straightforward. Next we prove the time complexity of the reset and query operations.

Consider the $t$-th iteration. We define $\Delta := \Delta^{(t_0+1)} + \cdots + \Delta^{(t)}$, and note that $k = \nnz(\Delta^{(t_0+1)}) + \cdots + \nnz(\Delta^{(t)}) \geq \nnz(\Delta)$. W.l.o.g.~we assume $k = \nnz(\Delta)$. We write the decomposition $\Delta = \UU \CC \VV^{\top}$, where $\UU, \VV \in \R^{n \times k}$ each consists of $k$ columns of the identity matrix, and $\CC \in \R^{k \times k}$ consists of the non-zero entries of $\Delta$.

{\bf Reset.} Using Woodbury identity, we have 
\begin{align*}
(\MM^{(t)})^{-1} = &~ (\MM^{(t_0)} + \Delta)^{-1} \\
= &~ (\MM^{(t_0)} + \UU \CC \VV^{\top})^{-1} \\
= &~ (\MM^{(t_0)})^{-1} - (\MM^{(t_0)})^{-1} \UU \big(\CC^{-1} + \VV^{\top} (\MM^{(t_0)})^{-1} \UU\big)^{-1} \VV^{\top} (\MM^{(t_0)})^{-1} \\
= &~ \NN - \NN \UU (\CC^{-1} + \VV^{\top} \NN \UU )^{-1} \VV^{\top} \NN.
\end{align*}
The matrices $\NN \UU$ and $\VV^{\top} \NN$ can be directly read off from the maintained inverse $\NN$. The dominating term to compute this inverse is to multiply the $n \times k$ matrix $\NN \UU$ with the $k \times n$ matrix $(\CC^{-1} + \VV^{\top} \NN \UU)^{-1} \VV^{\top} \NN$, and it takes $\Tmat(n,n,k)$ time.

{\bf Query.} Again using Woodbury identity, we have 
\begin{align*}
\big((\MM^{(t)})^{-1} \big)_{J_r, J_c} = &~ \NN_{J_r, J_c} - \NN_{J_r, :} \UU (\CC^{-1} + \VV^{\top} \NN \UU )^{-1} \VV^{\top} \NN_{:, J_c}.
\end{align*}
W.l.o.g.~we assume $\ell_c \leq \ell_r$. The running time has the following parts:
\begin{itemize}
\item Computing the inverse $(\CC^{-1} + \VV^{\top} \NN \UU )^{-1}$ takes $k^{\omega}$ time.
\item Computing $(\CC^{-1} + \VV^{\top} \NN \UU )^{-1} \cdot (\VV^{\top} \NN_{:, J_c})$ takes $\Tmat(k, k, \ell_c)$ time.
\item Computing $(\NN_{J_r, :} \UU) \cdot (\CC^{-1} + \VV^{\top} \NN \UU )^{-1} \VV^{\top} \NN_{:, J_c}$ takes $\Tmat(\ell_r, k, \ell_c)$ time.
\end{itemize}
Since if $\ell_c \leq k$, then $\Tmat(k, k, \ell_c) \leq k^{\omega}$, and otherwise if $\ell_c > k$, then $\Tmat(k, k, \ell_c) \leq \Tmat(\ell_r, k, \ell_c)$, so the total time is
\[
O\big(k^{\omega} + \Tmat(\ell_r, k, \ell_c) \big).
\]
\end{proof}

\subsection*{Proof of Lemma~\ref{lem:InverseMaintenanceTwoLevel}}

\begin{proof}
The initialization and update operations are straightforward. Next we prove the time complexity of the reset, partial reset, and query operations.

Consider the $t$-th iteration. We define $\Delta_0 := \Delta^{(t_0+1)} + \cdots + \Delta^{(t)}$, and note that $k_0 = \nnz(\Delta^{(t_0+1)}) + \cdots + \nnz(\Delta^{(t)}) \geq \nnz(\Delta_0)$. Similarly define $\Delta_1 := \Delta^{(t_1+1)} + \cdots + \Delta^{(t)}$ and note that $k_1 \geq \nnz(\Delta_1)$.

{\bf Reset.} We compute $(\MM^{(t)})^{-1} = (\MM^{(t_0)} + \Delta_0)^{-1}$ using Woodbury identity. Similar to the reset operation of Lemma~\ref{lem:InverseMaintenanceOneLevel}, it takes $O(\Tmat(n,n,k_0))$ time.

{\bf Partial reset.} Let $J^{\new} \subseteq [n]$ denote the indexes of the non-zero columns of $\Delta^{(t_0+1)} + \cdots + \Delta^{(t)}$ and let $\BB^{\new} = (\TT^{(t_0,t)}_{J^{\new},J^{\new}})^{-1}$ and $\EE^{\new} = (\TT^{(t_0,t)}_{J^{\new},J^{\new}})^{-1} \cdot \NN_{J^{\new},:}$ denote the matrices that we want to obtain. Next we show how to compute these two matrices efficiently.

First note that we have
\begin{align*}
\TT^{(t_0,t)} = &~ \II + (\MM^{(t_0)})^{-1} \cdot (\MM^{(t)} - \MM^{(t_0)}) \\
= &~ \II + (\MM^{(t_0)})^{-1} \cdot (\MM^{(t_1)} - \MM^{(t_0)}) + (\MM^{(t_0)})^{-1} \cdot (\MM^{(t)} - \MM^{(t_1)}) \\
= &~ \TT^{(t_0,t_1)} + \NN \cdot \Delta_1.
\end{align*}

Note that $\TT^{(t_0,t_1)}$ is identity matrix plus some non-zero entries on the columns in $J$, so we have
\begin{align*}
\TT^{(t_0,t)}_{J^{\new},J^{\new}} = &~
\begin{bmatrix}
\TT^{(t_0,t_1)}_{J,J} & 0 \\
0 & \II
\end{bmatrix} + 
\begin{bmatrix}
0 & 0 \\
\TT^{(t_0,t_1)}_{J^{\new} \backslash J, J} & 0
\end{bmatrix} + (\NN \cdot \Delta_1)_{J^{\new}, J^{\new}}
\end{align*}
Since $\nnz(\MM^{(t_1)}-\MM^{(t_0)}) \leq k_0$ and $|J^{\new} \backslash J| \leq k_1$, we can compute $\TT^{(t_0,t_1)}_{J^{\new} \backslash J, J} = \big(\NN \cdot (\MM^{(t_1)}-\MM^{(t_0)})\big)_{J^{\new} \backslash J, J}$ in $O(k_0 k_1)$ time. Since $\nnz(\Delta_1) \leq k_1$, let $J' \subseteq $ denote the row indexes of the non-zero entries of $\Delta_1$, we can write $(\NN \cdot \Delta_1)_{J^{\new}, J^{\new}} = (\NN_{J^{\new}, J'}) \cdot (\Delta_1)_{J', J^{\new}}$ in $O(k_0 k_1)$ time. 
So in conclusion, we have
\begin{align*}
\TT^{(t_0,t)}_{J^{\new},J^{\new}} = &~ \begin{bmatrix}
\TT^{(t_0,t_1)}_{J,J} & 0 \\
0 & \II
\end{bmatrix} + (\II_{J^{\new}, J^{\new}\backslash J}) \cdot (\TT^{(t_0,t_1)}_{J^{\new} \backslash J, J^{\new}}) + (\NN_{J^{\new}, J'}) \cdot (\Delta_1)_{J', J^{\new}} \\
= &~ \begin{bmatrix}
\TT^{(t_0,t_1)}_{J,J} & 0 \\
0 & \II
\end{bmatrix} + \UU \VV^{\top},
\end{align*}
where both $\UU$ and $\VV$ have size at most $k_0 \times O(k_1)$, and we can compute $\UU$ and $\VV$ in $O(k_0 k_1)$ time. Furthermore, because $\UU = [(\II_{J^{\new}, J^{\new}\backslash J}), (\NN_{J^{\new}, J'})]$ and since we already maintain $\EE = \BB \cdot \NN_{J,:}$, we directly read off entries from the matrix $\begin{bmatrix}
\BB & 0 \\
0 & \II
\end{bmatrix} \cdot \UU$.

Then using Woodbury identity and since we already have $\BB = (\TT^{(t_0,t_1)}_{J,J})^{-1}$, we can compute the updated $\BB^{\new} = (\TT^{(t_0,t)}_{J^{\new},J^{\new}})^{-1}$ in $O(\Tmat(k_0, k_0, k_1))$ time. 

Next we compute the updated $\EE^{\new}$ using the maintained $\EE = (\TT^{(t_0,t_1)}_{J,J})^{-1} \cdot \NN_{J,:}$. Using Woodbury identity, we have
\begin{align}\label{eq:E_new}
\EE^{\new} = &~ \left( \begin{bmatrix}
\TT^{(t_0,t_1)}_{J,J} & 0 \\
0 & \II
\end{bmatrix} + \UU \VV^{\top} \right)^{-1} \cdot \NN_{J^{\new},:} \notag \\
= &~ \left( \begin{bmatrix}
\BB & 0 \\
0 & \II
\end{bmatrix} - \begin{bmatrix}
\BB & 0 \\
0 & \II
\end{bmatrix} \UU (\II + \VV^{\top} \begin{bmatrix}
\BB & 0 \\
0 & \II
\end{bmatrix} \UU)^{-1} \VV^{\top} \begin{bmatrix}
\BB & 0 \\
0 & \II
\end{bmatrix} \right) \cdot \NN_{J^{\new},:} \notag \\
= &~ \begin{bmatrix}
\EE \\
\NN_{J^{\new} \backslash J,:}
\end{bmatrix} 
- \begin{bmatrix}
\BB & 0 \\
0 & \II
\end{bmatrix} \UU (\II + \VV^{\top} \begin{bmatrix}
\BB & 0 \\
0 & \II
\end{bmatrix} \UU)^{-1} \VV^{\top} \begin{bmatrix}
\EE \\
\NN_{J^{\new}\backslash J,:}
\end{bmatrix}.
\end{align}

The dominating terms are to compute $\VV^{\top} \cdot \begin{bmatrix}
\EE \\
\NN_{J^{\new}\backslash J,:}
\end{bmatrix}$ and to multiply $\begin{bmatrix}
\BB & 0 \\
0 & \II
\end{bmatrix} \UU$ with the matrix 
$(\II + \VV^{\top} \begin{bmatrix}
\BB & 0 \\
0 & \II
\end{bmatrix} \UU)^{-1} \VV^{\top} \begin{bmatrix}
\EE \\
\NN_{J^{\new}\backslash J,:}
\end{bmatrix}$, and both of these two steps take $O(\Tmat(n,k_0,k_1))$ time.

In summary, the total time of the partial reset operation is $O(\Tmat(n, k_0, k_1))$.

{\bf Query.} 
By the definition of the transformation matrix, we have $\MM^{(t)} = \MM^{(t_0)} \cdot \TT^{(t_0, t)}$. Again let $J^{\new} \subseteq [n]$ denote the indexes of the non-zero columns of $\Delta^{(t_0+1)}, \cdots, \Delta^{(t)}$. And recall that $J \subseteq [n]$ denotes the indexes of the non-zero columns of $\Delta^{(t_0+1)}, \cdots, \Delta^{(t_1)}$. 

Using a similar computation as the partial update operation, we can write $\TT^{(t_0,t)}_{J^{\new},J^{\new}} = \begin{bmatrix}
\TT^{(t_0,t_1)}_{J,J} & 0 \\
0 & \II
\end{bmatrix} + \UU \VV^{\top}$, where both $\UU$ and $\VV$ have size $k_0 \times O(k_1)$, and we can compute them in $O(k_0 k_1)$ time. Again denote $\EE^{\new} = (\TT^{(t_0,t)}_{J^{\new},J^{\new}})^{-1} \cdot \NN_{J^{\new},:}$, and it can be written as Eq.~\eqref{eq:E_new}. In the query operation we don't compute the matrix $\EE^{\new}$ explicitly, instead, we note that by using Eq.~\eqref{eq:E_new} and since the matrix $\begin{bmatrix}
\BB & 0 \\
0 & \II
\end{bmatrix} \cdot \UU$ is already maintained, we can compute any $\ell_r \times \ell_c$ submatrix of $\EE^{\new}$ in $O(\Tmat(k_0, k_1, k_1) + \Tmat(k_0, k_1, \ell_c) + \Tmat(\ell_r, k_1, \ell_c))$ time.

Next note that since $\MM^{(t)}$ only differs from $\MM^{(t_0)}$ on columns in set $J^{\new}$, denoting $\ov{J^{\new}} = [n]\backslash J^{\new}$, we can write $\TT^{(t_0, t)}$ as
\[
\TT^{(t_0, t)} = 
\begin{bmatrix}
\TT^{(t_0, t)}_{J^{\new},J^{\new}} & 0 \\
\TT^{(t_0, t)}_{\ov{J^{\new}},J^{\new}} & \II
\end{bmatrix}.
\]
So its inverse is $\begin{bmatrix}
(\TT^{(t_0, t)}_{J^{\new},J^{\new}})^{-1} & 0 \\
- \TT^{(t_0, t)}_{\overline{J^{\new}},J^{\new}} \cdot (\TT^{(t_0, t)}_{J^{\new},J^{\new}})^{-1} & \II
\end{bmatrix}$, and we have
\begin{align}\label{eq:M_t_inv}
(\MM^{(t)})^{-1} = &~ (\TT^{(t_0, t)})^{-1} \cdot (\MM^{(t_0)})^{-1} \notag \\
= &~ \begin{bmatrix}
(\TT^{(t_0, t)}_{J^{\new},J^{\new}})^{-1} & 0 \\
- \TT^{(t_0, t)}_{\overline{J^{\new}},J^{\new}} \cdot (\TT^{(t_0, t)}_{J^{\new},J^{\new}})^{-1} & \II
\end{bmatrix} 
\cdot \NN \notag \\
= &~ \begin{bmatrix}
(\TT^{(t_0, t)}_{J^{\new},J^{\new}})^{-1} \cdot \NN_{J^{\new}, :} \\
- (\II + \NN \cdot \Delta_0)_{\overline{J^{\new}},J^{\new}} \cdot (\TT^{(t_0, t)}_{J^{\new},J^{\new}})^{-1} \cdot \NN_{J^{\new}, :} + \NN_{\ov{J^{\new}}, :}
\end{bmatrix}.
\end{align}
To compute a $\ell_r \times \ell_c$ submatrix of $(\MM^{(t)})^{-1}$, we first compute $\ell_c$ columns of $(\TT^{(t_0,t)}_{J^{\new},J^{\new}})^{-1} \cdot \NN_{J^{\new},:}$, and as we just proved, this takes $O(\Tmat(k_0, k_1, k_1) + \Tmat(k_0, k_1, \ell_c))$ time. Next we compute the multiplication of $\ell_r$ rows of $(\NN \cdot \Delta_0)_{\overline{J^{\new}},J^{\new}}$ with $\ell_c$ columns of $(\TT^{(t_0,t)}_{J^{\new},J^{\new}})^{-1} \cdot \NN_{J^{\new},:}$, and this takes $O(\ell_r, k_0, \ell_c)$.

In summary, the total time of the query operation is $O(\Tmat(k_0, k_1, k_1) + \Tmat(k_0, k_1, \ell_c) + \Tmat(k_0, \ell_r, \ell_c))$.
\end{proof}

\subsection*{Proof of Lemma~\ref{lem:implicit_inverse_maintenance}}
\begin{proof}
In the algorithm we always maintain the following invariants:
\begin{align*}
J = &~ \text{set of indexes of the non-zero columns of } \Delta^{(t_0+1)} + \cdots + \Delta^{(t_1)} \\
\NN = &~ (\MM^{(t_0)})^{-1}  \\
\BB = &~ (\TT^{(t_0,t_1)}_{J,J})^{-1} \\
\EE = &~ (\TT^{(t_0,t_1)}_{J,J})^{-1} \cdot \NN_{J,:} \\
\uu_0 + \NN \uu_1 + \begin{bmatrix}
\NN_{J,:} \cdot \uu_2 \\ 0
\end{bmatrix} = &~ \sum_{i=0}^t (\MM^{(i)})^{-1} \vv.
\end{align*}
Apart from these invariants, the data structure also maintains the vectors $\NN \cdot \vv$, $\BB \cdot \vv_{J}$, and $\EE \cdot \vv$.

Next we describe each operation and bound its time complexity.

{\bf Initialize.} Initially we let $t_0 = t_1 = 0$. Let $\NN = (\MM^{(0)})^{-1} \in \R^{n \times n}$, and let $\BB$ and $\EE$ be empty matrices. We also pre-compute and maintain $\NN \cdot \vv$. We let $\uu_0 = (\MM^{(0)})^{-1} \cdot \vv$ and $\uu_1 = \uu_2 = 0$. Initialization takes $O(n^{\omega})$ time.

{\bf Query sum.} Since we maintain the invariants, we simply output $\uu_0 + \NN \uu_1 + \begin{bmatrix}
\NN_{J,:} \cdot \uu_2 \\ 0
\end{bmatrix} = \sum_{i=0}^t (\MM^{(i)})^{-1} \vv$, and this takes $O(n^2)$ time.

For the reset, partial reset, and update operations, we define the following notations. Consider the $t$-th iteration. We define $\Delta_0 := \Delta^{(t_0+1)} + \cdots + \Delta^{(t)}$, and note that $k_0 = \nnz(\Delta^{(t_0+1)}) + \cdots + \nnz(\Delta^{(t)}) \geq \nnz(\Delta_0)$. Similarly define $\Delta_1 := \Delta^{(t_1+1)} + \cdots + \Delta^{(t)}$ and note that $k_1 \geq \nnz(\Delta_1)$.

{\bf Reset.} We first update $\uu_0 \gets \uu_0 + \NN \uu_1 + \begin{bmatrix}
\NN_{J,:} \cdot \uu_2 \\ 0
\end{bmatrix}$ and $\uu_1, \uu_2 \gets 0$. 

Next, we update $\NN = (\MM^{(t)})^{-1} = (\MM^{(t_0)} + \Delta_0)^{-1}$ using Woodbury identity in th exact same way as the reset operation of Lemma~\ref{lem:InverseMaintenanceOneLevel} and \ref{lem:InverseMaintenanceTwoLevel}, and this operation takes $O(\Tmat(n,n,k_0))$ time. 

Finally we also recompute the vector $\NN \cdot \vv$ in $O(n^2)$ time.

{\bf Partial reset.} We first update $\uu_0 \gets \uu_0 + \begin{bmatrix}
\NN_{J,:} \cdot \uu_2 \\ 0
\end{bmatrix}$ and $\uu_2 \gets 0$. This takes $O(n \cdot k_0)$ time. 

Next let $J^{\new} \subseteq [n]$ denote the indexes of the non-zero columns of $\Delta^{(t_0+1)} + \cdots + \Delta^{(t)}$ and let $\BB^{\new} = (\TT^{(t_0,t)}_{J^{\new},J^{\new}})^{-1}$ and $\EE^{\new} = (\TT^{(t_0,t)}_{J^{\new},J^{\new}})^{-1} \cdot \NN_{J^{\new},:}$ denote the matrices that we want to obtain. We compute $\BB^{\new}$ and $\EE^{\new}$ in the exact same way as Lemma~\ref{lem:InverseMaintenanceTwoLevel} in $O(\Tmat(n, k_0, k_1))$ time. 

Finally we also recompute the vectors $\BB^{\new} \cdot \vv_{J^{\new}}$ and $\EE^{\new} \cdot \vv$ in $O(n \cdot k_0)$ time.

{\bf Update.} Again let $J^{\new} \subseteq [n]$ denote the indexes of the non-zero columns of $\Delta^{(t_0+1)} + \cdots + \Delta^{(t)}$ and let $\BB^{\new} = (\TT^{(t_0,t)}_{J^{\new},J^{\new}})^{-1}$ and $\EE^{\new} = (\TT^{(t_0,t)}_{J^{\new},J^{\new}})^{-1} \cdot \NN_{J^{\new},:}$. Similar as the proof of Lemma~\ref{lem:InverseMaintenanceTwoLevel}, we can compute the decomposition $\TT^{(t_0,t)}_{J^{\new},J^{\new}} = \begin{bmatrix}
\TT^{(t_0,t_1)}_{J,J} & 0 \\
0 & \II
\end{bmatrix} + \UU \VV^{\top}$ and $\begin{bmatrix}
\BB & 0 \\
0 & \II
\end{bmatrix} \cdot \UU$ in $O(k_0 k_1)$ time, where both $\UU$ and $\VV$ have size $k_0 \times O(k_1)$. And we still have Eq.~\eqref{eq:M_t_inv} and \eqref{eq:E_new} that
\begin{align*}
(\MM^{(t)})^{-1} = &~ \begin{bmatrix}
\EE^{\new} \\
- (\II + \NN \cdot \Delta_0)_{\overline{J^{\new}},J^{\new}} \cdot \EE^{\new} + \NN_{\ov{J^{\new}}, :} 
\end{bmatrix}, \\
\EE^{\new} = &~ \begin{bmatrix}
\EE \\
\NN_{J^{\new} \backslash J,:}
\end{bmatrix} 
- \begin{bmatrix}
\BB & 0 \\
0 & \II
\end{bmatrix} \UU (\II + \VV^{\top} \begin{bmatrix}
\BB & 0 \\
0 & \II
\end{bmatrix} \UU)^{-1} \VV^{\top} \begin{bmatrix}
\EE \\
\NN_{J^{\new}\backslash J,:}
\end{bmatrix}.
\end{align*}
So we have
\begin{align*}
\EE^{\new} \vv = &~ \begin{bmatrix}
\EE \vv \\
\NN_{J^{\new} \backslash J,:} \vv
\end{bmatrix} 
- \begin{bmatrix}
\BB & 0 \\
0 & \II
\end{bmatrix} \UU (\II + \VV^{\top} \begin{bmatrix}
\BB & 0 \\
0 & \II
\end{bmatrix} \UU)^{-1} \VV^{\top} \begin{bmatrix}
\EE \vv \\
\NN_{J^{\new}\backslash J,:} \vv
\end{bmatrix}
\end{align*}
We can compute this vector in the following steps:
\begin{itemize}
\item Since the data structure maintains $\EE \vv$ and $\NN \vv$, we first directly read off the vector $\begin{bmatrix}
\EE \vv \\
\NN_{J^{\new}\backslash J,:} \vv
\end{bmatrix}$ from the maintained vectors.
\item Compute $(\II + \VV^{\top} \begin{bmatrix}
\BB & 0 \\
0 & \II
\end{bmatrix} \UU)^{-1}$ in $O(\Tmat(k_1, k_0, k_1))$ time. 
\item Compute the matrix vector products from right to left as \[
\left(\begin{bmatrix}
\BB & 0 \\
0 & \II
\end{bmatrix} \UU \right) \cdot (\II + \VV^{\top} \begin{bmatrix}
\BB & 0 \\
0 & \II
\end{bmatrix} \UU)^{-1} \cdot \VV^{\top} \cdot \begin{bmatrix}
\EE \vv \\
\NN_{J^{\new}\backslash J,:} \vv
\end{bmatrix},
\]
and this takes $O(k_0 k_1)$ time.
\end{itemize}
In summary we can compute $\EE^{\new} \vv$ in $O(\Tmat(k_0, k_1, k_1))$ time. Next we compute $(\MM^{(t)})^{-1} \cdot \vv$:
\begin{align*}
(\MM^{(t)})^{-1} \cdot \vv = &~ \begin{bmatrix}
\EE^{\new} \vv \\
- (\II + \NN \cdot \Delta_0)_{\overline{J^{\new}},J^{\new}} \cdot \EE^{\new} \vv + \NN_{\ov{J^{\new}}, :} \vv
\end{bmatrix} \\
= &~ \begin{bmatrix}
\EE^{\new} \vv \\
-\II_{\overline{J^{\new}},J^{\new}} \cdot \EE^{\new} \vv + \NN_{\overline{J^{\new}},:} \cdot (\vv - \ww)
\end{bmatrix} \\
= &~ \begin{bmatrix}
\EE^{\new} \vv - \NN_{J^{\new}, :} \cdot (\vv - \ww)\\
-\II_{\overline{J^{\new}},J^{\new}} \cdot \EE^{\new} \vv
\end{bmatrix} + \NN \cdot (\vv - \ww) \\
= &~ \begin{bmatrix}
\EE^{\new} \vv - \ww' \\
-\II_{\overline{J^{\new}},J^{\new}} \cdot \EE^{\new} \vv
\end{bmatrix} - \begin{bmatrix}
\NN_{J,:} \cdot (\vv - \ww) \\ 0
\end{bmatrix} + \NN \cdot (\vv - \ww),
\end{align*}
where in the second step we define a vector $\ww \in \R^n$ such that its entries in $J^{\new}$ are $(\Delta_0)_{J^{\new}, J^{\new}} \cdot \EE^{\new} \vv$ and its rest entries are all zero, and we can compute it in $O(k_0)$ time since $\nnz(\Delta_0) \leq k_0$, in the fourth step we define a vector $\ww' \in \R^{|J^{\new}|}$ such that its entries in $J^{\new} \backslash J$ are $\NN_{J^{\new}\backslash J, :} \cdot (\vv - \ww)$ and its rest entries are all zero, and we can compute it in $O(k_0 k_1)$ time since $|J^{\new}\backslash J| \leq k_1$.

So we update the vectors as
\begin{align*}
\uu_0 \gets \uu_0 + \begin{bmatrix}
\EE^{\new} \vv - \ww' \\
-\II_{\overline{J^{\new}},J^{\new}} \cdot \EE^{\new} \vv
\end{bmatrix}, ~~~
\uu_1 \gets \uu_1 + \vv - \ww, ~~~
\uu_w \gets \uu_1 - \vv + \ww.
\end{align*}
In this way we still maintain the invariant of the three vectors.
\end{proof}

\section{MWU with Non-Monotone Weights and \texorpdfstring{$n^{1/3}$}{TEXT} Iterations (Optional)}\label{sec:NonMonMWU}

In this section, we will prove Theorem~\ref{thm:NonMonotoneMWUAcc}. Our analysis follows a similar structure to that of Algorithm~\ref{alg:MWU}. We would track the same potentials as defined in Equations~\eqref{eq:defPhi} and \eqref{eq:defPsi}. 

In this section, we again use the notation $k_i$ to denote the total number of width reduction steps taken before the $i^{th}$ primal step is being executed, and we use the notation $i_k$ to denote the number of primal steps taken by the algorithm when the $k^{th}$ width step is being executed. We show that for input $\wt{\CC} = \begin{bmatrix}\CC\\ -\CC    \end{bmatrix}$ and $\wt{\dd} = \begin{bmatrix}
   \dd\\ -\dd
\end{bmatrix}$, the algorithm returns $\wt{\xx}$ such that $\|\CC\wt{\xx}-\dd\|_{\infty}\leq 1+ O(\epsilon)$.

\subsection*{Convergence Analysis}
We begin by showing that $\ww^{(i,k)}>0$ for all $i$ and $k$. Observe that this was always true for Algorithm~\ref{alg:MWU}.

\begin{lemma}\label{lem:positiveWopt}
 If $\alpha \rho^{1/3}\leq \frac{\epsilon^{1/3}}{10 n^{1/3}}$ and $\delta \leq 1/2$, then for every iteration $i$ and $k$ and every coordinate $e$, $\ww^{(i,k)}_e > 0$. Furthermore, every primal update satisfies, $|\overrightarrow{\alpha}^{(i,k)} (\wt{\CC}\Delta^{(i,k)}-\wt{\dd})|\leq 1/10.$
\end{lemma}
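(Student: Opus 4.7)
The plan is to establish both claims simultaneously by induction on the combined iteration counter $i+k$. The base case is immediate since $\ww^{(0,0)} = 1_{2n} > 0$. For the inductive step I would split according to whether the next step is a width reduction or a primal step. The width reduction case is easy: looking at Algorithm~\ref{alg:non_monotone_accel_opt}, every updated coordinate is assigned either $(1+\epsilon)\ww_e^{(i,k)} + \frac{\epsilon^2}{n}\Phi(\ww^{(i,k)})$ or $(1+\epsilon\gamma)\ww_e^{(i,k)} + \frac{\epsilon^2\gamma}{n}\Phi(\ww^{(i,k)})$ with $\gamma \in (0,1]$, and untouched coordinates are unchanged, so positivity is preserved term-by-term.

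The primal case is where the real work happens. First I would use the inductive positivity of $\ww^{(i,k)}$ to derive the pointwise lower bound
\[
\rr_e^{(i,k)} \;=\; \ww_e^{(i,k)} + \tfrac{\epsilon}{2n}\Phi(\ww^{(i,k)}) \;\geq\; \tfrac{\epsilon}{2n}\Phi(\ww^{(i,k)}).
\]
Combining this with Lemma~\ref{lem:PsiPhi} (applied with $\rrbar = \rr$, so $\delta=0$), which gives $\Psi(\rr^{(i,k)}) \leq e^{\epsilon}\Phi(\ww^{(i,k)})$, yields $\rr_e^{(i,k)} \geq \tfrac{\epsilon}{2en}\Psi(\rr^{(i,k)})$. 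Since we are taking a primal step, the guard on Line~\ref{algline:CheckPrimalNM} is satisfied, i.e.\ $\sum_e \rr_e^{(i,k)} |\wt{\CC}\Delta^{(i,k)}-\wt{\dd}|_e^3 \leq 2\rho\Psi(\rr^{(i,k)})$. Plugging in the pointwise lower bound on $\rr_e$ gives
\[
\|\wt{\CC}\Delta^{(i,k)}-\wt{\dd}\|_3^3 \;\leq\; \frac{4en\rho}{\epsilon},
\]
and hence by $\|\cdot\|_\infty \leq \|\cdot\|_3$,
\[
\|\wt{\CC}\Delta^{(i,k)}-\wt{\dd}\|_\infty \;\leq\; \bigl(4en\rho/\epsilon\bigr)^{1/3}.
\]

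From here I would finish by using $\overrightarrow{\alpha}^{(i,k)}_e \leq \alpha_+ = \alpha$ together with the hypothesis $\alpha\rho^{1/3} \leq \epsilon^{1/3}/(10n^{1/3})$ to conclude
\[
\bigl|\overrightarrow{\alpha}^{(i,k)}_e(\wt{\CC}\Delta^{(i,k)}-\wt{\dd})_e\bigr| \;\leq\; \alpha\cdot(4e)^{1/3}\cdot\frac{n^{1/3}\rho^{1/3}}{\epsilon^{1/3}} \;\leq\; \tfrac{1}{10},
\]
which is the second claim (the constant $10$ in the hypothesis can be adjusted to absorb the $(4e)^{1/3}$; this is standard bookkeeping). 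Multiplying through by $\epsilon < 1$, the multiplicative factor $1 + \epsilon\overrightarrow{\alpha}^{(i,k)}_e(\wt{\CC}\Delta^{(i,k)}-\wt{\dd})_e$ appearing in the primal update of $\ww^{(i,k)}$ is at least $1 - \epsilon/10 > 0$, preserving positivity and closing the induction.

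The only mild subtlety is making sure the inductive hypothesis is strong enough: to apply the $\ell_3$-guard argument we need $\Psi(\rr^{(i,k)})$ to be related to $\Phi(\ww^{(i,k)})$ via Lemma~\ref{lem:PsiPhi}, and that lemma implicitly assumes $\ww^{(i,k)} \geq 0$ (so that $\rr = \ww + \tfrac{\epsilon}{2n}\|\ww\|_1$ makes sense as a positive vector). This is precisely why the positivity claim and the $1/10$ bound must be proved together in a single induction rather than sequentially. No other obstacle arises.
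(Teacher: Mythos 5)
Your proposal is correct and follows essentially the same route as the paper: both use the primal-step guard $\sum_e \rr_e^{(i,k)}|\wt{\CC}\Delta^{(i,k)}-\wt{\dd}|_e^3 \leq 2\rho\Psi(\rr^{(i,k)})$ together with the pointwise bound $\rr_e^{(i,k)} \gtrsim \tfrac{\epsilon}{n}\Psi(\rr^{(i,k)})$ to control $\|\wt{\CC}\Delta^{(i,k)}-\wt{\dd}\|_\infty$ via the $\ell_3$ norm, then invoke $\alpha\rho^{1/3}\leq \epsilon^{1/3}/(10n^{1/3})$; your explicit joint induction just formalizes the paper's observation that weights only increase on width steps. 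The constant slack you flag (absorbing $(4e)^{1/3}$ into the factor $10$) is shared by the paper's own computation and is indeed just bookkeeping.
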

\begin{proof}
    The weights will always increase during a width reduction step and can only decrease during a primal step. Therefore, we prove that after a primal step the weights can never be negative, i.e., for every $i,k$, $1 + \epsilon \overrightarrow{\alpha}(\wt{\CC}\Delta^{(i,k)}-\wt{\dd}) \geq \frac{1}{2}$.

    Observe that, when we do a primal step, i.e., the condition on Line~\ref{algline:CheckPrimalNM} of Algorithm~\ref{alg:non_monotone_accel_opt} is true, since for all $e$, $\rr^{(i,k)}_e \geq \frac{\epsilon}{2n}\Phi(\ww^{(i,k)})\geq \frac{\epsilon}{2n}\Psi(\rr^{(i,k)})$, 
    \[
    \frac{\epsilon}{2n}\Psi(\rr^{(i,k)})\|\wt{\CC}\Delta^{(i,k)}-\wt{\dd}\|_3^3 \leq 2\rho \Psi(\rr^{(i,k)}).
    \]
    This implies that,
    \[
    \|\wt{\CC}\Delta^{(i,k)}-\wt{\dd}\|_{\infty}
\leq \|\wt{\CC}\Delta^{(i,k)}-\wt{\dd}\|_3 \leq \frac{4^{1/3} n^{1/3}\rho^{1/3}}{\epsilon^{1/3}} \leq \frac{2 n^{1/3}\rho^{1/3}}{\epsilon^{1/3}}.
    \]
    When $(\wt{\CC}\Delta^{(i,k)}-\wt{\dd})_e \geq 0$, our weights can only increase, therefore we consider the case when it is negative. The multiplicative change to the weights now becomes,
    \[
    1 + \epsilon \overrightarrow{\alpha}(\wt{\CC}\Delta^{(i,k)}-\wt{\dd}) \geq 1 - \epsilon \alpha_-\|\wt{\CC}\Delta^{(i,k)}-\wt{\dd}\|_{\infty} \geq 1 - \frac{\epsilon}{1+2\epsilon}\alpha \cdot \frac{4 n^{1/3}\rho^{1/3}}{\epsilon^{1/3}} \geq \frac{1}{2}.
    \]
    The second part just follows by using the values of $\overrightarrow{\alpha}^{(i,k)} $ and the bound on $\|\wt{\CC}\Delta^{(i,k)}-\wt{\dd}\|_{\infty}$.
    {\bf We would like to remark that this proof also works for the further robust algorithm described in Section~\ref{sec:NonMonStab}}
\end{proof}

We now begin our analysis. The next two lemmas show how our potentials change with every iteration of the algorithm.

\subsubsection*{Change in $\Phi$}
\begin{restatable}{lemma}{ChangePhiNMOpt}
  \label{lem:ChangePhiNMOpt}
  After $i$ primal steps, and $k$ width-reduction steps,
  the potential $\Phi$ is bounded as follows:
  \begin{align*}
\Phi\left(\ww^{(i,k)}\right) \leq \left(\Phi(\ww^{(0,0)})\right)\left(1+\epsilon\alpha_+ e^{\epsilon/2}\right)^i\left(1 + \frac{2\epsilon e^{\epsilon}}{\tau}\right)^k.
 \end{align*}
 Furthermore, after every primal step, the potential can decrease by at most,
 \[
 \Phi\left(\ww^{(i+1,k)}\right) \geq  \Phi\left(\ww^{(i,k)}\right)\left(1-\epsilon\alpha_+ e^{\epsilon/2}\right)^i.  
 \]
\end{restatable}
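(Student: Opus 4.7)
The plan is to mirror the structure of Lemma~\ref{lem:ChangePhiMon} (the monotone case) and establish the bound by induction on the pair $(i,k)$, handling the primal step and the width reduction step separately. The base case $i=k=0$ is trivial, so throughout we fix some state $(i,k)$ for which the inductive hypothesis holds and use the shorthand $\ww=\ww^{(i,k)}$, $\rr=\rr^{(i,k)}$, $\Delta=\Delta^{(i,k)}$, and $\uu = \wt{\CC}\Delta-\wt{\dd}$. A preliminary observation I would invoke is Lemma~\ref{lem:positiveWopt}: under the assumption $\alpha\rho^{1/3}\le \epsilon^{1/3}/(10n^{1/3})$ all weights $\ww_e^{(i,k)}$ are strictly positive, so $\Phi(\ww^{(i,k)})=\sum_e \ww^{(i,k)}_e$ and the primal updates lie in the range $|\epsilon\overrightarrow{\alpha}^{(i,k)}\uu_e| \le 1/10$.

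For the primal step, the update $\ww^{(i+1,k)} = \ww(1+\epsilon\overrightarrow{\alpha}^{(i,k)}\uu)$ together with positivity of $\ww$ gives
\[
\Phi(\ww^{(i+1,k)}) = \Phi(\ww) + \epsilon\sum_e \ww_e \,\overrightarrow{\alpha}^{(i,k)}_e\,\uu_e.
\]
Since $\alpha_-\le \overrightarrow{\alpha}^{(i,k)}_e \le \alpha_+$, the key estimate is $\bigl|\sum_e \ww_e\,\overrightarrow{\alpha}^{(i,k)}_e\,\uu_e\bigr| \le \alpha_+\sum_e \ww_e|\uu_e|$, which by Cauchy--Schwarz is at most $\alpha_+\sqrt{\Phi(\ww)\cdot \sum_e \ww_e\uu_e^2}$. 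Using $\ww_e\le \rr_e$, this is bounded by $\alpha_+\sqrt{\Phi(\ww)\cdot\Psi(\rr)}$, and Lemma~\ref{lem:PsiPhi} (with $\delta=0$ since we are not yet in the stable/robust algorithm) yields $\Psi(\rr)\le e^{\epsilon}\Phi(\ww)$. Combining gives $\bigl|\Phi(\ww^{(i+1,k)})-\Phi(\ww)\bigr|\le \epsilon\alpha_+ e^{\epsilon/2}\Phi(\ww)$, which immediately delivers both the upper and lower one-step bounds claimed in the lemma.

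For the width reduction step, the update rule in Algorithm~\ref{alg:non_monotone_accel_opt} increases $\ww_e$ for $e\in H$ by $\epsilon\rr_e$ (using $\rr_e=\ww_e+\tfrac{\epsilon}{n}\Phi(\ww)$) and, when applicable, increases $\ww_{\bar e}$ for a single extra coordinate $\bar e\in S\setminus H$ by $\epsilon\gamma\rr_{\bar e}$ with $\gamma\le \tau^{-1}\Psi(\rr)/\rr_{\bar e}$. Hence
\[
\Phi(\ww^{(i,k+1)}) - \Phi(\ww) = \epsilon\sum_{e\in H}\rr_e + \epsilon\gamma\rr_{\bar e}.
\]
The definition of $H$ (Line~\ref{algline:sum_r_in_H} of Algorithm~\ref{alg:non_monotone_accel_opt}) gives $\sum_{e\in H}\rr_e\le \tau^{-1}\Psi(\rr)$, and the choice of $\gamma$ gives $\gamma\rr_{\bar e}\le \tau^{-1}\Psi(\rr)$. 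Therefore the increment is at most $2\epsilon\tau^{-1}\Psi(\rr)\le 2\epsilon e^{\epsilon}\tau^{-1}\Phi(\ww)$ by Lemma~\ref{lem:PsiPhi} once more, and multiplying this per-step factor through the induction yields the desired $(1+2\epsilon e^{\epsilon}/\tau)^k$ contribution.

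The main obstacle is, I expect, not any single estimate but the bookkeeping that makes Cauchy--Schwarz tight enough: the presence of the non-monotone step sizes $\overrightarrow{\alpha}^{(i,k)}$ means we cannot factor out a single scalar cleanly, and we must use $\alpha_-<\alpha_+$ to conclude that the magnitude of the signed sum is controlled by $\alpha_+\sum_e \ww_e|\uu_e|$ without losing constants. The auxiliary lower bound (potential decrease) requires the same estimate run in reverse and relies on positivity of $\ww$ via Lemma~\ref{lem:positiveWopt}, so it is essential that the parameter choices in Algorithm~\ref{alg:non_monotone_accel_opt} verify the hypothesis of that lemma before invoking the induction; I would check this once at the outset.
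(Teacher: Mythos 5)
Your proposal is correct and follows essentially the same route as the paper's proof: induction on $(i,k)$, positivity of the weights via Lemma~\ref{lem:positiveWopt}, Cauchy--Schwarz plus $\Psi(\rr)\le e^{\epsilon}\Phi(\ww)$ for the primal step, and the bound $\sum_{e\in H}\rr_e+\gamma\rr_{\bar e}\le 2\tau^{-1}\Psi(\rr)$ for the width-reduction step. The only (cosmetic) differences are that you factor $\alpha_+$ out before applying Cauchy--Schwarz rather than carrying $(\overrightarrow{\alpha}_e)^2$ inside the square root, and you handle the two cases for $\gamma$ in one line via $\gamma\le\tau^{-1}\Psi(\rr)/\rr_{\bar e}$ instead of the paper's explicit case split.
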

\begin{proof}
  We prove this claim by induction. Initially, $i = k = 0$, and the claim holds trivially. Assume that the claim holds for some $i,k \ge 0.$
We will use $\Phi$ as an abbreviated notation for $\Phi(\ww^{(i,k)})$, $\overrightarrow{\alpha}$ for $\overrightarrow{\alpha}^{(i,k)} $, and $\ww$ to denote $\ww^{(i,k)}$.
\paragraph*{Primal Step.} If the next step is a \emph{primal} step, 

\begin{equation}\label{eq:helpPhi1NoDelta}
\Phi\left(\ww^{( i+1,k)} \right) = \norm{ \ww+ \epsilon \overrightarrow{\alpha} (\wt{\CC}\Delta-\wt{\dd})\ww}_1 = \|\ww\|_1 + \epsilon \sum_e \ww_e \overrightarrow{\alpha}_e (\wt{\CC}\Delta-\wt{\dd})_e
\end{equation}

We first bound $\sum_e \ww_e \cdot \overrightarrow{\alpha}_e \cdot (\wt{\CC}\Delta-\wt{\dd})_e$. Using Cauchy-Schwarz inequality, we have
\begin{align*}
\sum_e \ww_e \cdot \overrightarrow{\alpha}_e \cdot |\wt{\CC}\Delta-\wt{\dd}|_e \leq &~ \sqrt{\Big(\sum_e \ww_e (\overrightarrow{\alpha}_e)^2\Big) \cdot \Big(\sum_e \ww_e \cdot (\wt{\CC}\Delta-\wt{\dd})_e^2\Big)} \\
\leq &~ \alpha_+ \cdot \sqrt{\Phi(\ww) \cdot \Psi(\rr)} \\
\leq &~ \alpha_+ \cdot e^{\epsilon/2} \cdot \Phi(\ww),
\end{align*}
where the second step follows from $\alpha_- < \alpha _+$, the third step follows from Lemma~\ref{lem:PsiPhi} that $\Psi(\rr) \leq e^{\epsilon} \cdot \Phi(\ww)$.

Now, from Equation~\eqref{eq:helpPhi1NoDelta}, and the fact that $\ww_e > 0$ from Lemma~\ref{lem:positiveWopt},
\[
\Phi(\ww) - \epsilon \sum_e \ww_e \overrightarrow{\alpha}_e |\wt{\CC}\Delta-\wt{\dd}|_e \leq \Phi\left(\ww^{( i+1,k)} \right)  \leq \Phi(\ww) + \epsilon \sum_e \ww_e \overrightarrow{\alpha}_e |\wt{\CC}\Delta-\wt{\dd}|_e. 
\]
Therefore, we get our bounds,
\[
\Phi\left(\ww^{( i,k)} \right)(1-\epsilon\alpha_+e^{\epsilon/2}) \leq  \Phi\left(\ww^{( i+1,k)} \right) \leq  \Phi\left(\ww^{( i,k)} \right)(1+\epsilon\alpha_+e^{\epsilon/2}) 
\]

\paragraph*{Width Reduction Step.}
Let $\Delta$ be the solution returned in Line~\ref{algline:linsysNM} of Algorithm~\ref{alg:non_monotone_accel_opt}. 
 We have the following:
\begin{align*}
\Phi(\ww^{(i,k+1)}) & = \sum_{e \notin H\cup \{\bar{e}\}}  \ww_e^{(i,k)} + \sum_{e \in H}  \Big((1+\epsilon) \ww_e^{(i,k)} +\frac{\epsilon^2}{n}\Phi(\ww^{(i,k)})\Big)+  \Big((1+\epsilon\gamma)\ww_{\bar{e}}^{(i,k)} +\frac{\epsilon^2\gamma}{n}\Phi(\ww^{(i,k)})\Big)\\
& = \Phi(\ww^{(i,k)}) + \epsilon \sum_{e \in H}  \rr_e^{(i,k)} + \epsilon \gamma  \rr^{(i,k)}_{\bar{e}}\\
& \leq\Phi(\ww^{(i,k)})  + \frac{\epsilon}{\tau} \Psi(\rr^{(i,k)}) + \epsilon \gamma \rr^{(i,k)}_{\bar{e}},
\end{align*}
where the last step follows from $\sum_{e\in H}\rr^{(i,k)}_e\leq \tau^{-1}\Psi(\rr^{(i,k)})$ as guaranteed by Line~\ref{algline:sum_r_in_H} of Algorithm~\ref{alg:non_monotone_accel_opt}.

Now, if $\gamma = 1$, this means that $\rr^{(i,k)}_{\bar{e}} \leq \tau^{-1}\Psi(\rr^{(i,k)})$, and when $\gamma \neq 1$, then $\gamma\rr^{(i,k)}_{\bar{e}}  = \tau^{-1}\Psi(\rr^{(i,k)})$. Therefore, combining both cases,

\begin{align*}
\Phi(\ww^{(i,k+1)}) & \leq \Phi(\ww^{(i,k)})  + \epsilon \tau^{-1}\Psi(\rr^{(i,k)}) + \epsilon \tau^{-1}\Psi(\rr^{(i,k)})\\
& \leq \Phi(\ww^{(i,k)}) \left(1 + \frac{2\epsilon e^{\epsilon}}{\tau}\right).
\end{align*}
Also note that for a width reduction step, $\Phi(\ww^{(i,k+1)}) \geq \Phi(\ww^{(i,k)})$.
\end{proof}

\subsubsection*{Change in $\Psi$}

We now prove how our potential $\Psi$ changes with a primal and width reduction step.

\begin{restatable}{lemma}{ChangePsiNMOpt}
\label{lem:ChangePsiNMOpt}
If the parameters satisfy $\rho^2 \geq \tau \epsilon^{1/3} n^{-2\eta}$, $\rho\geq n^{1/2-3\eta}$, then after $i$ primal and $k$ width reduction steps, the potential $\Psi(\rr^{(i,k)})$ satisfies,
\[
\Psi(\rr^{(i,k)})\geq \Psi(\rr^{(0,0)})\left(1-10\epsilon\alpha\rho\right)^i\left(1+\frac{\epsilon^{4/3}n^{-2\eta}}{10}\right)^k
\]
\end{restatable}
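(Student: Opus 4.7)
The plan is to prove the claim by induction on the total number of iterations, handling the primal and width reduction steps separately. The central tool throughout will be Lemma~\ref{lem:PsiChange}, applied to $\rr = \rr^{(i,k)}$ and $\rr' = \rr^{(i+1,k)}$ (primal) or $\rr' = \rr^{(i,k+1)}$ (width). We will use Lemma~\ref{lem:ChangePhiNMOpt} to control how $\Phi$ moves, Lemma~\ref{lem:PsiPhi} to convert between $\Psi$ and $\Phi$, and Lemma~\ref{lem:positiveWopt} so that we know $\ww^{(i,k)}>0$ and $|\overrightarrow{\alpha}(\wt{\CC}\Delta-\wt{\dd})|\le 1/10$; the latter ensures $\rr^{(i+1,k)}_e$ stays within a constant factor of $\rr^{(i,k)}_e$ across a primal step, which will be needed to divide by $\rr'_e$ in the estimate from Lemma~\ref{lem:PsiChange}.

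For a primal step, the non-monotone update means $\Psi$ can actually decrease, so the goal is to upper-bound that decrease by $10\epsilon\alpha\rho\cdot\Psi$. Writing $\rr^{(i+1,k)}_e-\rr^{(i,k)}_e = \epsilon\overrightarrow{\alpha}_e(\wt{\CC}\Delta-\wt{\dd})_e\,\ww^{(i,k)}_e + \tfrac{\epsilon}{2n}(\Phi(\ww^{(i+1,k)})-\Phi(\ww^{(i,k)}))$, and using Lemma~\ref{lem:ChangePhiNMOpt} together with $\tfrac{\epsilon}{2n}\Phi(\ww)\le\rr^{(i,k)}_e$, I will show $\bigl|\tfrac{\rr^{(i+1,k)}_e-\rr^{(i,k)}_e}{\rr^{(i+1,k)}_e}\bigr|\le O(\epsilon\alpha)(1+|\wt{\CC}\Delta-\wt{\dd}|_e)$. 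Plugging this into Lemma~\ref{lem:PsiChange} and invoking the primal-step condition $\sum_e\rr^{(i,k)}_e|\wt{\CC}\Delta-\wt{\dd}|_e^3\le 2\rho\,\Psi(\rr^{(i,k)})$ together with $\sum_e\rr^{(i,k)}_e(\wt{\CC}\Delta-\wt{\dd})_e^2=\Psi(\rr^{(i,k)})$ then gives the advertised bound $\Psi(\rr^{(i+1,k)})\ge \Psi(\rr^{(i,k)})(1-10\epsilon\alpha\rho)$.

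For a width reduction step, I will verify that updating the weights on $H\cup\{\bar e\}$ forces the ratio $\tfrac{\rr^{(i,k+1)}_e-\rr^{(i,k)}_e}{\rr^{(i,k+1)}_e}\ge\epsilon\gamma_e/(1+2\epsilon)$, where $\gamma_e=1$ for $e\in H$ and $\gamma_e=\gamma$ for $e=\bar e$, because $(1+\epsilon\gamma_e)\ww_e+\tfrac{\epsilon^2\gamma_e}{n}\Phi(\ww)\ge\rr_e+\epsilon\gamma_e\rr_e$. Feeding this back into Lemma~\ref{lem:PsiChange} and using $|\wt{\CC}\Delta-\wt{\dd}|_e\ge\rho$ on $S$ leaves a split into two cases. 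In the case $H=S$, the width-reduction condition combined with $\sum_{e\notin H}\rr_e|\wt{\CC}\Delta-\wt{\dd}|_e^3\le\rho\,\Psi$ yields $\sum_{e\in H}\rr_e|\wt{\CC}\Delta-\wt{\dd}|_e^3\ge\rho\,\Psi$, and one divides by $\|\wt{\CC}\Delta-\wt{\dd}\|_\infty\le O(n^{1/3}\rho^{1/3}\epsilon^{-1/3})$ (which follows from the primal-cutoff argument used in Lemma~\ref{lem:positiveWopt}) to extract an increase of $\Omega(\epsilon^{4/3}\rho^{2/3}n^{-1/3})\Psi = \Omega(\epsilon^{4/3}n^{-2\eta})\Psi$ by the assumption $\rho\ge n^{1/2-3\eta}$. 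In the case $H\neq S$, the maximality of $H$ and the definition of $\gamma$ give $\sum_{e\in H}\rr_e+\gamma\rr_{\bar e}\ge\tau^{-1}\Psi$ (a short case split on $\gamma=1$ versus $\gamma<1$), producing an increase of $\Omega(\epsilon\rho^2/\tau)\Psi$, which is $\Omega(\epsilon^{4/3}n^{-2\eta})\Psi$ under the hypothesis $\rho^2\ge\tau\epsilon^{1/3}n^{-2\eta}$. Combining the two kinds of steps through the induction then gives the claimed product bound. The main technical obstacle is the primal step: the non-monotone update and the coupled contribution of $\Phi$ to $\rr$ require carefully pairing $|\rr'_e-\rr_e|$ with the cube $\rr_e|\wt{\CC}\Delta-\wt{\dd}|_e^3$ so that the primal-step inequality can be used; the case split for $\bar e$ in the width step is a secondary, but essential, subtlety.
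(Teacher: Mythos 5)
Your primal-step argument and your $H\neq S$ width-reduction argument match the paper's proof essentially line for line (same decomposition of $\rr^{(i+1,k)}-\rr^{(i,k)}$, same pairing of $|\wt{\CC}\Delta-\wt{\dd}|_e$ with the quadratic term to invoke the primal acceptance condition, same case split on $\gamma$). The gap is in the $H=S$ branch of the width-reduction step.

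There you divide $\sum_{e\in H}\rr_e^{(i,k)}|\wt{\CC}\Delta^{(i,k)}-\wt{\dd}|_e^3\ge\rho\,\Psi$ by $\|\wt{\CC}\Delta^{(i,k)}-\wt{\dd}\|_\infty$ and claim the bound $\|\wt{\CC}\Delta^{(i,k)}-\wt{\dd}\|_\infty\le O(n^{1/3}\rho^{1/3}\epsilon^{-1/3})$ ``follows from the primal-cutoff argument used in Lemma~\ref{lem:positiveWopt}.'' That bound is derived from the primal acceptance condition $\sum_e\rr_e|\wt{\CC}\Delta-\wt{\dd}|_e^3\le 2\rho\,\Psi$, which by definition \emph{fails} whenever a width reduction step is taken, so it is not available here. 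The only unconditional bound is $|\wt{\CC}\Delta-\wt{\dd}|_e\le(\Psi/\rr_e)^{1/2}\le O(\sqrt{n/\epsilon})$, and dividing by that yields an increase of only $\Omega(\epsilon^{3/2}\rho\, n^{-1/2})\Psi=\Omega(\epsilon^{3/2}n^{-3\eta})\Psi$, which falls short of the required $\Omega(\epsilon^{4/3}n^{-2\eta})\Psi$ by a factor of roughly $n^{\eta}$. The missing idea is the paper's threshold case split: either some $e\in H$ has $|\wt{\CC}\Delta^{(i,k)}-\wt{\dd}|_e> n^{1/2-\eta}\epsilon^{-1/3}$, in which case that single coordinate alone (using $\rr_e\ge\frac{\epsilon}{2n}\Psi$ and the relative change $\ge\frac{\epsilon}{1+3\epsilon}$) contributes an increase of $\Omega(\epsilon^{4/3}n^{-2\eta})\Psi$ to the quadratic form; or every $e\in H$ satisfies $|\wt{\CC}\Delta^{(i,k)}-\wt{\dd}|_e\le n^{1/2-\eta}\epsilon^{-1/3}$, and dividing the cubic lower bound by \emph{this} threshold gives $\sum_{e\in H}\rr_e(\wt{\CC}\Delta-\wt{\dd})_e^2\ge\epsilon^{1/3}\rho\, n^{-(1/2-\eta)}\Psi$, hence an increase of $\Omega(\epsilon^{4/3}\rho\, n^{-(1/2-\eta)})\Psi\ge\Omega(\epsilon^{4/3}n^{-2\eta})\Psi$ by the hypothesis $\rho\ge n^{1/2-3\eta}$. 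Without this case split the $H=S$ branch does not close.
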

\begin{proof}
We first bound the change in $\Psi$ for a width reduction step. We will use Lemma~\ref{lem:PsiChange}:
\begin{equation}\label{eq:PsiChangeOpt}
   \Psi(\rr')\geq \Psi(\rr) + \sum_e \left(1-\frac{\rr_e}{\rr_e'}\right)\rr_e(\wt{\CC}\Delta-\wt{\dd})_e^2, \text{ where } \Delta = \arg \min_{\Delta'} \sum_e \rr_e (\wt{\CC}\Delta'-\wt{\dd})_e^2. 
\end{equation}

\paragraph{Width Steps.}
Suppose we have had $i$ primal steps.  

\begin{itemize}
    \item We first consider the case when $\sum_{e\in S}\rr_e^{(i,k)} \geq \tau^{-1}\Psi(\rr^{(i,k)})$, and in this case we perturb all edges in $H$ and one extra edge $\bar{e} \in S \backslash H$. For edge $\bar{e}$, $\ww^{(i,k+1)}_{\bar{e}} = \ww^{(i,k)}_{\bar{e}} (1+\epsilon\gamma) +\frac{\epsilon^2\gamma}{n}\Phi(\ww^{(i,k)})$ and as a result,
\[
\frac{\rr^{(i,k+1)}_{\bar{e}}-\rr^{(i,k)}_{\bar{e}}}{\rr^{(i,k+1)}_{\bar{e}}} \geq \frac{\ww^{(i,k+1)}_{\bar{e}}-\ww^{(i,k)}_{\bar{e}} }{\rr^{(i,k+1)}_{\bar{e}}}  \geq \frac{\epsilon\gamma\ww_{\bar{e}}^{(i,k)} + \frac{\epsilon^2\gamma}{n}\Phi(\ww^{(i,k)})}{\rr_{\bar{e}}^{(i,k+1)}} = \epsilon\gamma\frac{\rr_{\bar{e}}^{(i,k)}}{\rr_{\bar{e}}^{(i,k+1)}} \geq \frac{\epsilon\gamma}{1+2\epsilon},
\]
where the last inequality follows from
\begin{align*}
\rr^{(i,k+1)}_{\bar{e}} = &~ \ww_{\bar{e}}^{(i,k)}(1+\gamma\epsilon) +\frac{\epsilon^2\gamma}{2n}\Phi(\ww^{(i,k)}) + \frac{\epsilon}{2n}
\Phi(\ww^{(i,k+1)}) \\
\leq &~ \ww_{\bar{e}}^{(i,k)} + \epsilon\rr^{(i,k)} + \frac{\epsilon}{2n}\Phi(\ww^{(i,k)})(1+2\epsilon)\leq (1+3\epsilon)\rr^{(i,k)}.
\end{align*}
Similarly for the other edges $e$ in $H$,
\[
\frac{\rr^{(i,k+1)}_{e}-\rr^{(i,k)}_e}{\rr^{(i,k+1)}_e} \geq \frac{\epsilon}{1+3\epsilon}.
\]

We will use these bounds in Equation~\eqref{eq:PsiChangeOpt}. Let us first consider the case when $\gamma = 1$.
\begin{align*}
\Psi(\rr^{(i,k+1)}) & \geq \Psi(\rr^{(i,k)}) + \sum_{e \in H \cup \{\ov{e}\}} (1 - \frac{\rr^{(i,k)}_e}{\rr^{(i,k+1)}_e}) \cdot \rr^{(i,k)}_e \cdot (\wt{\CC} \Delta^{(i,k)} - \wt{\dd})_e^2 \\
& \geq \Psi(\rr^{(i,k)}) +  \sum_{e\in H\cup \{\bar{e}\}} \frac{\epsilon}{(1+3\epsilon)} \rr_{e}^{(i,k)} \rho^2\\
& \geq \Psi(\rr^{(i,k)}) + \frac{\epsilon\rho^2}{(1+3\epsilon)} \tau^{-1} \Psi(\rr^{(i,k)}) \\  
& = \Psi(\rrbar^{(i,k)}) \left(1 + \frac{\epsilon \rho^2}{(1+3\epsilon)\tau}\right),
\end{align*}
where the second step follows from $|\wt{\CC} \Delta^{(i,k)} - \wt{\dd}|_e \geq \rho$ for $e \in H$, and the third step follows from the definition of $H$ that $H \subseteq S$ is maximal subset such that $\sum_{e\in H}\rr^{(i,k)}_e\leq \tau^{-1}\Psi(\rr^{(i,k)})$, and so  $\sum_{e\in H \cup \{\ov{e}\}}\rr^{(i,k)}_e\geq \tau^{-1}\Psi(\rr^{(i,k)})$.

Now, in the case when $\gamma\neq 1$, we have $\gamma = \frac{\tau^{-1}}{ \rr^{(i,k)}_{\bar{e}}}\Psi(\rr^{(i,k)}) < 1$, 

\begin{align*}
\Psi(\rr^{(i,k+1)})&\geq \Psi(\rr^{(i,k)}) +   \left(\frac{\epsilon\gamma}{1+3\epsilon} \right)\rr_{\bar{e}}^{(i,k)} \rho^2\\
&\geq \Psi(\rr^{(i,k)}) + \frac{\epsilon \rho^2}{(1+3\epsilon) \tau}\Psi(\rr^{(i,k)}) \\
&\geq \Psi(\rr^{(i,k)}) \left(1 + \frac{\epsilon \rho^2}{10 \tau}\right).
\end{align*}
Therefore, when $\sum_{e\in S}\rr_e^{(i,k)}\geq \tau^{-1}\Psi(\rr^{(i,k)})$,
\begin{equation*}
    \Psi(\rr^{(i,k+1)})\geq\Psi(\rr^{(i,k)}) \left(1 + \frac{\epsilon \rho^2}{10 \tau}\right).
\end{equation*}
\item  Now, in the case when $\sum_{e\in S} \rr^{(i,k)}_e < \tau^{-1}\Psi(\rr^{(i,k)})$, i.e., $H = S$,
\[
\sum_{e \notin H}\rr^{(i,k)}_e|\CC\Delta^{(i,k)}-\dd|_e^3 \leq \max_{e\notin H}\{|\wt{\CC} \Delta^{(i,k)} - \wt{\dd}|_e\}\sum_{e \notin H}\rr^{(i,k)}_e (\wt{\CC} \Delta^{(i,k)} - \wt{\dd})_e^2 \leq \rho\Psi(\rr^{(i,k)}).
\]
Since this is a width reduction step, we know that $\sum_{e }\rr^{(i,k)}_e|\wt{\CC} \Delta^{(i,k)} - \wt{\dd}|_e^3 \geq 2 \rho \Psi(\rr^{(i,k)})$, and therefore we must have,
\[
\sum_{e\in H }\rr^{(i,k)}_e|\wt{\CC} \Delta^{(i,k)} - \wt{\dd}|_e^3 \geq \rho\Psi(\rr^{(i,k)}).
\]
Further, we can assume that for all $e\in H$, $|\wt{\CC} \Delta^{(i,k)} - \wt{\dd}|_e \leq n^{1/2-\eta}\epsilon^{-1/3}$ since otherwise if there were one such edge, then using \eqref{eq:PsiChangeOpt},  $\rr\geq \frac{\epsilon}{n}\Psi(\rr),$ 
\[
\Psi(\rr^{(i,k+1)})\geq \Psi(\rr^{(i,k)}) +  \frac{\epsilon^{1-2/3}}{(1+2\epsilon)} n^{1-2\eta}\frac{\epsilon}{n}\Psi(\rr^{(i,k)}) \geq \Psi(\rr^{(i,k)})\left(1 + \frac{\epsilon^{4/3} n^{-2\eta}}{(1+\epsilon)} \right),
\]
which gives us the required bound.
 Now, 
\[
\sum_{e\in H} \rr^{(i,k)}_e (\wt{\CC} \Delta^{(i,k)} - \wt{\dd})_e^2 \geq \frac{\sum_{e\in H}\rr^{(i,k)}_e |\wt{\CC} \Delta^{(i,k)} - \wt{\dd}|_e^3}{\max_{e\in H} |\wt{\CC} \Delta^{(i,k)} - \wt{\dd}|_e} \geq \frac{\epsilon^{1/3}\rho}{n^{1/2-\eta}} \Psi(\rr^{(i,k)}).
\]
Again, using this with \eqref{eq:PsiChangeOpt} gives us, 
\[
\Psi(\rr^{(i,k+1)}) \geq \Psi(\rr^{(i,k)}) \left(1 + \frac{\epsilon^{4/3} \rho}{(1+3\epsilon)n^{1/2-\eta}}\right).
\]
\end{itemize}
For the values of $\rho $ and $\tau,$ such that $\rho^2/\tau \geq \epsilon^{1/3} n^{-2\eta}$, after every width reduction steps, we get,
\[
\Psi(\rr^{(i_k,k+1)})\geq \Psi(\rr^{(i_k,k)})\left(1 + \frac{\epsilon^{4/3} n^{-2\eta}}{10}\right).
\]

\paragraph{Primal Step.}
\noindent We next look at a primal step. For a primal step, $\ww_e^{(i+1,k)} = \ww_e^{(i,k)}(1+\epsilon\overrightarrow{\alpha} (\wt{\CC} \Delta^{(i,k)} - \wt{\dd}))$. 
Therefore,
\begin{align*}
\rr_e^{(i+1,k)} & = \ww_e^{(i+1,k)} + \frac{\epsilon}{2n}\Phi(\ww^{(i+1,k)}) \\
&\geq \ww_e^{(i,k)} -  \epsilon\alpha_+ \ww^{(i,k)}_e |\wt{\CC} \Delta^{(i,k)} - \wt{\dd}|_e + \frac{\epsilon}{2n} \Phi(\ww^{(i,k)}) (1-\epsilon e^{\epsilon/2} \alpha_+)\\
& \geq \ww^{(i,k)}_e (1-\epsilon) + \frac{\epsilon}{2n} \Phi(\ww^{(i,k)}) (1-\epsilon) = \rr^{(i,k)}_e (1-\epsilon).
\end{align*}
We also have,
\begin{align*}
    \abs{\frac{\rr_e^{(i+1,k)}-\rr_e^{(i,k)}}{\rr_e^{(i+1,k)}}} & \leq \frac{|\ww_e^{(i+1,k)}-\ww_e^{(i,k)}| + \frac{\epsilon}{2n}\abs{\Phi(\ww^{(i+1,k)})-\Phi(\ww^{(i,k)})}}{\rr_e^{(i,k)} (1-\epsilon)}\\
    & \leq \frac{\epsilon \alpha_+ \ww_e^{(i,k)}|\wt{\CC} \Delta^{(i,k)} - \wt{\dd}|_e + \frac{\epsilon}{2n}e^{\epsilon}\epsilon\alpha_+\Phi(\ww^{(i,k)})}{\rr^{(i,k)}_e (1-\epsilon)}\\
    & \leq \frac{\epsilon \alpha_+ \rr^{(i,k)}_e|\wt{\CC} \Delta^{(i,k)} - \wt{\dd}|_e + e^{\epsilon}\epsilon\alpha_+\rr^{(i,k)}_e}{\rr^{(i,k)}_e (1-\epsilon)}\\
    & = e^{\epsilon}\epsilon\alpha_+ |\wt{\CC} \Delta^{(i,k)} - \wt{\dd}|_e + e^{2\epsilon}\epsilon\alpha_+. 
\end{align*}

So we have,
\begin{align*}
\Psi(\rr^{(i+1,k_i)}) & \geq \Psi(\rr^{(i,k_i)}) - \sum_e \abs{\frac{\rr_e^{(i+1,k_i)}-\rr^{(i,k_i)}_e}{\rr_e^{(i+1,k_i)}}}\rr^{(i,k_i)}_e (\wt{\CC} \Delta^{(i,k)} - \wt{\dd})_e^2\\
& \geq \Psi(\rr^{(i,k_i)}) - e^{2\epsilon}\epsilon \alpha \Psi(\rr^{(i,k_i)}) - e^{\epsilon}\epsilon\alpha \sum_{e\in S_i}  |\wt{\CC} \Delta^{(i,k)} - \wt{\dd}|^3_e \rr^{(i,k_i)}_e \\
& \geq \Psi(\rr^{(i,k_i)}) - e^{2\epsilon}\epsilon \alpha \Psi(\rr^{(i,k_i)}) - 2e^{\epsilon}\epsilon\alpha \rho\sum_{e\in S_i}  \Psi(\rr^{(i,k_i)})\\
& \geq \Psi(\rr^{(i,k_i)})\left(1 -10\epsilon\alpha \rho \right).
\end{align*}
In the second last step we used the condition from Line~\ref{algline:CheckPrimalNM}.
\end{proof}

We will now combine the changes in the two potentials similar to the proof of Theorem~\ref{thm:InfRegMainMonotone}.
\subsubsection*{Proof of Theorem~\ref{thm:NonMonotoneMWUAcc}}
\begin{proof}
Let $\xxhat = \frac{\xx}{T}$ be the solution returned by Algorithm \ref{alg:non_monotone_accel_opt}. We would bound the objective value at $\xxhat$. Suppose the algorithm terminates in $T = \alpha^{-1}\epsilon^{-2}\ln n$ primal steps and $K \leq \tau/\epsilon^2$ width reduction steps. We can now apply Lemma \ref{lem:ChangePhiNMOpt} to get,
\[
\Phi\left(\ww^{(T,K)}\right) \le  n\cdot  e^{e^{\epsilon/2}\epsilon\alpha T} e^{2\epsilon e^{\epsilon}\tau K} \leq   n^{O\left(\frac{1}{\epsilon} \right)}.
\]

We bound the $\ell_{\infty}$ norm of $\CC \xxhat - \dd = \frac{1}{T} \cdot \sum_{i=0}^{T-1} (\CC \Delta^{(i,k_i)} - \dd)$ using the upper bound of the potential. Since $\wt{\CC} = \begin{bmatrix}
    \CC \\-\CC
\end{bmatrix}$, and $\wt{\dd}= \begin{bmatrix}
    \dd \\-\dd
\end{bmatrix}$, we have that the weights $\ww \in \mathbb{R}^{2n}$. Therefore, for $\ww_+\in \mathbb{R}^n$ and $\ww_{-}\in \mathbb{R}^n$, we can write $\ww^{(i,k)} = \begin{bmatrix}
    \ww_+^{(i,k)}\\ \ww_{-}^{(i,k)}
\end{bmatrix}$, and we have that $\Phi(\ww) = \sum_{e\in [n]}{\ww_+}_e + {\ww_{-}}_e$. We can similarly define $\rr_+$ and $\rr_{-}$ such that $\rr = \begin{bmatrix}
    \rr_+\\ \rr_{-}
\end{bmatrix}$. Since $\Delta^{(i,k)}$ is obtained by solving,
\[
\Delta^{(i,k)} = \arg\min_{\Delta}\sum_{e\in [2n]}\rr^{(i,k)}_e (\wt{\CC}\Delta-\wt{\dd})_e^2 = \sum_{e\in [n]}(\rr_+^{(i,k)} + \rr_{-})^{(i,k)}_e (\CC\Delta-\dd)_e^2, 
\]
the update rule $\ww^{(i+1,k)} = \ww^{(i,k)} \cdot \big(1 + \overrightarrow{\alpha}^{(i,k)} (\wt{\CC} \Delta^{(i,k)} - \wt{\dd})\big)$ implies that in every primal step,
\[
(\ww_+)_e^{(i+1,k)}=(\ww_+)_e^{(i,k)} \cdot \big(1 + \overrightarrow{\alpha}^{(i,k)} (\wt{\CC} \Delta^{(i,k)} - \wt{\dd})\big), \quad (\ww_-)_e^{(i+1,k)}=(\ww_-)_e^{(i,k)} \cdot \big(1 - \overrightarrow{\alpha}^{(i,k)} (\wt{\CC} \Delta^{(i,k)} - \wt{\dd})\big).
\]
Now,
\begin{align*}
(\ww_+)^{(T,K)}_e = &~ \ww^{(0)}_e \cdot \prod_{i=0}^{T-1} \Big(1 + \epsilon \overrightarrow{\alpha}^{(i,k_i)}_e (\CC \Delta^{(i,k_i)} - \dd)_e\Big) \\
= &~ \prod_{i: (\CC \Delta^{(i,k_i)} - \dd)_e \geq 0} (1 + \epsilon \alpha_+ (\CC \Delta^{(i,k_i)} - \dd)_e) \cdot \prod_{i: (\CC \Delta^{(i,k_i)} - \dd)_e < 0} (1 + \epsilon \alpha_- (\CC \Delta^{(i,k_i)} - \dd)_e) \\
\geq &~ \exp\left(\epsilon(1-\epsilon)\alpha \cdot \sum_{i=0}^{T-1} (\CC \Delta^{(i,k_i)} - \dd)_e\right),
\end{align*}
where the second step follows from $\ww^{(0)} = 1_n$, and $\overrightarrow{\alpha}_e^{(i,k_i)} = \alpha_+$ if $(\CC \Delta^{(i,k_i)} - \dd)_e \geq 0$ and $\overrightarrow{\alpha}_e^{(i,k_i)} = \alpha_-$ otherwise, the third step follows from $1 + \epsilon x \geq \exp(\epsilon (1-\epsilon) x)$ for all $0 \leq x \leq 1$ and $1 + \epsilon x \geq \exp(\epsilon (1 + \epsilon) x)$ for all $-1 \leq x \leq 0$, and we have that $|\overrightarrow{\alpha}^{(i,k)} \cdot (\CC \Delta^{(i,k)} - \dd)| \leq  \frac{1}{10}$ by Lemma~\ref{lem:positiveWopt}. Similarly, we also get,
\[
(\ww_{-})^{(T,K)}_e \geq \exp\left(\epsilon(1-\epsilon)\alpha \cdot \sum_{i=0}^{T-1} - (\CC \Delta^{(i,k_i)} - \dd)_e\right).
\]

This implies that
\begin{align*}
\abs{\sum_{i=0}^{T-1} (\CC \Delta^{(i,k_i)} - \dd)_e} \leq \frac{\ln\left((\ww_+)^{(T,K)}_e +(\ww_{-})_e^{(T,K)}\right)}{\epsilon(1-\epsilon)\alpha} \leq \frac{\ln(\Phi(\ww^{(T,K)}))}{\epsilon(1-\epsilon)\alpha}.
\end{align*}
So we have
\begin{align*}
\|\CC \xxhat - \dd\|_{\infty} = &~ \frac{1}{T} \max_e \abs{\sum_{i=0}^{T-1} (\CC \Delta^{(i,k_i)} - \dd)_e)} \\
\leq &~ \frac{\ln(\Phi(\ww^{(T,K)}))}{\alpha T} \\
\leq &~ \frac{\ln n + (1+\epsilon)\epsilon \alpha T + (1+\epsilon)}{\epsilon(1-\epsilon)\alpha T} \\
\leq &~ 1 + 10\epsilon,
\end{align*}

We have shown that if the number of width reduction steps is bounded by $K$ then our algorithm returns the required solution. We will next prove that we cannot have more than $K$ width reduction steps.

We first show that if $\eta \leq 1/6$, the number of width steps $K$ must be at most $\tilde{O}(1)\cdot T$. 
for the values of $\tau, \rho$ and $\alpha$, the guarantee of Lemma~\ref{lem:ChangePsiNMOpt} becomes,
\begin{align*}
\Psi(\rr^{(i,k)})& \geq \Psi(\rr^{(0,0)})\left(1 - 10\epsilon\alpha \rho \right)^T\left(1 +  \frac{\epsilon^{4/3}n^{-2\eta}}{10}\right)^K\\
& = \exp \left\{\epsilon^{4/3} n^{-2\eta} K/20 - 20\epsilon\alpha \rho T\right\}
\end{align*}
Since $\Psi(\rr)\leq (1+\epsilon)\Phi \leq n^{O(1/\epsilon)}$, we must have,
\[
n^{O(1/\epsilon)}\geq L \exp\left\{\epsilon^{4/3} n^{-2\eta} K/20 - 20\epsilon\alpha \rho T\right\},
\]
or,
\[
K \leq O(T) + \Otil(1) n^{2\eta}/\epsilon^{7/3}.
\]
Since $\eta\leq 1/6$, and $T = \alpha^{-1}\epsilon^{-2}\ln n$,  $T + \Otil(1) n^{2\eta}/\epsilon^{7/3} \leq \tau\epsilon^{-2}\ln n$ as required. Therefore the total number of iterations is at most,
\[
T + K \leq  \alpha^{-1}\epsilon^{-2}\log n+  n^{2\eta}/\epsilon^{7/3} = \tilde{O}\left((n^{1/2-\eta} + n^{2\eta})/\epsilon^{7/3}\right).
\]
\end{proof}

\subsection*{Lower Bound on \texorpdfstring{$\Psi$}{}}

\begin{lemma}\label{lem:lower_bound_Psi}
    For all $i,k$ and $\rr^{(i,k)}$ as defined in Algorithm~\ref{alg:non_monotone_accel_opt}, $\Psi(\rr^{(i,k)})\geq \frac{1}{1+2\epsilon}\cdot\Psi(\rr^{(0,0)})$.
\end{lemma}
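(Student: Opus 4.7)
I would prove the claim by induction on the iteration index $(i,k)$, with inductive hypothesis $\Psi(\rr^{(i,k)}) \geq \Psi(\rr^{(0,0)}) / (1+2\epsilon)$. The base case $(i,k) = (0,0)$ is trivial. The inductive step splits into the two update types executed by Algorithm~\ref{alg:non_monotone_accel_opt}. The width-reduction case is essentially free: the analysis inside Lemma~\ref{lem:ChangePsiNMOpt} shows $\Psi(\rr^{(i,k+1)}) \geq \Psi(\rr^{(i,k)})$ (since width steps only increase weights in $H \cup \{\bar{e}\}$, and an application of Lemma~\ref{lem:PsiChange} with the non-negative relative-change vector gives a non-negative contribution), so the hypothesis immediately transfers.

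The primal case is the delicate one, and is where the main obstacle lies. The per-step bound $(1 - 10\epsilon\alpha\rho)$ furnished by Lemma~\ref{lem:ChangePsiNMOpt} is far too weak: compounded over $T = \alpha^{-1}\epsilon^{-2}\ln n$ primal steps it allows $\Psi$ to shrink by a factor like $\exp(-\widetilde{\Theta}(n^{1/2-3\eta}/\epsilon))$, which would violate the claim. The approach I would use is to bypass the step-by-step bound entirely and compare $\Psi(\rr^{(i,k)})$ to $\Psi(\rr^{(0,0)})$ directly through the dual characterization established inside the proof of Lemma~\ref{lem:PsiChange}:
\[
\frac{1}{\Psi(\rr)} \;=\; \min_{\substack{\wt{\CC}^\top\yy = 0 \\ \wt{\dd}^\top\yy = 1}} \yy^\top \RR^{-1} \yy.
\]
Letting $\yy^{(0)}$ be the dual optimum for $\rr^{(0,0)} = (1+\epsilon)\vone$, so $1/\Psi(\rr^{(0,0)}) = \|\yy^{(0)}\|_2^2/(1+\epsilon)$, and plugging $\yy^{(0)}$ into the program for $\Psi(\rr^{(i,k)})$ yields the upper bound $1/\Psi(\rr^{(i,k)}) \leq \sum_e (\yy^{(0)}_e)^2/\rr^{(i,k)}_e$. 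The claim then reduces to the purely analytic inequality
\[
\sum_e \frac{(\yy^{(0)}_e)^2}{\rr^{(i,k)}_e} \;\leq\; \frac{1+2\epsilon}{1+\epsilon}\,\|\yy^{(0)}\|_2^2 .
\]

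The main obstacle is proving that this $(\yy^{(0)})^2$-weighted reciprocal sum stays within a $(1+2\epsilon)$ factor of its initial value $\|\yy^{(0)}\|_2^2/(1+\epsilon)$. I would attack this by exploiting the structure $\wt{\CC} = \left[\begin{smallmatrix}\CC\\-\CC\end{smallmatrix}\right]$, pairing each original coordinate with its negated copy and using the carefully calibrated asymmetry $\alpha_+ : \alpha_- = (1+2\epsilon):1$ that defines $\overrightarrow{\alpha}^{(i,k)}$ in the algorithm. Concretely, for a fixed original edge $e$ with pair weights $\ww_{+,e}, \ww_{-,e}$, a primal step scales exactly one of them up by $\alpha_+ |u_e|$ and the other down by $\alpha_- |u_e|$, where $u_e = (\CC\Delta - \dd)_e$. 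The ratio $(1+2\epsilon)$ between $\alpha_+$ and $\alpha_-$ is designed so that the harmonic contribution $(\yy^{(0)}_{+,e})^2/\rr_{+,e}^{(i,k)} + (\yy^{(0)}_{-,e})^2/\rr_{-,e}^{(i,k)}$ changes by at most a second-order term per step; this pairing, together with the background offset $\tfrac{\epsilon}{2n}\Phi(\ww^{(i,k)})$ that keeps each $\rr_e^{(i,k)}$ bounded away from zero, is what turns the pessimistic $(1-10\epsilon\alpha\rho)^T$ bound into the stated fixed multiplicative slack $(1+2\epsilon)$. Width steps only make this weighted reciprocal sum smaller, since they only increase entries of $\rr$, so the induction closes once the primal estimate is in place.
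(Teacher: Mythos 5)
Your reduction of the primal case to the inequality $\sum_e (\yy^{(0)}_e)^2/\rr^{(i,k)}_e \leq \frac{1+2\epsilon}{1+\epsilon}\|\yy^{(0)}\|_2^2$ is where the argument breaks. Since the dual optimum at $\rr^{(0,0)}$ satisfies $\yy^{(0)}_{+,e}=-\yy^{(0)}_{-,e}$, your target amounts to the pairwise bound $1/\rr^{(i,k)}_{+,e}+1/\rr^{(i,k)}_{-,e} \leq \tfrac{2(1+2\epsilon)}{1+\epsilon}$, and this is false: the whole point of the non-monotone dynamics is that one member of each pair can shrink multiplicatively over many consecutive primal steps, down to the floor $\rr_e \geq \tfrac{\epsilon}{2n}\Phi(\ww)$, at which point the harmonic sum is of order $1/\epsilon$ rather than $2+O(\epsilon)$. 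The cancellation you invoke also fails already at first order: after a primal step with $u_e=(\CC\Delta-\dd)_e>0$ the pair updates as $\ww_{+,e}\to\ww_{+,e}(1+\epsilon\alpha_+u_e)$ and $\ww_{-,e}\to\ww_{-,e}(1-\epsilon\alpha_-u_e)$, so the first-order change of $1/\ww_{+,e}+1/\ww_{-,e}$ is $\epsilon u_e\big(\alpha_-/\ww_{-,e}-\alpha_+/\ww_{+,e}\big)$, which vanishes only when $\ww_{+,e}/\ww_{-,e}=\alpha_+/\alpha_-$ --- a condition that holds at essentially no iterate. (The purpose of the $\alpha_+:\alpha_-=(1+2\epsilon):1$ asymmetry is elsewhere: it makes $1+\epsilon\overrightarrow{\alpha}_e x\geq \exp(\epsilon(1-\epsilon)\alpha x)$ hold for both signs of $x$.)

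The paper avoids the harmonic quantity entirely by staying primal. Because $\wt{\CC}=\begin{bmatrix}\CC\\-\CC\end{bmatrix}$, the weights enter $\Psi$ only through the coordinate-wise \emph{sums}: $\Psi(\rr^{(i,k)})=\min_\Delta\sum_{e\in[n]}\big(\rr_+^{(i,k)}+\rr_-^{(i,k)}\big)_e(\CC\Delta-\dd)_e^2$. It therefore suffices to show $(\rr_++\rr_-)_e\geq(\ww_++\ww_-)_e\geq 2$ pointwise, which follows from the product formula $(\ww_\pm)^{(i,k)}_e\geq\exp\big(\pm\epsilon(1-\epsilon)\alpha\sum_j(\CC\Delta^{(j,k_j)}-\dd)_e\big)$ (width steps only increase weights) together with $e^x+e^{-x}\geq 2$; monotonicity of $\Psi$ in $\rr$ then gives $\Psi(\rr^{(i,k)})\geq 2\min_\Delta\|\CC\Delta-\dd\|_2^2=\tfrac{1}{1+2\epsilon}\Psi(\rr^{(0,0)})$, with no induction over iterations needed. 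The arithmetic sum of a pair is insensitive to exactly the imbalance that destroys the harmonic sum; that is the idea your proposal is missing.
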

\begin{proof}
    We first note that, $\Psi(\rr^{(0,0)}) =  2(1+2\epsilon) \dd^{\top}\left(\II - \CC^{\top}(\CC^{\top}\CC)^{-1}\CC\right)\dd$. Now, from Lemma~\ref{lem:PsiChange}, we know that for any $\rr'\geq \rr$, $\min_{\Delta}\sum_e \rr'_e (\CC\Delta-\dd)_e^2 \geq \min_{\Delta}\sum_e \rr_e (\CC\Delta-\dd)_e^2$. Let $\rr = \begin{bmatrix}
        \rr_+\\ \rr_{-}
    \end{bmatrix}$. We also know that,
    \[
    \Psi(\rr^{(i,k)}) = \min_{\Delta}\sum_e \left(\rr_+^{(i,k)} + \rr_{-}^{(i,k)}\right)_e (\CC\Delta-\dd)_e^2.
    \]
    To prove our result, it is sufficient to prove that $\rr_+^{(i,k)} + \rr_{-}^{(i,k)}\geq 2 $ because, then 
    \[
     \Psi(\rr^{(i,k)}) \geq 2 \min_{\Delta}\sum_e (\CC\Delta-\dd)_e^2 = 2 \dd^{\top}\left(\II - \CC^{\top}(\CC^{\top}\CC)^{-1}\CC\right)\dd = \frac{1}{1+2\epsilon}\Psi(\rr^{(0,0)}).    \]
     Now,
     \[
     \rr_+^{(i,k)} + \rr_{-}^{(i,k)} = \ww_+^{(i,k)} + \ww_{-}^{(i,k)} + \frac{2\epsilon}{n}\Phi(\ww^{(i,k)})\geq \ww_+^{(i,k)} + \ww_{-}^{(i,k)}.
     \]
     Since the weights only increase during a width reduction step, we always have that,
     \begin{align*}
    (\ww_+)^{(i,k)}_e \geq &~ \ww^{(0,0)}_e \cdot \prod_{j=0}^{i-1} \Big(1 + \epsilon \overrightarrow{\alpha}^{(j,k_j)}_e (\CC \Delta^{(j,k_j)} - \dd)_e\Big) \\
= &~ \prod_{j: (\CC \Delta^{(j,k_j)} - \dd)_e \geq 0} (1 + \epsilon \alpha_+ (\CC \Delta^{(j,k_j)} - \dd)_e) \cdot \prod_{j: (\CC \Delta^{(j,k_j)} - \dd)_e < 0} (1 + \epsilon \alpha_- (\CC \Delta^{(j,k_j)} - \dd)_e) \\
\geq &~ \exp\left(\epsilon(1-\epsilon)\alpha \cdot \sum_{j=0}^{i-1} (\CC \Delta^{(j,k_j)} - \dd)_e\right),
\end{align*}
and similarly,
\[
(\ww_-)^{(i,k)}_e \geq \exp\left(-\epsilon(1-\epsilon)\alpha \cdot \sum_{j=0}^{i-1} (\CC \Delta^{(j,k_j)} - \dd)_e\right).
\]
The function $e^x + e^{-x}\geq 2$ for all $x\in \mathbb{R}$, since it is minimized at $x = 0$. Therefore, for all $e$ 
\[
\left(\rr_+^{(i,k)} + \rr_{-}^{(i,k)}\right)_e \geq \left(\ww_+^{(i,k)} + \ww_{-}^{(i,k)}\right)_e \geq 2,
\]
concluding the proof of our result.
\end{proof}

\newpage
\clearpage

\end{document}